\newtheorem{defn}{Definition}
\newtheorem{lem}{Lemma}
\newtheorem{rem}{Remark}
\newcommand{\bp}[1]{{\mathbb{P}}\left[{#1}\right]}
\newcommand{\bfu}[1]{{\mathbb{F}}\left[{#1}\right]}
\newcommand{\qeda}{\hfill\ensuremath{\blacksquare}}
\newcommand{\bE}[1]{{\mathbb{E}}\left[{#1}\right]}
\title{Local Differential Privacy based Federated Learning for the Internet of Things}
\author{Yang~Zhao,~\IEEEmembership{Graduate Student Member,~IEEE,}
        Jun~Zhao,~\IEEEmembership{Member,~IEEE,}
		Mengmeng~Yang,~\IEEEmembership{Member,~IEEE,}
        Teng~Wang,~\IEEEmembership{Member,~IEEE,}
        Ning~Wang,~\IEEEmembership{Member,~IEEE,}
        Lingjuan~Lyu,~\IEEEmembership{Member,~IEEE,}
        Dusit~Niyato,~\IEEEmembership{Fellow,~IEEE,}
        and Kwok-Yan~Lam,~\IEEEmembership{Senior Member,~IEEE}
\thanks{
Yang Zhao and Jun Zhao are supported by 1) Nanyang Technological University (NTU) Startup Grant, 2) Alibaba-NTU Singapore Joint Research Institute (JRI), 3) Singapore Ministry of Education Academic Research Fund Tier 1 RG128/18, Tier 1 RG115/19, Tier 1 RT07/19, Tier 1 RT01/19, and Tier 2 MOE2019-T2-1-176, 4) NTU-WASP Joint Project, 5) Singapore National Research Foundation (NRF) under its Strategic Capability Research Centres Funding Initiative: Strategic Centre for Research in Privacy-Preserving Technologies \& Systems (SCRIPTS), 6)  Energy Research Institute @NTU (ERIAN), 7) Singapore NRF National Satellite of Excellence, Design Science and Technology for Secure Critical Infrastructure NSoE DeST-SCI2019-0012, 8) AI Singapore (AISG) 100 Experiments (100E) programme, and 9) NTU Project for Large Vertical Take-Off \& Landing (VTOL) Research Platform.
Mengmeng~Yang and Kwok-Yan~Lam are supported by the National Research Foundation, Prime Minister’s Office, Singapore under its Strategic Capability Research Centres Funding Initiative.
Ning~Wang is supported by the National Natural Science Foundation of China (61902365) and China Postdoctoral Science Foundation Grant (2019M652473). 
Dusit~Niyato is supported by the National Research Foundation (NRF), Singapore, under Singapore Energy Market Authority (EMA), Energy Resilience, NRF2017EWT-EP003-041, Singapore NRF2015-NRF-ISF001-2277, Singapore NRF National Satellite of Excellence, Design Science and Technology for Secure Critical Infrastructure NSoE DeST-SCI2019-0007, A*STAR-NTU-SUTD Joint Research Grant on Artificial Intelligence for the Future of Manufacturing RGANS1906, Wallenberg AI, Autonomous Systems and Software Program and Nanyang Technological University (WASP/NTU) under grant M4082187 (4080), Singapore Ministry of Education (MOE) Tier 1 (RG16/20), and NTU-WeBank JRI (NWJ-2020-004), Alibaba Group through Alibaba Innovative Research (AIR) Program and Alibaba-NTU Singapore Joint Research Institute (JRI).
(Corresponding author: Jun Zhao).

Yang Zhao, Jun Zhao, Mengmeng~Yang, Dusit Niyato and Kwok-Yan~Lam are with School of Computer Science and Engineering, Nanyang Technological University, Singapore, 639798. (Emails: s180049@e.ntu.edu.sg, junzhao@ntu.edu.sg, melody.yang@ntu.edu.sg, dniyato@ntu.edu.sg, kwokyan.lam@ntu.edu.sg).}
\thanks{Teng~Wang is with School of Cyberspace Security, Xi'an University of Posts $\&$ Telecommunications. (Email: wangteng@xupt.edu.cn).
}
\thanks{Ning~Wang is with College of Information Science and Engineering, Ocean University of China, 266102. (Email: wangning8687@ouc.edu.cn).
}

\thanks{Lingjuan~Lyu is with Department of Computer Science, National University of Singapore, 117417. (Email: lingjuanlvsmile@gmail.com).
}
}
\begin{document}

\maketitle

\thispagestyle{fancy}
\pagestyle{fancy}
\lhead{This paper appears in IEEE Internet of Things Journal (IoT-J). Please feel free to contact us for questions or remarks.}
\cfoot{\thepage}
\renewcommand{\headrulewidth}{0.4pt}
\renewcommand{\footrulewidth}{0pt}

\begin{abstract}
The Internet of Vehicles (IoV) is a promising branch of the Internet of Things.
IoV simulates a large variety of crowdsourcing applications such as Waze, Uber, and Amazon Mechanical Turk, etc. Users of these applications report the real-time traffic information to the cloud server which trains a machine learning model based on traffic information reported by users for intelligent traffic management.  However, crowdsourcing application owners can easily infer users' location information, traffic information, motor vehicle information, environmental information, etc., which raises severe sensitive personal information privacy concerns of the users. In addition, as the number of vehicles increases, the frequent communication  between vehicles and the cloud server incurs unexpected amount of communication cost. To avoid the privacy threat and reduce the communication cost, in this paper, we propose to integrate federated learning and local differential privacy (LDP) to facilitate the crowdsourcing applications to achieve the machine learning model.
Specifically, we propose four LDP  mechanisms to perturb gradients generated by vehicles.  The proposed \texttt{Three-Outputs} mechanism 
introduces three different output possibilities 
to deliver a high accuracy when the privacy budget is small. The output possibilities of \texttt{Three-Outputs} can be encoded with two bits to reduce the communication cost. Besides, to maximize the performance when the privacy budget is large, an optimal piecewise mechanism (\texttt{PM-OPT}) is proposed. We 
further propose a suboptimal mechanism (\texttt{PM-SUB}) with a simple formula and comparable utility to \texttt{PM-OPT}. Then, we build a novel hybrid mechanism by combining \texttt{Three-Outputs} and \texttt{PM-SUB}.
Finally, an \texttt{LDP-FedSGD} algorithm is proposed to coordinate the cloud server and vehicles to train  the model collaboratively. Extensive experimental results on real-world datasets 
validate that our proposed algorithms are capable of protecting privacy while guaranteeing utility.
\end{abstract}

The development of sensors and communication technologies for Internet of Things (IoT)  have enabled a fast and large-scale collection of user data, which has bred new services and applications such as the Waze application that provides the intelligent transportation routing service. This kind of service benefits users' daily life, but it may raise privacy concerns of sensitive data such as users' location information. To address these concerns, we propose a hybrid approach that integrates federated learning (FL)~\cite{mcmahan2016communication} with local differential privacy (LDP)~\cite{bindschaedler2017plausible} techniques. FL can facilitate the collaborative learning with uploaded gradients from users instead of sharing users' raw data. A honest-but-curious aggregator may be able to leverage users' uploaded gradients to infer the original data~\cite{hitaj2017deep, xu2019verifynet}.  Thus, we deploy LDP noises to gradients to ensure privacy while not compromising the utility of gradients.

\textbf{Federated learning with LDP.} 
In addition to LDP mechanisms, FL also provides privacy protection to the data by enabling users to maintain data locally. \texttt{FedSGD} algorithm~\cite{chen2016revisiting} allows users to submit gradients instead of true data. However, attackers may reverse the gradients to infer original data. By adding LDP noises to the gradients before uploading,  we obtain the \texttt{LDP-FedSGD} algorithm, which prevents attackers from deducing original data even though they obtain perturbed gradients. As a result, the FL server gathers and averages users' submitted perturbed gradients to obtain the averaged result to update the global model's parameters.

\begin{table}[]
\centering\caption{A comparison of the worst-case variances of existing $\epsilon$-LDP mechanisms on a single numeric attribute with a domain $[-1,1]$: \texttt{Duchi} of~\cite{duchi2013local} generating a binary output, \texttt{Laplace} of~\cite{dwork2006calibrating} with the addition of Laplace noise, and the Piecewise Mechanism (\texttt{PM}) of~\cite{wang2019collecting}. For an LDP mechanism $\mathcal{A}$, its worst-case variance is denoted by $V_{\mathcal{A}}$. We obtain this table based on results of~\cite{wang2019collecting}. }
\label{LDPcomparison1}
\begin{tabular}{ll}
\hline
\multicolumn{1}{|l|}{Range of $\epsilon$} & \multicolumn{1}{|l|}{Comparison of  mechanisms} \\ \hline
\multicolumn{1}{|l|}{$0<\epsilon<1.29$} & \multicolumn{1}{|l|}{$V_{\texttt{Duchi}} < V_{\texttt{PM}} < V_{\texttt{Laplace}} $} \\ \hline
\multicolumn{1}{|l|}{$1.29<\epsilon<2.32$} & \multicolumn{1}{|l|}{$V_{\texttt{PM}} < V_{\texttt{Duchi}} < V_{\texttt{Laplace}} $}  \\ \hline
\multicolumn{1}{|l|}{$\epsilon>2.32$} & \multicolumn{1}{|l|}{$V_{\texttt{PM}}  < V_{\texttt{Laplace}}< V_{\texttt{Duchi}} $}  \\ \hline
                       &                      
\end{tabular}
\end{table}

\begin{table}[]
\centering\caption{A comparison of the worst-case variances of our and existing $\epsilon$-LDP mechanisms on a single numeric attribute with a domain $[-1,1]$. \texttt{Three-Outputs} and \texttt{PM-SUB} are our main LDP mechanisms proposed in this paper. The results in this table show the advantages of our mechanisms over existing  mechanisms for a wide range of privacy parameter $\epsilon$.}
\label{LDPcomparison2}

\begin{tabular}{|l|l|}
\hline
Range of $\epsilon$ & Comparison of  mechanisms \\ \hline 
$0<\epsilon<\ln 2 \approx 0.69$ & $V_{\texttt{Duchi}} = V_{\texttt{Three-Outputs}} < V_{\texttt{PM-SUB}} < V_{\texttt{PM}} $ \\ \hline
$\ln 2<\epsilon<1.19$ & $V_{\texttt{Three-Outputs}} < V_{\texttt{Duchi}}   < V_{\texttt{PM-SUB}} < V_{\texttt{PM}} $  \\ \hline
$1.19<\epsilon<1.29$ & $V_{\texttt{Three-Outputs}}  < V_{\texttt{PM-SUB}} < V_{\texttt{Duchi}}   < V_{\texttt{PM}} $ \\ \hline
$1.29<\epsilon <2.56 $ & $V_{\texttt{Three-Outputs}}  < V_{\texttt{PM-SUB}} < V_{\texttt{PM}} < V_{\texttt{Duchi}}   $  \\ \hline
$2.56<\epsilon <3.27 $ & $ V_{\texttt{PM-SUB}} < V_{\texttt{Three-Outputs}}  <V_{\texttt{PM}} < V_{\texttt{Duchi}}   $ \\ \hline
$ \epsilon>3.27 $ & $ V_{\texttt{PM-SUB}} <V_{\texttt{PM}}  < V_{\texttt{Three-Outputs}} < V_{\texttt{Duchi}}   $ \\ \hline
\end{tabular}
\end{table}

\textbf{Existing LDP mechanisms.} Since the proposal of LDP in~\cite{dwork2006calibrating}, various LDP mechanisms have been proposed in the literature. Mechanisms for categorical data are presented in~\cite{bassily2015local, wang2017locally, erlingsson2014rappor}. For numerical data, for which new LDP mechanisms are developed in this paper, prior mechanisms of~\cite{dwork2006calibrating, duchi2013local, wang2019collecting} are discussed as follows. For simplicity, we consider data with a single numeric attribute which has a domain $[-1,1]$ (after normalization if the original domain is not $[-1,1]$). Extensions to the case of multiple numeric attributes will be discussed later in the paper.
\begin{itemize}
\item \textbf{\texttt{Laplace} of \cite{dwork2006calibrating}.} LDP can be understood as a variant of DP, where the difference is the definition of ``neighboring data(sets)''. In DP, two datasets are neighboring if they differ in just one record; in LDP, any two instances of the user's data are neighboring. Due to this connection between DP and LDP, the classical Laplace mechanism (referred to \texttt{Laplace} hereinafter) for DP can thus also be used to achieve LDP. Yet, \texttt{Laplace} may not achieve a high utility for some $\epsilon$ since {\ding{172} \texttt{Laplace} does not consider the difference between DP and LDP} .
\item \textbf{\texttt{Duchi} of \cite{duchi2013local}.} In view of the above drawback~\ding{172} of  \texttt{Laplace}, Duchi~\textit{et al.}~\cite{duchi2013local} introduce alternative mechanisms for LDP. For a single numeric attribute, one mechanism, hereinafter referred to \texttt{Duchi} of \cite{duchi2013local}, flips a coin with two possibilities to generate an output, where the probability of each possibility depends on the input. A disadvantage of~\texttt{Duchi} is as follows: {\ding{173}~Since the output of \texttt{Duchi} has only two possibilities, the utility may not be high for large $\epsilon$} (intuitively, for large $\epsilon$, the privacy protection is weak so the output should be close to the input which means the output should have many possibilities since the input can take any value in $[-1,1]$)\footnote{Note that any algorithm satisfying DP or LDP has the following property: the set of possible values for the output does not depend on the input (though the output distribution depends on the input). This can be easily seen by contradiction. Suppose an output $y$ is possible for input $x$ but not for $x'$ ($x$ and $x'$ satisfy  the neighboring relation in DP or LDP). Then $\bp{y \mid x}>0$ and $\bp{y \mid x'}=0$, resulting in $\bp{y \mid x} > e^{\epsilon}\bp{y \mid x'} $ and hence violating the privacy requirement ($\mathbb{P}[\cdot | \cdot]$ denotes conditional probability).}. 
\item \textbf{Piecewise Mechanism (\texttt{PM}) of~\cite{wang2019collecting}.} Due to the drawback~\ding{172} of  \texttt{Laplace}, and the drawback~\ding{173} of~\texttt{Duchi} above, Wang~\textit{et al.}~\cite{wang2019collecting} propose the Piecewise Mechanism (\texttt{PM}) which achieves higher utility than~\texttt{Duchi} for large $\epsilon$, since the output range of \texttt{PM} is continuous and has infinite possibilities, instead of just 2 possibilities as in~\texttt{Duchi}. In \texttt{PM}, the plot of the output's probability density function with respect to the output value  consists of three ``pieces'', among which the center piece has a higher probability than the other two. As the input increases, the length of the center piece
remains unchanged, but the length of the leftmost (resp., rightmost) piece increases (resp., decreases). Since \texttt{PM} is tailored for LDP, unlike \texttt{Laplace} for LDP, \texttt{PM} has a strictly lower  worst-case variance (i.e., the maximum variance with respect to the input given $\epsilon$) than  \texttt{Laplace} for \textit{any} $\epsilon$. 
\end{itemize}  

\textbf{Proposing new LDP mechanisms.} Based on results of~\cite{wang2019collecting}, Table~\ref{LDPcomparison1} on Page~\pageref{LDPcomparison1} shows that among the three mechanisms \texttt{Duchi}, \texttt{Laplace}, and \texttt{PM}, in terms of the worst-case variance, \texttt{Duchi} is the best for $0<\epsilon<1.29$, while \texttt{PM} is the best for $\epsilon>1.29$. Then a natural research question is that can we propose better or even optimal LDP mechanisms? The optimal LDP mechanism for numeric data is still open in the literature, but the optimal LDP mechanism for categorical data has been discussed by Kairouz~\textit{et~al.}~\cite{Kairouz2014ExtremalMF}. In particular,~\cite{Kairouz2014ExtremalMF} shows that for a categorical attribute (with a limited number of discrete values), the binary and randomized response mechanisms, are universally optimal for very small and large $\epsilon$, respectively. Although~\cite{Kairouz2014ExtremalMF} handles categorical data, its results can provide the following insight even for continuous numeric data: for very small $\epsilon$, a mechanism generating a binary output should be optimal; for very large $\epsilon$, the optimal mechanism's output should have infinite possibilities. This is also in consistent with the results of Table~\ref{LDPcomparison1} on Page~\pageref{LDPcomparison1}. Based on the above insight, intuitively, there may exist a range of medium $\epsilon$ where a mechanism with three, or four, or five, ... output possibilities can be optimal. To this end, \textbf{our first motivation of proposing new LDP mechanisms} is to develop a mechanism with three output possibilities such that the mechanism outperform existing mechanisms of Table~\ref{LDPcomparison1} for some $\epsilon$. The outcome of the above motivation is our mechanism called \textbf{\texttt{Three-Outputs}}. Since the analysis of \texttt{Three-Outputs} is already very complex, we do not consider a mechanism with four, or five, ... output possibilities.

In addition, \textbf{our second motivation of proposing new LDP mechanisms} is that despite the elegance of the Piecewise Mechanism (\texttt{PM}) of~\cite{wang2019collecting}, we can derive the optimal mechanism \textbf{\texttt{PM-OPT}}  under the ``piecewise framework'' of~\cite{wang2019collecting}, in order to improve \texttt{PM}. Note that \texttt{PM-OPT} is just optimal under the above framework and may not be the optimal LDP mechanism. Since the expressions for \texttt{PM-OPT} are quite complex, we present \textbf{\texttt{PM-SUB}} which compared with \texttt{PM-OPT} is suboptimal, but has simpler expressions and achieves a comparable utility.

Table~\ref{LDPcomparison2} on Page~\pageref{LDPcomparison2} gives a comparison of our and existing LDP mechanisms. For simplicity, we do not include \texttt{Laplace} in the comparison since \texttt{Laplace} is worse than \texttt{PM} for any $\epsilon$ according to  Table~\ref{LDPcomparison1}. As shown in Table~\ref{LDPcomparison2}, in terms of the worst-case variance for a single numeric attribute with a domain $[-1,1]$, \texttt{Three-Outputs} outperforms \texttt{Duchi} for $\epsilon>\ln 2 \approx 0.69$ (and is the same as \texttt{Duchi} for $\epsilon \leq \ln 2$), while \texttt{PM-SUB} beats \texttt{PM} for any $\epsilon$; moreover, \texttt{Three-Outputs} outperforms both \texttt{Duchi} and \texttt{PM} for $\ln 2<\epsilon<3.27$, while \texttt{PM-SUB} beats both \texttt{Duchi} and \texttt{PM} for $\epsilon>1.19$. 
 
We also follow the practice of~\cite{wang2019collecting}, which combines different mechanisms \texttt{Duchi} and \texttt{PM} to obtain a \underline{h}ybrid \underline{m}echanism \texttt{HM}. \texttt{HM} has a lower worst-case variance than those of \texttt{Duchi} and \texttt{PM}. In this paper, we combine \texttt{\underline{T}hree-Outputs} and \texttt{\underline{P}M-SUB} to propose \textbf{\texttt{HM-TP}}. The intuition is as follows. Given $\epsilon$, the variances of \texttt{Three-Outputs} and \texttt{PM-SUB} (denoted by $T_{\epsilon}(x)$ and $P_{\epsilon}(x)$) depend on the input $x$, and their maximal values (i.e., the worst-case variances) may be taken at different $x$. Hence, for a hybrid mechanism which probabilistically uses  \texttt{Three-Outputs} with probability $q$ or \texttt{PM-SUB} with probability $1-q$, given $\epsilon$, the worst-case variance $ \max_{x} \left( q \cdot T_{\epsilon}(x)+(1-q)\cdot P_{\epsilon}(x) \right)$ may be strictly smaller than the minimum of $ \max_{x} T_{\epsilon}(x) $ and $ \max_{x} P_{\epsilon}(x) $. We optimize $q$ for each $\epsilon$ to obtain \texttt{HM-TP}. Due to the already complex expression of \texttt{PM-OPT}, we do not consider the combination of \texttt{PM-OPT} and  \texttt{Three-Outputs}. 

\textbf{Contributions.} Our contributions can be summarized as follows:
\begin{itemize}
     \item Using the \texttt{LDP-FedSGD} algorithm for federated learning in IoV as a motivating context, we present novel LDP mechanisms for numeric data with a continuous domain. Among our proposed mechanisms, \texttt{Three-Outputs} and \texttt{PM-SUB} outperform existing mechanisms for a wide range of $\epsilon$, as shown in the theoretical results in Table~\ref{LDPcomparison2} and confirmed by experiments. In terms of comparing our \texttt{Three-Outputs} and \texttt{PM-SUB}, we have: \texttt{Three-Outputs}, whose output has three possibilities,  is better for small $\epsilon$, while \texttt{PM-SUB}, whose output can take infinite possibilities of an interval, has higher utility for large $\epsilon$.  Our \texttt{PM-SUB} is a slightly suboptimal version of our \texttt{PM-OPT} to simplify the expressions. We further combine \texttt{Three-Outputs} and \texttt{PM-SUB} to obtain a hybrid mechanism \texttt{HM-TP}, which achieves even higher utility.
    \item We discretize the continuous output ranges of our proposed mechanisms \texttt{PM-SUB} and \texttt{PM-OPT}. Through the discretization post-processing, we enable vehicles to use our proposed mechanisms. In Section~\ref{sec:experiment}, we confirm that the discretization post-processing algorithm maintains utility with our experiments, while reducing the communication cost.
    \item Experimental evaluation of our proposed mechanisms on real-world datasets and synthetic datasets demonstrates that our proposed mechanisms achieve higher accuracy in estimating the mean frequency of the data and performing empirical risk minimization tasks than existing approaches.
\end{itemize}

\textbf{Organization.} In the following, Section~\ref{sec:preliminaries} introduces the preliminaries. Then, we introduce related works in Section~\ref{sec:related_work}. Then, we illustrate the system model and the local differential privacy based FedSGD algorithm in Section~\ref{sec:system-model}. Section~\ref{sec:problem} presents the problem formation. Section~\ref{sec:single} proposes novel solutions for the single numerical data estimation. Section~\ref{sec:multiple} illustrates proposed  mechanisms used for multidimensional  numerical data estimation. Section~\ref{sec:experiment} demonstrates our experimental results. Section~\ref{sec:conclusion} concludes the paper.

\section{Preliminaries}\label{sec:preliminaries}

In local differential privacy, users complete the perturbation by themselves. To protect users' privacy, each user runs a random perturbation algorithm $\mathcal{M}$, and then he sends perturbed results to the aggregator. The privacy budget $\epsilon$ controls the privacy-utility trade-off, and a higher privacy budget means a lower privacy protection. As a result of this, we define local differential privacy as follows:

\begin{defn} (Local Differential Privacy.) Let $\mathcal{M}$ be a randomized function with domain $\mathbb{X}$ and range $\mathbb{Y}$; i.e., $\mathcal{M}$ maps each element in $\mathbb{X}$ to a probability distribution with sample space $\mathbb{Y}$. For a non-negative $\epsilon$, the randomized mechanism $\mathcal{M}$ satisfies $\epsilon$-local differential privacy if
\begin{align}
    \bigg|\ln   \frac{\mathbb{P}_{\mathcal{M}}[Y \in S | x]}{\mathbb{P}_{\mathcal{M}}[Y \in S  | x']}   \bigg| \leq \epsilon,~\forall x, x' \in \mathbb{X},~\forall S \subseteq \mathbb{Y}.\label{local-differential-privacy-fomular}
\end{align}
\end{defn}

$\mathbb{P}_{\mathcal{M}}[\cdot | \cdot]$ means conditional probability distribution depending on $\mathcal{M}$. In local differential privacy, the random perturbation is performed by users instead of a centralized aggregator. Centralized aggregator only receives perturbed results which make sure that the aggregator is unable to distinguish whether the true tuple is $x$ or $x'$ with high confidence~(controlled by the privacy budget $\epsilon$).

\begin{figure}[h]
    \centering
    \includegraphics[scale=0.5]{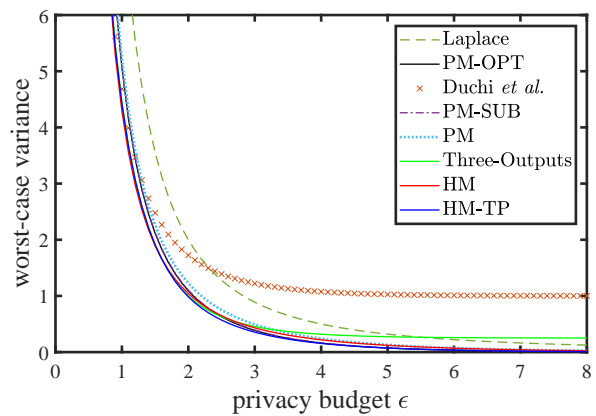}
    \caption{Different mechanisms' worst-case noise variance for one-dimensional numeric data versus the privacy budget $\epsilon$.}
    \label{fig:worst_var_comp}
\end{figure}

\section{Related Work}\label{sec:related_work}

Recently, local differential privacy has attracted much attention~\cite{ou2020singular, tang2019secure, sun2019relationship, zhao2019survey,sun2019distributed, gursoy2019secure, ghane2020preserving, lyu2020towards}. Several mechanisms for numeric data estimation have been proposed~\cite{duchi2013local, dwork2006calibrating, wang2019collecting, sun2020bisample}. (i) Dwork~\textit{et al.}~\cite{dwork2006calibrating} propose the Laplace mechanism which adds the Laplace noise to real one-dimensional data directly. The Laplace mechanism is originally used in the centralized differential privacy mechanism, and it can be applied to local differential privacy directly. (ii) For a single numeric attribute with a domain $[-1,1]$, Duchi~\textit{et al.}~\cite{duchi2013local} propose an LDP framework that provides output from $\{-C, C\}$, where $C>1$. (iii) Wang~\textit{et al.}~\cite{wang2019collecting} propose the piecewise mechanism (\texttt{PM}) which offers an output that contains infinite possibilities in the range of $[-A, A]$, where $A>1$. In addition, they apply LDP mechanism to preserve the privacy of gradients generated during machine learning tasks. Both approaches by Duchi~\textit{et al.}~\cite{duchi2013local} and Wang~\textit{et al.}~\cite{wang2019collecting} can be extended to the case of multidimensional numerical data.

\textit{Deficiencies of existing solutions.} 
Fig.~\ref{fig:worst_var_comp} illustrates that when $\epsilon \leq 2.3$, Laplace mechanism's worst-case noise variance is larger than that of Duchi~\textit{et al.}'s~\cite{duchi2018minimax} solution;
however, the Laplace mechanism outperforms Duchi~\textit{et al.}'s~\cite{duchi2018minimax} solution if $\epsilon$ is larger. The worst-case noise variance in \texttt{PM} is smaller than that of Laplace and Duchi~\textit{et al.}'s~\cite{duchi2018minimax} solution when $\epsilon$ is large. The $\texttt{HM}$ mechanism outperforms other existing solutions by taking advantage of Duchi~\textit{et al.}'s~\cite{duchi2018minimax} solution when $\epsilon$ is small and \texttt{PM} when $\epsilon$ is large. However, \texttt{PM} and \texttt{HM}'s outputs have infinite possibilities that are hard to encode. We would like to find a mechanism that can improve the utility of existing mechanisms.  In addition, we believe there is a mechanism that retains a high utility and is easy to encode its outputs. Based on the above intuition, we propose four novel mechanisms that can be used by vehicles in Section~\ref{sec:single}.

In addition, LDP has been widely used in the research of IoT~\cite{xu2019edgesanitizer,choi2018guaranteeing,he2017privacy,li2018privacy,arachchige2019local}. For example, Xu~\textit{et al.}~\cite{xu2019edgesanitizer} integrate deep learning with local differential privacy techniques and apply them to protect users' privacy in edge computing. They develop an EdgeSanitizer framework that forms a new protection layer against sensitive inference by leveraging a deep learning model to mask the learned features with noise and minimize data. Choi~\textit{et al.}~\cite{choi2018guaranteeing} explore the feasibility of applying LDP on ultra-low-power (ULP) systems. They use resampling, thresholding, and a privacy budget control algorithm to overcome the low resolution and fixed point nature of ULPs. He~\textit{et al.}~\cite{he2017privacy} address the location privacy and usage pattern privacy induced by the mobile edge computing's wireless task offloading feature by proposing a privacy-aware task offloading scheduling algorithm based on constrained Markov decision process. Li~\textit{et al.}~\cite{li2018privacy}~propose a scheme for privacy-preserving data aggregation  in the mobile edge computing to assist IoT applications with three participants, i.e., a public cloud center (PCC), an edge server (ES), and a terminal device (TD). TDs generate and encrypt data and send them to the ES, and then the ES submits the aggregated data to the PCC. The PCC uses its private key to recover the aggregated plaintext data. Their scheme provides source authentication and integrity and guarantees the data privacy of the TDs. In addition, their scheme can save half of the communication cost. To protect the privacy of massive data generated from IoT platforms, Arachchige~\textit{et al.}~\cite{arachchige2019local} design an LDP mechanism named as LATENT for deep learning. A randomization layer between the convolutional module and the fully connected module  is added to the LATENT to perturb data before data leave data owners for machine learning services. Pihur~\emph{et~al.}~\cite{pihur2019podium} propose the Podium Mechanism which is similar to our \texttt{PM-SUB}, but his mechanism is applicable to DP instead of LDP.

Moreover, federated learning or collaborative learning is an emerging distributed machine learning paradigm, and it is widely used to address data privacy problem in machine learning~\cite{mcmahan2016communication, ZhaoHWJSLH20}. 
Recently, federated learning is explored extensively in the Internet of Things recently~\cite{lim2019federated, hao2019efficient, lu2019collaborative, fantacci2020federated, saputra2019energy}. Lim~\textit{et al.}~\cite{lim2019federated} survey federated learning applications in mobile edge network comprehensively, including algorithms, applications and potential research problems, etc. Besides, Lu~\textit{et al.}~\cite{lu2019collaborative} propose CLONE which is a collaborative learning framework on the edges for connected vehicles, and it reduces the training  time while guaranteeing the prediction accuracy. Different from CLONE, our proposed approach utilizes local differential privacy noises to protect the privacy of the uploaded data. Furthermore, Fantacci~\textit{et al.}~\cite{fantacci2020federated} leverage  FL to protect the privacy of  mobile edge computing, while Saputra~\textit{et al.}~\cite{saputra2019energy} apply FL to predict the energy demand for electrical vehicle networks.

Furthermore, there have been many papers on federated learning and differential privacy such as~\cite{mcmahan2017learning, truex2019hybrid, hu2020personalized, bagdasaryan2019differential, hao2019efficient, triastcyn2019federated, wang2020federated, lyu2019fog,li2019privacy, zhao2019mobile}.  For example, Truex~\textit{et al.}~\cite{truex2019hybrid} utilize both secure multiparty computation and centralized differential privacy to prevent inference over both the messages exchanged in the process of training the model. However, they do not analyze the impact of the privacy budget on performance of FL. Hu~\textit{et al.}~\cite{hu2020personalized} propose a privacy-preserving FL approach for learning effective personalized models. They use Gaussian mechanism, a centralized DP mechanism, to protect the privacy of the model. Compared with them, our proposed LDP mechanisms provide a stronger privacy protection using the local differential privacy mechanism. Hao~\textit{et al.}~\cite{hao2019efficient} propose a differential enhanced federated learning  scheme for industrial artificial industry. Triastcyn~\textit{et al.}~\cite{triastcyn2019federated} employ Bayesian differential privacy on federated learning.  They make use of the centralized differential privacy mechanism to protect the privacy of gradients, but we leverage a stronger privacy-preserving mechanism (LDP) to protect each vehicle's privacy.

Additionally, DP can be applied to various FL algorithms such as \texttt{FedSGD}~\cite{chen2016revisiting} and \texttt{FedAvg}~\cite{mcmahan2016communication}. \texttt{FedAvg} requires users to upload model parameters instead of gradients in \texttt{FedSGD}. The advantage of \texttt{FedAvg} is that it allows users to train the model for multiple rounds locally before submitting gradients. McMahan~\textit{et al.}~\cite{mcmahan2017learning} propose to apply centralized DP to \texttt{FedAvg} and  \texttt{FedSGD} algorithm. In our paper, we deploy LDP mechanisms to gradients in \texttt{FedSGD} algorithm. Our future work is to develop LDP mechanisms for  state-of-the-art FL algorithms.  Zhao~\emph{et al.}~\cite{ZhaoWZZC19} propose a SecProbe mechanism to protect privacy and quality of participants' data by leveraging exponential mechanism and functional mechanism of differential privacy. SecProbe guarantees the high accuracy as well as privacy protection. In addition, it prevents unreliable participants in the collaborative learning.

\section{System Model and erential Privacy based FedSGD Algorithm} \label{sec:system-model}

\subsection{System Model}

In this work, we consider a scenario where a number of vehicles are connected with a cloud server as Fig.~\ref{fig:system_design}. Each vehicle is responsible for continuously performing training and inference locally based on data that it collects and the model initiated by the cloud server. Local training dataset is never uploaded to the  cloud server. After finishing predefined epochs locally, the cloud server calculates the average of uploaded gradients from vehicles and updates the global model with the average. The FL aggregator is honest-but-curious or semi-honest, which follows the FL protocol but it will try to learn additional information using received data~\cite{truex2019hybrid, lyu2020threats}. With the injected LDP noise, servers or attackers cannot retrieve users' information by reversing their uploaded gradients~\cite{hitaj2017deep, xu2019verifynet}. Thus, there is a need to deploy LDP mechanisms to the FL to develop a communication-efficient LDP-based FL algorithm.

\begin{figure}[!h]
    \centering
    \includegraphics[scale=0.65]{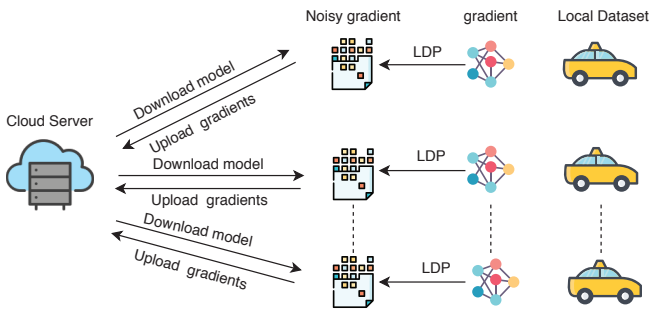}
    \caption{System Design.}
    \label{fig:system_design}
\end{figure}

\subsection{Federated learning with LDP: LDP-FedSGD}

In addition, we propose a local differential privacy based federated stochastic gradient descent algorithm (\texttt{LDP-FedSGD}) for our proposed system. Details of \texttt{LDP-FedSGD} are given in  Algorithm~\ref{algorithm-ldp-fedsgd}. Unlike the FedAvg algorithm, in the \texttt{FedSGD} algorithm, clients (i.e., vehicles) upload updated gradients instead of model parameters to the central aggregator (i.e., cloud server)~\cite{chen2016revisiting}. However, compared with the standard \texttt{FedSGD}~\cite{chen2016revisiting}, we add our proposed LDP mechanism proposed in Section~\ref{sec:multiple} to prevent the privacy leakage of gradients. Each vehicle locally takes one step of gradient descent on the current model using its local data, and then it perturbs the true gradient with Algorithm~\ref{PM-d-dimension-Algo}. The server aggregates and averages the updated gradients  from vehicles and then updates the model. To reduce the communication rounds, we separate vehicles into groups, so that the cloud server updates the model  after gathering gradient updates from vehicles in a group. In the following sections, we will introduce how we obtain the LDP algorithm in detail.

\begin{algorithm}[h]
\caption{Local Differential Privacy based FedSGD (\texttt{LDP-FedSGD}) Algorithm.}
 \label{algorithm-ldp-fedsgd}
 {\textbf{Server executes:}} \\
 {Server initializes the parameter as $\theta_0$;} \\
 \For{$t$ from $1$ to  maximal iteration number}
 {Server sends $\theta_{t-1}$ to vehicles in group $G_t$;\\
 \For{each vehicle $i$ in Group $G_t$}
 {VehicleUpdate($i$, $\Delta L$):}
 {Server computes the  average of the noisy gradient of group $G_t$ and updates the parameter from $\theta_{t-1}$ to $\theta_t$: $\theta_t \leftarrow \theta_{t-1} - \eta \cdot \frac{1}{|G_t|}\sum_{i\in G_t}\mathcal{M}(\Delta L(\theta_{t-1};x_i))$, where  $\eta_t$ is  the  learning rate;}\\
 \If{$\theta_t$ and $\theta_{t-1}$ are close enough or these remains no vehicle which has not participated in the computation}
    {break;}
  $t \rightarrow t+1;$}
  {\textbf{VehicleUpdate ($i$, $\Delta L$):}}\\
  {Compute the (true) gradient $\Delta L(\theta_{t-1};x_i)$, where $x_i$ is vehicle $i$'s data;\\
 Use local differential privacy-compliant algorithm $\mathcal{M}$ to compute the noisy gradient $\mathcal{M}(\Delta(\theta_{t-1};x_i))$;}
\end{algorithm}

\subsection{Comparing  LDP-FedSGD with other privacy-preserving federated learning paradigms}

The LDP-FedSGD algorithm incorporates LDP into federated learning. In addition to LDP-FedSGD, one may be interested in other ways of using DP in federated learning. To explain them, we categorize combinations of DP and federated learning (or distributed computations in general) by considering the place of perturbation (distributed/centralized perturbation) and privacy granularity (user-level/record-level privacy protection)~\cite{wang2019privacy}: 
\begin{itemize}
\item \textbf{Distributed/Centralized perturbation.} Note that differential privacy is achieved by introducing perturbation.
Distributed perturbation considers an honest-but-curious aggregator, while centralized perturbation needs a trusted aggregator. Both perturbation methods defend against  external inference attacks after model publishing.
\item \textbf{User-level/Record-level privacy protection.} In general, a differentially private algorithm ensures that the probability distributions of the outputs on two neighboring datasets do not differ much. The distinction between user-level and record-level privacy protection lies in how neighboring datasets are defined. We define that two datasets are user-neighboring if one dataset can be formed from the other dataset by adding or removing all one user's records arbitrarily. We define that two datasets are record-neighboring if one dataset can be obtained from the other dataset by changing a \textit{single} record of one user.
\end{itemize}

Based on the above, we further obtain four paradigms as follows: 1) \textbf{ULDP} (\underline{u}ser-\underline{l}evel privacy protection with \underline{d}istributed \underline{p}erturbation), 2) \textbf{RLDP} (\underline{r}ecord-\underline{l}evel privacy protection with \underline{d}istributed \underline{p}erturbation), 3) \textbf{RLCP} (\underline{r}ecord-\underline{l}evel privacy protection with \underline{c}entralized \underline{p}erturbation), and 4) \textbf{ULCP} (\underline{u}ser-\underline{l}evel privacy protection with \underline{c}entralized \underline{p}erturbation). The details are as follows and can also be found in the second author's prior work~\cite{wang2019privacy}.
\begin{enumerate}
\item In \textbf{ULDP}, each user $i$ selects a privacy parameter $\epsilon_i$ and applies a randomization algorithm $Y_i$ such that given any two instances $x_i$ and $x_i'$  of user $i$'s data  (which are \textit{user}-neighboring), and for any possible subset of outputs\footnote{For simplicity, we slightly abuse the notation and denote the output of algorithm $Y_i$ (resp., algorithm $Y$) by $Y_i$ (resp., $Y$).\label{footnoteabusenotation}} $\mathcal{Y}_i$ of $Y_i$, we obtain  
 $\bp{ Y_i \in \mathcal{Y}_i \mid x_i } \le e^{\epsilon_i} \times \bp{Y_i \in \mathcal{Y}_i \mid x_i '}$.
 Clearly, ULDP is achieved by each user implementing local differential privacy studied in this paper.
\item In \textbf{RLDP}, each user $i$ selects a privacy parameter $\epsilon_i$ and applies a randomization algorithm $Y_i$ such that for any two \textit{record}-neighboring instances $x_i$ and $x_i'$  of user $i$'s data  (i.e, $x_i$ and $x_i'$ differ in only one record), and for any possible subset of outputs $\mathcal{Y}_i$ of $Y_i$, we have 
$\bp{ Y_i \in \mathcal{Y}_i \mid x_i } \le e^{\epsilon_i} \times \bp{Y_i \in \mathcal{Y}_i \mid x_i '}$. Clearly, in \textbf{RLDP}, what each user does is just to apply standard differential privacy. In contrast, in \textbf{ULDP} above, each user applies local differential privacy.
\item In $\epsilon$-\textbf{RLCP}, the aggregator sets a privacy parameter $\epsilon$ and applies a randomization algorithm $Y$ such that for any user $i$, for any two \textit{record}-neighboring instances $x_i$ and $x_i'$   of user $i$'s data, and for any possible subset of outputs$^\text{\ref{footnoteabusenotation}}$ $\mathcal{Y}$ of $Y$, we obtain $\bp{ Y \in \mathcal{Y} \mid x_i } \le e^{\epsilon} \times \bp{Y \in \mathcal{Y} \mid x_i '}$. In other words, in \textbf{RLCP}, the aggregator applies standard differential privacy. For the aggregator to implement RLCP well, typically the aggregator should be able to bound the impact of each record on the information sent from a user to the  aggregator. Further discussions on this can be interesting, but we do not present more details since RLCP is not our paper's focus.
\item In $\epsilon$-\textbf{ULCP},  the aggregator sets a privacy parameter $\epsilon$ and applies a randomization algorithm $Y$ so that for any user $i$, for any two instances $x_i$ and $x_i'$ of user $i$'s data (which are \textit{user}-neighboring), and for any possible subset of outputs $\mathcal{Y}$ of $Y$, we have $\bp{ Y \in \mathcal{Y} \mid x_i } \le e^{\epsilon} \times \bp{Y \in \mathcal{Y} \mid x_i '}$. The difference RLCP and ULCP is that RLCP achieves record-level privacy protection while ULCP ensures the stronger user-level privacy protection.
\end{enumerate}

In the case of distributed perturbation, when all users set the same privacy parameter $\epsilon$, we refer to ULDP and RLDP above as $\epsilon$-ULDP and $\epsilon$-RLDP, respectively. Table~\ref{tableprivacy} presents a comparison of $\epsilon$-ULDP, $\epsilon$-RLDP, $\epsilon$-RLCP, and
$\epsilon$-ULCP. In this paper's focus, each user applies $\epsilon$-local differential privacy, so our framework is under ULDP. The reasons that we consider ULDP instead of RLDP, RLCP, and ULCP are as follows.
\begin{itemize}
\item We do not consider RLDP which implements perturbation at each user via standard differential privacy, since we aim to achieve user-level privacy protection instead of the weaker record-level privacy protection (a vehicle  is a user in our IoV applications and may have multiple records). The motivation is that often much data from a vehicle may be about the vehicle's regular driver, and it often makes more sense to protect all data about the regular driver instead of just protecting each single record. A similar argument has been recently stated in~\cite{mcmahan2017learning}, which incorporates user-level differential privacy into the training process of federated learning for  language modeling. Specifically,~\cite{mcmahan2017learning} considers user-level privacy to protect the privacy of all typed words of a user, and explains that such privacy protection is more reasonable than protecting individual words as in the case of record-level privacy. 
In addition, although we can compute the level of \underline{u}ser-level privacy from \underline{r}ecord-level privacy via the group privacy property of differential privacy (see Theorem 2.2 of~\cite{dwork2014algorithmic}), but this may significantly increase the privacy parameter and hence weaken the privacy protection if a user has many records (note that a larger privacy parameter $\epsilon$ in $\epsilon$-DP means weaker privacy protection). More specifically, for a user with $m$ records, according to the group privacy property~\cite{dwork2014algorithmic}, the privacy protection strength for the user under $\epsilon$-record-level privacy is just as that under $m\epsilon$-user-level privacy (i.e., for a user with $m$ records, $\epsilon$-RLDP ensures $m\epsilon$-ULDP; $\epsilon$-RLCP ensures $m\epsilon$-ULCP).
\item We also do not investigate RLCP and ULCP since this paper considers a honest-but-curious aggregator instead of a trusted aggregator. The aggregator is not completely trusted, so the perturbation is implemented at each user (i.e., vehicle in IoV).
\end{itemize}

\begin{table}[]
\caption{We compare different privacy notions in this table. In this paper, we focus on $\epsilon$-local differential privacy which achieves \underline{u}ser-\underline{l}evel privacy protection with \underline{d}istributed \underline{p}erturbation (ULDP). We do not consider \underline{r}ecord-\underline{l}evel privacy protection with \underline{d}istributed \underline{p}erturbation (RLDP) which implements perturbation at each user via standard differential privacy, since we aim to achieve user-level privacy protection instead of the weaker record-level privacy protection (a vehicle  is a user in our IoV applications and may have multiple records). We also do not investigate \underline{r}ecord/\underline{u}ser-\underline{l}evel privacy protection with \underline{c}entralized \underline{p}erturbation (RLCP/ULCP) since this paper considers a honest-but-curious aggregator instead of a trusted aggregator.}
\label{tableprivacy}
\setlength\tabcolsep{1pt}
\begin{tabular}{|l|l|l|}
\hline
\begin{tabular}[c]{@{}l@{}}privacy granularity and \\  place of perturbation\end{tabular} & \begin{tabular}[c]{@{}l@{}}privacy    \\ property\end{tabular}  & adversary model                                                                                                  \\ \hline
\begin{tabular}[c]{@{}l@{}}$\epsilon$-LDP (defined for\\\textbf{distributed} perturbation)\end{tabular}                                                   & $\epsilon$-ULDP & \multirow{2}{*}{\begin{tabular}[c]{@{}l@{}} defend against a  honest-but-curious \\ aggregator  \&external attacks  after model \\ publishing\end{tabular}}  \\ \cline{1-2}
                                           \begin{tabular}[c]{@{}l@{}}$\epsilon$-DP with\\ \textbf{distributed} perturbation\end{tabular}                                                   & $\epsilon$-RLDP &                               \\ \hline
\begin{tabular}[c]{@{}l@{}}$\epsilon$-DP with\\ \textbf{centralized} perturbation\end{tabular}                                                 & $\epsilon$-RLCP & \multirow{2}{*}{\begin{tabular}[c]{@{}l@{}}trusted aggregator; defend against  external \\ attacks after model publishing\end{tabular}} \\ \cline{1-2}
                                           \begin{tabular}[c]{@{}l@{}}user-level privacy with\\ \textbf{centralized} perturbation\end{tabular}                                                   & $\epsilon$-ULCP &     \\ \hline 
\end{tabular}
\end{table}

\section{Problem Formation}\label{sec:problem}

Let $x$ be a user’s true value, and $Y$ be the perturbed value. Under the perturbation mechanism $\mathcal{M}$, we use $\mathbb{E}_{\mathcal{M}}[Y|x]$ to denote the expectation of the randomized output $Y$ given input $x$. $\textup{Var}_{\mathcal{M}}[Y|x]$ is the variance of output $Y$ given input $x$. $\textup{MaxVar}(\mathcal{M})$ denotes the worst-case $\textup{Var}_{\mathcal{M}}[Y|x]$. We are interested in finding a privatization mechanism $\mathcal{M}$ that minimizes $\textup{MaxVar}(\mathcal{M})$ by solving the following constraint minimization problem:
\begin{align} 
  \nonumber &\min_{\mathcal{M}} \textup{MaxVar}(\mathcal{M}),\\
  \nonumber &~\textup{s.t.}~\textup{Eq}.~(\ref{local-differential-privacy-fomular}), \\
  \nonumber &~~~~~\mathbb{E}_{\mathcal{M}}[Y|x] = x,~\textup{and} \\
  \nonumber &~~~~~\mathbb{P}_{\mathcal{M}}[Y \in \mathbb{Y} | x] = 1.
\end{align}
The second constraint illustrates that our estimator is unbiased, and the third constraint shows the proper distribution where  $\mathbb{Y}$ is the range of randomized function $\mathcal{M}$. 
In the following sections, if $\mathcal{M}$ is clear from the context, we omit the subscript $\mathcal{M}$ for simplicity.

\section{Mechanisms for Estimation of A Single Numeric Attribute}\label{sec:single}
To solve the problem in Section~\ref{sec:problem}, we propose four local differential privacy mechanisms: \texttt{Three-Outputs}, \texttt{PM-OPT}, \texttt{PM-SUB}, and \texttt{HM-TP}. Fig.~\ref{fig:worst_var_comp} compares the worst-case noise variances of existing mechanisms and our proposed mechanisms. \texttt{Three-Outputs} has three discrete output possibilities, which incurs little communication cost because two bits are enough to encode three different outputs. Moreover, it achieves a small worst-case noise variance  in the high privacy regime (small privacy budget $\epsilon$). However, to maintain a low worst-case noise variance in the low privacy regime (large privacy budget $\epsilon$), we propose \texttt{PM-OPT} and \texttt{PM-SUB}. Both of them achieve higher accuracies than \texttt{Three-Outputs} and other existing solutions when the privacy budget $\epsilon$ is large. Additionally, we discretize their continuous ranges of output for vehicles to encode using a post-processing discretization algorithm. In the following sections, we will explain our proposed four mechanisms and the post-processing discretization algorithm in detail respectively.

\subsection{\texttt{Three-Outputs} Mechanism}
Now, we propose a mechanism with three output possibilities named as \texttt{Three-Outputs} which is illustrated in Algorithm~\ref{algo:three-outputs}. \texttt{Three-Outputs} ensures low communication cost while achieving a smaller worst-case noise variance than existing solutions in the high privacy regime (small privacy budget $\epsilon$).  Duchi~\textit{et~al.}'s~\cite{duchi2018minimax} solution contains two output possibilities, and it outperforms other approaches when the privacy budget is small. However, Kairouz~\textit{et~al.}~\cite{Kairouz2014ExtremalMF} prove that two outputs are not always optimal as $\epsilon$ increases. By outputting three values instead of two, \texttt{Three-Outputs} improves the performance as the privacy budget increases, which is shown in Fig.~\ref{fig:worst_var_comp}. When the privacy budget is small, \texttt{Three-Outputs} is equivalent to Duchi~\textit{et~al.}'s~\cite{duchi2018minimax} solution.

For notional simplicity, given a mechanism $\mathcal{M}$, we often write $\mathbb{P}_{\mathcal{M}}[Y=y \mid X = x]$ as $P_{y \leftarrow x}(\mathcal{M})$ below. We also sometimes omit $\mathcal{M}$ to obtain $\mathbb{P}[Y=y \mid X = x]$ and $P_{y \leftarrow x}$.

\begin{algorithm}[h]
 \caption{\texttt{Three-Outputs} Mechanism for One-Dimensional Numeric Data.}
 \label{algo:three-outputs}
  \KwIn{ tuple $x \in [-1,1]$ and privacy parameter $\epsilon$.} 
  \KwOut{tuple $Y \in \{-C, 0, C \}$.}
  {Sampling a random variable $u$ with the probability distribution as follows:
    \begin{align}
        &\bp{u=-1}= P_{-C \leftarrow x},  \nonumber \\
        &\bp{u=0}=P_{0 \leftarrow x},~\textup{and} \nonumber \\
        &\bp{u=1}= P_{C \leftarrow x},  \nonumber 
    \end{align}
    \nonumber where $P_{-C \leftarrow x}$, $P_{0 \leftarrow x}$ and  $P_{C \leftarrow x}$  are given in Eq.~(\ref{eq:probability--C}), Eq.~(\ref{eq:probability-C}) and Eq.~(\ref{eq:probability-0}).}\\
     \uIf{$u= -1$}{
        $Y= -C$;
      }
      \uElseIf{$u = 0$}{
        $Y = 0$;
      }
      \Else{
        $Y= C$;
      }
   return $Y$;
\end{algorithm}

Given a tuple $x \in [-1, 1]$, \texttt{Three-Outputs} returns a perturbed value $Y$ that equals $-C$, $0$ or $C$ with probabilities defined by
\begin{align}\label{eq:probability--C}
    P_{-C \leftarrow x}\hspace{-2pt} =\hspace{-2pt} \begin{cases}
        &\hspace{-12pt}\frac{1 - P_{0\leftarrow 0}}{2}\hspace{-2pt} + \hspace{-2pt}\left ( \frac{1-P_{0\leftarrow 0} }{2} \hspace{-2pt}- \hspace{-2pt}\frac{e^\epsilon -P_{0\leftarrow 0} }{e^\epsilon(e^\epsilon+1)}  \right )x,\text{if}~0\leq x\leq 1, \\
        &\hspace{-12pt}\frac{1 - P_{0\leftarrow 0}}{2} \hspace{-2pt}+\hspace{-2pt} \left ( \frac{e^\epsilon \hspace{-2pt}-\hspace{-2pt}P_{0\leftarrow 0} }{e^\epsilon+1}\hspace{-2pt}-\hspace{-2pt} \frac{1-P_{0\leftarrow 0} }{2} \right )x,\text{if }-1 \leq x\leq 0,
    \end{cases}
\end{align}
\begin{align}\label{eq:probability-C}
    P_{C \leftarrow x}\hspace{-2pt} =\hspace{-2pt} \begin{cases}
        &\hspace{-12pt}\frac{1 - P_{0\leftarrow 0}}{2} \hspace{-2pt}+\hspace{-2pt} \left ( \frac{e^\epsilon -P_{0\leftarrow 0} }{e^\epsilon+1}\hspace{-2pt}-\hspace{-2pt} \frac{1-P_{0\leftarrow 0} }{2} \right )x,\text{if }0 \leq x\leq 1, \\
        &\hspace{-12pt}\frac{1 - P_{0\leftarrow 0}}{2}\hspace{-2pt} +\hspace{-2pt} \left ( \frac{1-P_{0\leftarrow 0} }{2}\hspace{-2pt} -\hspace{-2pt} \frac{e^\epsilon -P_{0\leftarrow 0} }{e^\epsilon(e^\epsilon+1)}  \right ) x,\text{if}-1\leq x\leq 0,
    \end{cases}
\end{align}
\begin{align}\label{eq:probability-0}
  \textup{and}~  P_{0 \leftarrow x} = P_{0 \leftarrow 0} + (\frac{P_{0 \leftarrow 0}}{e^\epsilon} - P_{0 \leftarrow 0}) x,~\textup{if} -1\leq x\leq 1,
\end{align}
where $P_{0\leftarrow 0}$ is defined by
\begin{align}
       &\hspace{-10pt}P_{0 \leftarrow 0} := \nonumber \\\label{a_opt_1}
       &\hspace{-10pt}\begin{cases}
          & \hspace{-10pt}0, ~~\text{if}~\epsilon < \ln2, \\
          & \hspace{-10pt}  -\frac{1}{6}(-e^{2\epsilon}-4e^\epsilon-5 \\& +2\sqrt{\Delta_0} \cos(\frac{\pi}{3} + \frac{1}{3} \arccos(-\frac{\Delta_1}{2\Delta_0^\frac{3}{2}}))),   \text{if}~ \ln 2 \leq \epsilon \leq  \epsilon', \\
         &\hspace{-10pt} \frac{e^\epsilon}{e^\epsilon+2}, ~\text{if}~\epsilon >  \epsilon',
        \end{cases}
    \end{align} in which
    \begin{align}
        \Delta_0 &:= e^{4\epsilon}+14e^{3\epsilon}+50e^{2\epsilon}-2e^\epsilon+25,\label{eq:delta-0}\\
       \Delta_1 &:= -2e^{6\epsilon} -42e^{5\epsilon}-270e^{4\epsilon}-404e^{3\epsilon}-918e^{2\epsilon}\nonumber\\&+30e^\epsilon -250, \label{eq:delta-1} \\
       \textup{and}~\epsilon' &:= \ln \left(\frac{3 + \sqrt{65}}{2}\right) \approx \ln5.53.\label{eq:epsilon-hash}
    \end{align}

\textbf{Next, we will show how we derive the above probabilities.} 
For a mechanism which uses  $x \in [-1,1]$ as the input and only three possibilities $-C, 0, C$ for the output value, it satisfies
\begin{subnumcases}{}
\textup{$\epsilon$-LDP} : \frac{P_{C \leftarrow x}}{P_{C \leftarrow x'}}, \frac{P_{0 \leftarrow x}}{P_{0 \leftarrow x'}}, \frac{P_{-C \leftarrow x}}{P_{-C \leftarrow x'}} \in [e^{-\epsilon}, e^{\epsilon}], \label{LDP} \\ \textup{unbiased estimation:} \nonumber \\ \quad C \cdot P_{C \leftarrow x} + 0 \cdot P_{0 \leftarrow x}+(-C) \cdot P_{-C \leftarrow x}= x ,\label{unbiased}  \\ 
\textup{proper distribution:} \nonumber \\ \quad P_{y \leftarrow x} \geq 0~\textup{and}~ P_{C \leftarrow x} + P_{0 \leftarrow x} + P_{-C \leftarrow x} = 1.\label{proper}  
\end{subnumcases}

To calculate values of $P_{C \leftarrow x}$, $P_{0 \leftarrow x}$ and $P_{-C \leftarrow x}$, we use Lemma~\ref{lem-symmetrization} below to convert a mechanism $\mathcal{M}_1$ satisfying the requirements in (\ref{LDP})~(\ref{unbiased})~(\ref{proper}) to a symmetric mechanism $\mathcal{M}_2$. Then, we use Lemma~\ref{lem:lemma-2} below to transform the symmetric mechanism further to $\mathcal{M}_3$ whose worst-case noise variance is smaller than $\mathcal{M}_2$'s. Next, we use $P_{0 \leftarrow 1}$ to represent other probabilities, and then we prove that we get the minimum variance when $P_{0 \leftarrow 0} = e^{\epsilon} P_{0 \leftarrow 1}$ using Lemma~\ref{lem:lem-3}. Finally, Lemma~\ref{lem:optimal_a} and Lemma~\ref{lem:worst-case-variance-three-outputs} are used to obtain values for $P_{0 \leftarrow 0}$ and the worst-case noise variance of \texttt{Three-Outputs}, respectively. Thus, we can obtain values of $P_{C \leftarrow x}$, $P_{0 \leftarrow x}$ and $P_{-C \leftarrow x}$ using $P_{0 \leftarrow 0}$. In the following, we will illustrate above processes in detail.

By symmetry, for any $x \in [-1, 1]$, we enforce
\begin{subnumcases}{}
P_{C \leftarrow x}  =  P_{-C \leftarrow -x},\label{lem-symmetrization-conditions-a} \\ P_{0 \leftarrow x}  =  P_{0 \leftarrow -x},\label{lem-symmetrization-conditions-b}
\end{subnumcases}
where Eq.~(\ref{lem-symmetrization-conditions-b}) can be derived from Eq.~(\ref{lem-symmetrization-conditions-a}). The formal justification of Eq.~(\ref{lem-symmetrization-conditions-a})~(\ref{lem-symmetrization-conditions-b}) is given by Lemma~\ref{lem-symmetrization} below. Since the input domain $[-1,1]$ is symmetric, we can transform any mechanism satisfying requirements in (\ref{LDP})~(\ref{unbiased})~(\ref{proper}) to a symmetric mechanism while guaranteeing the worst-case noise variance will not increase in Lemma~\ref{lem-symmetrization}. Thus, we can derive probabilities when $x \in [-1, 0]$ using probabilities when $x \in [0,1]$ based on the symmetry.

\begin{lem} \label{lem-symmetrization}
For a mechanism $\mathcal{M}_1$ satisfying the requirements in (\ref{LDP})~(\ref{unbiased})~(\ref{proper}), the following symmetrization process to obtain a mechanism $\mathcal{M}_2$ will not increase (i.e., will reduce or not change) the worst-case noise variance, while mechanism $\mathcal{M}_2$ still satisfies the requirements in (\ref{LDP})~(\ref{unbiased})~(\ref{proper}).
Symmetrization: For $x \in [-1, 1]$,
\begin{align}
&P_{C \leftarrow x}(\mathcal{M}_2) =  P_{-C \leftarrow -x}(\mathcal{M}_2) = \frac{P_{C \leftarrow x}(\mathcal{M}_1) + P_{-C \leftarrow -x}(\mathcal{M}_1)}{2},\label{eq:symmetrization-1} \\ &P_{0 \leftarrow x}(\mathcal{M}_2) =  P_{0 \leftarrow -x}(\mathcal{M}_2) = \frac{P_{0 \leftarrow x}(\mathcal{M}_1) + P_{0 \leftarrow -x}(\mathcal{M}_1)}{2}.\label{eq:symmetrization-2} 
\end{align}
\end{lem}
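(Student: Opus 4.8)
The plan is to verify the three structural requirements (\ref{LDP})(\ref{unbiased})(\ref{proper}) for $\mathcal{M}_2$ in turn, and then to establish the variance claim through a clean averaging decomposition. First I would record the companion identity for the output $-C$ that the symmetrization forces: applying (\ref{eq:symmetrization-1}) with $x$ replaced by $-x$ gives $P_{-C \leftarrow x}(\mathcal{M}_2) = \frac{1}{2}\big(P_{-C \leftarrow x}(\mathcal{M}_1) + P_{C \leftarrow -x}(\mathcal{M}_1)\big)$. With this in hand the proper-distribution requirement (\ref{proper}) is immediate: each $\mathcal{M}_2$-probability is an average of two nonnegative $\mathcal{M}_1$-probabilities, hence nonnegative, and summing the three averages regroups into $\tfrac12$ times the total output mass of $\mathcal{M}_1$ at input $x$ plus $\tfrac12$ times the total mass at input $-x$, i.e. $\tfrac12 + \tfrac12 = 1$.

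For unbiasedness (\ref{unbiased}) I would compute $\mathbb{E}_{\mathcal{M}_2}[Y\mid x] = C\big(P_{C\leftarrow x}(\mathcal{M}_2) - P_{-C\leftarrow x}(\mathcal{M}_2)\big)$ and substitute the symmetrized probabilities. Using that $\mathcal{M}_1$ is unbiased at both $x$ and $-x$ — so $C\big(P_{C\leftarrow x}(\mathcal{M}_1) - P_{-C\leftarrow x}(\mathcal{M}_1)\big) = x$ and the reflected identity $C\big(P_{-C\leftarrow -x}(\mathcal{M}_1) - P_{C\leftarrow -x}(\mathcal{M}_1)\big) = x$ — the two grouped contributions each equal $x/2$ and sum to $x$.

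The $\epsilon$-LDP requirement (\ref{LDP}) is where the argument needs the right observation, and I regard it as the crux. The point is that the LDP constraint for $\mathcal{M}_1$ holds for \emph{every} ordered pair of inputs in $[-1,1]$, so it applies verbatim to the reflected pair $-x, -x'$. I would rewrite each ratio bound in additive form: for output $0$, the constraints $P_{0\leftarrow x}(\mathcal{M}_1) \le e^{\epsilon} P_{0\leftarrow x'}(\mathcal{M}_1)$ and $P_{0\leftarrow -x}(\mathcal{M}_1) \le e^{\epsilon} P_{0\leftarrow -x'}(\mathcal{M}_1)$ add to give $P_{0\leftarrow x}(\mathcal{M}_2) \le e^{\epsilon} P_{0\leftarrow x'}(\mathcal{M}_2)$; for output $C$, one pairs the $C$-constraint at $(x,x')$ with the $-C$-constraint at $(-x,-x')$, which is exactly the pairing the symmetrization (\ref{eq:symmetrization-1}) creates, and similarly for output $-C$. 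Because $a \le e^{\epsilon} b$ and $c \le e^{\epsilon} d$ imply $a+c \le e^{\epsilon}(b+d)$ (and symmetrically for the $e^{-\epsilon}$ lower bound), every $\mathcal{M}_2$-ratio stays in $[e^{-\epsilon}, e^{\epsilon}]$.

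Finally, for the variance I would use unbiasedness to write, for any mechanism of this form, $\mathrm{Var}_{\mathcal{M}}[Y\mid x] = C^2\big(1 - P_{0\leftarrow x}(\mathcal{M})\big) - x^2$. Substituting the symmetrized $P_{0\leftarrow x}(\mathcal{M}_2)$ and noting $x^2 = (-x)^2$, this expression collapses to the average $\tfrac12\big(\mathrm{Var}_{\mathcal{M}_1}[Y\mid x] + \mathrm{Var}_{\mathcal{M}_1}[Y\mid -x]\big)$. Each summand is at most $\mathrm{MaxVar}(\mathcal{M}_1)$, so $\mathrm{Var}_{\mathcal{M}_2}[Y\mid x] \le \mathrm{MaxVar}(\mathcal{M}_1)$ for every $x$; taking the supremum over $x$ yields $\mathrm{MaxVar}(\mathcal{M}_2) \le \mathrm{MaxVar}(\mathcal{M}_1)$, which is the claim. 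I expect the only step demanding care is the LDP verification, specifically matching the cross-pairing of the $C$ and $-C$ constraints to the reflection built into the symmetrization; the other three verifications are routine averaging computations.
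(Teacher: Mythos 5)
Your proposal is correct and follows essentially the same route as the paper's proof: averaging preserves the proper-distribution and unbiasedness constraints, the LDP ratios for $\mathcal{M}_2$ are ratios of sums whose terms are pairwise bounded by $e^{\pm\epsilon}$ factors (the cross-pairing of the $C$ and $-C$ constraints at $(x,x')$ and $(-x,-x')$ is exactly what the paper's displayed inequality chain implicitly uses), and the variance identity $\mathrm{Var}[Y\mid x]=C^2\left(1-P_{0\leftarrow x}\right)-x^2$ collapses $\mathrm{Var}_{\mathcal{M}_2}[Y\mid x]$ to the average of $\mathrm{Var}_{\mathcal{M}_1}[Y\mid x]$ and $\mathrm{Var}_{\mathcal{M}_1}[Y\mid -x]$, which is bounded by their maximum. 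If anything, your explicit statement of the mediant-type step $a\le e^{\epsilon}b,\ c\le e^{\epsilon}d \Rightarrow a+c\le e^{\epsilon}(b+d)$ makes the LDP verification slightly cleaner than the paper's presentation.
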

\begin{proof}
    The proof details are given in Appendix~\ref{app:proof-of-lem-1} of the submitted supplementary file.
\end{proof}
Based on Lemma~\ref{lem-symmetrization}, we define a symmetric mechanism as follows.

\textbf{Symmetric Mechanism.} A mechanism under (\ref{LDP})~(\ref{unbiased})~(\ref{proper}) is called a symmetric mechanism if it satisfies Eq.~(\ref{lem-symmetrization-conditions-a})~(\ref{lem-symmetrization-conditions-b}). In the following, we only consider the symmetric mechanism $\mathcal{M}_2$.

Now, we design probabilities for the symmetric mechanism $\mathcal{M}_2$.  As $\mathcal{M}_2$ satisfies the unbiased estimation which is a linear relationship, we set probabilities as piecewise linear functions of $x$ as follows:

\textbf{Case 1:} For $x \in [0, 1]$,
\begin{align}
P_{C \leftarrow x} & = P_{C \leftarrow 0} + (P_{C \leftarrow 1}-P_{C \leftarrow 0})x,\label{eq:lem-1-eq-1} \\ P_{-C \leftarrow x} & = P_{-C \leftarrow 0} - (P_{-C \leftarrow 0}-P_{-C \leftarrow 1})x, \label{eq:lem-1-eq-2} \\    P_{0 \leftarrow x} &= 1-P_{-C \leftarrow 0}-P_{C \leftarrow 0} \nonumber \\& + (P_{-C \leftarrow 0}-P_{C \leftarrow 0} + P_{-C \leftarrow 1} -P_{C \leftarrow 1})x.
\end{align}

\textbf{Case 2:} For $x \in [-1, 0]$,
\begin{align}
P_{C \leftarrow x} & = P_{C \leftarrow 0} + (P_{C \leftarrow 0}-P_{C \leftarrow -1})x, \\ P_{-C \leftarrow x} & = P_{-C \leftarrow 0} - (P_{-C \leftarrow -1}-P_{-C \leftarrow 0})x, \\ P_{0 \leftarrow x} &= 1-P_{-C \leftarrow 0}-P_{C \leftarrow 0} \nonumber \\& + (P_{-C \leftarrow 0}-P_{C \leftarrow 0} + P_{-C \leftarrow 1} -P_{C \leftarrow 1})x.
\end{align}

Then, we may assign values to our designed probabilities above. We find that if a symmetric mechanism satisfies Eq.~(\ref{lemma-2-enforce-1}) and Eq.~(\ref{lemma-2-enforce-2}), it obtains a smaller worst-case noise variance.
From Lemma~\ref{lem:lemma-2} below, we enforce
\begin{subnumcases}{}
    P_{C \leftarrow 1} = e^{\epsilon} P_{C \leftarrow -1}, \label{lemma-2-enforce-1} \\
    P_{-C \leftarrow -1} = e^{\epsilon} P_{-C \leftarrow 1}.\label{lemma-2-enforce-2}
\end{subnumcases}
Hence, given a symmetric mechanism $\mathcal{M}_2$ satisfying Inequality~(\ref{lem:lem-2-assumption}), we can transform it to a new symmetric mechanism $\mathcal{M}_3$ which satisfies Eq.~(\ref{lemma-2-enforce-1}) and Eq.~(\ref{lemma-2-enforce-2}) through processes of Eq.~(\ref{eq:lem-2-1})~(\ref{eq:lem-2-2})~(\ref{eq:lem-2-3}) until $P_{C \leftarrow -1} = e^\epsilon P_{-C \leftarrow 1}$.  After transformation, the new mechanism $\mathcal{M}_3$ achieves a smaller worst-case noise variance than mechanism $\mathcal{M}_2$. Therefore, we use the new symmetric mechanism  $\mathcal{M}_3$ to replace $\mathcal{M}_2$ in the future's discussion. Details of transformation are in the Lemma~\ref{lem:lemma-2}.

\begin{lem}\label{lem:lemma-2}
For a symmetric mechanism $\mathcal{M}_2$,
if \begin{align}
    P_{C \leftarrow 1}(\mathcal{M}_2) <  e^{\epsilon} P_{C \leftarrow -1}(\mathcal{M}_2), \label{lem:lem-2-assumption}
\end{align}we set a symmetric mechanism $\mathcal{M}_3$ as follows: 
For $x \in [-1, 1]$,
\begin{align}
&P_{C \leftarrow x}(\mathcal{M}_3)  = P_{-C \leftarrow -x}(\mathcal{M}_3) \nonumber \\& = P_{C \leftarrow x}(\mathcal{M}_2) - \frac{e^{\epsilon} P_{C \leftarrow -1}(\mathcal{M}_2)-P_{C \leftarrow 1}(\mathcal{M}_2)}{e^{\epsilon}-1},\label{eq:lem-2-1}
\end{align}
\begin{align}
&P_{-C \leftarrow x}(\mathcal{M}_3)  = P_{C \leftarrow -x}(\mathcal{M}_3) \nonumber \\& = P_{-C \leftarrow x}(\mathcal{M}_2) - \frac{e^{\epsilon} P_{-C \leftarrow 1}(\mathcal{M}_2)-P_{-C \leftarrow -1}(\mathcal{M}_2)}{e^{\epsilon}-1},\label{eq:lem-2-2}
\end{align}
\begin{align}
&P_{0 \leftarrow x}(\mathcal{M}_3) = 1- P_{C \leftarrow x}(\mathcal{M}_3)-P_{-C \leftarrow x}(\mathcal{M}_3) \nonumber \\ & = P_{0 \leftarrow x}(\mathcal{M}_2) + \frac{2 (e^\epsilon P_{C \leftarrow -1}(\mathcal{M}_2) - P_{C \leftarrow 1}(\mathcal{M}_2))}{e^\epsilon - 1}.\label{eq:lem-2-3}
\end{align}
Moreover, the mechanism $\mathcal{M}_3$ has a worst-case noise variance smaller than that of $\mathcal{M}_2$, while $\mathcal{M}_3$ still satisfies the requirements in (\ref{LDP})~(\ref{unbiased})~(\ref{proper}).
\end{lem}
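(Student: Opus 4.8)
The plan is to recognize the transformation in Eq.~(\ref{eq:lem-2-1})--(\ref{eq:lem-2-3}) as a \emph{uniform shift} of the output probabilities and then exploit the fact that such a shift lowers the variance by the same amount at every input. Write $\delta := \frac{e^{\epsilon}P_{C\leftarrow -1}(\mathcal{M}_2) - P_{C\leftarrow 1}(\mathcal{M}_2)}{e^{\epsilon}-1}$, which is strictly positive by hypothesis~(\ref{lem:lem-2-assumption}). Using the symmetry relations $P_{-C\leftarrow 1}(\mathcal{M}_2)=P_{C\leftarrow -1}(\mathcal{M}_2)$ and $P_{-C\leftarrow -1}(\mathcal{M}_2)=P_{C\leftarrow 1}(\mathcal{M}_2)$, the constant subtracted from $P_{-C\leftarrow x}$ in Eq.~(\ref{eq:lem-2-2}) equals the same $\delta$. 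Hence $\mathcal{M}_3$ is obtained from $\mathcal{M}_2$ by subtracting $\delta$ from both $P_{C\leftarrow x}$ and $P_{-C\leftarrow x}$ and adding $2\delta$ to $P_{0\leftarrow x}$, uniformly in $x$. Three of the constraints are then immediate: symmetry is inherited because the same constant is removed from the $C$- and $-C$-branches; the sum $P_{C}+P_{0}+P_{-C}$ is preserved since $-\delta-\delta+2\delta=0$; and unbiasedness survives because $C\bigl(P_{C\leftarrow x}-P_{-C\leftarrow x}\bigr)$ is unaffected when $\delta$ is removed from each term. In particular the difference $P_{C\leftarrow 1}-P_{-C\leftarrow 1}$ is unchanged, so the constant $C=1/(P_{C\leftarrow 1}-P_{-C\leftarrow 1})$ coincides for $\mathcal{M}_2$ and $\mathcal{M}_3$, which matters for the variance comparison below.

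Next I would verify the two remaining validity constraints, non-negativity and LDP, where the real work lies. Since $P_{0\leftarrow x}$ only increases, $P_{0\leftarrow x}(\mathcal{M}_3)\ge 2\delta>0$. For the $C$-branch, the piecewise-linear $P_{C\leftarrow x}(\mathcal{M}_2)$ is nondecreasing, so its minimum over $[-1,1]$ is $P_{C\leftarrow -1}(\mathcal{M}_2)$; a short computation shows $P_{C\leftarrow -1}(\mathcal{M}_2)\ge\delta$ reduces to $P_{C\leftarrow 1}(\mathcal{M}_2)\ge P_{C\leftarrow -1}(\mathcal{M}_2)$, which holds because unbiasedness forces $C(P_{C\leftarrow 1}-P_{-C\leftarrow 1})=1>0$ and hence $P_{C\leftarrow 1}>P_{C\leftarrow -1}$ by symmetry; the $-C$-branch is symmetric. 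For LDP, the key point is that $\delta$ is chosen precisely so that $P_{C\leftarrow 1}(\mathcal{M}_3)=e^{\epsilon}P_{C\leftarrow -1}(\mathcal{M}_3)$: thus the extremal ratio for the $C$-output, attained at the endpoints by monotonicity, is exactly $e^{\epsilon}$, and likewise for the $-C$-output, so these branches satisfy~(\ref{LDP}) with equality. For the $0$-output, adding the positive constant $2\delta$ to both $\max_x P_{0\leftarrow x}$ and $\min_x P_{0\leftarrow x}$ moves their ratio toward $1$, so it remains in $[e^{-\epsilon},e^{\epsilon}]$.

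Finally, the worst-case-variance claim follows cleanly. Because $\mathbb{E}_{\mathcal{M}}[Y\mid x]=x$, the variance is $\textup{Var}[Y\mid x]=C^2\bigl(P_{C\leftarrow x}+P_{-C\leftarrow x}\bigr)-x^2$; since the shift lowers $P_{C\leftarrow x}+P_{-C\leftarrow x}$ by exactly $2\delta$ at every $x$ while leaving $C$ and $x^2$ fixed, we obtain $\textup{Var}_{\mathcal{M}_3}[Y\mid x]=\textup{Var}_{\mathcal{M}_2}[Y\mid x]-2C^2\delta$ for all $x$, and taking the maximum gives $\textup{MaxVar}(\mathcal{M}_3)=\textup{MaxVar}(\mathcal{M}_2)-2C^2\delta<\textup{MaxVar}(\mathcal{M}_2)$, the strict inequality coming from $\delta>0$. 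I expect the main obstacle to be the LDP and non-negativity bookkeeping of the second paragraph—specifically, justifying that the extremal probability ratios are attained at $x=\pm1$ (monotonicity of the piecewise-linear probabilities) and that this particular $\delta$ tightens the $C$-branch constraint to equality without overshooting; once that is secured, the uniform pointwise variance drop makes the comparison of worst-case variances essentially automatic.
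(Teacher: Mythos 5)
Your proposal is correct and follows essentially the same route as the paper's proof: verify the three constraints for $\mathcal{M}_3$ directly, then use $\textup{Var}[Y\mid x]=C^2\bigl(1-P_{0\leftarrow x}\bigr)-x^2$ to conclude that shifting mass $2\delta>0$ onto the $0$-output lowers the variance pointwise (the paper writes the drop as $C^2\bigl(P_{0\leftarrow x}(\mathcal{M}_2)-P_{0\leftarrow x}(\mathcal{M}_3)\bigr)$, which is your $-2C^2\delta$). Your version is in fact somewhat more careful than the paper's, which checks the privacy ratio only at $x=\pm1$ and does not explicitly address non-negativity of the shifted probabilities.
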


\begin{proof}
    The proof details are given in Appendix~\ref{app:proof-of-lem-2} of the submitted supplementary file.
\end{proof}

We have proved that the symmetric mechanism $\mathcal{M}_3$ has a smaller worst-case noise variance  than that of mechanism $\mathcal{M}_2$ in Lemma~\ref{lem:lemma-2}, and then we use mechanism $\mathcal{M}_3$ to obtain the relation between $P_{0 \leftarrow 1}$ and $P_{0 \leftarrow 0}$ to find the minimum variance. 
From Lemma \ref{lem:lem-3} below, we enforce
\begin{align}
P_{0 \leftarrow 0} = e^{\epsilon} P_{0 \leftarrow 1}.
\end{align}
Then, we use the following Lemma \ref{lem:lem-3} to obtain the relation between $P_{0 \leftarrow 1}$ and $P_{0 \leftarrow 0}$, so that we can obtain $P_{C \leftarrow x}$, $P_{0 \leftarrow x}$ and $P_{-C \leftarrow x}$ using $P_{0 \leftarrow 0}$.

\begin{lem}\label{lem:lem-3}
Given $P_{0 \leftarrow 0}$, the variance of the output given input $x$ is a strictly decreasing function of $P_{0 \leftarrow 1}$ and hence is minimized when $P_{0 \leftarrow 1}  = \frac{P_{0 \leftarrow 0}}{e^{\epsilon}}$.

\end{lem}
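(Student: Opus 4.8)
The plan is to collapse the variance of $\mathcal{M}_3$ into an explicit function of the single parameter $P_{0 \leftarrow 1}$ (with $P_{0 \leftarrow 0}$ held fixed) and then read off its monotonicity. First I would use unbiasedness (\ref{unbiased}) and the proper-distribution condition (\ref{proper}): since $\mathbb{E}[Y \mid x] = x$ and $\mathbb{E}[Y^2 \mid x] = C^2\left(P_{C \leftarrow x} + P_{-C \leftarrow x}\right) = C^2\left(1 - P_{0 \leftarrow x}\right)$, the variance is
\begin{align}
\mathrm{Var}[Y \mid x] = C^2\left(1 - P_{0 \leftarrow x}\right) - x^2. \nonumber
\end{align}
By the symmetry relations (\ref{lem-symmetrization-conditions-a})(\ref{lem-symmetrization-conditions-b}) we have $\mathrm{Var}[Y \mid x] = \mathrm{Var}[Y \mid -x]$, so it suffices to treat $x \in [0,1]$, on which $P_{0 \leftarrow x} = P_{0 \leftarrow 0} + \left(P_{0 \leftarrow 1} - P_{0 \leftarrow 0}\right)x$ is linear.

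Next I would eliminate the remaining unknowns in favor of $t := P_{0 \leftarrow 1}$. Normalization at $x=1$ gives $P_{C \leftarrow 1} + P_{-C \leftarrow 1} = 1 - t$, and the relation $P_{C \leftarrow 1} = e^{\epsilon} P_{-C \leftarrow 1}$ enforced by (\ref{lemma-2-enforce-1})(\ref{lemma-2-enforce-2}) together with symmetry yields $P_{-C \leftarrow 1} = \frac{1-t}{e^{\epsilon}+1}$ and $P_{C \leftarrow 1} = \frac{e^{\epsilon}(1-t)}{e^{\epsilon}+1}$. Unbiasedness at $x=1$, namely $C\left(P_{C \leftarrow 1} - P_{-C \leftarrow 1}\right) = 1$, then forces $C = \frac{e^{\epsilon}+1}{\left(e^{\epsilon}-1\right)(1-t)}$. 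Substituting this $C$ and the linear $P_{0 \leftarrow x}$ into the variance makes $\mathrm{Var}[Y \mid x]$, for each fixed $x$ and fixed $P_{0 \leftarrow 0}$, a rational function of the single variable $t$ with denominator $(1-t)^2$.

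The core computation is to differentiate this function in $t$. After clearing the positive factor $(1-t)^3$, the sign of $\frac{d}{dt}\mathrm{Var}[Y \mid x]$ is controlled by a numerator that is \emph{affine} in $x$. The hard part will be guaranteeing that this numerator keeps one sign uniformly over all $x \in [0,1]$ and all admissible $t$; I would handle it by evaluating the numerator at the two endpoints $x=0$ and $x=1$, checking that both share a common sign, and then invoking linearity in $x$ to rule out any sign change in between. This establishes that $\mathrm{Var}[Y \mid x]$ is strictly monotone in $P_{0 \leftarrow 1}$, uniformly in $x$, with the sign of the derivative placing its minimum at the lower end of the admissible range of $t$.

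Finally I would pin down the optimizer through feasibility. Applying the $\epsilon$-LDP inequality (\ref{LDP}) to the output $0$ between inputs $0$ and $1$ gives $\frac{P_{0 \leftarrow 0}}{P_{0 \leftarrow 1}} \le e^{\epsilon}$, i.e.\ $P_{0 \leftarrow 1} \ge \frac{P_{0 \leftarrow 0}}{e^{\epsilon}}$, so $t$ ranges over an interval whose left endpoint is $\frac{P_{0 \leftarrow 0}}{e^{\epsilon}}$. Combining this bound with the strict monotonicity established above forces the minimum of $\mathrm{Var}[Y \mid x]$ to be attained at the privacy-tight boundary $P_{0 \leftarrow 1} = \frac{P_{0 \leftarrow 0}}{e^{\epsilon}}$, which is exactly the claimed minimizer and coincides with enforcing $P_{0 \leftarrow 0} = e^{\epsilon} P_{0 \leftarrow 1}$.
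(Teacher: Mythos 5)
Your proposal follows essentially the same route as the paper's Appendix~\ref{app:proof-of-lem-3}: express the variance as $C^2(1-P_{0\leftarrow x})-x^2$ with $C=\frac{e^\epsilon+1}{(e^\epsilon-1)(1-P_{0\leftarrow 1})}$, differentiate in $P_{0\leftarrow 1}$, observe the numerator is affine in $|x|$, check the endpoints $|x|=0$ and $|x|=1$, and conclude by monotonicity. One substantive remark: carrying out your computation gives
$\frac{\partial}{\partial P_{0\leftarrow 1}}\mathrm{Var}[Y\mid x]=\frac{K}{(1-P_{0\leftarrow 1})^{3}}\bigl[\,2(1-P_{0\leftarrow 0})+(2P_{0\leftarrow 0}-P_{0\leftarrow 1}-1)|x|\,\bigr]$ with $K=\bigl(\tfrac{e^\epsilon+1}{e^\epsilon-1}\bigr)^{2}$, whose bracket equals $2(1-P_{0\leftarrow 0})>0$ at $|x|=0$ and $1-P_{0\leftarrow 1}>0$ at $|x|=1$; so the variance is strictly \emph{increasing} in $P_{0\leftarrow 1}$, not decreasing as the lemma and the paper's proof assert (their derivative carries a sign flip and an apparently mistyped denominator). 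Your closing logic is the correct one and in fact repairs the paper's: for an increasing function the minimum sits at the \emph{left} end of the feasible interval, and the LDP constraint $P_{0\leftarrow 0}/P_{0\leftarrow 1}\le e^{\epsilon}$ supplies exactly that left endpoint $P_{0\leftarrow 1}=P_{0\leftarrow 0}/e^{\epsilon}$; the paper's stated chain ``strictly decreasing, hence minimized at the lower bound'' would, taken literally, place the minimizer at the opposite end. So your plan is sound and reaches the correct optimizer --- just be sure to state the monotonicity direction explicitly as increasing, and to cite the feasibility lower bound (rather than the monotonicity alone) as what pins the minimizer.
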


\begin{proof}
    The proof details are given in Appendix~\ref{app:proof-of-lem-3} of the submitted supplementary file.
\end{proof}

Lemma~\ref{lem:lem-3} shows that we get the minimum variance when $P_{0 \leftarrow 1}  = \frac{P_{0 \leftarrow 0}}{e^{\epsilon}}$. Hence, we replace $e^{\epsilon} P_{0 \leftarrow 1}$ with $P_{0 \leftarrow 0}$. Then, the variance is equivalent to
\begin{align}
    &\textup{Var}[Y | X =x] \nonumber \\&= \left( \frac{e^\epsilon + 1}{(e^\epsilon -1)(1-\frac{P_{0 \leftarrow 0}}{e^\epsilon})} \right)^2 \left( 1 - P_{0 \leftarrow 0} + (P_{0 \leftarrow 0} - \frac{P_{0 \leftarrow 0}}{e^\epsilon})|x|\right)\nonumber \\& \quad - x^2.~\label{eq:th-var}
\end{align}
Complete details for obtaining Eq.~(\ref{eq:th-var}) are in Appendix~\ref{app:proof-of-lem-3} of the submitted supplementary file.

Next, we use Lemma~\ref{lem:optimal_a} to obtain the optimal $P_{0 \leftarrow 0}$ in \texttt{Three-Outputs} to achieve the minimum worst-case variance as follows:

\begin{lem}\label{lem:optimal_a}
The optimal $P_{0 \leftarrow 0}$ to minimize the $\max_{x \in [-1,1]}\textup{Var}[Y|x]$ is defined by Eq.~(\ref{a_opt_1}).
\end{lem}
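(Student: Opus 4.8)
The plan is to minimize the variance expression in Eq.~(\ref{eq:th-var}) over the single free parameter $P_{0 \leftarrow 0}$, after first reducing the problem to a one-dimensional optimization of the worst-case variance over $x$. First I would observe that the variance in Eq.~(\ref{eq:th-var}) depends on $x$ only through $|x|$ and is \emph{increasing} in $|x|$ (the coefficient $P_{0 \leftarrow 0} - \frac{P_{0 \leftarrow 0}}{e^\epsilon} \geq 0$ and the leading factor is positive, while the $-x^2$ term is dominated for the relevant parameter range). Hence the worst case over $x \in [-1,1]$ is attained at $|x| = 1$, so that
\begin{align}
\max_{x \in [-1,1]}\textup{Var}[Y|x] = \left( \frac{e^\epsilon + 1}{(e^\epsilon -1)(1-\frac{P_{0 \leftarrow 0}}{e^\epsilon})} \right)^2 \left( 1 - \frac{P_{0 \leftarrow 0}}{e^\epsilon}\right) - 1, \nonumber
\end{align}
which I would simplify to a closed-form function $g(P_{0 \leftarrow 0})$ of the single variable $P_{0 \leftarrow 0}$. (I should double-check whether for small $\epsilon$ the $-x^2$ term can flip the monotonicity; if so the maximizer may instead sit at an interior point, which is exactly what distinguishes the $\epsilon < \ln 2$ regime.)

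Next I would treat $g$ as a function to be minimized subject to the feasibility constraints on $P_{0 \leftarrow 0}$, namely $P_{0 \leftarrow 0} \geq 0$ together with the proper-distribution and LDP constraints inherited from (\ref{LDP})~(\ref{unbiased})~(\ref{proper}). The constraint $P_{0 \leftarrow 0} \geq 0$ is what produces the boundary solution $P_{0 \leftarrow 0} = 0$ in the small-$\epsilon$ regime: I expect that for $\epsilon < \ln 2$ the unconstrained stationary point of $g$ is negative, so the minimum over the feasible set is forced to the boundary $P_{0 \leftarrow 0}=0$, recovering the two-output \texttt{Duchi} mechanism. Symmetrically, the upper constraint $P_{0 \leftarrow 0} \leq \frac{e^\epsilon}{e^\epsilon+2}$ (which is where $P_{0 \leftarrow 1} = \frac{P_{0 \leftarrow 0}}{e^\epsilon}$ together with the probabilities summing to one becomes tight) yields the boundary solution $\frac{e^\epsilon}{e^\epsilon+2}$ for $\epsilon > \epsilon'$.

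For the middle regime $\ln 2 \leq \epsilon \leq \epsilon'$ I would set $g'(P_{0 \leftarrow 0}) = 0$ and solve for the interior critical point. Clearing denominators, this first-order condition should reduce to a polynomial equation in $P_{0 \leftarrow 0}$ whose coefficients are polynomials in $e^\epsilon$; the appearance of the discriminant-like quantities $\Delta_0$ and $\Delta_1$ in Eq.~(\ref{eq:delta-0})~(\ref{eq:delta-1}) and the $\arccos/\cos$ form of the answer strongly signal that this is a \emph{cubic} in $P_{0 \leftarrow 0}$, solved via the trigonometric (Vi\`{e}te) formula for a cubic with three real roots. I would then identify which of the three real roots is the feasible minimizer (checking it lies in the admissible interval and that $g'' > 0$ there). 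Finally I would pin down the two transition thresholds: $\epsilon = \ln 2$ as the point where the interior root crosses zero, and $\epsilon' = \ln\!\big(\frac{3+\sqrt{65}}{2}\big)$ as the point where the interior root meets $\frac{e^\epsilon}{e^\epsilon+2}$; the latter algebraic value is precisely the kind of expression one obtains by substituting $P_{0 \leftarrow 0} = \frac{e^\epsilon}{e^\epsilon+2}$ into the cubic and solving the resulting quadratic in $e^\epsilon$.

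The main obstacle will be the middle-regime cubic: carrying the first-order condition through the $(1 - \frac{P_{0 \leftarrow 0}}{e^\epsilon})$ denominators without error, verifying that its discriminant is positive (so that the trigonometric form is valid and gives three real roots), and correctly selecting the branch of $\arccos$ that produces the feasible root — this branch selection is exactly what fixes the $\frac{\pi}{3} + \frac{1}{3}\arccos(\cdot)$ shift in Eq.~(\ref{a_opt_1}). A secondary subtlety is confirming that the three pieces of Eq.~(\ref{a_opt_1}) agree continuously at the breakpoints $\ln 2$ and $\epsilon'$, which I would verify by evaluating the interior formula at the endpoints and matching against $0$ and $\frac{e^\epsilon}{e^\epsilon+2}$ respectively.
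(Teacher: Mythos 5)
Your overall skeleton --- collapse the inner maximization over $x$ into a one-variable function of $P_{0\leftarrow 0}$, expect boundary solutions $0$ and $\frac{e^\epsilon}{e^\epsilon+2}$ at the two ends, and in the middle solve a cubic first-order condition by the trigonometric formula, with the thresholds $\ln 2$ and $\ln\frac{3+\sqrt{65}}{2}$ arising from where the interior root crosses the two boundaries --- is exactly the paper's route. But the step where you resolve the inner maximization is wrong, and the error is fatal rather than cosmetic. Writing $a=P_{0\leftarrow 0}$ and $b=a(1-e^{-\epsilon})$, the variance in Eq.~(\ref{eq:th-var}) is $(1-a)C^2+C^2b|x|-x^2$, a \emph{concave} parabola in $|x|$ with vertex at $|x|=C^2b/2$; its maximum over $[0,1]$ is attained at $|x|=\min\{1,\,C^2b/2\}$. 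The paper shows (Eq.~(\ref{worst-val-2})) that in the range of $a$ where the optimum actually lives one has $C^2b/2<1$, so the worst case sits at the \emph{interior} point $x^*=C^2b/2$ and the function to minimize is $f_1(a)=(1-a)C^2+\tfrac{1}{4}C^4b^2$, which carries an $(e^\epsilon-a)^4$ denominator. Your displayed formula is instead the $|x|=1$ evaluation $C^2(1-a/e^\epsilon)-1=\frac{(e^\epsilon+1)^2}{(e^\epsilon-1)^2(1-a/e^\epsilon)}-1$, which is strictly increasing in $a$; minimizing it returns $P_{0\leftarrow 0}=0$ for every $\epsilon$, never produces a cubic, and contradicts Eq.~(\ref{a_opt_1}) for $\epsilon>\ln 2$. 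Your parenthetical hedge about the $-x^2$ term is pointing at exactly this issue, but it assigns the interior maximizer to the small-$\epsilon$ regime, which is backwards: at $a=0$ the vertex sits at $|x|=0$, and it is the interior-vertex branch $f_1$ whose stationarity condition yields the cubic in all regimes.

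Once that is corrected, the rest of your plan matches the paper's Appendix~\ref{appendix:proof_optimal_a}: $f_1'(a)=0$ reduces to the cubic $2a^3-(e^{2\epsilon}+4e^\epsilon+5)a^2+(7e^\epsilon-4e^{2\epsilon}-e^{3\epsilon})a+2e^{3\epsilon}-4e^{2\epsilon}=0$, solved trigonometrically with the $\Delta_0,\Delta_1$ of Eq.~(\ref{eq:delta-0})--(\ref{eq:delta-1}); the branch $a_1$ with the $\frac{\pi}{3}+\frac{1}{3}\arccos(\cdot)$ shift is selected by locating the three roots relative to the feasible interval $[0,\frac{e^\epsilon}{e^\epsilon+2}]$; the threshold $\ln 2$ comes from $g(0)\le 0\iff e^\epsilon\ge 2$ and $\epsilon'$ from $g(\tfrac{e^\epsilon}{e^\epsilon+2})\ge 0\iff -e^{2\epsilon}+3e^\epsilon+14\ge 0$, as you anticipated. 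One further piece you would still need is the paper's handling of the complementary branch $a\ge a^*$, where the maximum genuinely is at $|x|=1$: there the worst-case variance $f_2(a)$ is shown to be increasing in $a$, so that branch never undercuts the minimum of $f_1$ and can be discarded.
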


\begin{proof}
 The proof details are given in Appendix~\ref{appendix:proof_optimal_a} of the submitted supplementary file.
\end{proof}

\begin{rem}
    Fig.~\ref{fig:optimal_a} displays how $P_{0 \leftarrow 0}$ changes with $\epsilon$ in Eq.~(\ref{a_opt_1}). When the privacy budget $\epsilon$ is small, $P_{0 \leftarrow 0} = 0$. Thus, \texttt{Three-Outputs} is equivalent to Duchi~\textit{et~al.}'s~\cite{duchi2018minimax} solution when $P_{0 \leftarrow 0} = 0$. However, as the privacy budget $\epsilon$ increases, $P_{0 \leftarrow 0}$ increases, which means that the probability of outputting true value increases.
\end{rem}

\begin{figure}[!h]
    \centering
    \includegraphics[scale=0.4]{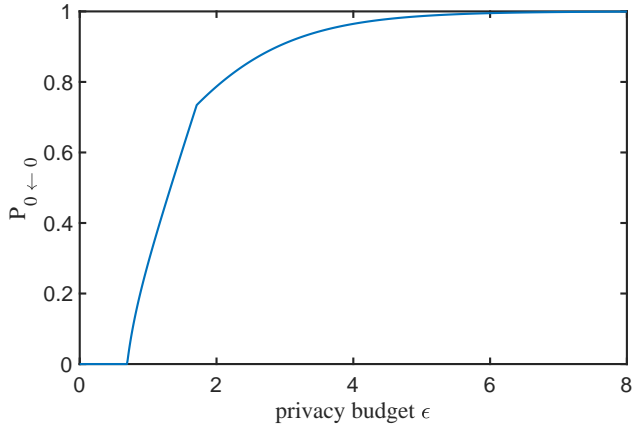}
    \caption{Optimal $P_{0 \leftarrow 0}$ if the privacy budget $\epsilon \in [0, 8]$.}
    \label{fig:optimal_a}
\end{figure}

By summarizing above, we obtain $P_{-C \leftarrow x}$, $P_{C \leftarrow x}$ and $P_{0 \leftarrow x}$ from Eq.~(\ref{eq:probability--C}), Eq.~(\ref{eq:probability-C}) and Eq.~(\ref{eq:probability-0}) using $P_{0 \leftarrow 0}$.

Then, we can calculate the optimal $P_{0 \leftarrow 0}$ to obtain the minimum worst-case noise variance of \texttt{Three-Outputs} as follows:

\begin{lem}\label{lem:worst-case-variance-three-outputs}
The minimum worst-case noise variance of \texttt{Three-Outputs} is obtained when $P_{0 \leftarrow 0}$ satisfies Eq.~(\ref{a_opt_1}).

\end{lem}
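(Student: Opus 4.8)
The plan is to treat Lemma~\ref{lem:worst-case-variance-three-outputs} as the natural endpoint of the chain of reductions already established, so that most of the heavy lifting has been done by Lemmas~\ref{lem-symmetrization}--\ref{lem:optimal_a} and by the closed-form variance expression in Eq.~(\ref{eq:th-var}). Concretely, I would first observe that by Lemma~\ref{lem-symmetrization} and Lemma~\ref{lem:lemma-2} it suffices to optimize over symmetric mechanisms $\mathcal{M}_3$ of the enforced form, and that by Lemma~\ref{lem:lem-3} the optimal choice is $P_{0 \leftarrow 1} = P_{0 \leftarrow 0}/e^{\epsilon}$, which collapses the entire design to a single free parameter $P_{0 \leftarrow 0}$. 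Thus the worst-case noise variance of \texttt{Three-Outputs} is exactly
\begin{align}
\textup{MaxVar} = \min_{P_{0\leftarrow 0}} \max_{x \in [-1,1]} \textup{Var}[Y \mid X = x], \nonumber
\end{align}
with $\textup{Var}[Y \mid X = x]$ given by Eq.~(\ref{eq:th-var}), and Lemma~\ref{lem:worst-case-variance-three-outputs} asserts precisely that this outer minimization is attained at the $P_{0 \leftarrow 0}$ specified in Eq.~(\ref{a_opt_1}). Since Lemma~\ref{lem:optimal_a} already identifies exactly this minimizer of $\max_{x}\textup{Var}[Y \mid x]$, the statement of Lemma~\ref{lem:worst-case-variance-three-outputs} is essentially a restatement, and I would prove it by invoking Lemma~\ref{lem:optimal_a} and substituting back into Eq.~(\ref{eq:th-var}).

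The key computational step I would carry out is the inner maximization over $x$ for fixed $P_{0\leftarrow 0}$. Writing $b := P_{0\leftarrow 0}$ for brevity and noting the expression in Eq.~(\ref{eq:th-var}) depends on $x$ only through $|x|$ and $x^2$, I would set $t = |x| \in [0,1]$ and analyze
\begin{align}
g(t) = \left( \frac{e^\epsilon + 1}{(e^\epsilon -1)(1 - b/e^\epsilon)} \right)^2 \left( 1 - b + (b - b/e^\epsilon)\,t \right) - t^2. \nonumber
\end{align}
This is a downward-opening parabola in $t$, so its maximum on $[0,1]$ is attained either at an interior stationary point $t^\ast = \tfrac{1}{2}\left(\tfrac{e^\epsilon+1}{(e^\epsilon-1)(1-b/e^\epsilon)}\right)^2 (b - b/e^\epsilon)$ (if $t^\ast \le 1$) or at the endpoint $t=1$. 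I would determine which regime holds as a function of $\epsilon$ and $b$, obtaining the worst-case variance as a piecewise expression. Substituting the optimal $b$ from Eq.~(\ref{a_opt_1}) into the appropriate branch then yields the explicit minimum worst-case variance, confirming the three regimes ($\epsilon < \ln 2$, $\ln 2 \le \epsilon \le \epsilon'$, and $\epsilon > \epsilon'$) induced by the cases of Eq.~(\ref{a_opt_1}).

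I expect the main obstacle to be verifying consistency between the inner-maximum location (whether it lands interior or at the boundary $t=1$) and the three cases of Eq.~(\ref{a_opt_1}) produced by Lemma~\ref{lem:optimal_a}. In particular, the middle regime $\ln 2 \le \epsilon \le \epsilon'$ gives $b$ as the root of a cubic expressed via the trigonometric (Cardano) form with $\Delta_0, \Delta_1$ from Eq.~(\ref{eq:delta-0})--(\ref{eq:delta-1}); showing that this root is exactly the value equating the two candidate maxima of $g$ (interior stationary value versus endpoint value at $t=1$) — i.e., that the optimal $b$ balances the maximum over $x$ at two competing points — is the delicate algebraic part. The boundary cases are cleaner: for $\epsilon < \ln 2$ one checks $b=0$ recovers the \texttt{Duchi} variance with the max at $t=1$, and for $\epsilon > \epsilon'$ one verifies $b = e^\epsilon/(e^\epsilon+2)$ makes the interior maximum govern. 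Since Lemma~\ref{lem:optimal_a} already supplies the minimizer and its proof is deferred to the appendix, in the main text I would keep this lemma short: state that the minimum worst-case variance follows by plugging Eq.~(\ref{a_opt_1}) into Eq.~(\ref{eq:th-var}) and taking the maximum over $x$, deferring the detailed case verification to the supplementary file.
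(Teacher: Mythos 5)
Your proposal matches the paper's own proof: the paper likewise treats this lemma as an immediate consequence of Lemma~\ref{lem:optimal_a}, simply substituting the optimal $P_{0\leftarrow 0}$ of Eq.~(\ref{a_opt_1}) into the piecewise worst-case variance expression (the interior-stationary-point versus endpoint-$|x|=1$ dichotomy of the downward parabola in $|x|$ that you describe) to read off the closed form in each of the three $\epsilon$ regimes. One minor remark: in the paper's appendix the middle-regime root arises as the stationary point of the interior-maximum branch $f_1(a)$ (via $f_1'(a)=0$) rather than as the value equating the two candidate maxima, but since you invoke Lemma~\ref{lem:optimal_a} as already established this does not affect the correctness of your argument for the present lemma.
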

\begin{proof}
    The proof details are given in Appendix~\ref{appendix:lem-three-outputs-variance} of the submitted supplementary file.
\end{proof}

\textbf{A clarification about~\texttt{Three-Outputs}~versus~\texttt{Four-Outputs}.}
One may wonder why we consider a perturbation mechanism with three outputs (i.e., our \texttt{Three-Outputs}) instead of a perturbation mechanism with four outputs (referred to as \texttt{Four-Outputs}), since using two bits to encode the output of a perturbation mechanism can represent four outputs. The reason is as follows. The approach to design \texttt{Four-Outputs} is similar to that for \texttt{Three-Outputs}, but the detailed analysis for \texttt{Four-Outputs} will be even more tedious than that for \texttt{Three-Outputs} (which is already quite complex). Given above reasons, we elaborate \texttt{Three-Outputs} but not \texttt{Four-Outputs} in this paper.

\subsection{\texttt{PM-OPT} Mechanism}
Now, we advocate an optimal piecewise mechanism (\texttt{PM-OPT}) as shown in Algorithm~\ref{algo:PM-OPT} to get a small worst-case variance when the privacy budget is large. As shown in Fig.~\ref{fig:worst_var_comp},  \texttt{Three-Outputs}'s worst-case noise variance is smaller than \texttt{PM}'s when the privacy budget $\epsilon < 3.2$. But it loses the advantage when the privacy budget $\epsilon \geq 3.2$.  As the privacy budget increases, Kairouz~\textit{et~al.}~\cite{Kairouz2014ExtremalMF} suggested to send more information using more output possibilities. Besides, we observe that it is possible to improve Wang~\textit{et~al.}'s~\cite{wang2019collecting} \texttt{PM} to achieve a smaller worst-case noise variance. Thus, inspired by them, we propose an optimal piecewise mechanism named as \texttt{PM-OPT} with a smaller worst-case noise variance than \texttt{PM}.
\begin{algorithm}[!h]
 \caption{\texttt{PM-OPT} Mechanism for One-Dimensional Numeric Data under Local Differential Privacy.}
 \label{algo:PM-OPT}
  \KwIn{ tuple $x \in [-1,1]$ and privacy parameter $\epsilon$.}
  \KwOut{tuple $Y \in [-A,  A]$.}
  {Value $t$ is calculated in the Eq.~(\ref{eq:t-value});} \\
  {Sample $u$ uniformly at random from $[0,1]$;} \\
   \eIf{$u< \frac{e^\epsilon}{t + e^\epsilon}$}
   {Sample $Y$ uniformly at random from $[L(\epsilon,x,t),  R(\epsilon,x,t)];$}
   {Sample $Y$ uniformly at random from $[-A, L(\epsilon,x,t) ) \cup  ( R(\epsilon,x,t),A ];$}
   return $Y$;
\end{algorithm}

\begin{figure}[!h]
\centering 
  \includegraphics[width=0.45\textwidth]{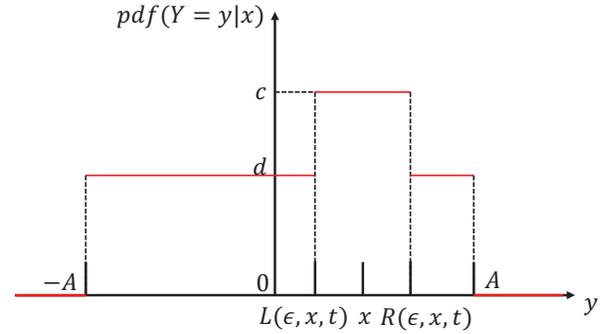} \vspace{-2pt}\caption{The probability density function $\bfu{Y=y|x}$ of the randomized output $Y$ after applying $\epsilon$-local differential privacy.}
 \label{PDFeps}
\end{figure}

For a true input $x \in [-1, 1]$, the probability density function of the randomized output $Y \in [-A, A]$ after applying local differential privacy is given by
\begin{subnumcases}{\hspace{-28pt}\bfu{Y=y|x}\hspace{-2pt}\label{eq:pdf-ldp}=\hspace{-3pt}}\label{eq:pdf-ldp-a}
  \hspace{-5pt}c, & \hspace{-18pt}for $ y\hspace{-3pt} \in\hspace{-3pt} [L(\epsilon,x,t),  R(\epsilon,x,t)]$,  \\\label{eq:pdf-ldp-b}
  \hspace{-5pt}d, & \hspace{-18pt}for $ y \hspace{-3pt}\in\hspace{-3pt} [-A, L(\epsilon,x,t))  \cup  (R(\epsilon,x,t), A]$, 
\end{subnumcases}
where
\begin{align}
    c &=  \frac{e^{\epsilon} t (e^{\epsilon}-1)}{2(t+e^\epsilon)^2},\label{eq:opt-pdf-c} \\
    d &=  \frac{ t (e^{\epsilon}-1)}{2(t+e^\epsilon)^2},\label{eq:opt-pdf-d} \\
    A &= \frac{(e^\epsilon+t)(t+1)}{t(e^\epsilon - 1)},\label{eq:opt-pdf-A} \\
    L(\epsilon,x,t) &= \frac{(e^{\epsilon}+t)(xt-1)}{t (e^{\epsilon}-1)}, \nonumber\\
    R(\epsilon,x,t) &= \frac{(e^{\epsilon}+t)(xt+1)}{t (e^{\epsilon}-1)},~\textup{and} \nonumber
\end{align}
\begin{subequations}
\begin{empheq}[left= {\hspace{-17pt}t\hspace{-1.5pt}=\hspace{-2.5pt}\empheqlbrace}]{align}
&\nonumber \hspace{-2.5pt}\frac{1}{2} \sqrt{e^{2\epsilon} \hspace{-1.5pt}+\hspace{-1.5pt} 2^{2/3} \sqrt[3]{e^{2\epsilon} \hspace{-1.5pt}-\hspace{-1.5pt} e^{4\epsilon}}} +\\&\nonumber \hspace{-1.5pt} \frac{1}{2} \sqrt{2 e^{2\epsilon} \hspace{-1.5pt}-\hspace{-1.5pt} 2^{2/3} \sqrt[3]{e^{2\epsilon} \hspace{-1.5pt}-\hspace{-1.5pt} e^{4\epsilon}} \hspace{1pt} +  \hspace{1pt} \frac{4 e^{\epsilon} \hspace{-1.5pt}-\hspace{-1.5pt} 2 e^{3\epsilon}}{\sqrt{e^{2\epsilon} \hspace{-1.5pt}+\hspace{-1.5pt} 2^{2/3} \sqrt[3]{e^{2\epsilon} \hspace{-1.5pt}-\hspace{-1.5pt} e^{4\epsilon}}}}} \\&\hspace{-1.5pt}-\hspace{-1.5pt}   \frac{e^{\epsilon}}{2},\hspace{10pt} \text{if}\hspace{5pt} \epsilon < \ln \sqrt{2}, \\& \nonumber \hspace{-2.5pt}-\frac{1}{2} \sqrt{e^{2\epsilon} \hspace{-1.5pt}+\hspace{-1.5pt} 2^{2/3} \sqrt[3]{e^{2\epsilon} \hspace{-1.5pt}-\hspace{-1.5pt} e^{4\epsilon}}} \hspace{-1.5pt}+\hspace{-1.5pt} \\& \nonumber \frac{1}{2} \sqrt{2 e^{2\epsilon} \hspace{-1.5pt}-\hspace{-1.5pt} 2^{2/3} \sqrt[3]{e^{2\epsilon} \hspace{-1.5pt}-\hspace{-1.5pt} e^{4\epsilon}} \hspace{1pt} - \hspace{1pt} \frac{4 e^{\epsilon} \hspace{-1.5pt}-\hspace{-1.5pt} 2 e^{3\epsilon}}{\sqrt{e^{2\epsilon} \hspace{-1.5pt}+\hspace{-1.5pt} 2^{2/3} \sqrt[3]{e^{2\epsilon} \hspace{-1.5pt}-\hspace{-1.5pt} e^{4\epsilon}}}}} \\& \hspace{-1.5pt}-\hspace{-1.5pt}   \frac{e^{\epsilon}}{2}, \hspace{10pt}~\text{if}~ \epsilon   > \ln \sqrt{2},
\\& \frac{\sqrt{3+2\sqrt{3}} - 1}{\sqrt{2}}  , \hspace{10pt}~\text{if}~\epsilon  = \ln \sqrt{2} .
\end{empheq}\label{eq:t-value}
\end{subequations}

The meaning of $t$ can be seen from $\frac{t-1}{t+1} = \frac{L(\epsilon,1,t)}{R(\epsilon,1,t)}$. When the input is $x = 1$, the length of the higher probability density function $\bfu{Y=y|x}=\frac{e^{\epsilon} t (e^{\epsilon}-1)}{2(t+e^\epsilon)^2}$ is $R(\epsilon,1,t) - L(\epsilon,1,t)$. $R(\epsilon,1,t)$ is the right boundary, and $L(\epsilon,1,t)$ is the left boundary. If $0 < t < \infty$, we can derive $\lim_{t \to 0} \frac{t-1}{t+1} = -1$, meaning the right boundary is opposite to the left boundary if $t$ is close to $0$. Since $\lim_{t \to \infty} \frac{t-1}{t+1} = 1$, it means that the right boundary is equal to the left boundary when $t$ is close to $\infty$. 

Moreover, Fig.~\ref{PDFeps} illustrates that the probability density function of Eq.~(\ref{eq:pdf-ldp}) contains three pieces. If $y \in [L(\epsilon,x,t),  R(\epsilon,x,t)]$, the probability density function is equal to $c$ which is higher than other two pieces $y \in [-A, L(\epsilon,x,t))$ and $y \in (R(\epsilon,x,t), A]$. We calculate the probability of a variable $Y$ falling in the interval $[L(\epsilon,x,t), R(\epsilon,x,t)]$ as $\bp{L(\epsilon,x,t) \leq Y \leq R(\epsilon,x,t)} = \int_{L(\epsilon,x,t)}^{R(\epsilon,x,t)} c~dY = \frac{e^\epsilon}{t+e^\epsilon}$.

Furthermore, we use the following lemmas to establish how we get the value $t$ in Eq.~(\ref{eq:pdf-ldp}).

\begin{lem}\label{PM-lemma-1}
Algorithm~\ref{algo:PM-OPT} achieves $\epsilon$-local differential privacy. Given an input value $x$, it returns a noisy value $Y$ with $\mathbb{E}[Y|x] = x$ and 
\begin{align}
\textup{Var}[Y|x] = \frac{t+1}{e^\epsilon-1}x^2 + \frac{(t+e^\epsilon)\big ((t+1)^3 + e^\epsilon -1 \big)}{3t^2(e^\epsilon-1)^2}.
\end{align}
\end{lem}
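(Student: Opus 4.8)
The plan is to establish the three assertions — $\epsilon$-LDP, unbiasedness $\mathbb{E}[Y\mid x]=x$, and the variance formula — separately, treating $t$ throughout as an arbitrary positive parameter (the specific value in Eq.~(\ref{eq:t-value}) plays no role in this lemma). For the privacy guarantee I would first read off from Eq.~(\ref{eq:opt-pdf-c}) and Eq.~(\ref{eq:opt-pdf-d}) that $c/d=e^{\epsilon}$, so the density $\bfu{Y=y\mid x}$ takes only the two values $c$ and $d$ with ratio exactly $e^{\epsilon}$. The remaining point is that the output range $[-A,A]$ is the \emph{same} for every input $x\in[-1,1]$: I would verify $R(\epsilon,1,t)=A$ and $L(\epsilon,-1,t)=-A$, and note that both $L$ and $R$ have positive $x$-coefficient $\frac{e^{\epsilon}+t}{e^{\epsilon}-1}$, so $[L(\epsilon,x,t),R(\epsilon,x,t)]\subseteq[-A,A]$ for all $x\in[-1,1]$. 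Then for any $y\in[-A,A]$ and any $x,x'$ the pointwise ratio $\bfu{Y=y\mid x}/\bfu{Y=y\mid x'}$ lies in $\{e^{-\epsilon},1,e^{\epsilon}\}\subseteq[e^{-\epsilon},e^{\epsilon}]$, and integrating over an arbitrary measurable $S$ delivers Eq.~(\ref{local-differential-privacy-fomular}). As a consistency check I would also confirm the density integrates to $1$, which simultaneously recovers $\bp{L\le Y\le R}=e^{\epsilon}/(t+e^{\epsilon})$, matching the sampling probability used in Algorithm~\ref{algo:PM-OPT}.

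For the moments the clean device is the decomposition $\bfu{Y=y\mid x}=d+(c-d)\,\mathbf{1}\{y\in[L,R]\}$, which splits every integral into a constant-density part over the symmetric interval $[-A,A]$ and an ``excess'' part over $[L,R]$. The constant part contributes nothing to $\mathbb{E}[Y\mid x]$ by the symmetry of $[-A,A]$, while the excess part gives $(c-d)\,\frac{R^{2}-L^{2}}{2}$. Using $R-L=\frac{2(e^{\epsilon}+t)}{t(e^{\epsilon}-1)}$, $R+L=\frac{2x(e^{\epsilon}+t)}{e^{\epsilon}-1}$, and $c-d=\frac{t(e^{\epsilon}-1)^{2}}{2(t+e^{\epsilon})^{2}}$, all the $t$, $(e^{\epsilon}-1)$, and $(e^{\epsilon}+t)$ factors cancel to leave exactly $x$, establishing $\mathbb{E}[Y\mid x]=x$.

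For the variance I would compute $\mathbb{E}[Y^{2}\mid x]$ by the same split and then subtract $(\mathbb{E}[Y\mid x])^{2}=x^{2}$. The constant part gives $\frac{2dA^{3}}{3}=\frac{(e^{\epsilon}+t)(t+1)^{3}}{3t^{2}(e^{\epsilon}-1)^{2}}$, and the excess part gives $(c-d)\frac{R^{3}-L^{3}}{3}$. Here the key move is to factor $L=\alpha(xt-1)$ and $R=\alpha(xt+1)$ with $\alpha=\frac{e^{\epsilon}+t}{t(e^{\epsilon}-1)}$, so that the cubic difference collapses via $(xt+1)^{3}-(xt-1)^{3}=6x^{2}t^{2}+2$ to $R^{3}-L^{3}=2\alpha^{3}(3x^{2}t^{2}+1)$; after simplification the excess contributes $\frac{(e^{\epsilon}+t)(3x^{2}t^{2}+1)}{3t^{2}(e^{\epsilon}-1)}$. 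Separating the $x^{2}$ term and using $\frac{e^{\epsilon}+t}{e^{\epsilon}-1}-1=\frac{t+1}{e^{\epsilon}-1}$ then yields the coefficient $\frac{t+1}{e^{\epsilon}-1}$ of $x^{2}$, while the remaining $x$-free pieces consolidate to $\frac{(t+e^{\epsilon})\big((t+1)^{3}+e^{\epsilon}-1\big)}{3t^{2}(e^{\epsilon}-1)^{2}}$, which is precisely the claimed formula.

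The privacy and unbiasedness parts are essentially immediate once the $c/d=e^{\epsilon}$ identity and the fixed-support observation are in hand, so I expect the only genuine obstacle to be the algebraic bookkeeping in the variance step — in particular, keeping the powers of $(e^{\epsilon}+t)$, $(e^{\epsilon}-1)$, and $t$ aligned so the constant term consolidates correctly. The $\alpha$-substitution for $L$ and $R$ is what makes this tractable, turning the cubic difference into a short symmetric expansion rather than a brute-force expansion of $R^{3}$ and $L^{3}$.
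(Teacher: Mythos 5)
Your proposal is correct and follows essentially the same route as the paper's Appendix~I: the privacy claim via the pointwise density ratio $c/d=e^{\epsilon}$ on a fixed support $[-A,A]$, the moments via the three-piece integral $\frac{2d}{3}A^{3}+\frac{c-d}{3}(R_x^{3}-L_x^{3})-x^{2}$, and the same symmetric cubic-difference identity to collapse $R_x^{3}-L_x^{3}$ (the paper writes $L_x,R_x$ as $u\mp v$ with $u=\frac{x}{1-2Ad}$, $v=\frac{1-2Ad}{2(c-d)}$, which is just your $\alpha(xt\mp 1)$ parametrization in different variables). The only cosmetic difference is that the paper derives $L_x,R_x$ by solving the normalization and unbiasedness constraints, whereas you verify unbiasedness directly from the stated formulas; your explicit check that the support is input-independent is a small point the paper leaves implicit.
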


\begin{proof}
The proof details are given in Appendix~\ref{appendix:lem-1} of the submitted supplementary file.
\end{proof}
Thus, when $x=1$, we obtain the worst-case noise variance as follows:
\begin{align}
 \max_{x \in [-1, 1]} \textup{Var}[Y|x] = \frac{t+1}{e^\epsilon-1} + \frac{(t+e^\epsilon)\big ((t+1)^3 + e^\epsilon -1 \big)}{3t^2(e^\epsilon-1)^2}.\label{eq:max-var-y-to-x}
\end{align}

Then, we obtain the optimal $t$ in Lemma~\ref{eq:optimal-t} to minimize Eq.~(\ref{eq:max-var-y-to-x}).

\begin{lem}\label{eq:optimal-t}
The  optimal t for $\min_{t} \max_{x \in [-1, 1]} \textup{Var}[Y|x]$ is Eq.~(\ref{eq:t-value}).
\end{lem}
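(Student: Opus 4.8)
The plan is to reduce the min--max problem to a one-dimensional minimization and then solve the resulting stationarity condition in closed form. First I would read off from Lemma~\ref{PM-lemma-1} that the coefficient of $x^2$ in $\textup{Var}[Y|x]$, namely $\frac{t+1}{e^\epsilon-1}$, is strictly positive for every admissible $t>0$ and $\epsilon>0$. Thus, for fixed $t$, the variance is a strictly convex even quadratic in $x$, so $\max_{x\in[-1,1]}\textup{Var}[Y|x]$ is attained at $x=\pm1$ and equals the function $g(t)$ displayed in Eq.~(\ref{eq:max-var-y-to-x}). This converts $\min_t\max_x\textup{Var}[Y|x]$ into the unconstrained problem $\min_{t>0}g(t)$.

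Next I would record the boundary behaviour of $g$. As $t\to 0^+$ the second term of $g$ behaves like $\frac{e^{2\epsilon}}{3t^2(e^\epsilon-1)^2}\to+\infty$, and the leading $t^2$ growth forces $g(t)\to+\infty$ as $t\to\infty$; hence a global minimizer exists in the interior and is a critical point. Differentiating $g$, clearing the denominator $3t^3(e^\epsilon-1)^2$, dividing by $2$, and using the identity $3(e^\epsilon-1)+3+e^\epsilon=4e^\epsilon$ to cancel the constant terms, the condition $g'(t)=0$ collapses to the quartic
\[
t^4 + 2e^\epsilon t^3 - 2e^\epsilon t - e^{2\epsilon} = 0 .
\]
So the lemma reduces to exhibiting the correct positive root of this quartic.

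To solve it I would depress the quartic by the shift $t=s-\tfrac{e^\epsilon}{2}$, which explains the trailing $-\tfrac{e^\epsilon}{2}$ in Eq.~(\ref{eq:t-value}); a short computation kills the cubic term and yields $s^4+ps^2+qs+r=0$ with $p=-\tfrac{3}{2}e^{2\epsilon}$, $q=e^\epsilon(e^{2\epsilon}-2)$, and $r=-\tfrac{3}{16}e^{4\epsilon}$. The crucial observation is that $q=0$ exactly when $e^{2\epsilon}=2$, i.e.\ $\epsilon=\ln\sqrt2$; in that case the equation is biquadratic, and solving the quadratic in $s^2$ (whose positive root is $s^2=\tfrac{3+2\sqrt3}{2}$) and back-substituting gives directly $t=\frac{\sqrt{3+2\sqrt3}-1}{\sqrt2}$, the third case of Eq.~(\ref{eq:t-value}). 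For $\epsilon\ne\ln\sqrt2$ I would apply Ferrari's method: form the resolvent cubic, whose real root introduces the radical $2^{2/3}\sqrt[3]{e^{2\epsilon}-e^{4\epsilon}}$ (real and negative since $e^\epsilon>1$), factor the depressed quartic into two quadratics, and read off $s$. Back-substituting $t=s-\tfrac{e^\epsilon}{2}$ reproduces the two nested-radical expressions of Eq.~(\ref{eq:t-value}), the sign in front of $\sqrt{e^{2\epsilon}+2^{2/3}\sqrt[3]{e^{2\epsilon}-e^{4\epsilon}}}$ being dictated by the sign of $q$, hence by whether $\epsilon<\ln\sqrt2$ or $\epsilon>\ln\sqrt2$.

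Finally I would check that the displayed root is the admissible one: positive, and the unique interior critical point delivering the minimum guaranteed by the boundary analysis, thereby discarding the spurious roots of the quartic. I expect the main obstacle to be precisely the Ferrari computation and the branch bookkeeping, that is, carrying the resolvent cubic through to the nested radicals of Eq.~(\ref{eq:t-value}) without sign errors, and rigorously justifying that exactly one real root is positive on each side of the threshold $\epsilon=\ln\sqrt2$ and is a minimizer rather than another stationary point. The clean coincidence $q=0\iff e^{2\epsilon}=2$ is a reassuring consistency check that the case split at $\epsilon=\ln\sqrt2$ is genuine rather than an artifact of the algebra.
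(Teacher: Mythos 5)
Your proposal is correct and follows essentially the same route as the paper: reduce the min--max problem to the single-variable function in Eq.~(\ref{eq:max-var-y-to-x}), set its $t$-derivative to zero to obtain the quartic $t^4+2e^{\epsilon}t^3-2e^{\epsilon}t-e^{2\epsilon}=0$, and solve it by depressing the quartic and applying Ferrari's method with the resolvent cubic, which is exactly what Appendix~\ref{appendix:solve_quartic} does. The only cosmetic differences are that the paper certifies the interior minimizer via positivity of the second derivative rather than via coercivity at $t\to 0^{+}$ and $t\to\infty$, and your observation that the three-way branch of Eq.~(\ref{eq:t-value}) is governed by the vanishing of the linear coefficient $q=e^{\epsilon}(e^{2\epsilon}-2)$ at $\epsilon=\ln\sqrt{2}$ makes the paper's otherwise unexplained case split more transparent.
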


\begin{proof}
By computing the first-order derivative and second-order derivative of $\min_{t} \max_{x \in [-1, 1]} \textup{Var}[Y|x]$, we get the optimal $t$. The proof details are given in Appendix~\ref{appendix:solve_quartic} of the submitted supplementary file.
\end{proof}

\subsection{\texttt{PM-SUB} Mechanism}\label{sec:pm-sub-mechanism}
We propose a suboptimal piecewise mechanism (\texttt{PM-SUB}) to simplify the sophisticated computation of $t$ in Eq.~(\ref{PDFeps}) of \texttt{PM-OPT}, and details of \texttt{PM-SUB} are shown in Algorithm~\ref{alg:pm_sub}. 

\begin{algorithm}[h]
 \caption{\texttt{PM-SUB} Mechanism for One-Dimensional Numeric Data under Local Differential Privacy.}
  \label{alg:pm_sub}
  \KwIn{ tuple $x \in [-1,1]$ and privacy parameter $\epsilon$.} 
  \KwOut{tuple $Y \in [-A,  A]$.}
  {Sample $u$ uniformly at random from $[0,1]$;} \\
   \eIf{$u< \frac{e^\epsilon}{e^{\epsilon/3} + e^\epsilon}$}
   {Sample $Y$ uniformly at random from $[\frac{(e^{\epsilon}+e^{\epsilon/3})(xe^{\epsilon/3}-1)}{e^{\epsilon/3} (e^{\epsilon}-1)},  \frac{(e^{\epsilon}+e^{\epsilon/3})(xe^{\epsilon/3}+1)}{e^{\epsilon/3} (e^{\epsilon}-1)}];$}
   {Sample $Y$ uniformly at random from $[-A, \frac{(e^{\epsilon}+e^{\epsilon/3})(xe^{\epsilon/3}-1)}{e^{\epsilon/3} (e^{\epsilon}-1)} ) \cup  ( \frac{(e^{\epsilon}+e^{\epsilon/3})(xe^{\epsilon/3}+1)}{e^{\epsilon/3} (e^{\epsilon}-1)},A ];$}
   return $Y$;
\end{algorithm}

Fig.~\ref{fig:worst_var_comp} illustrates that \texttt{PM-OPT} achieves a smaller worst-case noise variance compared with \texttt{PM}, but the parameter $t$ for \texttt{PM-OPT} in Eq.~(\ref{eq:t-value}) is complicated to compute. Some vehicles are unable to process the complicated computation. To make $t$ simple for vehicles to implement, we need to find a simple expression for it  while ensuring the mechanism's performance. Then, we find that Wang~\textit{et al.}'s~\cite{wang2019collecting} \texttt{PM} is the case when $t = e^{\epsilon/2}$. Inspired by \texttt{PM}, $\ln t$ and $\epsilon$ can be linearly related. Then, we find that $\frac{\ln t}{\epsilon}$ is close to $\frac{1}{3}$ ($t$ for \texttt{PM-OPT} in Eq.~(\ref{eq:t-value})), so we can set $e^{\epsilon/3}$ as $t$ in Eq.~(\ref{eq:pdf-ldp}) for a new mechanism named as \texttt{PM-SUB}. The probability of a variable $Y$ falling in the interval $[L(\epsilon,x,e^{\epsilon/3}), R(\epsilon,x,e^{\epsilon/3})]$ is $\frac{e^\epsilon}{e^{\epsilon/3}+e^\epsilon}$, and we give the detail of proof in Appendix~\ref{appendix:y-pdf}.

\begin{figure}[h]
    \centering
    \includegraphics[scale=0.4]{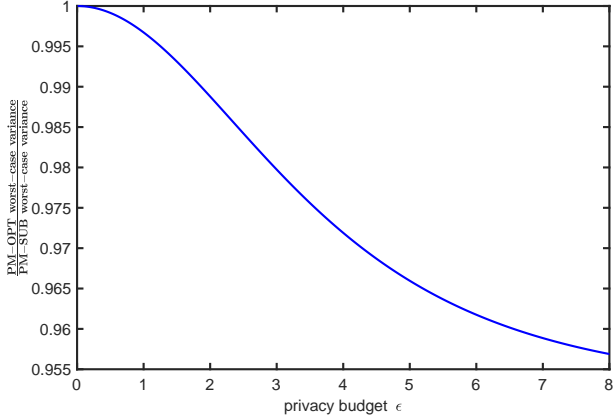}
    \caption{\texttt{PM-OPT}'s worst-case noise variance versus \texttt{PM-SUB}'s worst-case noise variance.}
    \label{fig:OPT-SUB-worst_var_comp}
\end{figure}

Similar to \texttt{PM-OPT}, we derive the worst-case noise variance of \texttt{PM-SUB} from Lemma~\ref{PM-lemma-1} with $t = e^{\epsilon/3}$ as follows:
\begin{align}
    \hspace{-5pt}\max_{x \in [-1, 1]} \textup{Var}[Y|x] = \frac{5e^{4\epsilon/3}}{3 (e^{\epsilon}-1)^2}     +\frac{5 e^{2\epsilon/3}}{3 (e^{\epsilon}-1)^2} +\frac{2 e^{\epsilon}}{(e^{\epsilon}-1)^2} .
\end{align}
As shown in Fig.~\ref{fig:OPT-SUB-worst_var_comp}, \texttt{PM-SUB}'s worst-case noise variance is close to \texttt{PM-OPT}'s, but it is smaller than \texttt{PM}'s, which can be observed in Fig.~\ref{fig:worst_var_comp}.

\subsection{Discretization Post-Processing}\label{algo:discretization}
Both \texttt{PM-OPT} and \texttt{PM-SUB}'s output ranges is $[-1,1]$ which is continuous, so that there are infinite output possibilities given an input $x$. Thus, it is difficult to encode their outputs for vehicles. Hence, we consider to apply a post-processing process to discretize the continuous output range into finite output possibilities. Algorithm~\ref{algo:discretization} shows our discretization post-processing steps.

\begin{algorithm}[h]
 \caption{Discretization Post-Processing.}
 \label{algo:discretization}
  \KwIn{Perturbed data $ y \in [-C, C]$, and domain $[-C, C]$ is separated into $2m$ pieces, where $m$ is a positive integer.} 
  \KwOut{Discrete data $Z$.}
  {Sample a Bernoulli variable $u$ such that
  \begin{align}
      \hspace{-30pt}\bp{u= 1}=
        \left (\frac{C\cdot (\left\lfloor \frac{m\cdot y}{C}\right\rfloor+1)}{m}-y \right) \cdot \frac{m}{C}; \nonumber
   \end{align} } \\
   \eIf{$u = 1$}
    {$Z = \frac{C\cdot \left\lfloor \frac{m\cdot y}{C}\right\rfloor}{m}$;}
    {$Z = \frac{C\cdot (\left\lfloor \frac{m\cdot y}{C}\right\rfloor+1)}{m}$;}
   return $Z$;
\end{algorithm}

The idea of Algorithm~\ref{algo:discretization} is as follows. We discretize the range of output into $2m$ parts due to the symmetric range $[-C,C]$, and then we obtain $2m+1$ output possibilities. After we get a perturbed data $y$, it will fall into one of $2m$ segments. Then, we categorize it to the left boundary or the right boundary of the segment, which resembles sampling a Bernoulli variable.

Next, we explain how we derive probabilities for the Bernoulli variable. Let the original input be $x$. A random variable $Y$ represents the intermediate output after the perturbation and a random variable $Z$ represents the output after the discretization. The range of $Y$ is $[-C, C]$. Because the range of output is symmetric with respect to $0$, we discretize both $[-C, 0]$ and $[0, C]$ into $m$ parts, where the value of $m$ depends on the user' requirement. Thus, we discretize $Y$ to $Z$ to take only the following $(2m+1)$ values:
\begin{align}
 & \left\{ i\times\frac{C}{m} : \text{integer }i \in \{-m,-m+1, \ldots, m\} \right\}.
\end{align}
When $Y$ is instantiated as $y \in [-C, C]$, we have the following  two cases:
\begin{itemize}
\item[\ding{172}] If $y$ is one of the above $(2m+1)$ values, we set $Z$ as $y$.
\item[\ding{173}] If $y$ is not one of the above $(2m+1)$ values, and then there exist some integer $k \in \{-m,-m+1, \ldots, m-1\} $ such that $\frac{kC}{m} < y < \frac{(k+1)C}{m} $. In fact, this gives $k < \frac{ym}{C} < k+1 $, so we can set $k:= \lfloor \frac{ym}{C} \rfloor$. Then conditioning on that $Y$ is instantiated as $y$, we set $Z$ as  $\frac{kC}{m}$ with probability $k+1- \frac{ym}{C}$ and as  $\frac{(k+1)C}{m}$ with probability $\frac{ym}{C}-k$, so that the expectation of $Z$ given $Y=y$ equals $y$ (as we will show in Eq.~(\ref{eq:expection-y-to-x}), this ensures that the expectation of $Z$ given the original input as $x$ equals $x$).
\end{itemize}

The following Lemma~\ref{lem:proof-discretize-probability} shows the probability distribution of assigning $y$ with a boundary value in the second case above when the intermediate output $y$ is not one of discrete $(2m+1)$ values.

\begin{lem}\label{lem:proof-discretize-probability}
After we obtain the intermediate output $y$ after perturbation, we discretize it to a random variable $Z$ equal to  $\frac{kC}{m}$ or $\frac{(k+1)C}{m}$ with the following probabilities:
    \begin{align}
         \bp{Z=z~|~Y=y}  = \begin{cases}
        k+1- \frac{ym}{C},  &\text{if} ~z=\frac{kC}{m}, \\[8pt]   \frac{ym}{C}-k, &\text{if} ~z=\frac{(k+1)C}{m}.
            \end{cases} \label{eq-Z-given-Y}
    \end{align}
\end{lem}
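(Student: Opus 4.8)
The plan is to read the conditional law of $Z$ straight off the sampling step of Algorithm~\ref{algo:discretization} and then verify that it is a proper distribution and is unbiased. First I would fix an instantiation $Y = y$ that is not one of the $(2m+1)$ grid points and set $k := \lfloor \frac{ym}{C} \rfloor$, so that $\frac{kC}{m} < y < \frac{(k+1)C}{m}$ with $k \in \{-m,\ldots,m-1\}$; this is exactly the integer identified in case~\ding{173} of the discretization description preceding the lemma. Substituting $\lfloor \frac{my}{C}\rfloor = k$ into the Bernoulli probability of Algorithm~\ref{algo:discretization} yields
\begin{align}
\bp{u=1} = \left( \frac{C(k+1)}{m} - y \right)\cdot \frac{m}{C} = k+1 - \frac{ym}{C}. \nonumber
\end{align}

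Next I would trace the two branches of the \texttt{if}/\texttt{else} in Algorithm~\ref{algo:discretization}: the event $u=1$ assigns $Z = \frac{C\lfloor my/C\rfloor}{m} = \frac{kC}{m}$, while the event $u=0$ assigns $Z = \frac{C(\lfloor my/C\rfloor + 1)}{m} = \frac{(k+1)C}{m}$. Therefore $\bp{Z=\frac{kC}{m}\mid Y=y} = \bp{u=1} = k+1-\frac{ym}{C}$ and $\bp{Z=\frac{(k+1)C}{m}\mid Y=y} = 1-\bp{u=1} = \frac{ym}{C}-k$, which is precisely Eq.~(\ref{eq-Z-given-Y}). Finally I would record the two sanity checks: since $k < \frac{ym}{C} < k+1$, both probabilities lie in $(0,1)$ and they sum to $1$, so the distribution is proper; and for the design-goal unbiasedness I would compute $\bE{Z\mid Y=y} = \frac{kC}{m}(k+1-\frac{ym}{C}) + \frac{(k+1)C}{m}(\frac{ym}{C}-k)$, where expanding shows that all $k$-dependent terms cancel and what remains is $\frac{C}{m}\cdot\frac{ym}{C}=y$, matching the requirement stated before the lemma.

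I do not expect a genuine obstacle here: the statement is essentially a transcription of the sampling rule of Algorithm~\ref{algo:discretization}, so the only care needed is the floor-function bookkeeping, namely confirming that $k=\lfloor my/C\rfloor$ is the correct segment index and that the two algorithmic branches land on $\frac{kC}{m}$ and $\frac{(k+1)C}{m}$. The one subtlety worth flagging is that the lemma implicitly concerns the non-grid case~\ding{173} where $y$ is strictly interior to a segment; at a grid point the two candidate values coincide and the statement degenerates harmlessly, so no separate argument is required there.
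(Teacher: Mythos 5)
Your proof is correct, and it runs in the opposite logical direction from the paper's. The paper treats the two probabilities as unknowns and \emph{derives} them as the unique solution of the $2\times 2$ system consisting of properness, $\bp{Z=\frac{kC}{m}\mid Y=y}+\bp{Z=\frac{(k+1)C}{m}\mid Y=y}=1$, and conditional unbiasedness, $\frac{kC}{m}\bp{Z=\frac{kC}{m}\mid Y=y}+\frac{(k+1)C}{m}\bp{Z=\frac{(k+1)C}{m}\mid Y=y}=y$; it then goes on to push unbiasedness through to the original input, showing $\bE{Z\mid x}=\int_y \bE{Z\mid Y=y}\,\bfu{Y=y\mid x}\,\mathrm{d}y=x$. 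You instead read the Bernoulli parameter directly off Algorithm~\ref{algo:discretization} by substituting $k=\lfloor my/C\rfloor$, match the two branches to $\frac{kC}{m}$ and $\frac{(k+1)C}{m}$, and then \emph{verify} the same two equations as sanity checks. The algebra (in particular the computation $\bE{Z\mid Y=y}=y$) is identical in both; what your version buys is a direct confirmation that the pseudocode as written realizes the claimed law, while the paper's version additionally explains \emph{why} those probabilities were chosen (they are forced by the design constraints) and completes the chain to $\bE{Z\mid x}=x$, which is the fact actually used downstream. Your closing remark about the degenerate grid-point case is a reasonable extra observation that the paper handles separately in case~\ding{172} of the preceding discussion.
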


\begin{proof}
  The proof details are given in Appendix~\ref{appendix:proof-of-discreization-probability} of the submitted supplementary file.
\end{proof}

After discretization, the worst-case noise variance does not change or get worse proved by Lemma~\ref{lem:discretization-worst-case-var} as follows:

\begin{lem}\label{lem:discretization-worst-case-var}
   Let local differential privacy mechanism be Mechanism $\mathcal{M}_1$, and discretization algorithm be Mechanism $\mathcal{M}_2$. Let all of output possibilities of Mechanism $\mathcal{M}_1$ be $S_1$, and output possibilities of Mechanism $\mathcal{M}_2$ be $S_2$. $S_2 \subset S_1$. When given input $x$, $\mathcal{M}_1$ and $\mathcal{M}_2$ are unbiased. The worst-case noise variance of Mechanism $\mathcal{M}_2$ is greater than or equal to  the worst-case noise variance of Mechanism $\mathcal{M}_1$.
\end{lem}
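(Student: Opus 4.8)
The plan is to treat the two-stage procedure as a Markov chain $x \to Y \to Z$, where $\mathcal{M}_1$ produces the intermediate output $Y$ from the input $x$ and $\mathcal{M}_2$ (the discretization of Algorithm~\ref{algo:discretization}) produces the final output $Z$ from $Y$ alone. For each fixed input $x$, I would decompose the total variance of $Z$ using the law of total variance conditioned on $Y$:
\begin{align}
\textup{Var}[Z \mid x] = \mathbb{E}_Y\big[ \textup{Var}[Z \mid Y] \big] + \textup{Var}_Y\big[ \mathbb{E}[Z \mid Y] \big]. \nonumber
\end{align}
The whole argument then reduces to showing that the first (inner-variance) term is nonnegative while the second term equals exactly $\textup{Var}[Y \mid x]$.

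\textbf{Key step: conditional unbiasedness of discretization.} First I would verify that the discretization rule is conditionally mean-preserving, i.e.\ $\mathbb{E}[Z \mid Y = y] = y$ for every instantiation $y$. Using Lemma~\ref{lem:proof-discretize-probability} with $k = \lfloor ym/C \rfloor$,
\begin{align}
\mathbb{E}[Z \mid Y = y] = \frac{kC}{m}\Big(k+1 - \tfrac{ym}{C}\Big) + \frac{(k+1)C}{m}\Big(\tfrac{ym}{C} - k\Big) = y, \nonumber
\end{align}
where the $k(k+1)$ cross terms cancel and the surviving coefficient of $ym/C$ is $1$. Consequently $\mathbb{E}[Z \mid Y] = Y$ as a random variable, so the outer term collapses to $\textup{Var}_Y[Y] = \textup{Var}[Y \mid x]$. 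This is precisely why the Bernoulli probabilities in Algorithm~\ref{algo:discretization} were chosen as they were, as already noted in case \ding{173} of the discretization description.

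\textbf{Concluding the inequality.} Substituting back, I obtain for every $x$
\begin{align}
\textup{Var}[Z \mid x] = \mathbb{E}_Y\big[\textup{Var}[Z \mid Y]\big] + \textup{Var}[Y \mid x] \ \ge\ \textup{Var}[Y \mid x], \nonumber
\end{align}
since the inner conditional variance is always nonnegative. This pointwise domination then lifts to the worst case: if $x^{\star}$ attains $\max_x \textup{Var}[Y \mid x]$, then $\max_x \textup{Var}[Z \mid x] \ge \textup{Var}[Z \mid x^{\star}] \ge \textup{Var}[Y \mid x^{\star}] = \max_x \textup{Var}[Y\mid x]$, which is exactly $\textup{MaxVar}(\mathcal{M}_2) \ge \textup{MaxVar}(\mathcal{M}_1)$. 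I expect the only genuine subtlety — rather than the conclusion itself, which is a one-line consequence of the variance decomposition — to be the bookkeeping in the conditional-unbiasedness computation and the care needed to state the Markov structure so that $Z$ depends on $x$ only through $Y$; the hypotheses $S_2 \subset S_1$ and the unbiasedness of both mechanisms are used precisely to license this clean decomposition.
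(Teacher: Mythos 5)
Your proof is correct and is essentially the paper's own argument in different packaging: the paper reduces the claim (via unbiasedness of both mechanisms) to showing $\mathbb{E}[Z^2 \mid Y=y] \ge y^2$, which it establishes by a Cauchy--Schwarz step, and that inequality is exactly the nonnegativity of your inner conditional-variance term once the conditional unbiasedness $\mathbb{E}[Z \mid Y=y]=y$ is in hand. Your law-of-total-variance phrasing is arguably cleaner, but the key computation (the mean-preserving property of the Bernoulli rounding) and the lift from pointwise domination to the worst-case variance are the same as in the paper.
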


\begin{proof}
 The proof details are given in Appendix~\ref{app:proof-of-lem-8} of the submitted supplementary file.
\end{proof}

\subsection{\texttt{HM-TP} Mechanism}

Fig.~\ref{fig:worst_var_comp} shows that \texttt{Three-Outputs} outperforms \texttt{PM-SUB} when the privacy budget $\epsilon$ is small, whereas \texttt{PM-SUB}  achieves a smaller variance if the privacy budget $\epsilon$ is large. To fully take advantage of two mechanisms, we combine \texttt{Three-Outputs} and \texttt{PM-SUB} to create a new hybrid mechanism named as \texttt{HM-TP}. Fig.~\ref{fig:worst_var_comp} illustrates that \texttt{HM-TP} obtains a lower worst-case noise variance than other solutions.

Hence, \texttt{HM-TP} invokes \texttt{PM-SUB} with probability $\beta$. Otherwise, it invokes \texttt{Three-Outputs}. We define the noisy variance of \texttt{HM-TP}  as $\textup{Var}_{\mathcal{H}}[Y|x]$ given inputs $x$ as follows:
\begin{align}
    \nonumber \textup{Var}_{\mathcal{H}}[Y|x] &= \beta \cdot \textup{Var}_{\mathcal{P}}[Y|x] + (1 - \beta) \cdot \textup{Var}_{\mathcal{T}}[Y|x],
\end{align} where $\textup{Var}_{\mathcal{P}}[Y|x]$ and $\textup{Var}_{\mathcal{T}}[Y|x]$ denote noisy outputs' variances incurred by \texttt{PM-SUB} and \texttt{Three-Outputs}, respectively. The following lemma presents the value of $\beta$:

\begin{lem}\label{lem:beta}
  The~$\max_{x \in [-1,1]} \textup{Var}_{\mathcal{H}}[Y|x]$~is minimized when $\beta$ is Eq.~(\ref{eq:beta-value}). Due to the complicated equation of $\beta$, we put it in the appendix.
\end{lem}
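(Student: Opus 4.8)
The plan is to reduce the minimization $\min_{\beta\in[0,1]}\max_{x\in[-1,1]}\textup{Var}_{\mathcal{H}}[Y|x]$ to an explicit one-variable calculus problem. First I would record the two building-block variances in closed form. By the symmetry of both component mechanisms each variance depends on $x$ only through $s := |x| \in [0,1]$. From Lemma~\ref{PM-lemma-1} with $t = e^{\epsilon/3}$, the \texttt{PM-SUB} variance is the upward quadratic $\textup{Var}_{\mathcal{P}}[Y|x] = \frac{t+1}{e^\epsilon-1}s^2 + \frac{(t+e^\epsilon)((t+1)^3+e^\epsilon-1)}{3t^2(e^\epsilon-1)^2}$, whereas from Eq.~(\ref{eq:th-var}) the \texttt{Three-Outputs} variance has the form $\textup{Var}_{\mathcal{T}}[Y|x] = A_T + B_T\,s - s^2$ for constants $A_T, B_T \geq 0$ determined by $\epsilon$ and $P_{0\leftarrow 0}$ of Eq.~(\ref{a_opt_1}). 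The essential structural contrast is that $\textup{Var}_{\mathcal{P}}$ is convex in $s$ and maximized at the endpoint $s=1$, while $\textup{Var}_{\mathcal{T}}$ is concave in $s$ and maximized at an interior point; this is exactly why a convex combination can beat both.

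Next I would form the hybrid variance $\textup{Var}_{\mathcal{H}}[Y|x] = \beta\,\textup{Var}_{\mathcal{P}} + (1-\beta)\,\textup{Var}_{\mathcal{T}}$ and collect it as a single quadratic $h(s) = c_2 s^2 + c_1 s + c_0$ in $s$, where $c_2 = \beta\big(\tfrac{t+1}{e^\epsilon-1}+1\big)-1$, $c_1 = (1-\beta)B_T \geq 0$, and $c_0$ is the constant term. Since $c_1\geq 0$, the inner maximum over $[0,1]$ is governed entirely by the sign of the leading coefficient $c_2$: when $c_2 \geq 0$ the quadratic is increasing on $[0,1]$ so the max is at $s=1$, and when $c_2 < 0$ the max is at the vertex $s^\star = -c_1/(2c_2)$ provided $s^\star \le 1$. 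This partitions $\beta\in[0,1]$ at the threshold $\beta_0$ solving $c_2=0$, and yields a piecewise-explicit expression for $g(\beta) := \max_{s}\textup{Var}_{\mathcal{H}}$: a rational function $c_0 - c_1^2/(4c_2)$ on the concave branch and an affine function $h(1)$ on the convex branch.

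Then I would minimize $g$. The key observation that organizes the argument is that, for each fixed $s$, $\textup{Var}_{\mathcal{H}}$ is affine in $\beta$, so $g(\beta)$ is a pointwise maximum of affine functions and hence convex in $\beta$; its minimizer is therefore unique and pinned down by a first-order (sub)gradient condition. On the concave branch I would differentiate the vertex value $c_0 - c_1^2/(4c_2)$ with respect to $\beta$ and set it to zero, which by the envelope theorem is equivalent to equating $\textup{Var}_{\mathcal{P}}[Y|s^\star] = \textup{Var}_{\mathcal{T}}[Y|s^\star]$ at the active maximizer; solving the resulting algebraic equation for $\beta$ (a low-degree polynomial once the expressions for $A_T$, $B_T$, $t$ are substituted) produces the closed form in Eq.~(\ref{eq:beta-value}). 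Finally I would confirm that this root lies in $[0,1]$ and, if necessary, compare it against the junction value $\beta_0$ and the endpoints $\beta\in\{0,1\}$, using convexity of $g$ to certify that the critical point is the global minimizer.

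The main obstacle I anticipate is computational rather than conceptual: $\textup{Var}_{\mathcal{T}}$ carries the heavy piecewise expression for $P_{0\leftarrow 0}$ from Eq.~(\ref{a_opt_1}), so the coefficients $A_T$ and $B_T$—and therefore the vertex location $s^\star$, the threshold $\beta_0$, and the first-order condition—become unwieldy, and the clean closed form of Eq.~(\ref{eq:beta-value}) will only emerge after substantial simplification. A secondary subtlety is the bookkeeping of which regime (the $s=1$ endpoint versus the interior vertex, together with the sub-ranges of $\epsilon$ in Eq.~(\ref{a_opt_1})) is active at the optimal $\beta$; convexity of $g(\beta)$ is the tool I would lean on to keep this case analysis honest and to guarantee global optimality of the reported $\beta$.
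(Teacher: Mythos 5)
Your proposal is correct in substance but organizes the argument around a different principle than the paper. The paper substitutes $\gamma=\beta(e^\epsilon+t)-e^\epsilon+1$, rewrites the worst-case variance on the interior-maximizer branch as $A\gamma+B/\gamma+C$, determines the signs of $A$ and $B$ (partly by numerical plots) over the three ranges of $\epsilon$ inherited from Eq.~(\ref{a_opt_1}), and separately checks the monotonicity in $\beta$ of $\textup{Var}_{\mathcal{H}}[Y|0]$ and $\textup{Var}_{\mathcal{H}}[Y|1]$ on the endpoint branch. Your route instead observes that $g(\beta)=\max_x\textup{Var}_{\mathcal{H}}[Y|x]$ is a pointwise maximum of functions affine in $\beta$, hence convex, so the minimizer is pinned down by a single first-order (sub)gradient condition $\textup{Var}_{\mathcal{P}}[Y|s^\star]=\textup{Var}_{\mathcal{T}}[Y|s^\star]$; this is algebraically the same stationarity equation as the paper's $\gamma=-\sqrt{B/A}$, but it delivers global optimality for free and removes the need for the paper's branch-by-branch monotonicity and slope-sign checks. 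What your approach does not buy you is any reduction in the algebra: the closed form in Eq.~(\ref{eq:beta-value}) and the split at $\epsilon^*\approx 0.611$ and at $\ln 2$ still require substituting the piecewise $P_{0\leftarrow 0}$ and deciding, for each range of $\epsilon$, whether the stationary point lands inside the interior-vertex branch (it does for $\epsilon\ge\ln 2$, giving $\beta_3$), degenerates because $B_T=0$ when $P_{0\leftarrow 0}=0$ (so that branch is affine and the minimum sits at $\beta=0$ or at the junction), or must be replaced by an endpoint. Two small imprecisions to repair: the branch boundary is not where the leading coefficient $c_2$ vanishes but where the vertex $s^\star=-c_1/(2c_2)$ exits $[0,1]$ (the paper's $\beta_1$), which occurs strictly earlier than $c_2=0$ whenever $c_1>0$; and for $\epsilon<\ln 2$ the \texttt{Three-Outputs} variance is maximized at $s=0$ rather than at an interior point, since the linear term disappears. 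Neither affects the soundness of the plan.
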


\begin{proof}
 The proof details are given in Appendix~\ref{proof_lemma_varH} of the submitted supplementary file.
\end{proof}

Since we have obtained the probability $\beta$, we can calculate the exact expression for the worst-case noise variance in Lemma~\ref{lem:hm-var} as follows:

\begin{lem}\label{lem:hm-var} If $\beta$ satisfies Lemma \ref{lem:beta}, we obtain the worst-case noise variance of \texttt{HM-TP} as
\begin{align}
    &\max_{x \in [-1,1]} \textup{Var}_{\mathcal{H}}[Y|x]= \nonumber\\
    & \begin{cases}
        \textup{Var}_{\mathcal{H}}[Y|x^*],\text{if}~ 0 < \beta < \frac{2(e^\epsilon-a)^2(e^\epsilon-1)-ae^\epsilon(e^\epsilon+1)^2}{2(e^\epsilon-a)^2(e^\epsilon+t)-ae^\epsilon(e^\epsilon+1)^2},\\
        \max \{\textup{Var}_{\mathcal{H}}[Y|0], \textup{Var}_{\mathcal{H}}[Y|1] \},~~\text{otherwise,} \nonumber
    \end{cases}
\end{align}
where $x^* := \frac{(\beta-1)ae^\epsilon(e^\epsilon+1)^2}{2(e^\epsilon-a)^2(\beta (e^\epsilon+ t)-e^\epsilon+1)}$ and $a = P_{0 \leftarrow 0}$ which is defined in Eq.~(\ref{a_opt_1}).
\end{lem}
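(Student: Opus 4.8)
The plan is to treat $\max_{x\in[-1,1]}\textup{Var}_{\mathcal{H}}[Y|x]$ as the maximum over $x$ of a single explicit real function, and locate that maximum by elementary calculus. First I would write down the two constituent variances explicitly: from Lemma~\ref{PM-lemma-1} with $t=e^{\epsilon/3}$ we have $\textup{Var}_{\mathcal{P}}[Y|x]$ as a quadratic in $x$ with a positive leading coefficient $\frac{t+1}{e^\epsilon-1}$, and from Eq.~(\ref{eq:th-var}) we have $\textup{Var}_{\mathcal{T}}[Y|x]$, which (after substituting $a=P_{0\leftarrow 0}$) is a linear function of $|x|$ minus $x^2$. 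Forming $\textup{Var}_{\mathcal{H}}[Y|x]=\beta\,\textup{Var}_{\mathcal{P}}[Y|x]+(1-\beta)\,\textup{Var}_{\mathcal{T}}[Y|x]$ gives, by the symmetry of both mechanisms under $x\mapsto -x$, an even function of $x$; hence it suffices to maximize over $x\in[0,1]$, where $|x|=x$ and the whole expression becomes a genuine quadratic $g(x)=\alpha x^2+\gamma x+\delta$.

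The key observation is the sign of the coefficient of $x^2$. The $-x^2$ term from \texttt{Three-Outputs} contributes $-(1-\beta)$, while \texttt{PM-SUB} contributes $+\beta\cdot\frac{t+1}{e^\epsilon-1}$; combining the $x^2$ terms, the leading coefficient $\alpha$ is positive precisely when $\beta$ exceeds the threshold $\frac{2(e^\epsilon-a)^2(e^\epsilon-1)-ae^\epsilon(e^\epsilon+1)^2}{2(e^\epsilon-a)^2(e^\epsilon+t)-ae^\epsilon(e^\epsilon+1)^2}$ appearing in the statement, and negative when $\beta$ is below it. I would verify this threshold by setting $\alpha=0$ and solving the resulting linear equation in $\beta$, using the closed forms $c,d,A$ of \texttt{PM-SUB} to express $\frac{t+1}{e^\epsilon-1}$ and the coefficient $\big(\frac{e^\epsilon+1}{(e^\epsilon-1)(1-a/e^\epsilon)}\big)^2\!\big(a-\frac{a}{e^\epsilon}\big)-1$ of $x$ coming from the \texttt{Three-Outputs} variance. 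Once $\alpha<0$ (the ``if'' branch), a downward parabola on $[0,1]$ attains its maximum at the interior critical point $x^*=-\gamma/(2\alpha)$, and substituting the explicit $\gamma$ and $\alpha$ yields exactly the stated $x^*=\frac{(\beta-1)ae^\epsilon(e^\epsilon+1)^2}{2(e^\epsilon-a)^2(\beta(e^\epsilon+t)-e^\epsilon+1)}$; one then has to check $x^*\in[0,1]$, which should follow from $0<\beta$ below the threshold together with $0\le a\le \frac{e^\epsilon}{e^\epsilon+2}$ from Eq.~(\ref{a_opt_1}). In the complementary ``otherwise'' branch $\alpha\ge 0$ makes $g$ convex (or affine), so its maximum on $[0,1]$ is attained at an endpoint, giving $\max\{\textup{Var}_{\mathcal{H}}[Y|0],\textup{Var}_{\mathcal{H}}[Y|1]\}$.

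The main obstacle I anticipate is purely the bookkeeping of the algebra: assembling $\gamma$ and $\alpha$ from the two rather unwieldy variance formulas, simplifying the ratio $-\gamma/(2\alpha)$ into the compact closed form for $x^*$, and confirming that the boundary between the two branches is controlled by the $\alpha=0$ condition rather than by $x^*$ leaving $[0,1]$. A subtle point worth isolating is whether, in the interior-maximum regime, one must also confirm $x^*\le 1$ (so that the interior critical point genuinely beats both endpoints); if $x^*$ could exceed $1$ the maximum would revert to the endpoint $x=1$, so I would check that the threshold on $\beta$ in the ``if'' branch is exactly what guarantees $x^*\le 1$. Modulo this verification, the proof reduces to substituting $\beta$ from Lemma~\ref{lem:beta} (equivalently Eq.~(\ref{eq:beta-value})) and reading off whichever branch applies, with no deeper idea required beyond the quadratic's concavity.
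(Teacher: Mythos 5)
Your overall strategy --- reduce $\textup{Var}_{\mathcal{H}}[Y|x]$ to a quadratic $\alpha x^2+\gamma x+\delta$ in $|x|$ and classify the maximum by concavity and the location of the vertex --- is exactly the paper's route, but your stated ``key observation'' is wrong and would mis-locate the branch boundary. The coefficient of $x^2$ in the hybrid variance is $\alpha=\beta\frac{t+1}{e^\epsilon-1}+\beta-1=\frac{\beta(e^\epsilon+t)-(e^\epsilon-1)}{e^\epsilon-1}$: it does not involve $a$ at all, so it cannot vanish at the $a$-dependent threshold $\beta_1:=\frac{2(e^\epsilon-a)^2(e^\epsilon-1)-ae^\epsilon(e^\epsilon+1)^2}{2(e^\epsilon-a)^2(e^\epsilon+t)-ae^\epsilon(e^\epsilon+1)^2}$; it changes sign at $\beta=\frac{e^\epsilon-1}{e^\epsilon+t}$, which is strictly larger than $\beta_1$ whenever $a>0$ (the paper proves $\beta_1\le\frac{e^\epsilon-1}{e^\epsilon+t}$ as a separate proposition). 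The branch boundary in the lemma is in fact governed by the condition you relegated to a ``subtle point,'' namely $x^*<1$: for $\beta_1\le\beta<\frac{e^\epsilon-1}{e^\epsilon+t}$ the parabola is still concave, but its vertex $x^*=-\gamma/(2\alpha)$ lies at or beyond $1$, so the maximum reverts to the endpoint $x=1$ and lands in the ``otherwise'' branch. Working out $x^*<1$ requires multiplying through by the negative quantity $2(e^\epsilon-a)^2(\beta(e^\epsilon+t)-e^\epsilon+1)$ and then dividing by $2(e^\epsilon-a)^2(e^\epsilon+t)-ae^\epsilon(e^\epsilon+1)^2$, whose positivity is not obvious and is established in the paper by a separate case analysis on $\epsilon$; without it the inequality could reverse and the threshold would read the other way around.

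So the proposal has a genuine gap: the central claim that setting $\alpha=0$ yields $\beta_1$ is false, and the two facts actually needed --- (i) $x^*<1$ iff $\beta<\beta_1$, which hinges on the sign of $2(e^\epsilon-a)^2(e^\epsilon+t)-ae^\epsilon(e^\epsilon+1)^2$, and (ii) $\beta<\beta_1$ implies $\alpha<0$, i.e.\ $\beta_1\le\frac{e^\epsilon-1}{e^\epsilon+t}$ --- are exactly the two auxiliary results the paper devotes separate appendix subsections to. Once those are supplied, your argument closes along the paper's lines (the degenerate case $a=0$, where the linear term vanishes, $x^*=0$, and $\beta_1$ collapses to $\frac{e^\epsilon-1}{e^\epsilon+t}$, should also be noted); the remaining assembly of $\gamma$ and $\delta$ is as routine as you anticipate.
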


\begin{proof}
The proof details are given in Appendix~\ref{sec:proof-hm-var} of the submitted supplementary file.
\end{proof}

\section{Mechanisms for Estimation of Multiple Numeric Attributes }\label{sec:multiple}

Now, we consider a case in which the user's data record contains $d>1$ attributes. There are three existing solutions to collect multiple attributes: (i) The straightforward approach which collects each attribute with privacy budget $\epsilon/d$. Based on the composition theorem~\cite{dwork2014algorithmic}, it satisfies $\epsilon$-LDP after collecting of all attributes. But the added noise can be excessive if $d$ is large~\cite{wang2019collecting}. (ii) Duchi~\textit{et~al.}'s~\cite{duchi2018minimax} solution, which is rather complicated, handles numeric attributes only.  (iii) Wang~\textit{et~al.}'s~\cite{wang2019collecting} solution is the advanced approach that deals with a data tuple containing both numeric and categorical attributes. Their algorithm requires to calculate an optimal $k < d$ based on the single dimensional attribute's $\epsilon$-LDP mechanism, and a user submits selected $k$ dimensional attributes instead of $d$ dimensions.

\begin{algorithm}[h]
 \caption{Mechanism for Multiple-Dimensional Numeric Attributes.}
 \label{PM-d-dimension-Algo}
  \KwIn{ tuple $x \in [-1,1]^d$ and privacy parameter $\epsilon$.} 
  \KwOut{tuple $Y \in [-A,  A]^d$.}
  {Let $Y = \langle 0, 0, \dots, 0 \rangle$;} \\
  {Let $k = \max \{1, \min \{d, \Bigl \lfloor \frac{\epsilon}{2.5} \Bigr \rfloor \} \}$;} \\
  {Sample $k$ values uniformly without replacement from $\{1,2, \dots ,d\}$;}\\
  \For{each sampled value $j$ }{
        Feed $x[t_j]$ and $\frac{\epsilon}{k}$ as input to \texttt{PM-SUB}, \texttt{Three-Outputs} or \texttt{HM-TP}, and obtain a noisy value $y_{j}$;\\
        $Y[t_j] = \frac{d}{k}y_{j}$;
   }
   \textbf{return}  $Y$;
\end{algorithm}

Thus, we follow Wang~\textit{et~al.}'s~\cite{wang2019collecting} idea to extend  Section~\ref{sec:single} to the case of multidimensional attributes. Algorithm~\ref{PM-d-dimension-Algo} shows the pseudo-code of our extension for our \texttt{PM-SUB}, \texttt{Three-Outputs}, and \texttt{HM-TP}. Given a tuple $x \in [-1, 1]^d$, the algorithm returns a perturbed tuple $Y$ that has non-zero value on $k$ attributes, where
\begin{align}
    k = \max \{1, \min \{ d, \Bigl \lfloor \frac{\epsilon}{2.5} \Bigr \rfloor \} \},\label{eq:optimal-k}
\end{align}
and Appendix~\ref{proof:k} of the submitted supplementary file proves our selected $k$ is optimal after extending \texttt{PM-SUB}, \texttt{Three-Outputs}, and \texttt{HM-TP} to support $d$ dimensional attributes.

Overall, our algorithm for collecting multiple attributes outperforms existing solutions, which is confirmed by our experiments in the Section~\ref{sec:experiment}. But \texttt{Three-Outputs}  uses only one more bit compared with Duchi~\textit{et al.}'s~\cite{duchi2018minimax} solution to encode outputs. Moreover, our \texttt{Three-Outputs} obtains a higher accuracy in the high privacy regime (where the privacy budget is small) and saves many bits for encoding since \texttt{PM} and \texttt{HM}'s continuous output range requires infinite bits to encode, whereas \texttt{PM-SUB} and \texttt{HM-TP}'s advantages are obvious at a large privacy budget. Furthermore, because vehicles can not encode continuous range, we discretize the continuous range of outputs to discrete outputs. Our experiments in Section~\ref{:discreization-results} confirm that we can achieve similar results to algorithms before discretizing by carefully designing the number of discrete parts. Hence, our proposed algorithms are obviously more suitable for vehicles than existing solutions.

Intuitively, Algorithm~\ref{PM-d-dimension-Algo} requires every user to submit $k$ attributes instead of $d$ attributes, such that the privacy budget for each attribute increases from $\epsilon/d$ to $\epsilon/k$, which helps to minimize the noisy variance. In addition, by setting $k$ as Eq.~(\ref{eq:optimal-k}), algorithm~\ref{PM-d-dimension-Algo} achieves an asymptotically optimal performance while preserving privacy, which we will prove using Lemma~\ref{lem-d-1} and~\ref{lem-d-2}. Lemma~\ref{lem-d-1} and~\ref{lem-d-2} are proved in the same way as that of Lemma $4$ and $5$ in~\cite{wang2019collecting}.

\begin{lem}\label{lem-d-1}
  Algorithm~\ref{PM-d-dimension-Algo} satisfies $\epsilon$-local differential privacy. In addition, given an input tuple $x$, it outputs a noisy tuple $Y$, such that for any $j \in [1, d]$, and each $t_j$ of those $k$ attributes is selected uniformly at random (without replacement) from all $d$ attributes of $x$, and then $\mathbb{E}[Y[t_j]] = x[t_j]$.
\end{lem}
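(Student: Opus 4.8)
The plan is to prove the two assertions separately, reusing the single-attribute facts established earlier in Section~\ref{sec:single}: that each of \texttt{PM-SUB}, \texttt{Three-Outputs}, and \texttt{HM-TP}, when run with budget $\epsilon/k$, is $(\epsilon/k)$-LDP and is unbiased (for \texttt{Three-Outputs} by constraint (\ref{unbiased}), for the piecewise mechanisms by $\mathbb{E}[Y|x]=x$ in Lemma~\ref{PM-lemma-1}, and for \texttt{HM-TP} by convexity of both properties under mixing). The privacy claim is a composition-over-random-selection argument, and the expectation claim is a law-of-total-expectation computation where the scaling $d/k$ cancels the selection probability $k/d$.

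For the privacy part, I would first note that the index set $S$ of the $k$ perturbed attributes is drawn uniformly without replacement independently of the data, so $\Pr[\text{select }S] = 1/\binom{d}{k}$ with no dependence on $x$. Conditioned on a fixed $S$, the coordinates outside $S$ are set to $0$ deterministically while the $k$ coordinates inside $S$ are perturbed by independent single-attribute mechanisms each with budget $\epsilon/k$. For any fixed output $y^\ast$ and any $S$ compatible with $y^\ast$, the conditional density factorizes over $j\in S$ into single-attribute densities, and since each factor's ratio between inputs $x$ and $x'$ lies in $[e^{-\epsilon/k}, e^{\epsilon/k}]$, the product ratio lies in $[e^{-\epsilon}, e^{\epsilon}]$ because $(\epsilon/k)\cdot k=\epsilon$. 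Writing $\Pr[Y=y^\ast|x]$ as the $S$-average of these conditional densities with data-independent weights $1/\binom{d}{k}$, and invoking the elementary bound $\frac{\sum_S w_S a_S}{\sum_S w_S b_S}\le \max_S \frac{a_S}{b_S}$ for nonnegative terms, I obtain $\Pr[Y=y^\ast|x]/\Pr[Y=y^\ast|x']\le e^\epsilon$, which is (\ref{local-differential-privacy-fomular}).

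For unbiasedness, I would fix a coordinate and condition on whether it is among the $k$ selected attributes. By symmetry of uniform sampling without replacement, it is selected with marginal probability $k/d$. When it is not selected, its output is $0$; when it is selected, the algorithm sets $Y[t_j]=\frac{d}{k}y_j$ where $y_j$ is the single-attribute output, which satisfies $\mathbb{E}[y_j\mid \text{selected}]=x[t_j]$. The law of total expectation then gives $\mathbb{E}[Y[t_j]]=\frac{k}{d}\cdot\frac{d}{k}\,x[t_j]+\big(1-\frac{k}{d}\big)\cdot 0 = x[t_j]$.

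The main obstacle, which I would treat carefully, is the step in the privacy argument where several selection sets $S$ contribute to the same output $y^\ast$. For the continuous mechanisms a perturbed coordinate is almost surely nonzero, so the compatible sets are pinned down up to a measure-zero event, but for \texttt{Three-Outputs} a selected coordinate can legitimately output $0$, so the nonzero pattern of $y^\ast$ underdetermines $S$ and genuinely many sets overlap. I would therefore avoid trying to read $S$ off from $y^\ast$ and instead keep the full $S$-sum, relying only on the per-$S$ term-wise ratio bound together with the data-independence of both the weights and the support structure (the set of compatible $S$ is the same for $x$ and $x'$ since each mechanism's output support does not depend on its input, as noted earlier). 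This is exactly the structure of Lemma~4 in~\cite{wang2019collecting}, which the statement invokes.
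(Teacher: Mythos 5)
Your proof is correct and follows essentially the same route as the paper's: privacy via composition of $k$ mechanisms each run at budget $\epsilon/k$, and unbiasedness via the law of total expectation with the $\frac{k}{d}\cdot\frac{d}{k}$ cancellation. The only difference is that you spell out the mixture-over-selection-sets argument (and the support subtlety for \texttt{Three-Outputs}) that the paper compresses into a one-line citation of the composition theorem.
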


\begin{proof} Algorithm~\ref{PM-d-dimension-Algo} composes $k$ numbers of $\epsilon$-LDP perturbation algorithms; thus, based on composition theorem of differential mechanism~\cite{mcsherry2007mechanism}, Algorithm~\ref{PM-d-dimension-Algo} satisfies $\epsilon$-LDP. As we can see from Algorithm~\ref{PM-d-dimension-Algo}, each perturbed output $Y$ equals to $\frac{d}{k}y_{j}$ with probability $\frac{k}{d}$ or equals to 0 with probability $1-\frac{k}{d}$. Thus, $\mathbb{E}[Y[t_j]] = \frac{k}{d} \cdot \mathbb{E}[\frac{d}{k} \cdot y_{j}] = \mathbb{E}[y_{j}] = x[t_j] $ holds.
\end{proof}

\begin{lem}\label{lem-d-2}
   For any $j \in [1,d]$, let $Z[t_j] = \frac{1}{n} \sum_{i=1}^{n} Y[t_j]$ and $X[t_j] = \frac{1}{n} \sum_{i=1}^{n} x [t_j]$. With at least $1-\beta$ probability, $$\max_{j \in [1,d]} |Z[t_j] - X[t_j]| = O \bigg ( \frac{\sqrt{d \ln(d/\beta))}}{\epsilon \sqrt{n}} \bigg ).$$
\end{lem}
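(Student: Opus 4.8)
The plan is to treat each coordinate separately as a normalized sum of independent, bounded, mean-zero random variables, control its range and variance, apply a Bernstein-type concentration inequality per coordinate, and then close with a union bound over the $d$ coordinates. First I would fix a coordinate index $j$ and write the per-coordinate error as a sum over users,
\begin{align}
Z[t_j] - X[t_j] = \frac{1}{n}\sum_{i=1}^{n}\big(Y_i[t_j] - x_i[t_j]\big), \nonumber
\end{align}
where $Y_i[t_j]$ denotes user $i$'s perturbed report on coordinate $j$. By Lemma~\ref{lem-d-1} each summand has mean zero, since $\mathbb{E}[Y_i[t_j]] = x_i[t_j]$, and the summands are independent across users.

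Next I would bound the range and variance of a single term. From Algorithm~\ref{PM-d-dimension-Algo}, $Y_i[t_j]$ equals $\frac{d}{k}y_{i,j}$ with probability $\frac{k}{d}$ and equals $0$ otherwise, where $y_{i,j}$ is the one-dimensional output under per-attribute budget $\epsilon/k$ and lies in the range $[-A,A]$. Hence $|Y_i[t_j]| \le \frac{d}{k}A =: M$, and writing $V := \max_x \textup{Var}[y_{i,j}\mid x]$ for the worst-case variance of the underlying one-dimensional mechanism and using unbiasedness together with $|x_{i,j}|\le 1$,
\begin{align}
\textup{Var}\big(Y_i[t_j]\big) \le \mathbb{E}\big[Y_i[t_j]^2\big] = \frac{d}{k}\,\mathbb{E}[y_{i,j}^2] \le \frac{d}{k}\,(V + 1) =: \sigma^2. \nonumber
\end{align}
Substituting $k = \max\{1,\min\{d,\lfloor\epsilon/2.5\rfloor\}\}$ keeps the per-attribute budget $\epsilon/k$ in a bounded interval; in the high-privacy regime $k=1$, so $A = O(1/\epsilon)$ and $V = \Theta(1/\epsilon^2)$, giving $M = O(d/\epsilon)$ and $\sigma^2 = O(d/\epsilon^2)$. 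The decisive structural point is that the sampling of only $k$ of the $d$ coordinates makes $Y_i[t_j]$ vanish with high probability, so its variance grows only like $d/\epsilon^2$ even though its magnitude can be as large as $d/\epsilon$.

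Then I would apply Bernstein's inequality to $\sum_i (Y_i[t_j]-x_i[t_j])$, whose terms are bounded in absolute value by $2M$ and have total variance at most $n\sigma^2$, obtaining for each fixed $j$
\begin{align}
\bp{|Z[t_j]-X[t_j]| > \lambda} \le 2\exp\!\left(-\frac{n\lambda^2/2}{\sigma^2 + 2M\lambda/3}\right). \nonumber
\end{align}
Setting the right-hand side equal to $\beta/d$ and solving for $\lambda$ yields $\lambda = O\!\big(\sqrt{\sigma^2\ln(d/\beta)/n}\,+\,M\ln(d/\beta)/n\big)$; in the standard regime $n = \Omega(d\ln(d/\beta))$ the variance term dominates the range term, so $\lambda = O\big(\sqrt{d\ln(d/\beta)}/(\epsilon\sqrt{n})\big)$. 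Finally a union bound over the $d$ coordinates converts the per-coordinate failure probability $\beta/d$ into a total failure probability $\beta$, so that with probability at least $1-\beta$ the bound holds simultaneously for every $j$, which is exactly $\max_{j\in[1,d]}|Z[t_j]-X[t_j]| = O\big(\sqrt{d\ln(d/\beta)}/(\epsilon\sqrt{n})\big)$.

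The hard part will be securing the $\sqrt{d}$ factor rather than a full $d$: a naive Hoeffding bound uses only the range $M = O(d/\epsilon)$ and would produce the inferior rate $O\big(d\sqrt{\ln(d/\beta)}/(\epsilon\sqrt{n})\big)$. The improvement rests entirely on Bernstein's inequality exploiting the much smaller variance $\sigma^2 = O(d/\epsilon^2)$ induced by the coordinate subsampling, and on verifying that the variance term indeed dominates for $n = \Omega(d\ln(d/\beta))$. A secondary technical care is to confirm that the choice $k=\max\{1,\min\{d,\lfloor\epsilon/2.5\rfloor\}\}$ keeps $\epsilon/k$ bounded across both the $k=1$ and $k>1$ regimes, so that $V$, $A$, and hence $M$ and $\sigma^2$ carry the claimed dependence on $\epsilon$; the worst case, which determines the stated rate, is the high-privacy regime $k=1$.
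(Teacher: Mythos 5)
Your proposal is correct and follows essentially the same route as the paper's proof: decompose the per-coordinate error into a sum of independent mean-zero terms (via Lemma~\ref{lem-d-1}), bound the range by $\frac{d}{k}A = O(d/\epsilon)$ and the variance by $\frac{d}{k}\mathbb{E}[y_j^2] = O(dk/\epsilon^2)$, apply Bernstein's inequality, and finish with a union bound over the $d$ coordinates. Your treatment is in fact slightly more explicit than the paper's about why the variance term dominates the range term and about the $\beta/d$ allocation in the union bound, but the argument is the same.
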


\begin{proof}
The proof details are given in Appendix~\ref{proof:lem-d-2} of the submitted supplementary file.
\end{proof}

\begin{figure*}[h]
     \centering
     \includegraphics[width=\textwidth]{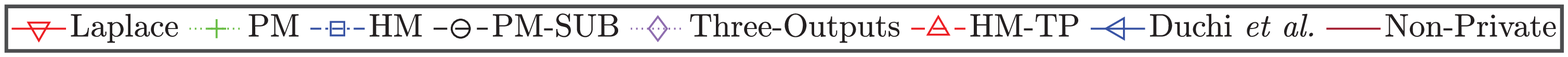}
\end{figure*}

\begin{figure*}[h]
     \centering
     \begin{subfigure}[b]{0.24\textwidth}
         \centering
         \includegraphics[width=\textwidth]{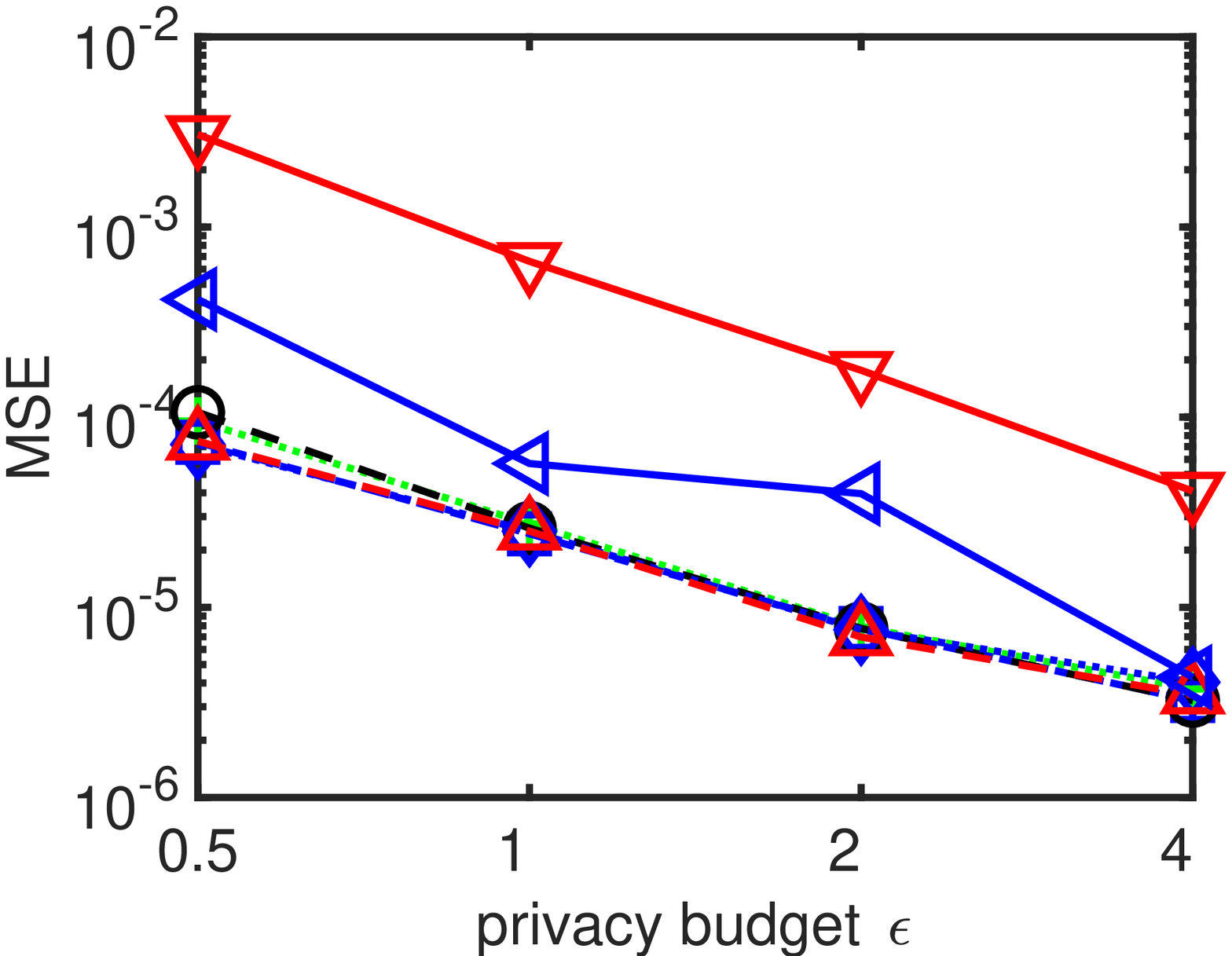}
         \caption{MX-Numeric}
         \label{fig:MSE_MX}
     \end{subfigure}
     \begin{subfigure}[b]{0.24\textwidth}
         \centering
         \includegraphics[width=\textwidth]{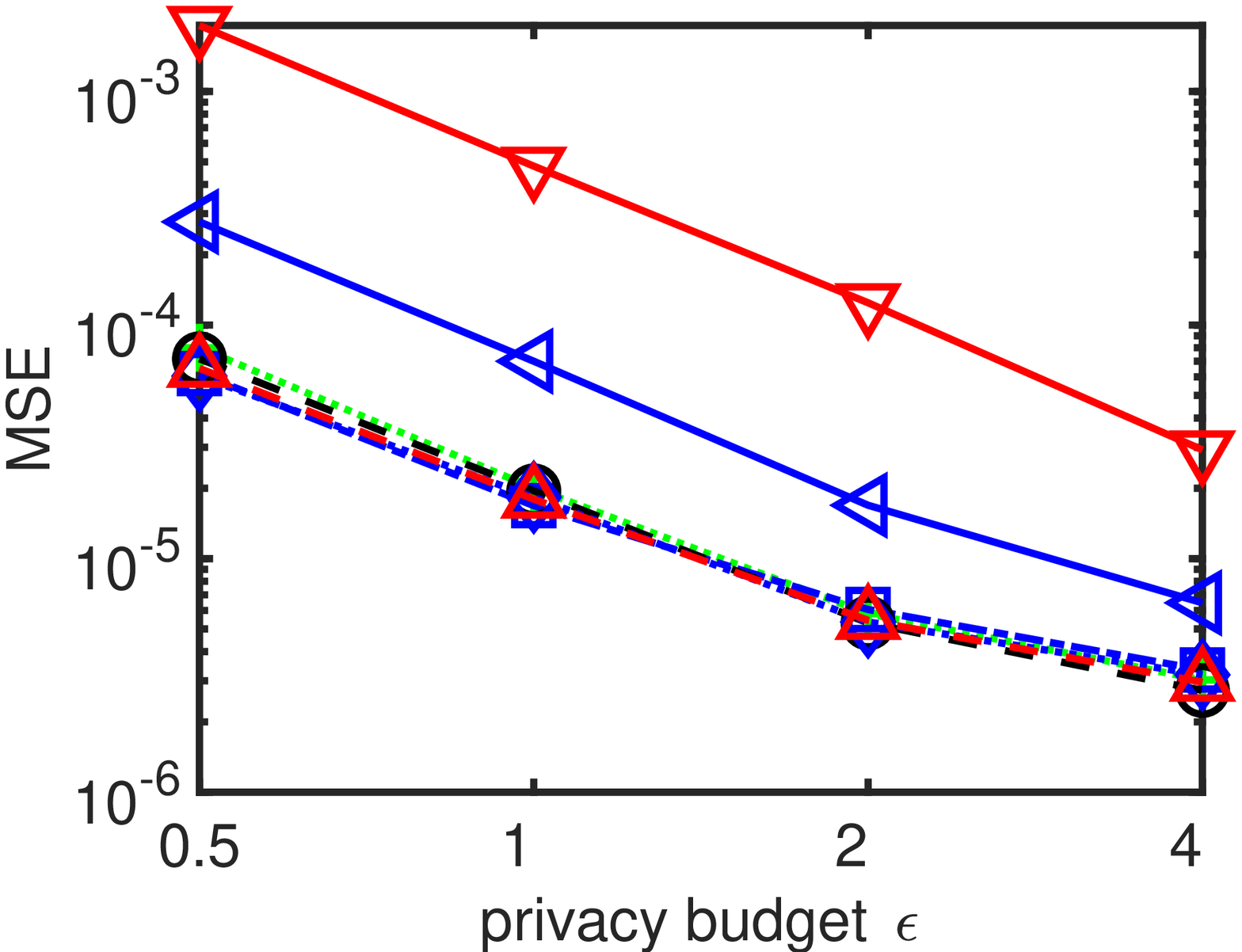}
         \caption{BR-Numeric}
         \label{fig:MSE_BR}
     \end{subfigure}
    \begin{subfigure}[b]{0.24\textwidth}
         \centering
         \includegraphics[width=\textwidth]{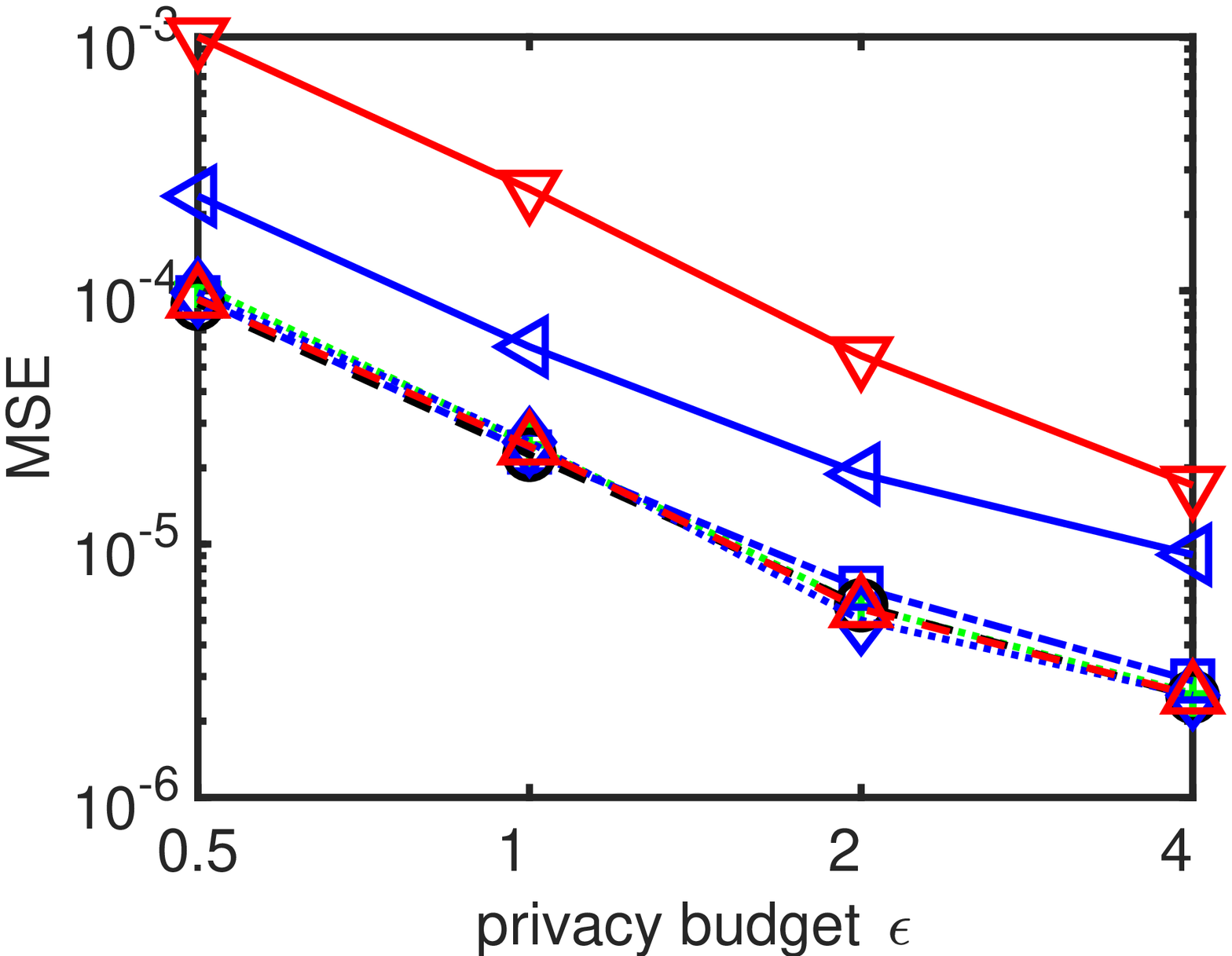}
         \caption{WISDM-Numeric}
         \label{fig:MSE_IOT}
     \end{subfigure}
    \begin{subfigure}[b]{0.24\textwidth}
         \centering
         \includegraphics[width=\textwidth]{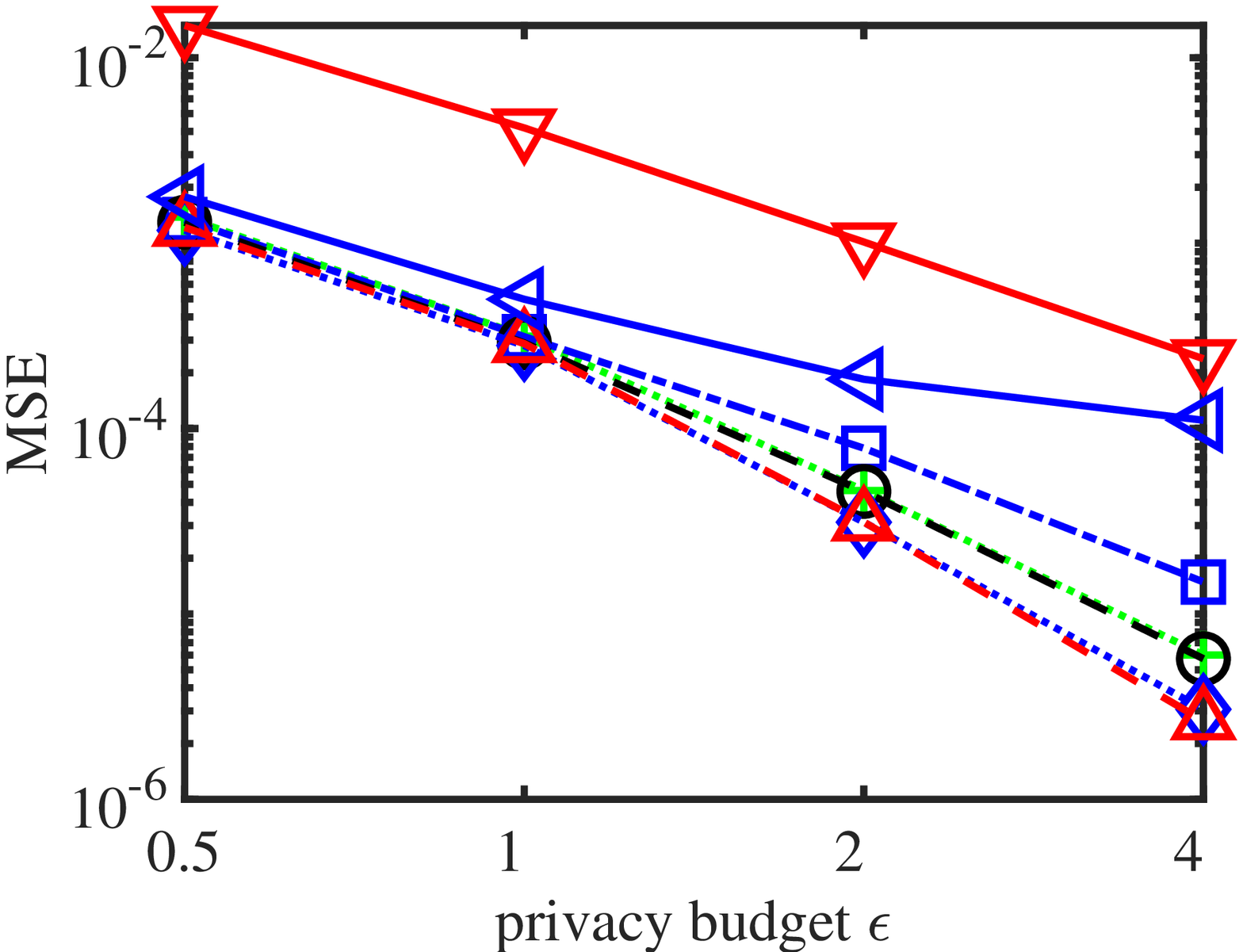}
         \caption{Vehicle-Numeric}
         \label{fig:MSE_IOT}
     \end{subfigure}
        \caption{Result accuracy for mean estimation (on numeric attributes).}
        \label{fig:MSE}
\end{figure*}

\begin{figure*}[h]
     \centering
     \begin{subfigure}[b]{0.24\textwidth}
         \centering
         \includegraphics[width=\textwidth]{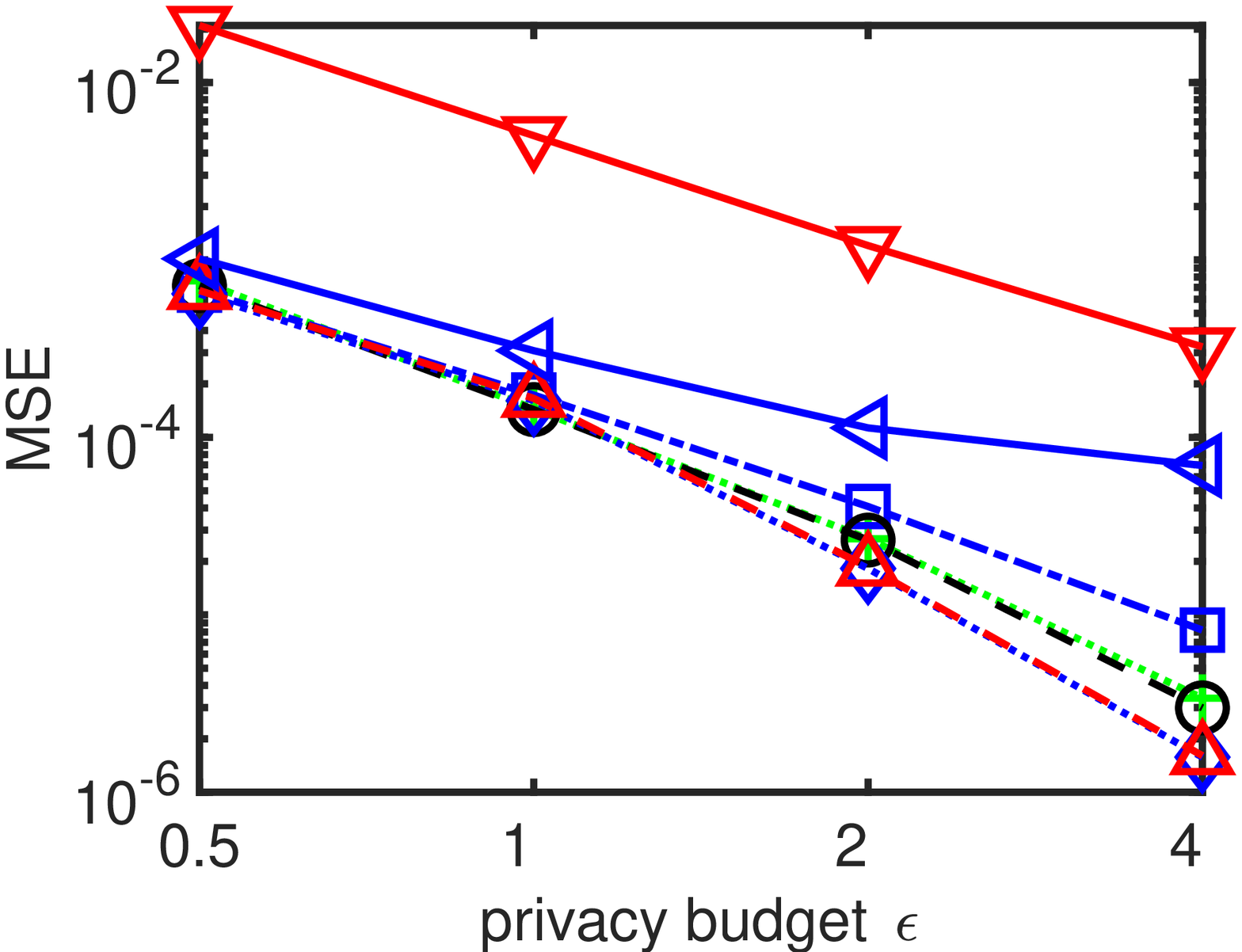}
         \caption{$\mu = 0$}
         \label{fig:synthetic_0}
     \end{subfigure}
     \begin{subfigure}[b]{0.24\textwidth}
         \centering
         \includegraphics[width=\textwidth]{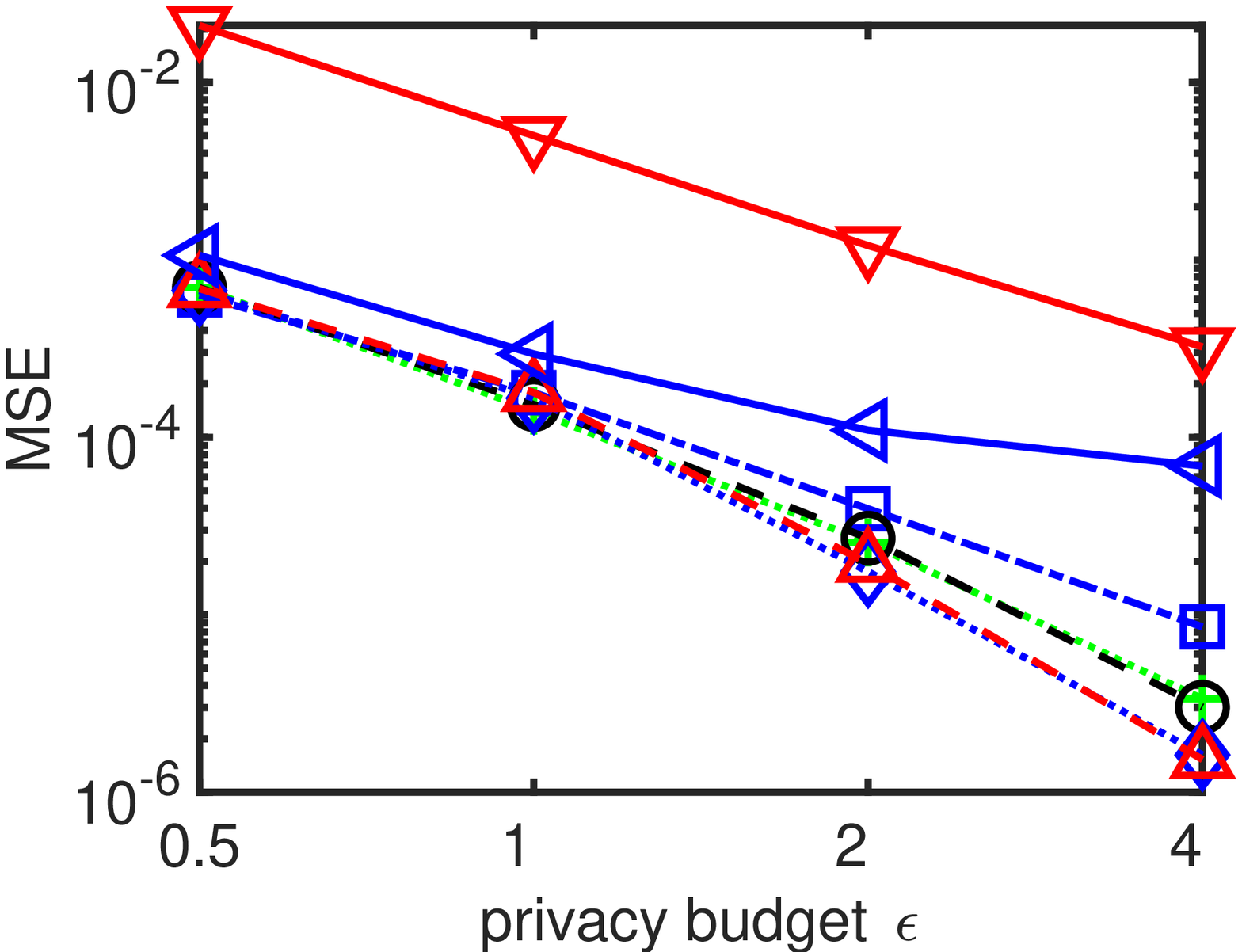}
         \caption{$\mu = 1/3$}
         \label{fig:synthetic_1}
     \end{subfigure}
     \begin{subfigure}[b]{0.24\textwidth}
         \centering
         \includegraphics[width=\textwidth]{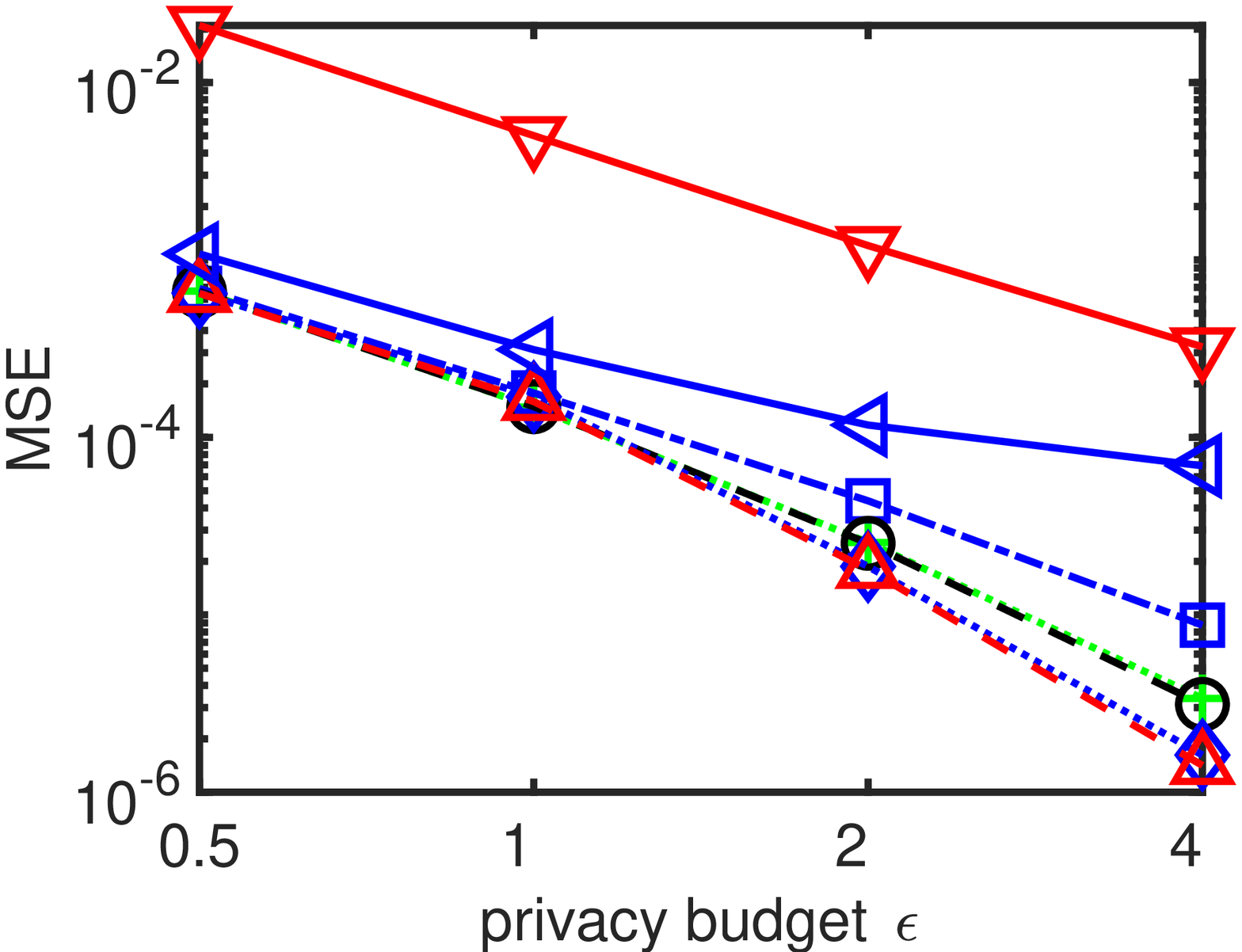}
         \caption{$\mu = 2/3$}
         \label{fig:synthetic_2}
     \end{subfigure}
     \begin{subfigure}[b]{0.24\textwidth}
         \centering
         \includegraphics[width=\textwidth]{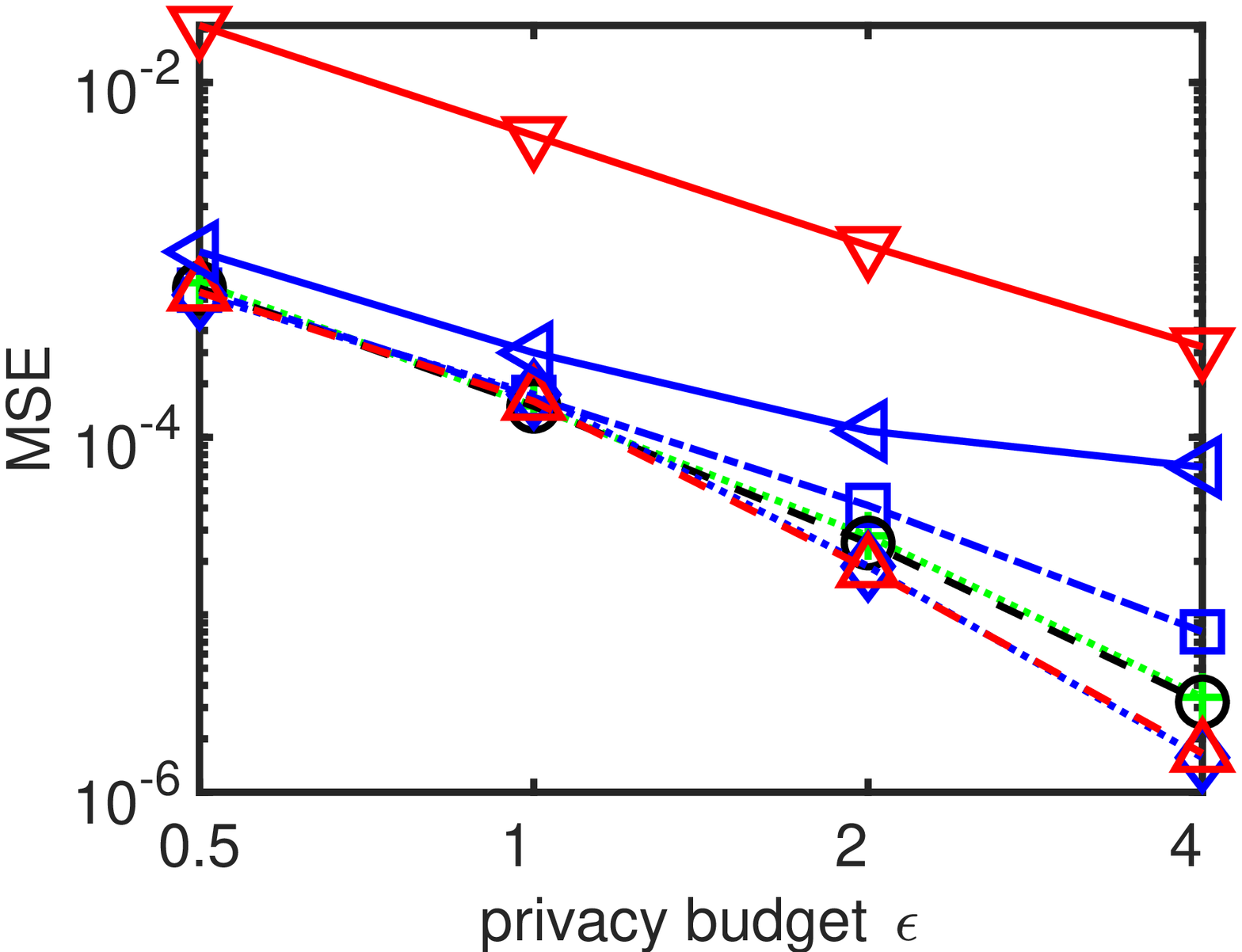}
         \caption{$\mu = 1$}
         \label{fig:synthetic_2}
     \end{subfigure}
        \caption{Result accuracy on synthetic datasets with 16 dimensions, each of which follows a Gaussian distribution $N(\mu, 1/16)$ truncated to $[-1, 1]$.}
        \label{fig:MSE_synthetic}
\end{figure*}

\begin{figure*}[h]
     \centering
     \begin{subfigure}[b]{0.24\textwidth}
         \includegraphics[width=\textwidth]{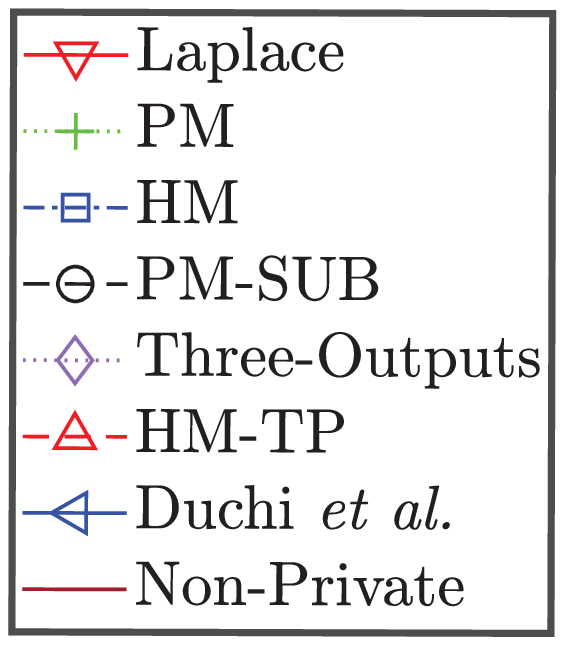}
     \end{subfigure}
     \begin{subfigure}[b]{0.24\textwidth}
         \centering
         \includegraphics[width=\textwidth]{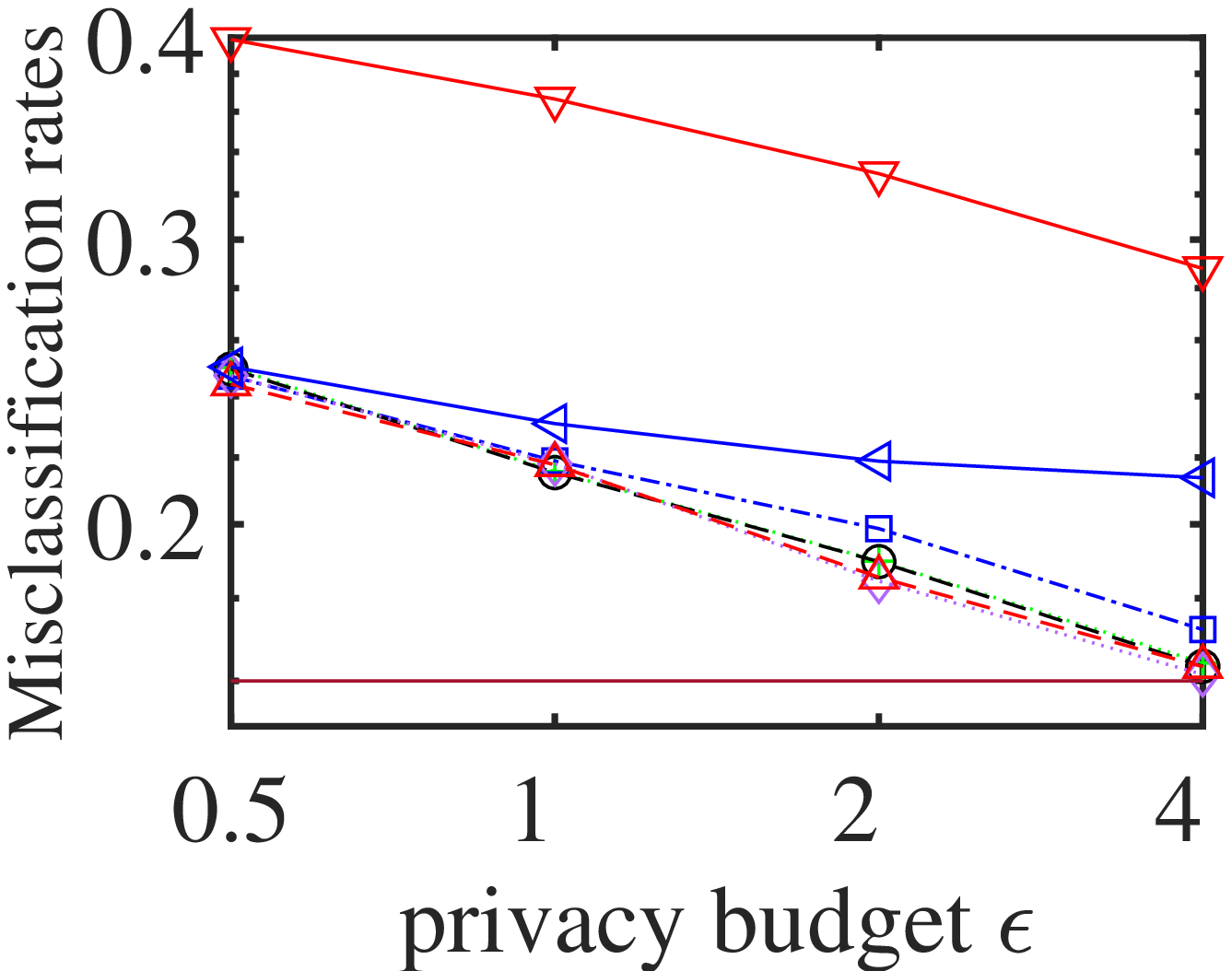}
         \caption{MX}
         \label{fig:log_MX}
     \end{subfigure}
     \begin{subfigure}[b]{0.24\textwidth}
         \centering
         \includegraphics[width=\textwidth]{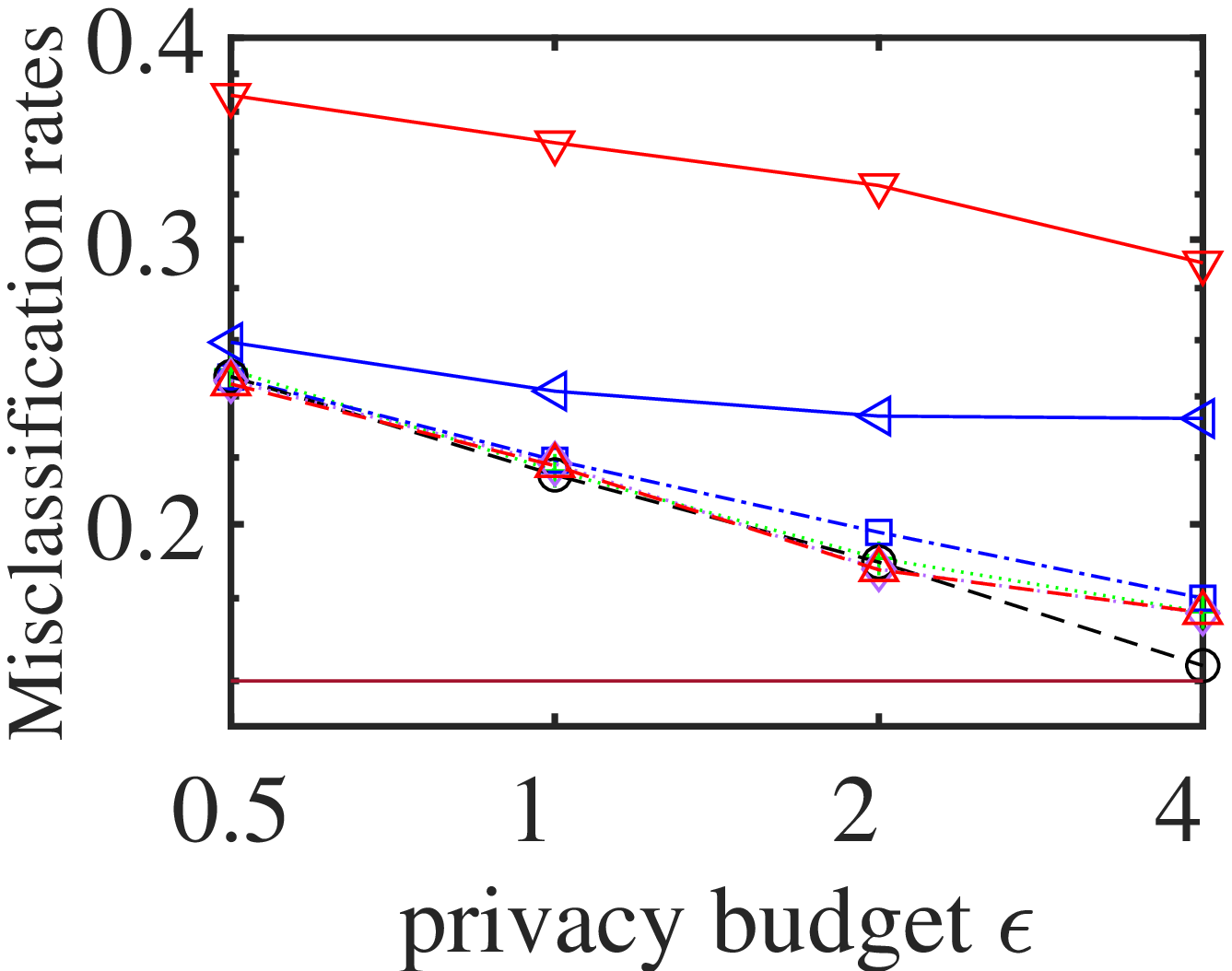}
         \caption{BR}
         \label{fig:log_BR}
     \end{subfigure}
     \begin{subfigure}[b]{0.24\textwidth}
         \centering
         \includegraphics[width=\textwidth]{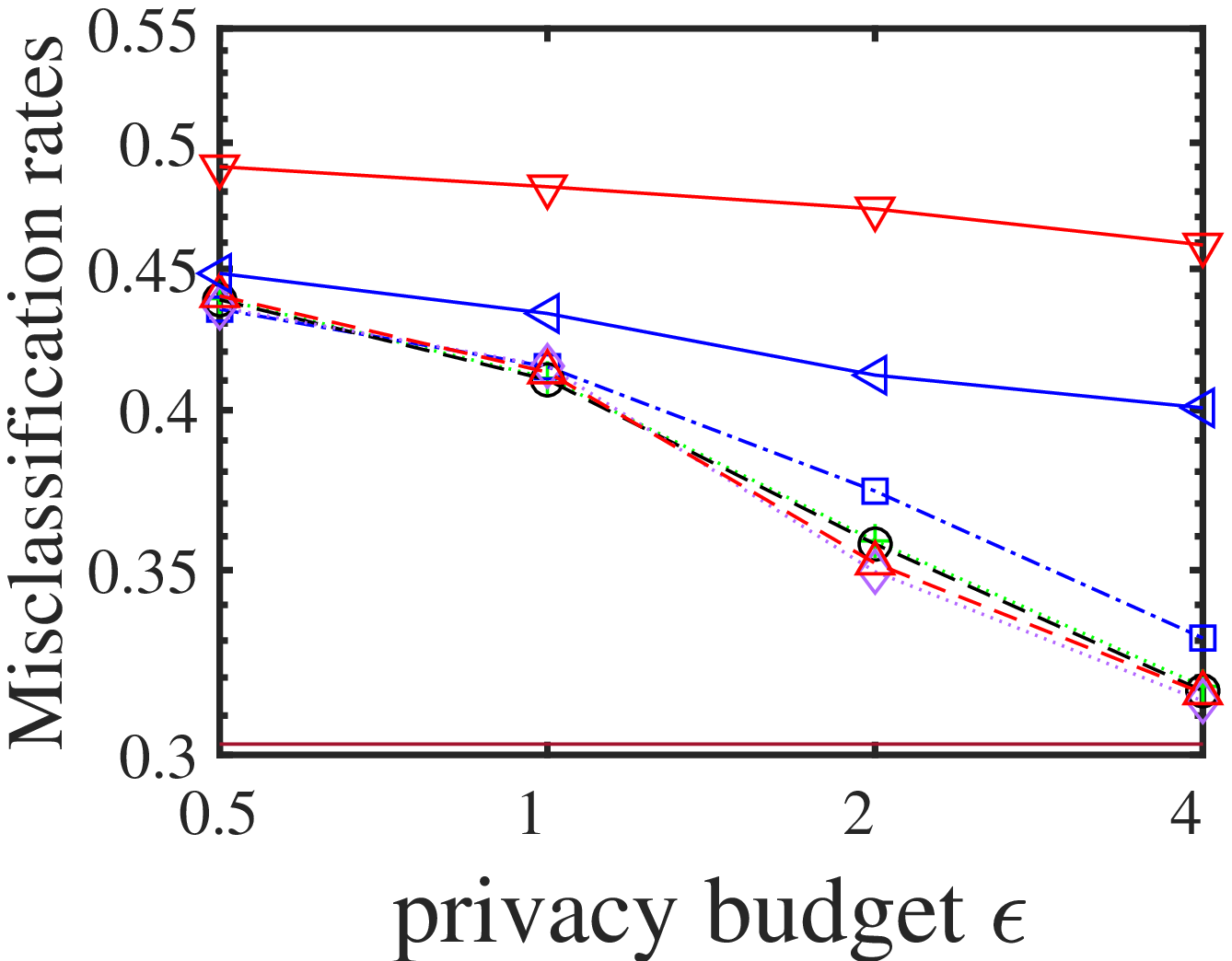}
         \caption{WISDN}
         \label{fig:log_iot}
     \end{subfigure}
         \caption{Logistic Regression.}
         \label{fig:log}
\end{figure*}

\begin{figure*}[h]
     \centering
     \begin{subfigure}[b]{0.24\textwidth}
         \includegraphics[width=\textwidth]{images/legend_v.eps}
     \end{subfigure}
     \begin{subfigure}[b]{0.24\textwidth}
         \includegraphics[width=\textwidth]{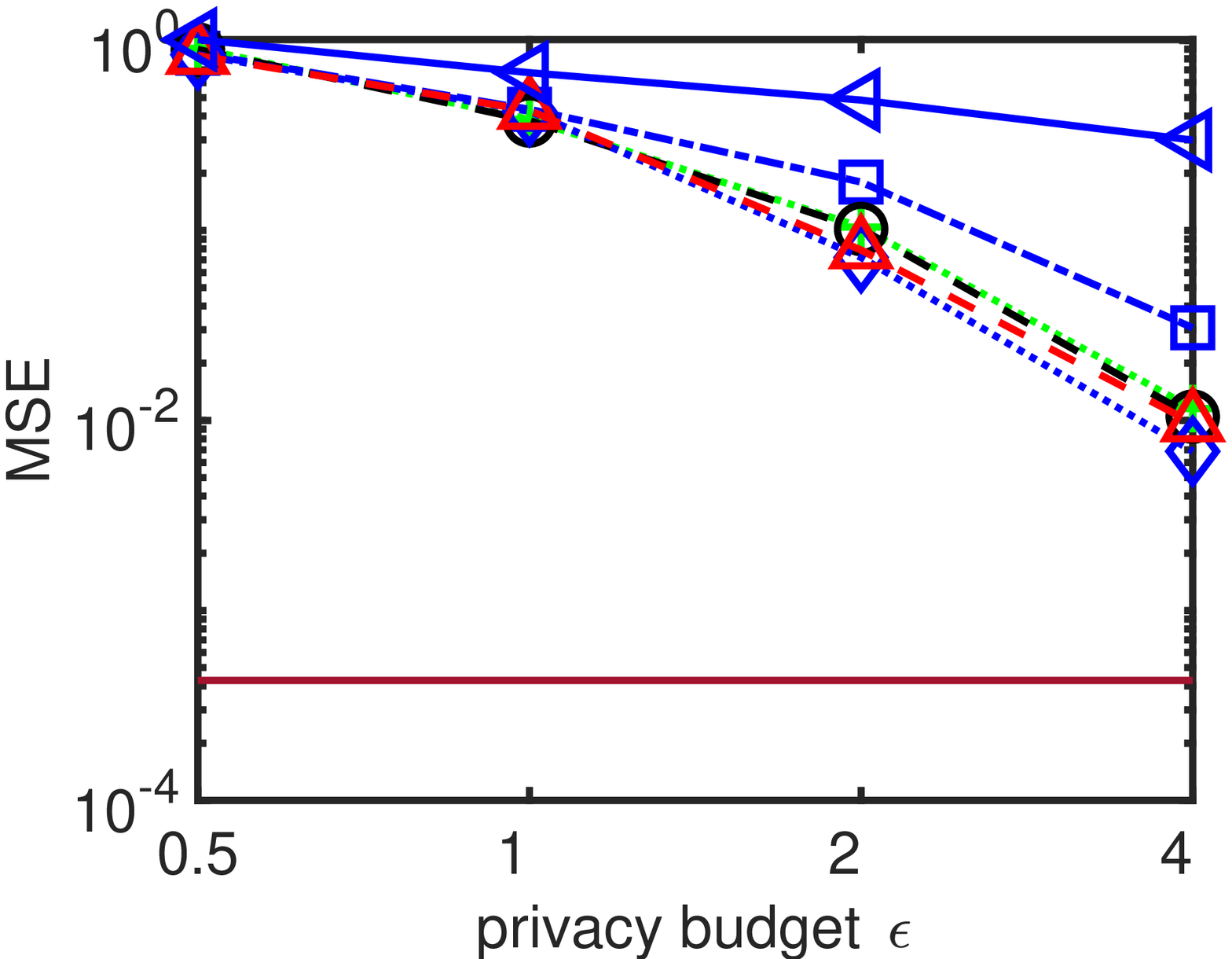}
         \caption{MX}
         \label{fig:linear_MX}
     \end{subfigure}
     \begin{subfigure}[b]{0.24\textwidth}
         \includegraphics[width=\textwidth]{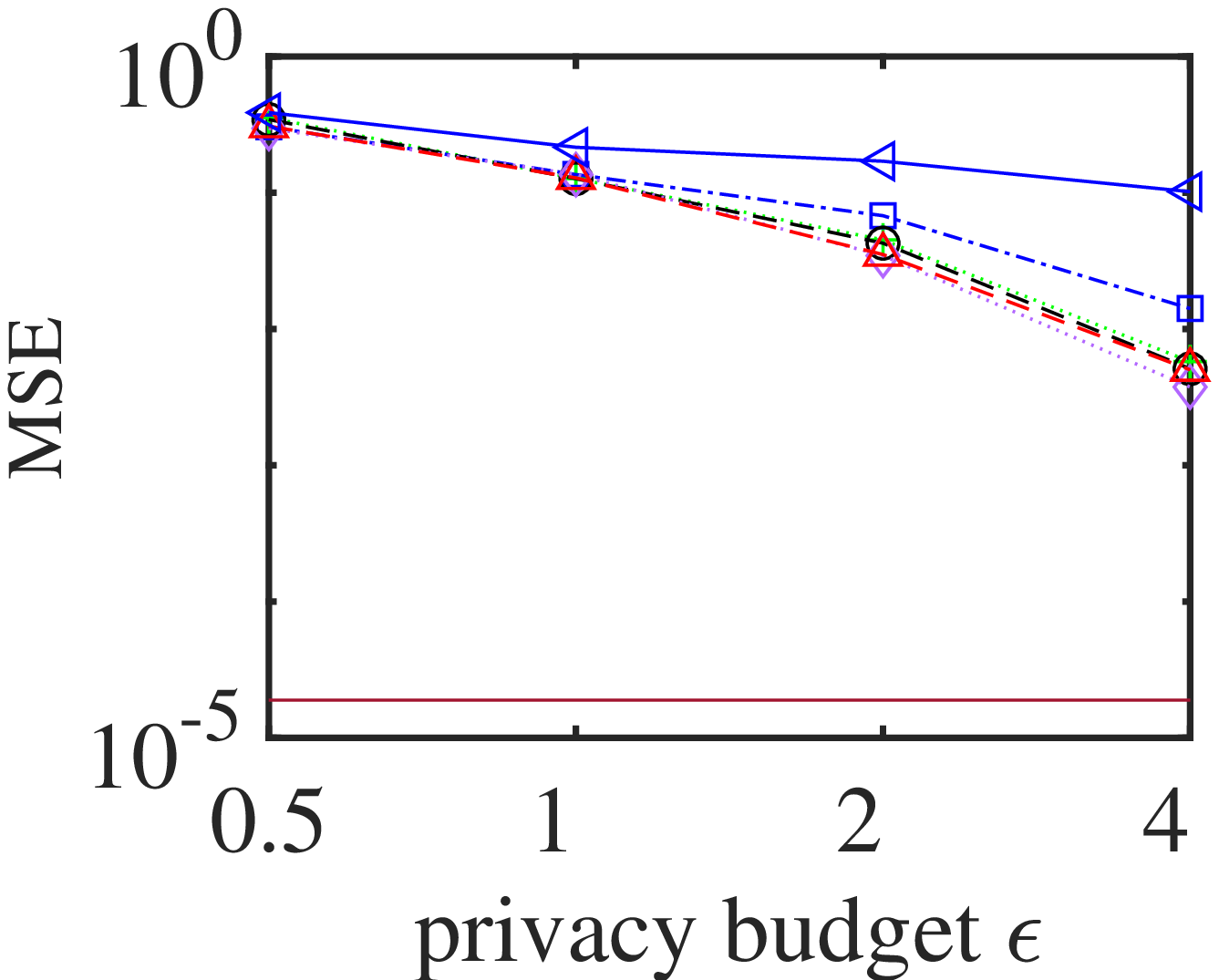}
         \caption{BR}
         \label{fig:linear_BR}
     \end{subfigure}
     \begin{subfigure}[b]{0.24\textwidth}
         \includegraphics[width=\textwidth]{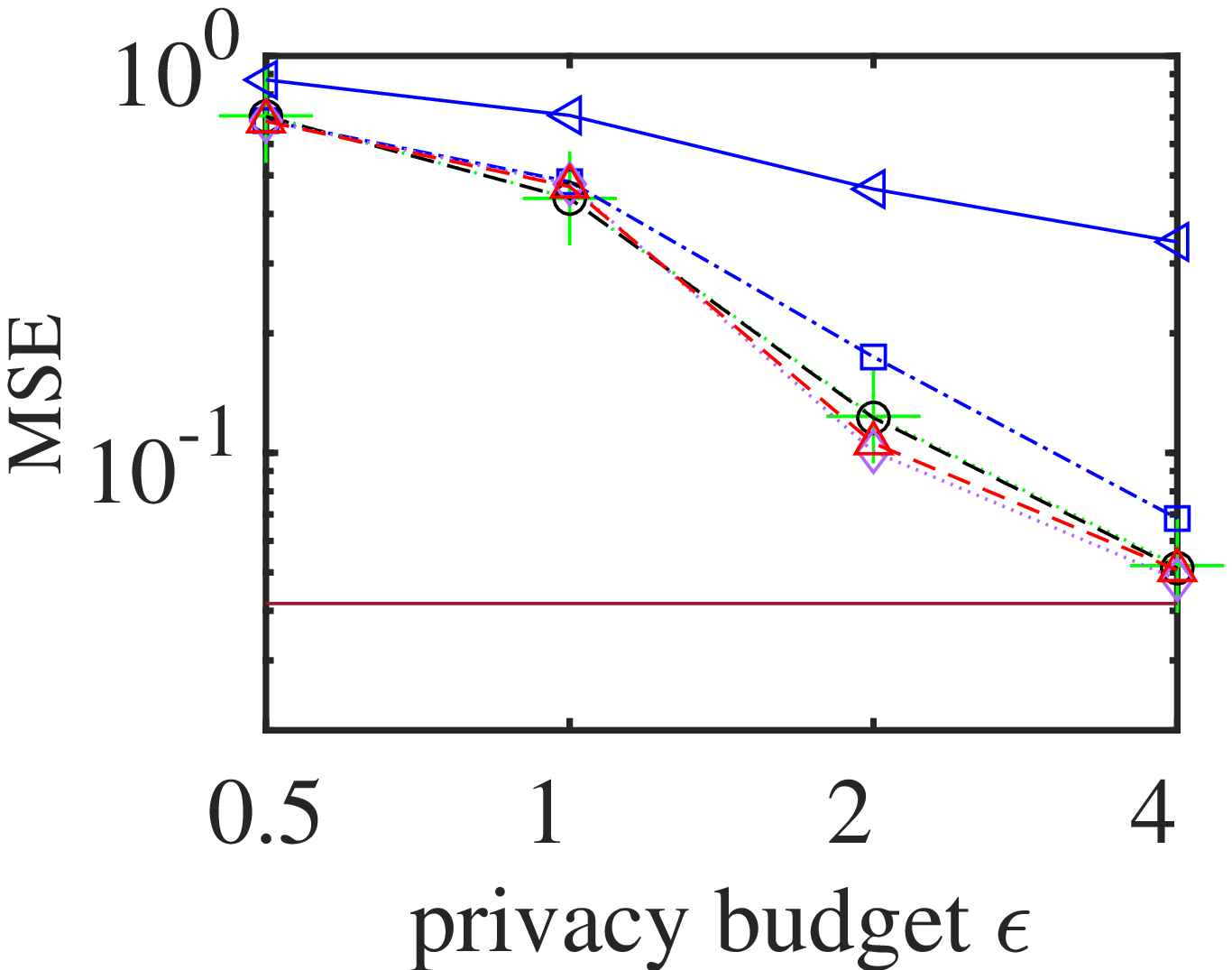}
         \caption{WISDN}
         \label{fig:linear_iot}
     \end{subfigure}
        \caption{Linear Regression.}
        \label{fig:linear}
\end{figure*}

\begin{figure*}[!h]
     \centering
     \begin{subfigure}[b]{0.24\textwidth}
         \includegraphics[width=\textwidth]{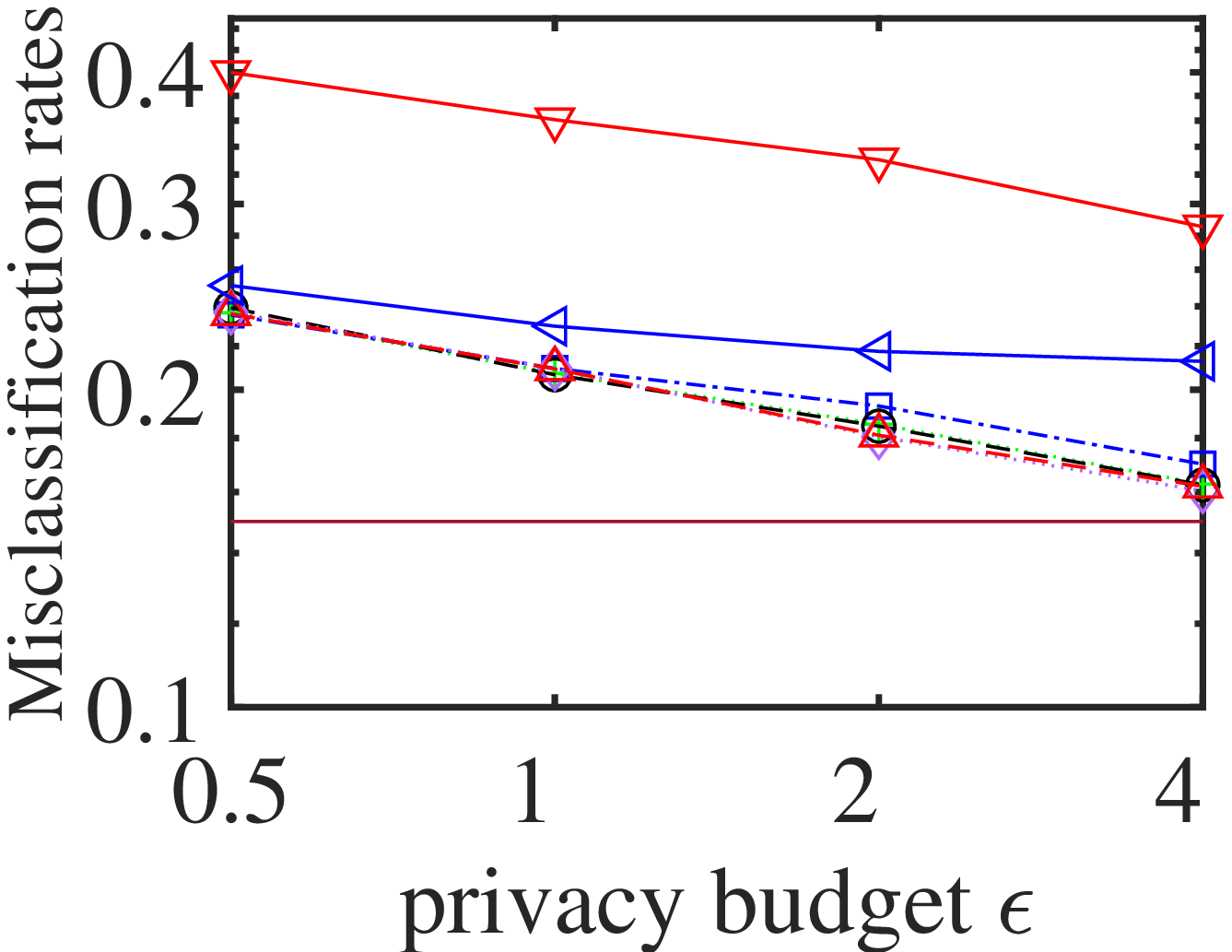}
         \caption{MX}
         \label{fig:svm_MX}
     \end{subfigure}
     \begin{subfigure}[b]{0.24\textwidth}
         \includegraphics[width=\textwidth]{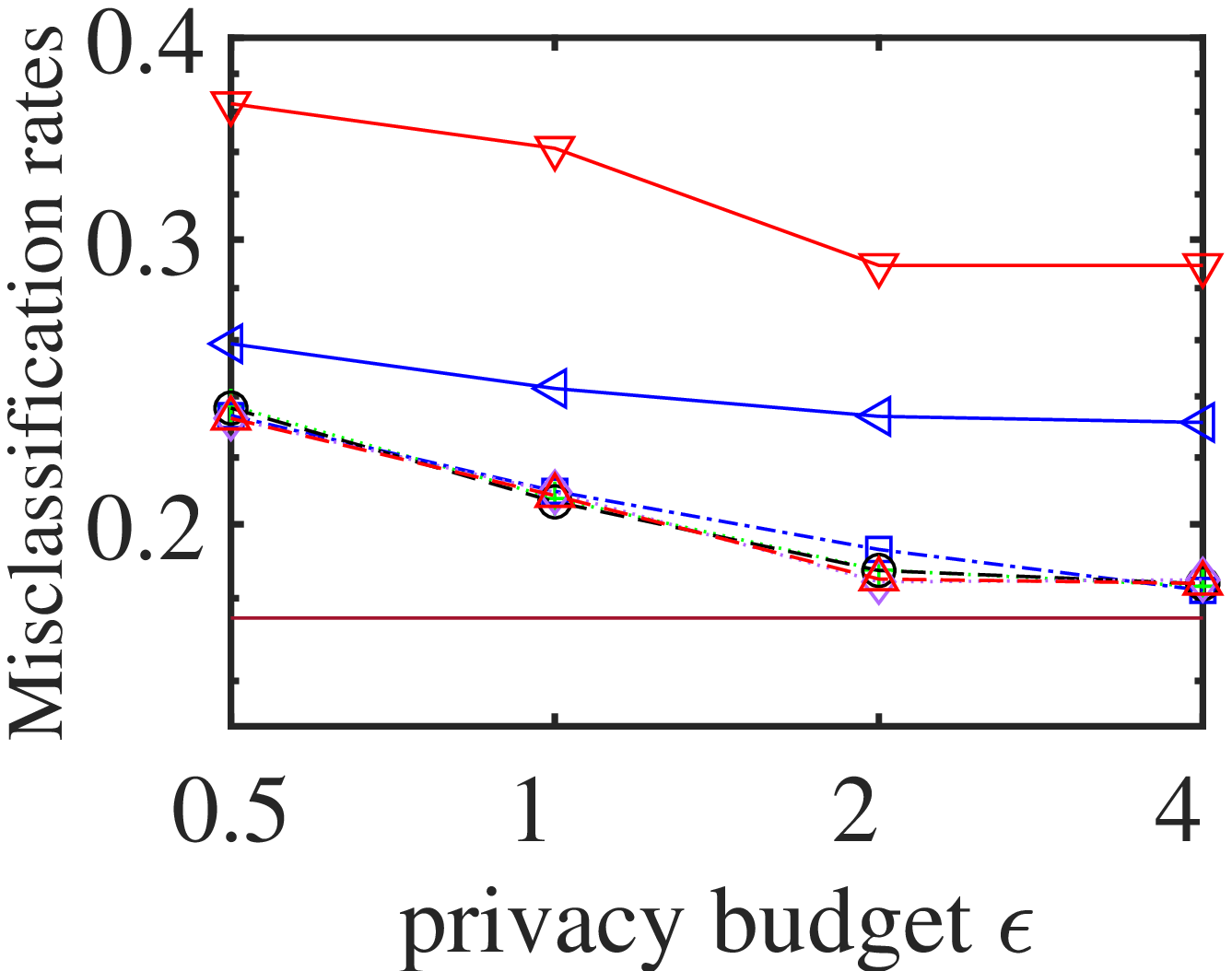}
         \caption{BR}
         \label{fig:svm_BR}
     \end{subfigure}
     \begin{subfigure}[b]{0.24\textwidth}
         \includegraphics[width=\textwidth]{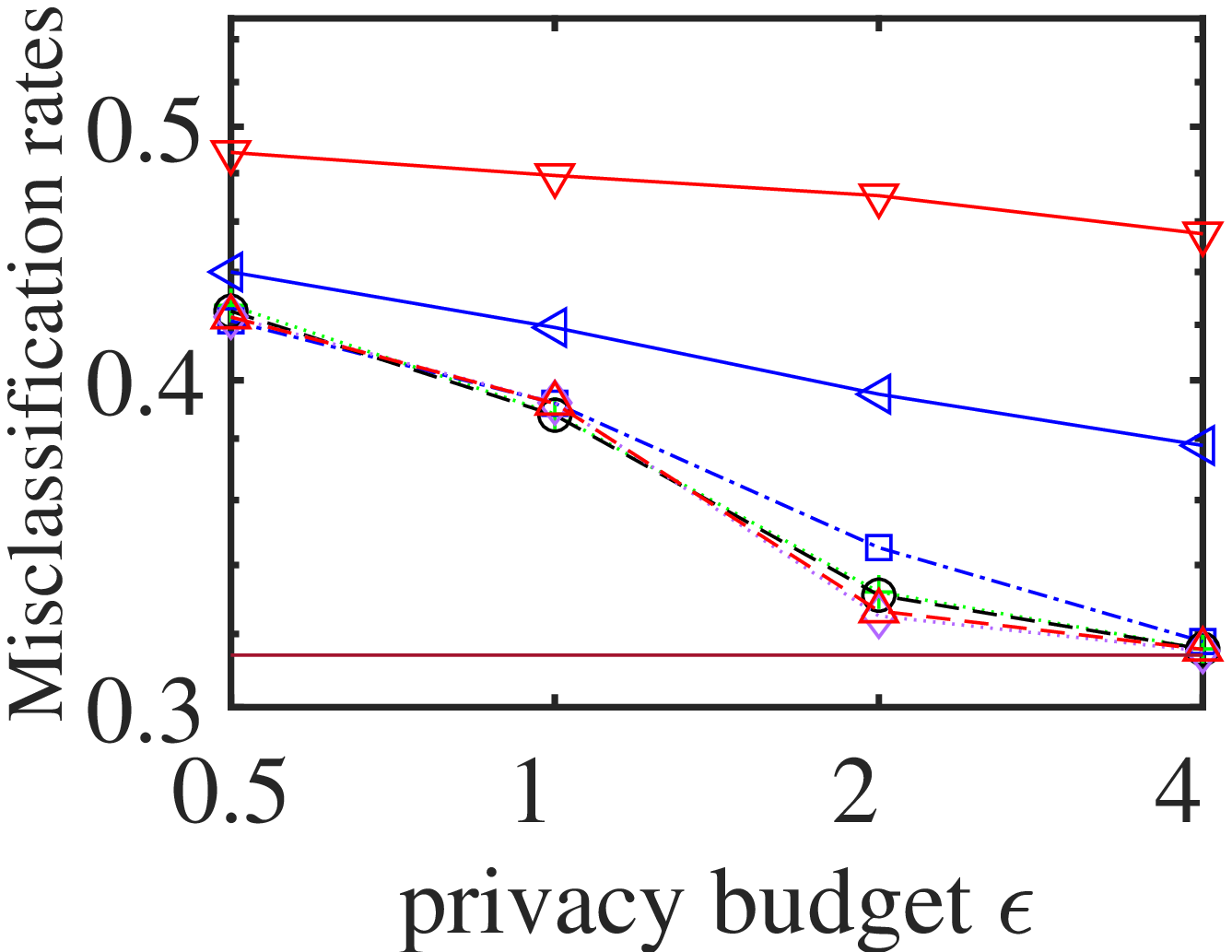}
         \caption{WISDN}
         \label{fig:svm_iot}
     \end{subfigure}
     \begin{subfigure}[b]{0.24\textwidth}
         \includegraphics[width=\textwidth]{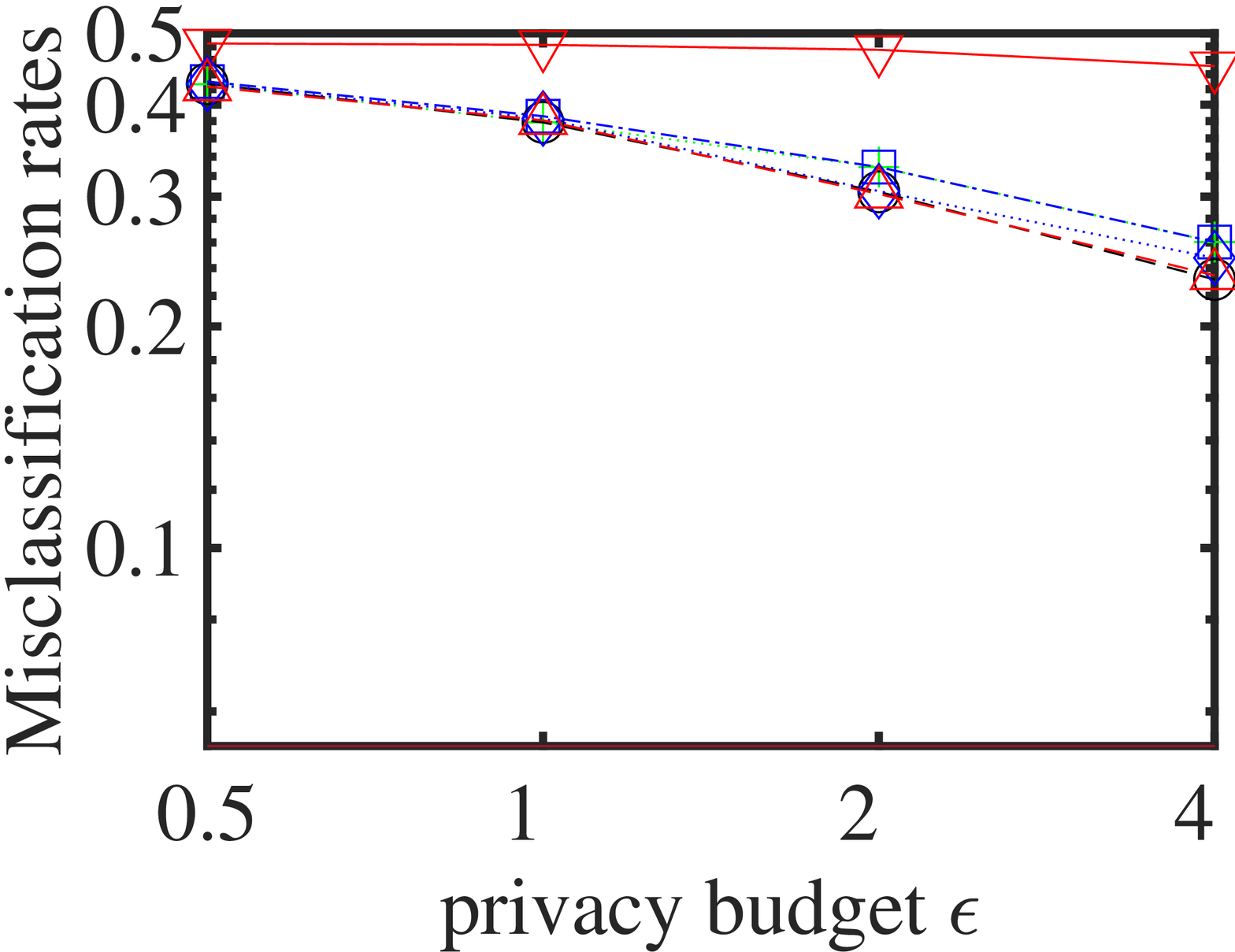}
         \caption{Vehicle}
         \label{fig:svm_vehicle}
     \end{subfigure}
          \caption{Support Vector Machines.}
          \label{fig:svm}
\end{figure*}

\begin{figure*}[h]
     \centering
     \begin{subfigure}[b]{0.24\textwidth}
         \includegraphics[width=\textwidth]{images/legend_v.eps}
     \end{subfigure}
     \begin{subfigure}[b]{0.24\textwidth}
         \centering
         \includegraphics[width=\textwidth]{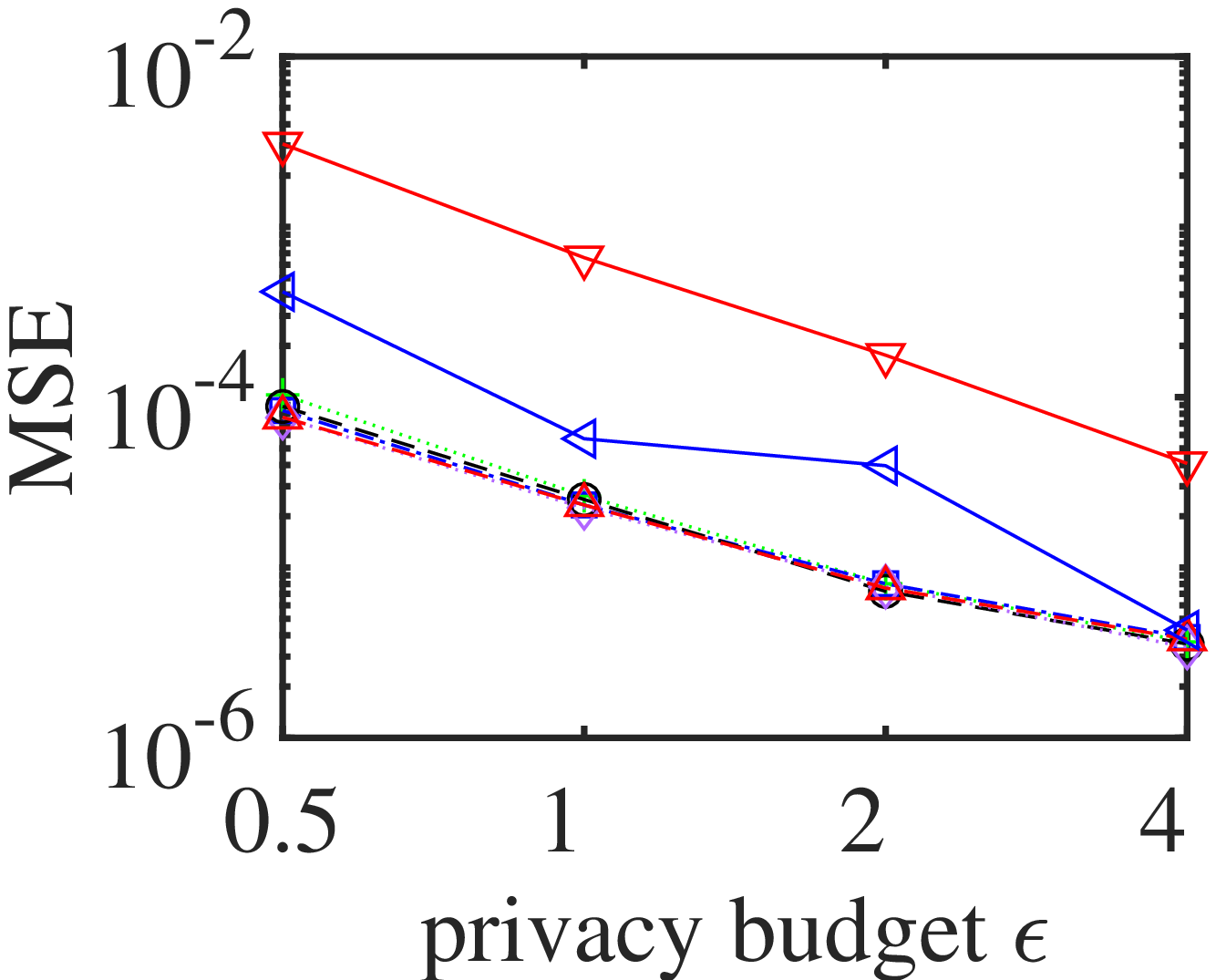}
         \caption{MX}
         \label{fig:discrete_mse_mx}
     \end{subfigure}
     \begin{subfigure}[b]{0.24\textwidth}
         \centering
         \includegraphics[width=\textwidth]{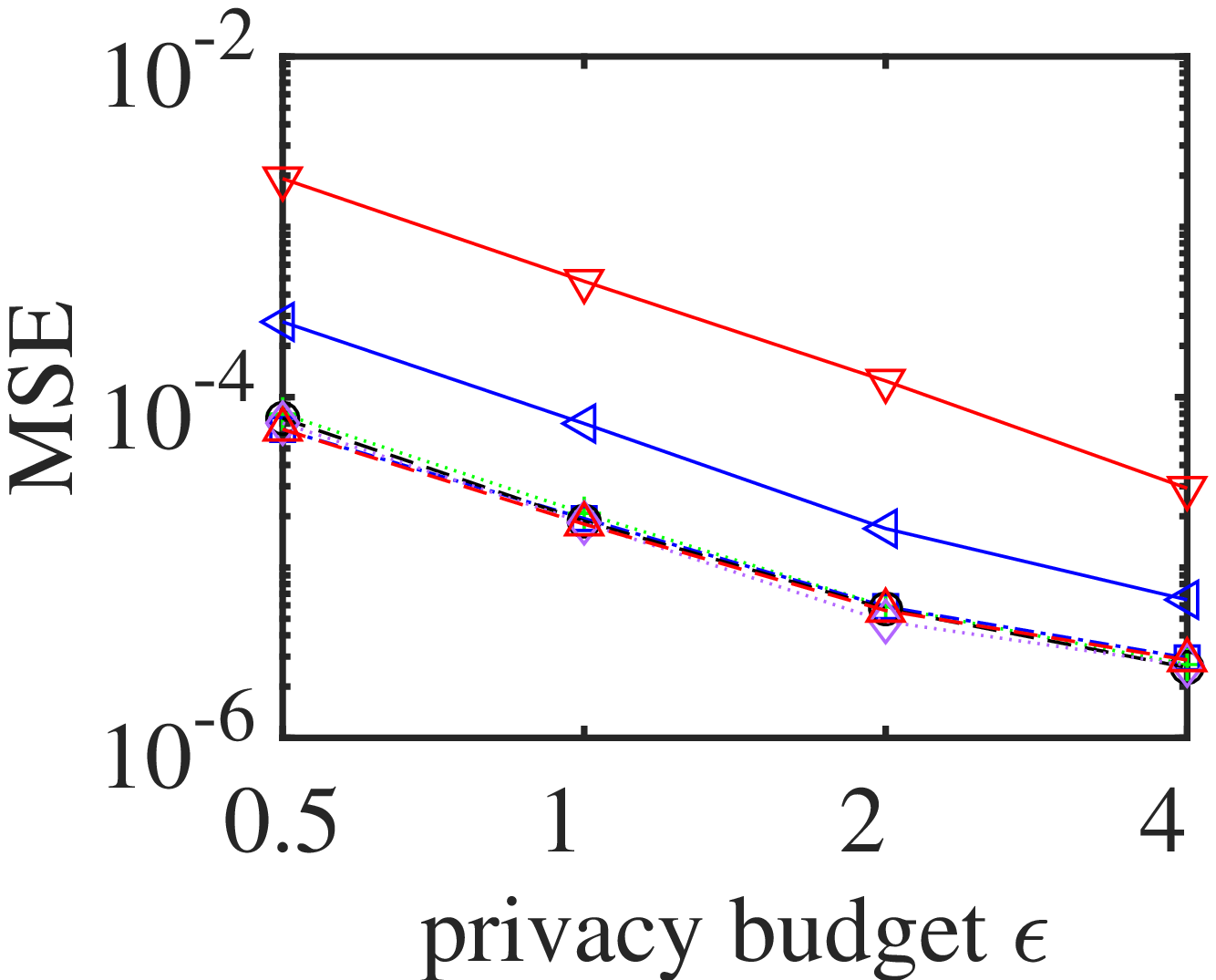}
         \caption{BR}
         \label{fig:discrete_mse_br}
     \end{subfigure}
     \begin{subfigure}[b]{0.24\textwidth}
         \centering
         \includegraphics[width=\textwidth]{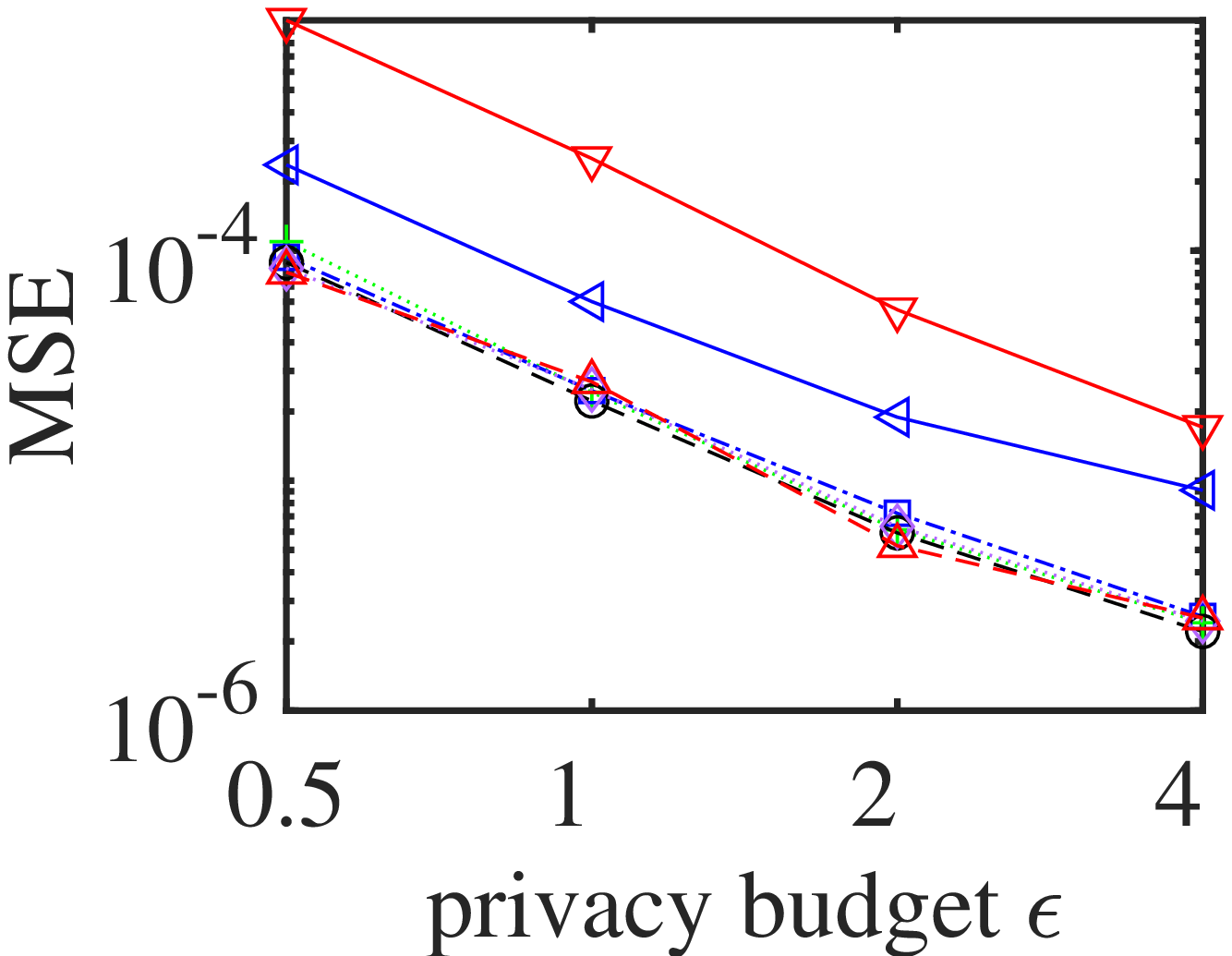}
         \caption{WISDN}
         \label{fig:discrete_mse_iot}
     \end{subfigure}
        \caption{Result accuracy for mean estimation with discretization post processing on \texttt{PM}, \texttt{HM}, and \texttt{HM-TP}. }
        \label{fig:MSE_discrete}
\end{figure*}

\begin{figure*}[h]
     \centering
     \begin{subfigure}[b]{0.24\textwidth}
         \includegraphics[width=\textwidth]{images/legend_v.eps}
     \end{subfigure}
     \begin{subfigure}[b]{0.24\textwidth}
         \centering
         \includegraphics[width=\textwidth]{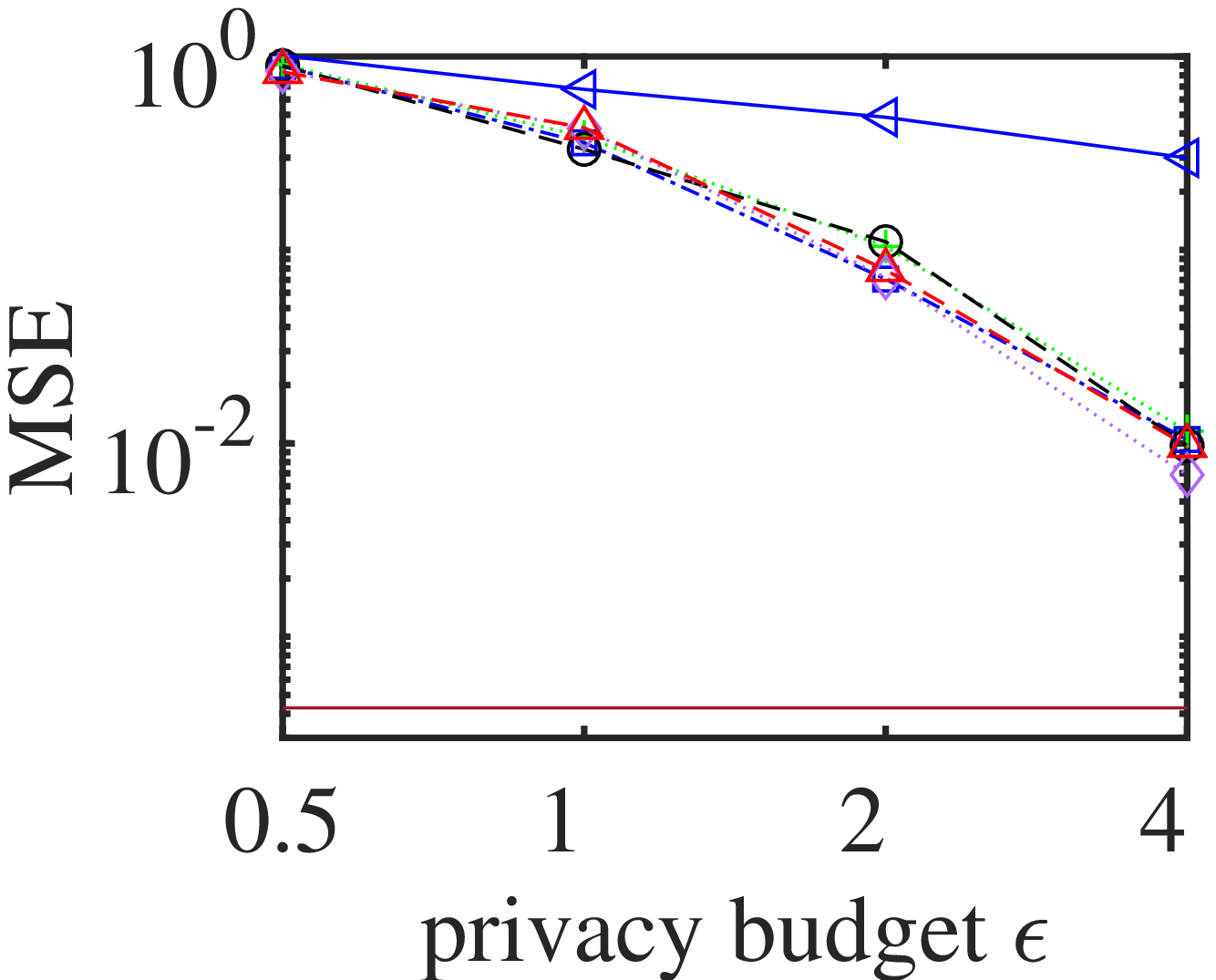}
         \caption{MX}
         \label{fig:discrete_linear_mx}
     \end{subfigure}
     \begin{subfigure}[b]{0.24\textwidth}
         \centering
         \includegraphics[width=\textwidth]{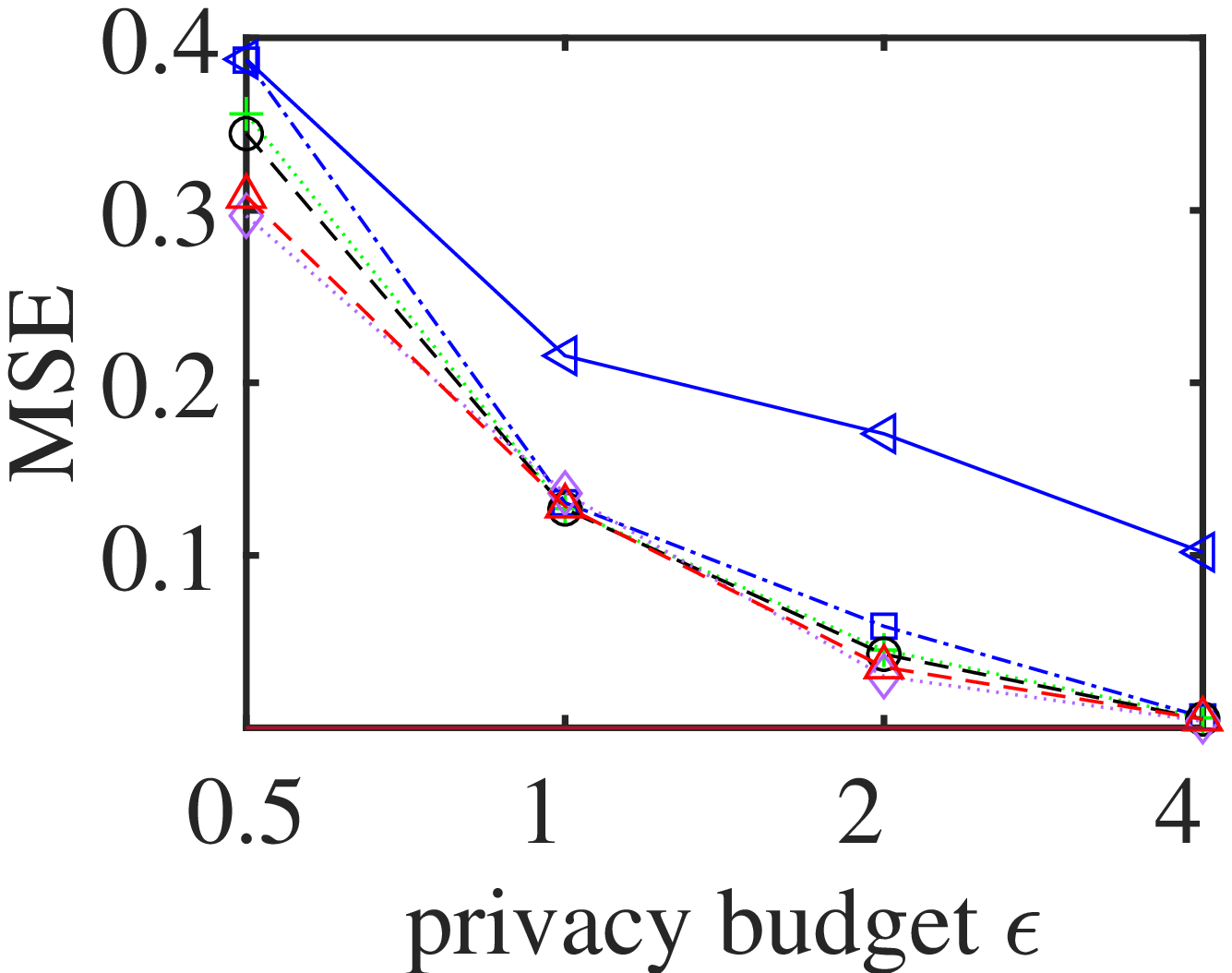}
         \caption{BR}
         \label{fig:discrete_linear_br}
     \end{subfigure}
     \begin{subfigure}[b]{0.24\textwidth}
         \centering
         \includegraphics[width=\textwidth]{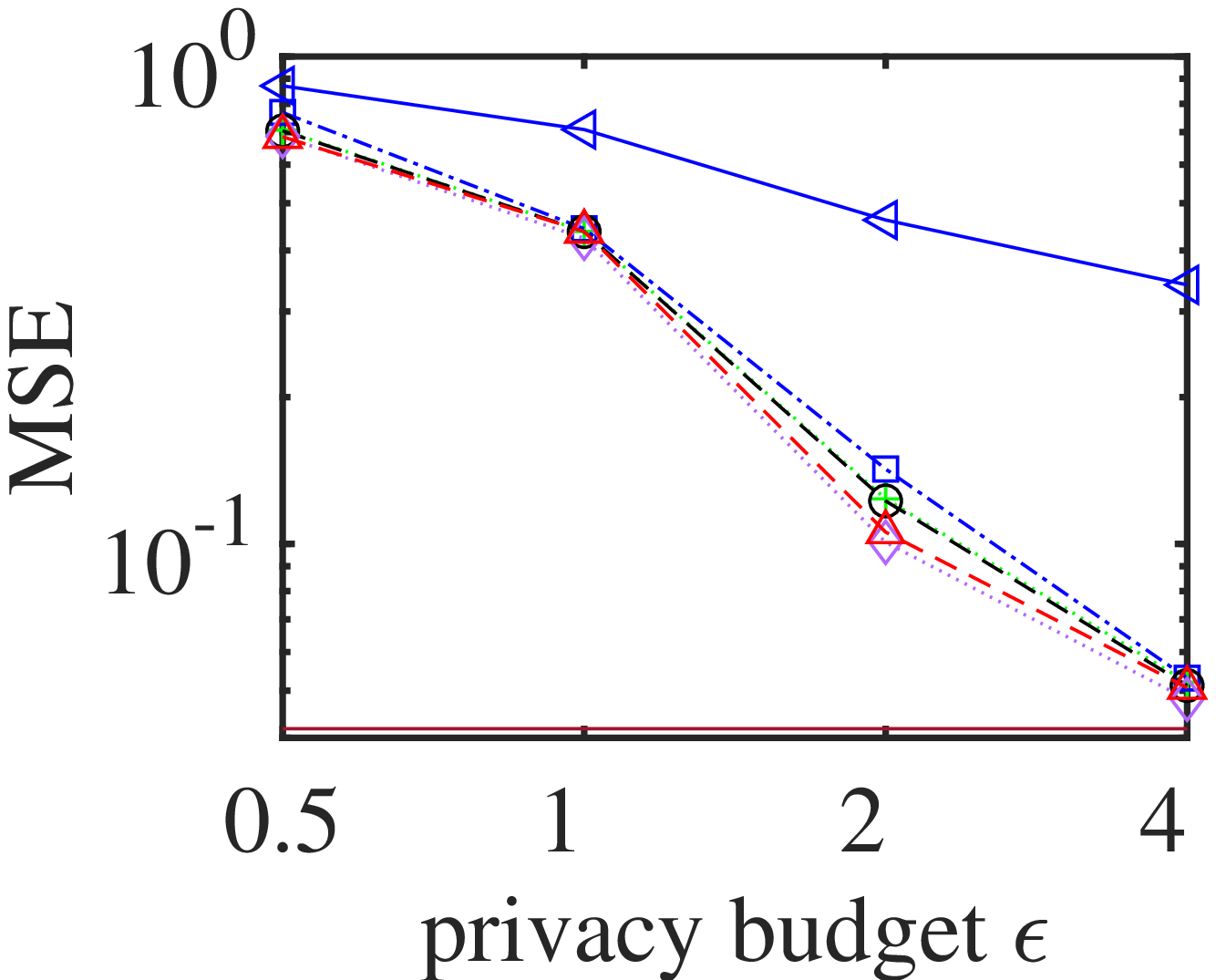}
         \caption{WISDN}
         \label{fig:discrete_linear_iot}
     \end{subfigure}
        \caption{Linear Regression with discretization post processing on \texttt{PM}, \texttt{HM}, and \texttt{HM-TP} (privacy parameter $\epsilon = 4$).}
        \label{fig:linear_discrete}
\end{figure*}

\begin{figure*}[h]
     \centering
     \begin{subfigure}[b]{0.24\textwidth}
         \includegraphics[width=\textwidth]{images/legend_v.eps}
     \end{subfigure}
     \begin{subfigure}[b]{0.24\textwidth}
         \centering
         \includegraphics[width=\textwidth]{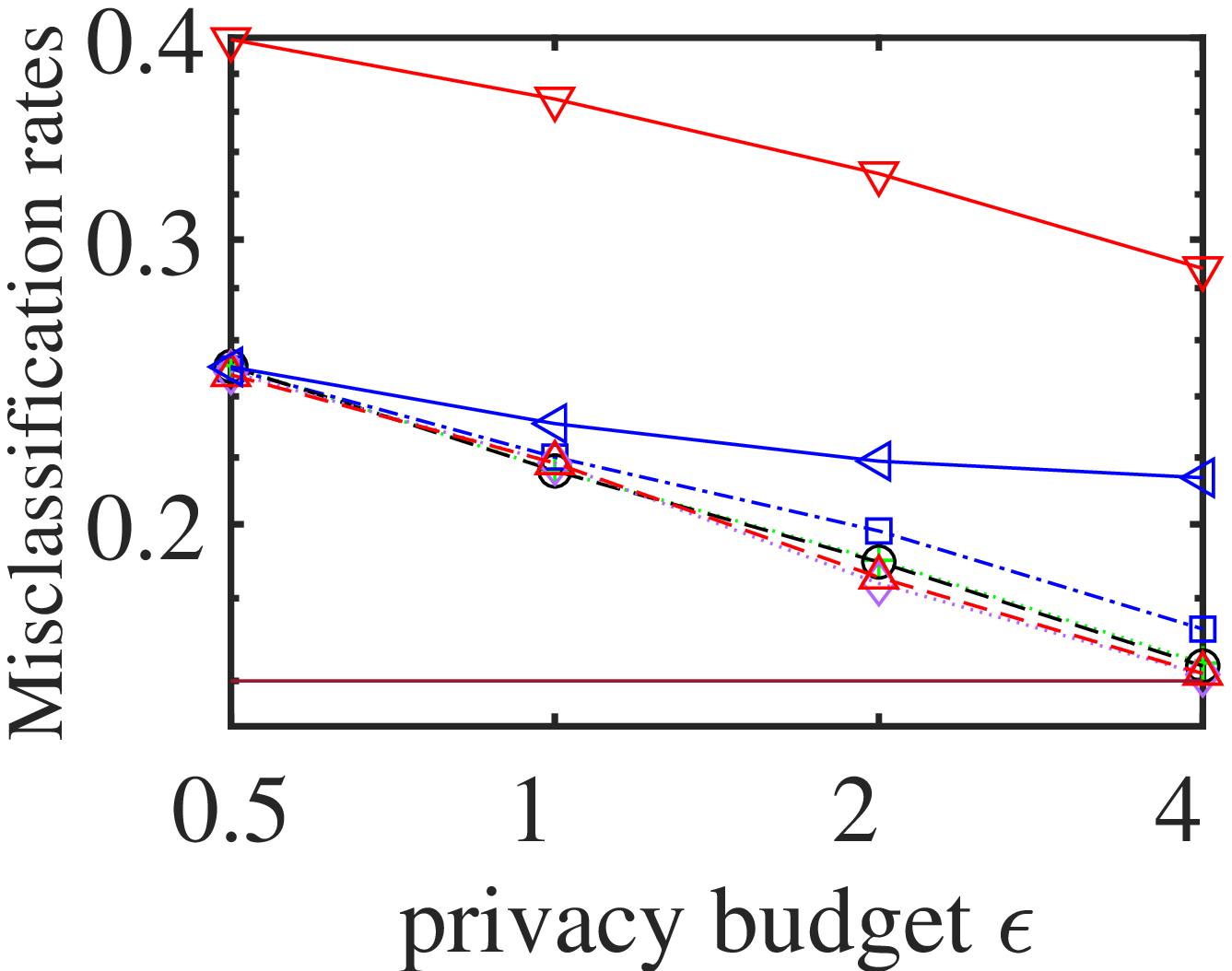}
         \caption{MX}
         \label{fig:discrete_log_mx}
     \end{subfigure}
     \begin{subfigure}[b]{0.24\textwidth}
         \centering
         \includegraphics[width=\textwidth]{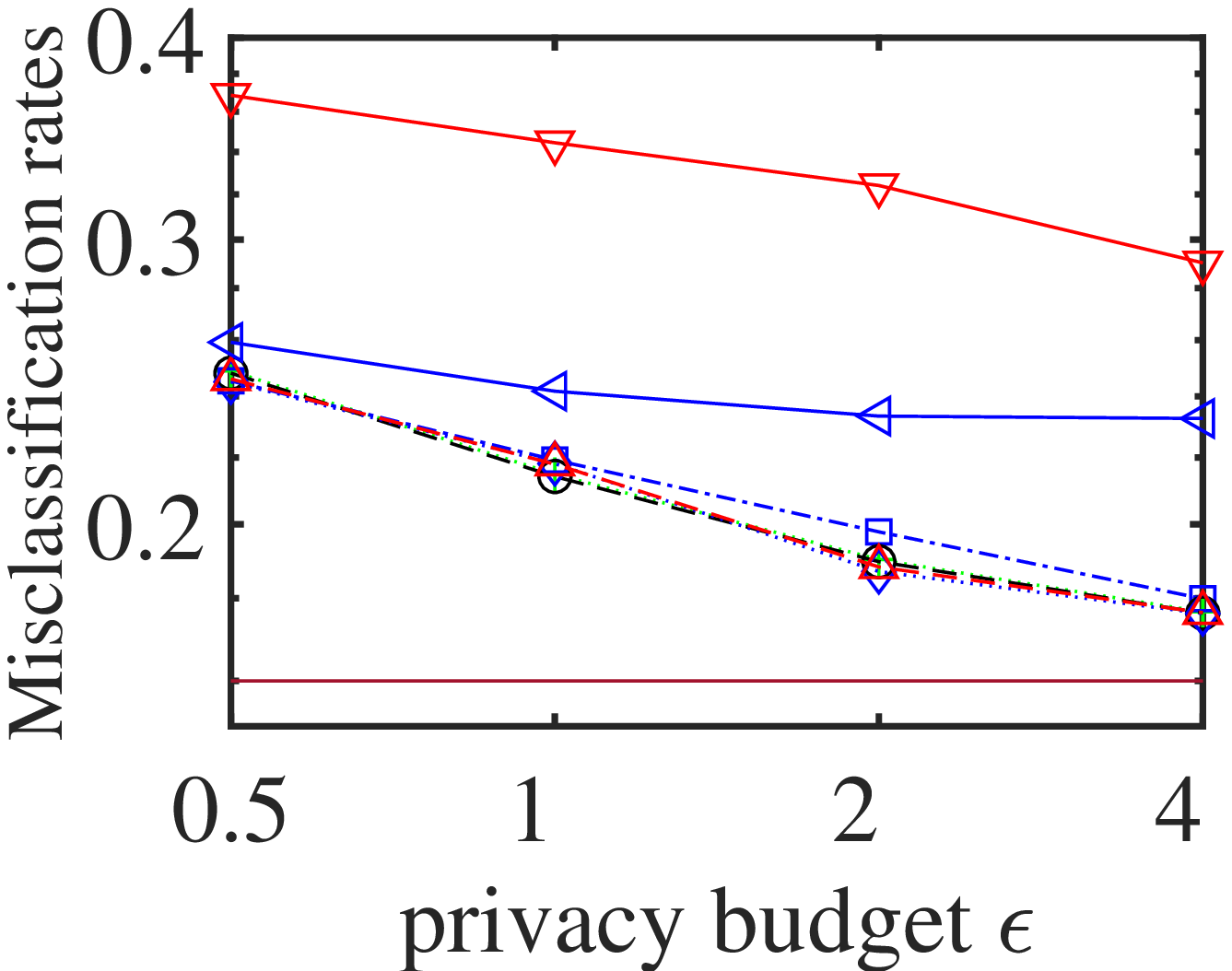}
         \caption{BR}
         \label{fig:discrete_log_br}
     \end{subfigure}
     \begin{subfigure}[b]{0.24\textwidth}
         \centering
         \includegraphics[width=\textwidth]{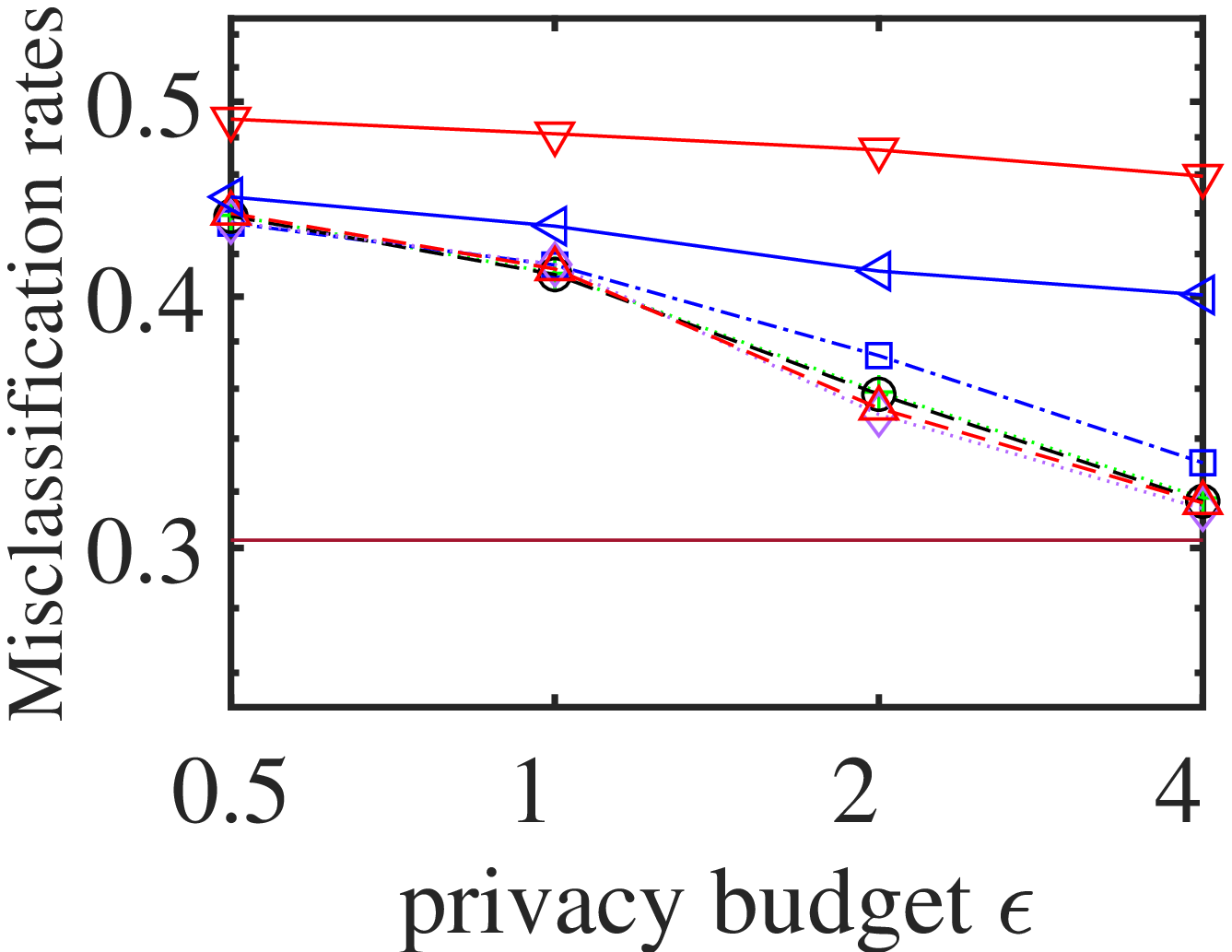}
         \caption{WISDN}
         \label{fig:discrete_log_iot}
     \end{subfigure}
        \caption{Logistic Regression with discretization post processing on \texttt{PM}, \texttt{HM}, and \texttt{HM-TP} (privacy budget $\epsilon = 4$).}
        \label{fig:log_discrete}
\end{figure*}

\begin{figure*}[h]
     \centering
     \begin{subfigure}[b]{0.24\textwidth}
         \centering
         \includegraphics[width=\textwidth]{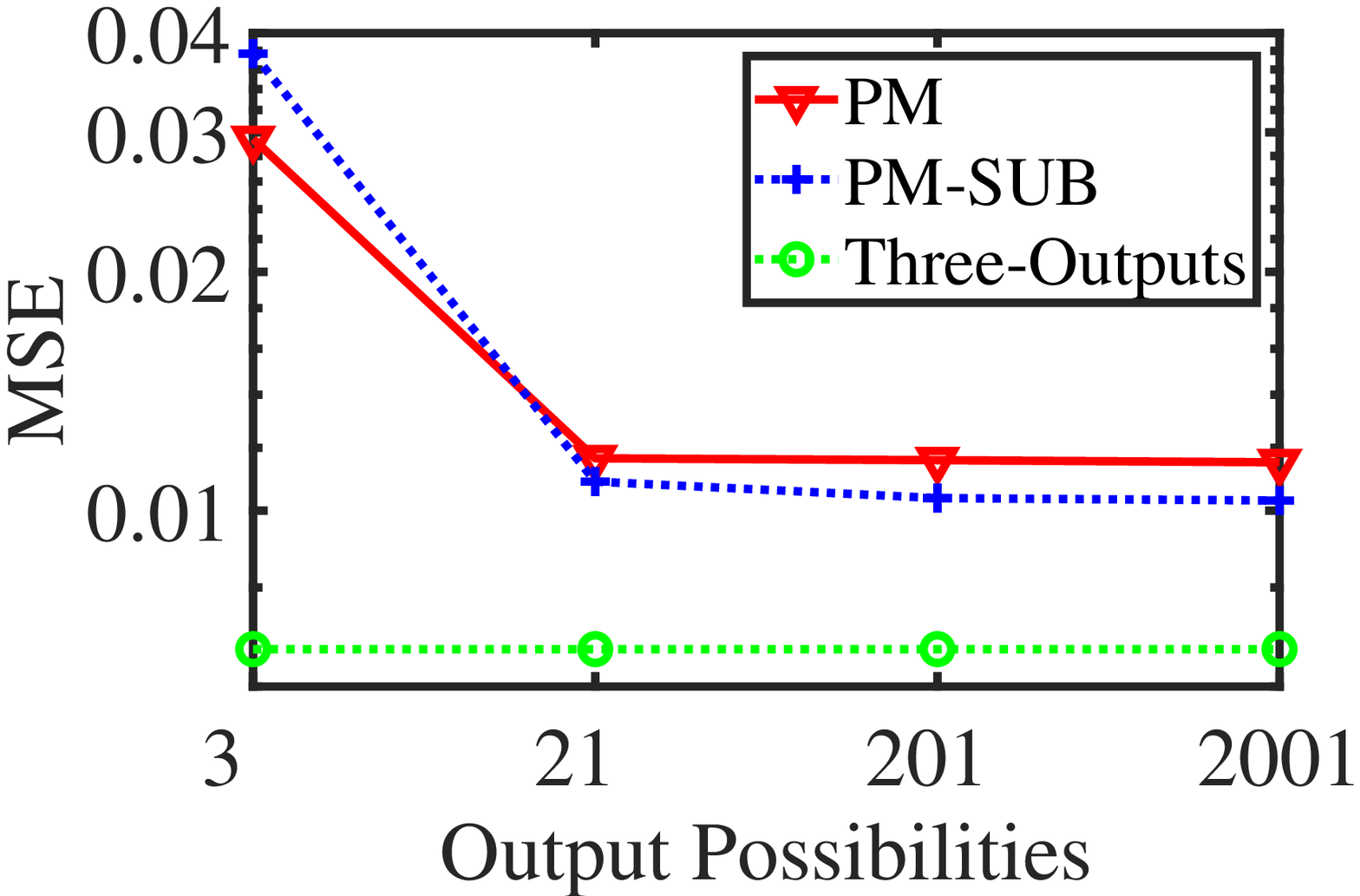}
         \caption{MX}
         \label{fig:discrete_output_linear_mx}
     \end{subfigure}
     \begin{subfigure}[b]{0.24\textwidth}
         \centering
         \includegraphics[width=\textwidth]{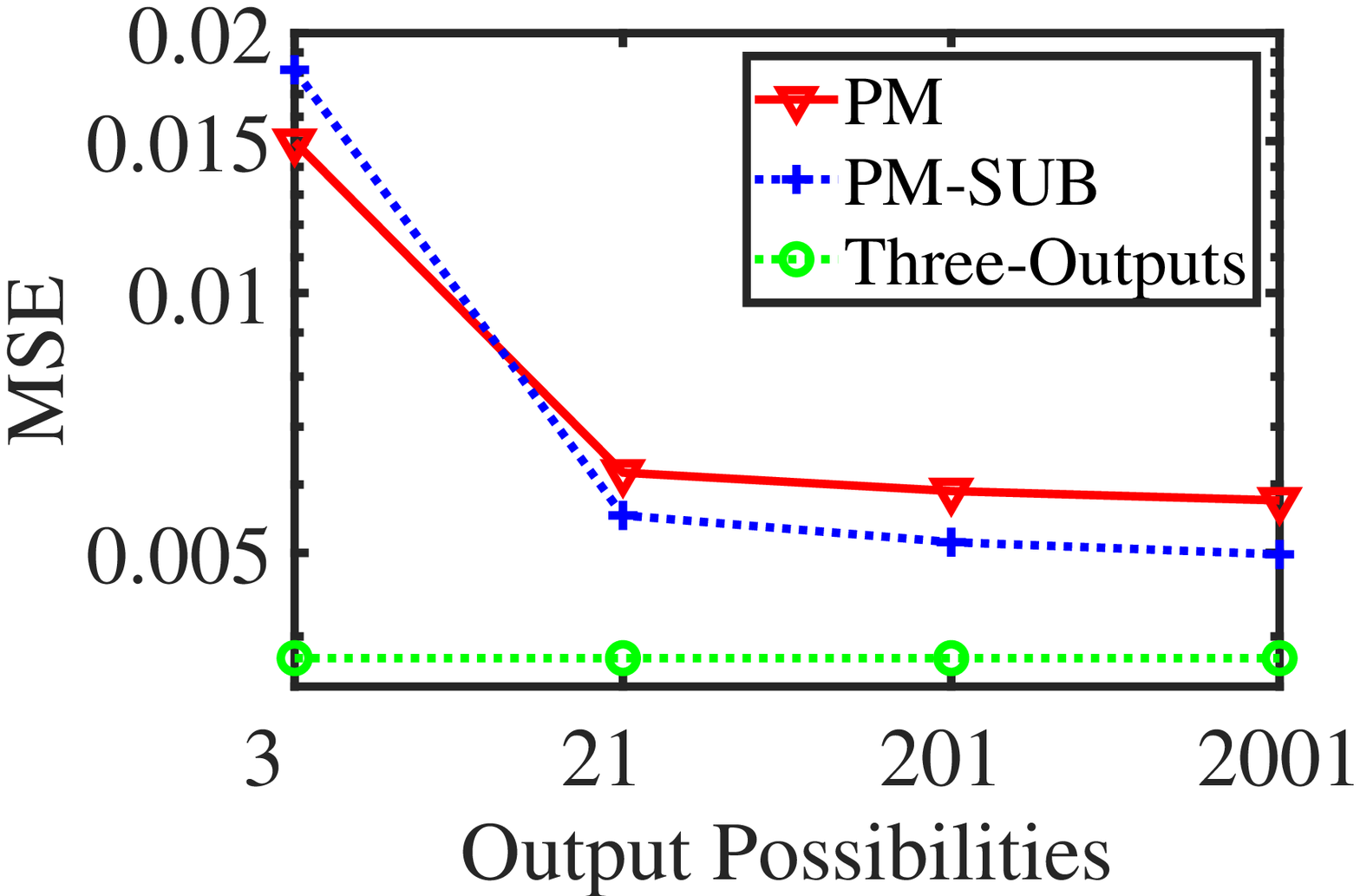}
         \caption{BR}
         \label{fig:discrete_output_linear_br}
     \end{subfigure}
     \begin{subfigure}[b]{0.24\textwidth}
         \centering
         \includegraphics[width=\textwidth]{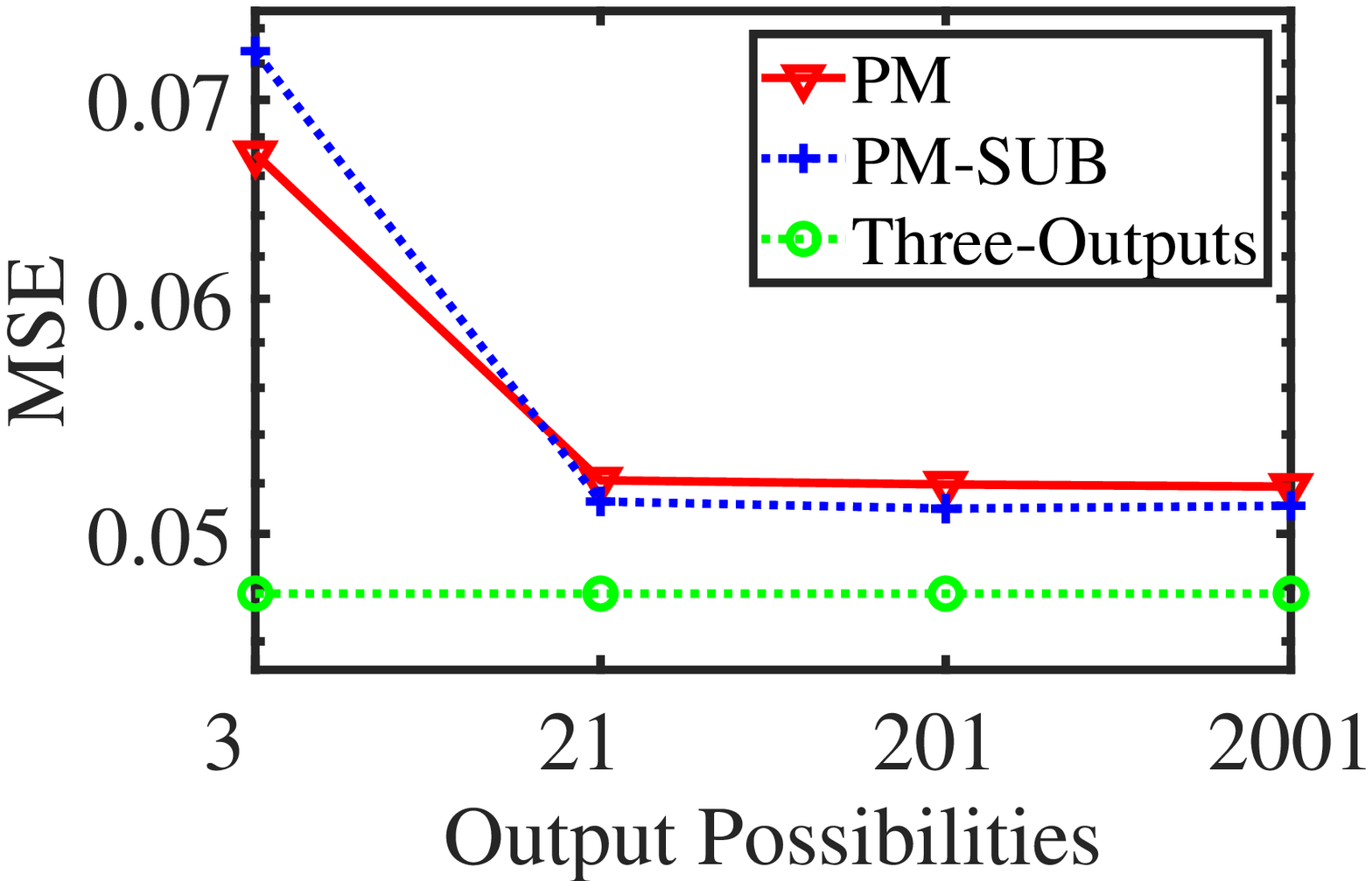}
         \caption{WISDN}
         \label{fig:discrete_output_linear_iot}
     \end{subfigure}
        \caption{Linear Regression with discretization post processing on \texttt{PM}, \texttt{HM}, and \texttt{HM-TP} (privacy budget $\epsilon = 4$).}
        \label{fig:linear_output_discrete}
\end{figure*}

\begin{figure}[!h]
    \centering
    \begin{subfigure}[b]{0.25\textwidth}
        \centering
        \includegraphics[scale=0.24]{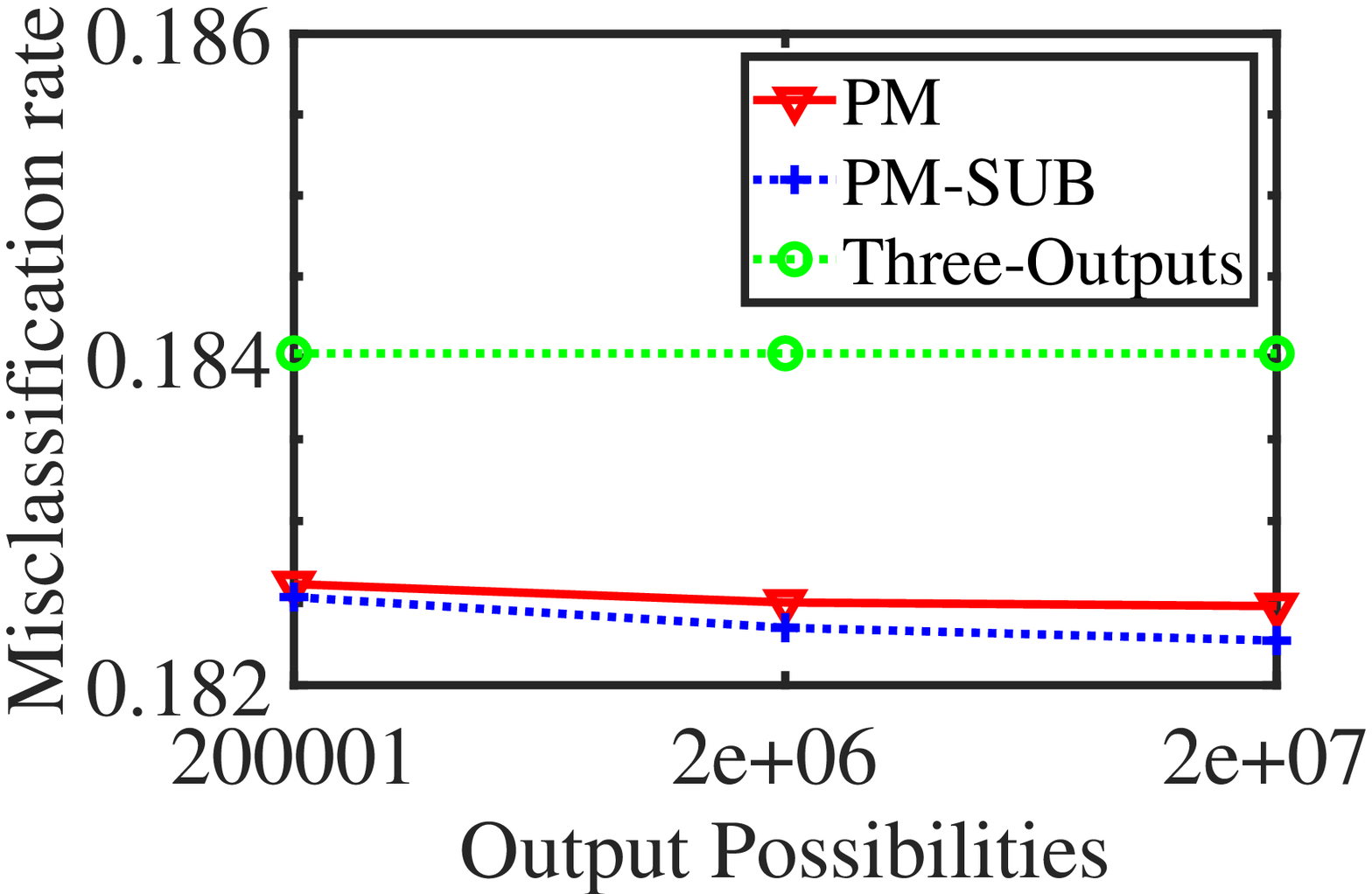}
        \caption{BR}
    \end{subfigure}
    \caption{Support Vector Machine with discretization post processing on \texttt{PM}, \texttt{HM}, and \texttt{HM-TP} (privacy budget $\epsilon = 5$).}
    \label{fig:svm_output_discrete}
\end{figure}

\section{Experiments}\label{sec:experiment}

We implemented both existing solutions and our proposed solutions, including \texttt{PM-SUB}, \texttt{Three-Outputs}, \texttt{HM-TP} proposed by us, \texttt{PM} and \texttt{HM} proposed by Wang~\textit{et~al.}~\cite{wang2019collecting}, Duchi~\textit{et al.}'s~\cite{duchi2018minimax} solution and the traditional Laplace mechanism. Our datasets include (i) the WISDM Human Activity Recognition dataset~\cite{kwapisz2011activity}  is a set of accelerometer data collecting on Android phones from $35$ subjects performing $6$ activities, where the domain of the timestamps of the phone's uptime is removed from the dataset, and the remaining $3$ numeric attributes are accelerations in $x, y,$ and $z$ directions measured by the Android phone's accelerometer and $2$ categorical attributes; (ii) two public datasets extracted from Integrated Public Use Microdata Series~\cite{mpcdrupl_2019} contain census records from Brazil (BR) and Mexico (MX). BR includes $4$M tuples and $16$ attributes, of which $6$ are numerical and $10$ are categorical. MX contains $4M$ records and $19$ attributes, of which $5$ are numerical and $14$ are categorical; (iii) a Vehicle dataset obtained by collecting from a distributed sensor network, including acoustic (microphone), seismic (geophone), and infrared (polarized IR sensor)~\cite{duarte2004vehicle}. The dataset contains $98528$ tuples and $101$ attributes, where $100$ attributes are numerical representing information such as the raw time series data observed at each sensor and acoustic feature vectors extracted from each sensor’s microphone. One attribute is categorical denoting different types of vehicles, which are labeled manually by a human operator to ensure high accuracy. Besides, information about vehicles is gathered to find out the type or brand of the vehicle. The Vehicle dataset is also used as the federated learning benchmark by~\cite{li2019fair}. We normalize the domain of each numeric attribute to $[-1, 1]$. In our experiments, we report average results over $100$ runs.

\subsection{Results on the Mean Values of Numeric Attributes}\label{sec:continous-mse}

We estimate the mean of every numeric attribute by collecting a noisy multidimensional tuple from each user. To compare with Wang~\textit{et~al.}'s~\cite{wang2019collecting} mechanisms, we follow their experiments and then divide the total privacy budget $\epsilon$ into two parts. Assume a tuple contains $d$ attributes which include $d_n$ numeric attributes and $d_c$ categorical attributes. Then, we allocate $d_n \epsilon /d$ budget to numeric attributes, and $d_c \epsilon /d$ to categorical ones, respectively. Our approach of using LDP for categorical data is same as that of Wang~\emph{et~al.}~\cite{wang2019collecting}. We estimate the mean value for each of the numeric attributes using existing methods: (i) Duchi~\textit{et~al.}'s~\cite{duchi2018minimax} solution handles multiple numeric attributes directly; (ii) when using the Laplace mechanism, it applies $\epsilon/d$ budget to each numeric attribute individually; (iii) \texttt{PM} and \texttt{HM} are from Wang~\textit{et~al.}~\cite{wang2019collecting}. In Section~\ref{sec:multiple}, we evaluate the mean square error (MSE) of the estimated mean values for numeric attributes using our proposed approaches. Fig.~\ref{fig:MSE} presents MSE results as a function of the total budget of $\epsilon$ in the datasets (WISDM, MX, BR and Vehicle). To simplify the complexity, we use last $6$ numerical attributes of the  Vehicle dataset to calculate MSE. Overall, our experimental evaluation shows that our proposed approaches outperform existing solutions. \texttt{HM-TP} outperforms existing solutions in all settings, whereas \texttt{PM-SUB}'s MSE is smaller than \texttt{PM}'s when privacy budget $\epsilon$ is large such as $4$, and \texttt{Three-Outputs}' performance is better at a small privacy budget. Hence, experimental results are in accordance with our theories.

We also run a set of experiments on synthetic datasets that contain numeric attributes only. We create four synthetic datasets, including $16$ numeric attributes where each attribute value is obtained by sampling from a Gaussian distribution with mean value $u \in \{0, \frac{1}{3}, \frac{2}{3}, 1\}$ and standard deviation of $\frac{1}{4}$. By evaluating the MSE in estimating mean values of numeric attributes with our proposed mechanisms, we present our experimental results in Fig.~\ref{fig:MSE_synthetic}. Hereby, we confirm that \texttt{PM-SUB}, \texttt{Three-Outputs} and \texttt{HM-TP} outperform existing solutions.

\subsection{Results on Empirical Risk Minimization}\label{sec:continous-empirical-risk-minimization}

In the following experiments, we evaluate the proposed algorithms' performance using linear regression, logistic regression, and SVM classification tasks. We change each categorical attribute $t_j$ with $k$ values into $k-1$ binary attributes with a domain $\{ -1, 1\}$, for example, given $t_j$, (i) $1$ represents the $l$-th ($l < k$) value on the $l$-th binary attribute and $-1$ on each of the rest of $k-2$ attributes; (ii) $-1$ represents the $k$-th value on all binary attributes. After the transformation, the dimension of WISDN is $43$, BR (resp. MX) is $90$ (resp. $94$) and Vehicle is $101$. 
Since both the BR and MX datasets contain the ``total income" attribute, we use it as the dependent variable and consider other attributes as independent variables. The Vehicle dataset is used for SVM~\cite{duarte2004vehicle, li2019fair}. Each tuple in the Vehicle dataset contains 100-dimensional feature and a binary label.

Consider each tuple of data as the dataset of a vehicle, so vehicles calculate gradients and run different LDP mechanisms to generate noisy gradients. Each mini-batch is a group of vehicles. Thus, the centralized aggregator i.e. cloud server updates the model after each group of vehicles send noisy gradients. 
The experiment involves $8$ competitors: \texttt{PM-SUB}, \texttt{Three-Outputs}, \texttt{HM-TP}, \texttt{PM}, \texttt{HM}, Duchi~\textit{et~al.}'s solution, Laplace and a non-private setting. We set the regularization factor $\lambda = 10^{-4}$ in all approaches. We use $10$-fold cross-validation $5$ times to evaluate the performance of each method in each dataset. Fig.~\ref{fig:log} and  Fig.~\ref{fig:svm} show that the proposed mechanisms (\texttt{PM-SUB}, \texttt{Three-Outputs}, and \texttt{HM-TP}) have lower misclassification rates than other mechanisms. Fig.~\ref{fig:linear} shows the MSE of the linear regression model. We ignore Laplace's result because its MSE clearly exceeds those of other mechanisms. In the selected privacy budgets, our proposed mechanisms (\texttt{PM-SUB}, \texttt{Three-Outputs}, and \texttt{HM-TP}) outperform existing approaches, including Laplace mechanism, Duchi~\textit{et~al.}'s solution, \texttt{PM}, and \texttt{HM}.

\subsection{Results after Discretization}\label{experiment:discreization-results}

In this section, we add a discretization post processing step in Algorithm~\ref{algo:discretization} to the implementation of mechanisms with continuous range of outputs, including \texttt{PM}, \texttt{PM-SUB}, \texttt{HM} and \texttt{HM-TP}. To confirm that the discretization is effective, we perform the following experiments. We separate the output domain $[-C, C]$ into $2000$ segments, and then we have $2001$ possible outputs given an initial input $x$. We add a discretization step to the experiments in Section~\ref{sec:continous-mse}. Fig.~\ref{fig:MSE_discrete} displays our experimental results. We confirm that our proposed approaches outperform existing solutions in estimating the mean value using  three real-world datasets: WISDM, MX, and BR after discretizing. 

In addition, we use log regression and linear regression to evaluate the performance after discretization. We repeat the experiments in Section~\ref{sec:continous-empirical-risk-minimization} with an additional discretization post processing step. Fig.~\ref{fig:linear_discrete} and Fig.~\ref{fig:log_discrete} present our experimental results. Compared with other approaches, the performance is similar to that before discretizing. Furthermore, Fig.~\ref{fig:linear_output_discrete} illustrates how the accuracy changes as output possibilities increase. It shows that the misclassification rate of the logistic regression task and the MSE of the linear regression task are related to the size of output possibilities. Although incurring with randomness, we find that the misclassification rate and MSE decrease as the number of output possibilities increases. When there are three output possibilities, it incurs randomness. Moreover, Fig.~\ref{fig:svm_output_discrete} shows that \texttt{PM-SUB} outperforms \texttt{Three-Outputs}, when the number of output possibilities is large. However, when we discretize the range of outputs into $2000$ segments, the performance is satisfactory and similar to the performance with a continuous range of outputs. Hence, our proposed approaches combined with the discretization step help retain the performance while enabling the usage in vehicles.

\section{Conclusion}\label{sec:conclusion}
In this paper, we propose \texttt{PM-OPT}, \texttt{PM-SUB}, \texttt{Three-Outputs}, and \texttt{HM-TP} local differential privacy mechanisms. These mechanisms effectively preserve  the privacy when collecting data records and computing accurate statistics in various data analysis tasks, including estimating the mean frequency and machine learning tasks such as SVM classification, logistic regression,  and linear regression. Moreover, we integrate our proposed local differential privacy mechanisms with \texttt{FedSGD} algorithm to create an \texttt{LDP-FedSGD} algorithm. The \texttt{LDP-FedSGD} algorithm enables the vehicular crowdsourcing applications to train a machine learning model to predict the traffic status while avoiding the privacy threat and reducing the communication cost. More specifically, by leveraging LDP mechanisms, adversaries are unable to deduce the exact location information of vehicles from uploaded gradients. Then, FL enables vehicles to train their local machine learning models using collected data and then send noisy gradients instead of data to the cloud server to obtain a global model. Extensive experiments demonstrate that our proposed approaches are effective and able to perform better than existing solutions. Further, we intend to apply  our  proposed LDP mechanisms to deep neural network to deal with more complex data analysis tasks.

\bibliographystyle{IEEEtran}
\bibliography{related}

 \newpage

\clearpage

\appendix
\subsection{\textbf{Proof of Lemma ~\ref{lem-symmetrization}}}\label{app:proof-of-lem-1}

The mechanism $\mathcal{M}_2$ satisfies the proper distribution in Eq.~(\ref{proper}) because of
\begin{align}
   & P_{C \leftarrow x}(\mathcal{M}_2) + P_{-C \leftarrow x}(\mathcal{M}_2) + P_{0 \leftarrow x}(\mathcal{M}_2) \nonumber \\ &  =  \frac{P_{C \leftarrow x}(\mathcal{M}_1) + P_{-C \leftarrow -x}(\mathcal{M}_1)}{2} \nonumber\\ &  + \frac{P_{-C \leftarrow x}(\mathcal{M}_1) + P_{C \leftarrow -x}(\mathcal{M}_1)}{2} \nonumber \\&  + \frac{P_{0 \leftarrow x}(\mathcal{M}_1) + P_{0 \leftarrow -x}(\mathcal{M}_1)}{2}  = 1.\nonumber
\end{align}
Besides, the mechanism $\mathcal{M}_2$ satisfies unbiased estimation in Eq.~(\ref{unbiased}) because
\begin{align}
    &C \cdot P_{C \leftarrow x}(\mathcal{M}_2) + (-C) \cdot P_{-C \leftarrow x}(\mathcal{M}_2) + 0 \cdot P_{0 \leftarrow x}(\mathcal{M}_2) \nonumber \\ & \quad =
    C \cdot \frac{P_{C \leftarrow x}(\mathcal{M}_1) + P_{-C \leftarrow -x}(\mathcal{M}_1)}{2} \nonumber\\ & \quad + (-C) \cdot \frac{P_{-C \leftarrow x}(\mathcal{M}_1) + P_{C \leftarrow -x}(\mathcal{M}_1)}{2} \nonumber\\ & \quad + 0 \cdot \frac{P_{0 \leftarrow x}(\mathcal{M}_1) + P_{0 \leftarrow -x}(\mathcal{M}_1)}{2} = x. \nonumber
\end{align}
In addition,
\begin{align}
   \frac{P_{C \leftarrow x}(\mathcal{M}_2)}{P_{C \leftarrow x'}(\mathcal{M}_2)} = \frac{P_{C \leftarrow x}(\mathcal{M}_1) + P_{-C \leftarrow -x}(\mathcal{M}_1)}{P_{C \leftarrow x'}(\mathcal{M}_1) + P_{-C \leftarrow -x'}(\mathcal{M}_1)}\label{eq:LDP-M2-1}
\end{align}
and
\begin{align}
   \frac{P_{-C \leftarrow x}(\mathcal{M}_2)}{P_{-C \leftarrow x'}(\mathcal{M}_2)}= \frac{P_{-C \leftarrow x}(\mathcal{M}_1) + P_{C \leftarrow -x}(\mathcal{M}_1)}{P_{-C \leftarrow x'}(\mathcal{M}_1) + P_{C \leftarrow -x'}(\mathcal{M}_1)}\label{eq:LDP-M2-2}
\end{align}
and
\begin{align}
  \frac{P_{0 \leftarrow x}(\mathcal{M}_2)}{P_{0 \leftarrow x'}(\mathcal{M}_2)} = \frac{P_{0 \leftarrow x}(\mathcal{M}_1) + P_{0 \leftarrow -x}(\mathcal{M}_1)}{P_{0 \leftarrow x'}(\mathcal{M}_1) + P_{0 \leftarrow -x'}(\mathcal{M}_1)}.\label{eq:LDP-M2-3}
\end{align}
According to (\ref{LDP}), we obtain
\begin{align}
  & \frac{e^{-\epsilon}(P_{C \leftarrow x'}(\mathcal{M}_1) + P_{-C \leftarrow -x'}(\mathcal{M}_1))}{P_{C \leftarrow x'}(\mathcal{M}_1) + P_{-C \leftarrow -x'}(\mathcal{M}_1)} \nonumber   \\& \quad \leq Eq.~(\ref{eq:LDP-M2-1}) \leq\frac{e^{\epsilon}(P_{C \leftarrow x'}(\mathcal{M}_1) + P_{-C \leftarrow -x'}(\mathcal{M}_1))}{P_{C \leftarrow x'}(\mathcal{M}_1) + P_{-C \leftarrow -x'}(\mathcal{M}_1)}, \nonumber
\end{align}
which is equavalent to
\begin{align}
  e^{-\epsilon} \leq  \textup{Eq}.~(\ref{eq:LDP-M2-1}) \leq e^{\epsilon}.
\end{align}
Similarly, we prove that Eq.~(\ref{eq:LDP-M2-2})~(\ref{eq:LDP-M2-3}) satisfy (\ref{LDP}). Hence, we conclude that $\mathcal{M}_2$ satisfies $\epsilon$-LDP requirements.

Then, we prove that the symmetrization process does not increase the worst-case noise variance as follows:

Since $\mathcal{M}_2$ satisfies the unbiased estimation of Eq.~(\ref{unbiased}), $\mathbb{E}[Y | X = x] = x$. Hence, the variance of mechanism $\mathcal{M}_2$ given $x$ is
\begin{align}
    &\textup{Var}_{\mathcal{M}_2}[Y | X = x] = \mathbb{E}[Y^2 | X = x] - (\mathbb{E}[Y | X = x])^2 \nonumber \\
    &= C^2\cdot P_{C \leftarrow x}(\mathcal{M}_2) + 0 \cdot P_{C \leftarrow x}(\mathcal{M}_2) \nonumber\\& \quad + (-C)^2 \cdot P_{-C \leftarrow x}(\mathcal{M}_2) - x^2 \nonumber \\ & = C^2 (1 - P_{0 \leftarrow x}(\mathcal{M}_2)) - x^2 \label{eq:var-m-2-0} \\& = C^2 \left (1 -  \frac{P_{0 \leftarrow x}(\mathcal{M}_1) + P_{0 \leftarrow -x}(\mathcal{M}_1)}{2}\right ) - x^2,\label{eq:var-m-2}
\end{align} or it changes to
\begin{align}
    &\textup{Var}_{\mathcal{M}_2}[Y | X = -x] = \mathbb{E}[Y^2 | X = -x] - (\mathbb{E}[Y | X = -x])^2 \nonumber \\
    &= C^2\cdot P_{C \leftarrow -x}(\mathcal{M}_2) + 0 \cdot P_{C \leftarrow -x}(\mathcal{M}_2) \nonumber\\& \quad + (-C)^2 \cdot P_{-C \leftarrow -x}(\mathcal{M}_2) - x^2 \nonumber \\ & = C^2 (1 - P_{0 \leftarrow -x}(\mathcal{M}_2)) - x^2 \label{eq:var-m-2-1} \\& = C^2 \left (1 -  \frac{P_{0 \leftarrow x}(\mathcal{M}_1) + P_{0 \leftarrow -x}(\mathcal{M}_1)}{2}\right ) - x^2,\label{eq:var-m-2}
\end{align}
when given $-x$.

The variance of mechanism $\mathcal{M}_1$ is
\begin{align}
    &\textup{Var}_{\mathcal{M}_1}[Y | X = x] = \mathbb{E}[Y^2 | X = x] - (\mathbb{E}[Y | X = x])^2 \nonumber \\
    &= C^2\cdot P_{C \leftarrow x}(\mathcal{M}_1) + 0 \cdot P_{C \leftarrow x}(\mathcal{M}_1) \nonumber\\& \quad + (-C)^2 \cdot P_{-C \leftarrow x}(\mathcal{M}_1) - x^2 \nonumber \\ & = C^2 (1 - P_{0 \leftarrow x}(\mathcal{M}_1)) - x^2,\label{eq:var-m-1}
\end{align}
or
\begin{align}
    &\textup{Var}_{\mathcal{M}_1}[Y | X = -x] = \mathbb{E}[Y^2 | X = -x] - (\mathbb{E}[Y | X = -x])^2 \nonumber \\
    &= C^2\cdot P_{C \leftarrow -x}(\mathcal{M}_1) + 0 \cdot P_{C \leftarrow -x}(\mathcal{M}_1) \nonumber\\& \quad + (-C)^2 \cdot P_{-C \leftarrow -x}(\mathcal{M}_1) - x^2 \nonumber \\ & = C^2 (1 - P_{0 \leftarrow -x}(\mathcal{M}_1)) - x^2.\label{eq:var-m-3}
\end{align}
Hence,
\begin{align}
   & \textup{Var}_{\mathcal{M}_2}[Y | X = x] = \textup{Var}_{\mathcal{M}_2}[Y | X = -x] \nonumber\\&= \frac{\textup{Var}_{\mathcal{M}_2}[Y | X = x] + \textup{Var}_{\mathcal{M}_2}[Y | X = -x]}{2}\\&=\frac{\textup{Eq}.~(\ref{eq:var-m-1}) + \textup{Eq}.~(\ref{eq:var-m-3})}{2} \nonumber \\& \leq \textup {Max}\{\textup{Var}_{\mathcal{M}_1}[Y | X = -x], \textup{Var}_{\mathcal{M}_1}[Y | X = x] \}.~\nonumber
\end{align}
\qeda

\subsection{\textbf{Proof of Lemma ~\ref{lem:lemma-2}}}\label{app:proof-of-lem-2}
In essence, with Eq.~(\ref{eq:lem-2-1})~(\ref{eq:lem-2-1}), we have
\begin{align}
   & \frac{P_{C \leftarrow 1}(\mathcal{M}_3)}{P_{C \leftarrow -1}(\mathcal{M}_3)} \nonumber \\& = \frac{P_{C \leftarrow 1}(\mathcal{M}_2) - \frac{e^{\epsilon} P_{C \leftarrow -1}(\mathcal{M}_2)-P_{C \leftarrow 1}(\mathcal{M}_2)}{e^{\epsilon}-1}}{P_{C \leftarrow -1}(\mathcal{M}_2) - \frac{e^{\epsilon} P_{C \leftarrow -1}(\mathcal{M}_2)-P_{C \leftarrow 1}(\mathcal{M}_2)}{e^{\epsilon}-1}} \nonumber \\& = e^\epsilon. \nonumber
\end{align}
Similarly, we can prove that $\frac{P_{C \leftarrow 1}(\mathcal{M}_3)}{P_{C \leftarrow -1}(\mathcal{M}_3)}, \frac{P_{C \leftarrow 1}(\mathcal{M}_3)}{P_{C \leftarrow -1}(\mathcal{M}_3)} = e^\epsilon$.
Besides, 
\begin{align}
    P_{C \leftarrow x} + P_{-C \leftarrow x} + P_{0 \leftarrow x} = 1.~\nonumber
\end{align}
In addition,
\begin{align}
   &C \cdot  P_{C \leftarrow x} + (-C) \cdot P_{-C \leftarrow x} + 0 \cdot P_{0 \leftarrow x}~\nonumber \\
   & = C \cdot \left( P_{C \leftarrow x}(\mathcal{M}_2) - P_{-C \leftarrow x}(\mathcal{M}_2) \right) = x.~\nonumber
\end{align}
Hence, mechanism $\mathcal{M}_3$ satisfies requirements in (\ref{LDP})~(\ref{unbiased})~(\ref{proper}).

Because the symmetric mechanism $\mathcal{M}_3$ satisfies requirements in (\ref{LDP})~(\ref{unbiased})~(\ref{proper}), the variance of $\mathcal{M}_3$ is
\begin{align}
    &\textup{Var}_{\mathcal{M}_3}[Y|X = x] = E[Y^2 | X = x] - (E[Y | X= x])^2 \nonumber \\
    &  = C^2\cdot P_{C \leftarrow x}(\mathcal{M}_3) + 0 \cdot P_{C \leftarrow x}(\mathcal{M}_3) \nonumber\\& \quad + (-C)^2 \cdot P_{-C \leftarrow x}(\mathcal{M}_3) - x^2 \nonumber \\
    &  = C^2(P_{C \leftarrow x}(\mathcal{M}_3) +  P_{-C \leftarrow x}(\mathcal{M}_3)) - x^2 \nonumber\\
    & = C^2 (1 -P_{0 \leftarrow x}(\mathcal{M}_3) ) - x^2.~\nonumber
\end{align}
By comparing with variance of $\mathcal{M}_2$ in Eq.~(\ref{eq:var-m-2-0}), we obtain
\begin{align}
    &\textup{Var}_{\mathcal{M}_3}[Y|X = x] - \textup{Var}_{\mathcal{M}_2}[Y|X = x] \nonumber  \\
    & = C^2 (1 -P_{0 \leftarrow x}(\mathcal{M}_3) ) - x^2 - (C^2 (1 -P_{0 \leftarrow x}(\mathcal{M}_2) ) - x^2) \nonumber  \\
    & = C^2 (P_{0 \leftarrow x}(\mathcal{M}_2) -P_{0 \leftarrow x}(\mathcal{M}_3) ).~\nonumber
\end{align}
Because of $P_{0 \leftarrow x}(\mathcal{M}_2) > P_{0 \leftarrow x}(\mathcal{M}_3)$,  based on Eq.~(\ref{eq:lem-2-3}) and Inequality~(\ref{lem:lem-2-assumption}), we get $P_{0 \leftarrow x}(\mathcal{M}_2) > P_{0 \leftarrow x}(\mathcal{M}_3)$ which means that
\begin{align}
    \textup{Var}_{\mathcal{M}_2}[Y|X = x] > \textup{Var}_{\mathcal{M}_3}[Y|X = x].~\nonumber
\end{align}
Thus, the variance of $\mathcal{M}_3$ is smaller than the variance of $\mathcal{M}_2$ when $x \in [-1,1]$, so we obtain that the worst-case noise variance of $\mathcal{M}_3$ is smaller that of $\mathcal{M}_2$ using
\begin{align}
    \max_{x\in [-1,1]} \textup{Var}_{\mathcal{M}_2}[Y|X = x] > \max_{x\in [-1,1]} \textup{Var}_{\mathcal{M}_3}[Y|X = x].~\nonumber
\end{align}
\qeda

\subsection{\textbf{Proof of Lemma ~\ref{lem:lem-3}}}\label{app:proof-of-lem-3}
Since   $P_{-C \leftarrow 0} + P_{C \leftarrow 0} + P_{0 \leftarrow 0} = 1$ and unbiased estimation $-C \cdot P_{-C \leftarrow 0} + C \cdot P_{C \leftarrow 0} + 0 \cdot P_{0 \leftarrow 0} = 0$, we have
\begin{align}
P_{C \leftarrow 0} = P_{-C \leftarrow 0} = \frac{1-P_{0 \leftarrow 0}}{2}.\label{eq:probability-c-0}
\end{align}
Then, based on (\ref{LDP})~(\ref{unbiased})~(\ref{proper}) and Lemma~\ref{lem:lemma-2}, we can derive $C$ with the following steps:
\begin{align}
    &P_{-C \leftarrow 1} + P_{C \leftarrow 1} + P_{0 \leftarrow 1} = 1, \nonumber\\
    & -C \cdot P_{-C \leftarrow 1} + C \cdot P_{C \leftarrow 1} + 0 \cdot P_{0 \leftarrow 1} = 1.\nonumber
\end{align}
Therefore, we have
\begin{align}
    &P_{C \leftarrow 1} = \frac{1-P_{0 \leftarrow 1}+\frac{1}{C}}{2}, \nonumber\\
    &P_{-C \leftarrow 1} = \frac{1-P_{0 \leftarrow 1} - \frac{1}{C}}{2}.\nonumber
\end{align}
From Lemma~\ref{lem:lemma-2}, we obtain
\begin{align}
    \frac{1-P_{0 \leftarrow 1}+\frac{1}{C}}{2} = e^\epsilon \cdot \left(\frac{1-P_{0 \leftarrow 1} - \frac{1}{C}}{2}\right), \nonumber
\end{align}
which is equivalent to
\begin{align}
    C = \frac{e^\epsilon + 1}{(e^\epsilon -1)(1-P_{0 \leftarrow 1})}.\label{eq:value-c}
\end{align}
Hence,
\begin{align}
P_{C \leftarrow 1} &=P_{-C \leftarrow -1} = \frac{(1-P_{0 \leftarrow 1})e^{\epsilon}}{e^{\epsilon}+1},\nonumber \\ P_{-C \leftarrow 1} & =P_{C \leftarrow -1}= \frac{(1-P_{0 \leftarrow 1})}{e^{\epsilon}+1}.\label{eq:probability-c-negative-1}
\end{align}
Then, we compute the variance as follows:
\begin{itemize}
    \item[I.] For $x \in [0, 1]$, we have
        \begin{align}
            &\textup{Var}[Y | X =x] = E[Y^2 | X = x] - (E[Y | X= x])^2 \nonumber \\
            & = C^2 \cdot P_{C \leftarrow x} + 0 \cdot P_{0 \leftarrow x} + (-C)^2 \cdot P_{-C \leftarrow x} - x^2 \nonumber\\
            & = C^2 \left(P_{C \leftarrow x}  + P_{-C \leftarrow x} \right) -x^2.\label{eq:var-mechanism-m3}
        \end{align}
        Substituting Eq.~(\ref{eq:lem-1-eq-1}) and Eq.~(\ref{eq:lem-1-eq-2}) into Eq.~(\ref{eq:var-mechanism-m3}) yields
        \begin{align}
            & = C^2 \left(P_{C \leftarrow 0} + (P_{C \leftarrow 1}-P_{C \leftarrow 0})x \right) ~\nonumber \\& \quad + C^2 \left( P_{-C \leftarrow 0} - (P_{-C \leftarrow 0}-P_{-C \leftarrow 1})x \right) -x^2 ~\nonumber \\
            & = C^2 \left(P_{C \leftarrow 0} + P_{-C \leftarrow 0} \right) + C^2  (P_{C \leftarrow 1}+ P_{C \leftarrow -1})x  ~\nonumber \\& \quad - C^2   (P_{C \leftarrow 0}+ P_{-C \leftarrow 0})x  -x^2 ~\nonumber \\
            & = C^2 \left(P_{C \leftarrow 0} + P_{-C \leftarrow 0} \right) + C^2  (1 - P_{0 \leftarrow 1})x  ~\nonumber \\& \quad - C^2(1 - P_{0 \leftarrow 0})x  -x^2 ~\nonumber \\
            & = C^2 \left ( 1 - P_{0 \leftarrow 0}\right) + C^2(P_{0 \leftarrow 0} - P_{0 \leftarrow 1})x - x^2.\label{eq:var-final-1}
        \end{align}
    \item[II.] For $x \in [-1, 0]$, we have
        \begin{align}
            &\textup{Var}[Y | X =x] = E[Y^2 | X = x] - (E[Y | X= x])^2 \nonumber\\
            & = C^2 \left ( 1 - P_{0 \leftarrow 0}\right) + C^2 (P_{0 \leftarrow 0} - P_{0 \leftarrow 1}) (-x) - x^2.\label{eq:var-final-2}
        \end{align}
\end{itemize}
Hence, by summarizing Eq.~(\ref{eq:value-c}), Eq.~(\ref{eq:var-final-1}), and Eq.~(\ref{eq:var-final-2}), we get the variance as follows:
\begin{align}\label{eq:variance-three-outputs-1}
    &\textup{Var}[Y | X =x] \nonumber\\& = C^2 \left ( 1 - P_{0 \leftarrow 0} \right) + C^2 (P_{0 \leftarrow 0} - P_{0 \leftarrow 1})|x| - x^2 \nonumber \\&= \left( \frac{e^\epsilon + 1}{(e^\epsilon -1)(1-P_{0 \leftarrow 1})} \right)^2 \left( 1 - P_{0 \leftarrow 0} + (P_{0 \leftarrow 0} - P_{0 \leftarrow 1})|x|\right)\nonumber \\& \quad - x^2.
\end{align}
Derive the partial derivative of $\textup{Var}[Y | X =x]$ to $P_{0 \leftarrow 1}$, and we get
\begin{align}
   & \frac{d(\textup{Var}[Y | X =x])}{dP_{0 \leftarrow 1}} \nonumber\\& = \frac{(e^\epsilon+1)^2\left( 2P_{0 \leftarrow 0} + |x|(1-2P_{0 \leftarrow 0} + P_{0 \leftarrow 1}) -2  \right)}{(P_{0 \leftarrow 1} +1)^2(e^\epsilon-1)^2}.\label{eq:first-derivative-1}
\end{align}
Then, we have the following cases:
\begin{itemize}
    \item [I.] If $|x| = 0$, Eq.~(\ref{eq:first-derivative-1})~$= \frac{(e^\epsilon+1)^2\left( 2P_{0 \leftarrow 0} -2  \right)}{(P_{0 \leftarrow 1} +1)^2(e^\epsilon-1)^2} < 0$,
    \item [II.] If $|x| = 1$, Eq.~(\ref{eq:first-derivative-1})~$= \frac{(e^\epsilon+1)^2\left( P_{0 \leftarrow 1} -1  \right)}{(P_{0 \leftarrow 1} +1)^2(e^\epsilon-1)^2} < 0$.
\end{itemize}
Therefore, if given $P_{0 \leftarrow 0}$, the variance of the output given input $x$ is a strictly decreasing function of $P_{0 \leftarrow 1}$. Hence, we get the minimized variance when $P_{0 \leftarrow 1}  = \frac{P_{0 \leftarrow 0}}{e^{\epsilon}}$.
\qeda

\subsection{\textbf{Solve Eq.~(\ref{eq:solve-quartic-equation})}}\label{appendix:solve_quartic}
To find the optimal $t$ for $\min_{t} \max_{x \in [-1, 1]} \textup{Var}[Y|x]$, we calculate first-order derivative of the $\max_{x \in [-1, 1]} \textup{Var}[Y|x]$ as follows:
    \begin{align}
    \nonumber& \frac{2t}{3 (e^\epsilon -1)^2}  +\frac{4  }{3 (e^\epsilon -1)}  +\frac{4  }{3 (e^\epsilon -1)^2}  -\frac{4 t^{-2}}{3(e^\epsilon -1)^2} \\&  -\frac{4 t^{-2}}{3(e^\epsilon -1)} -\frac{2t^{-3}}{3 (e^\epsilon -1)^2} -\frac{4 t^{-3}}{3 (e^\epsilon -1)}  -\frac{2t^{-3}}{3} \nonumber \\&= \frac{2}{3 (e^\epsilon -1)^2} [t + 2 e^\epsilon   - 2e^\epsilon t^{-2} - e^{2\epsilon} t^{-3}].
    \end{align}
  \noindent  Next, we calculate the second-order derivative of $\max_{x \in [-1, 1]} \textup{Var}[Y|x]$ as follows:
  \begin{align} 
        \frac{2}{3 (e^\epsilon -1)^2} [1 + 4 e^\epsilon t^{-3}  + 3 e^\epsilon t^{-4}] > 0.
    \end{align} Since the second-order derivative of $\max_{x \in [-1, 1]} \textup{Var}[Y|x] >0$, we can conclude that $\max_{x \in [-1, 1]} \textup{Var}[Y|x]$ has minimum point in its domain. 
    
To find $t$ which minimizes $\max_{x \in [-1, 1]} \textup{Var}[Y|x]$, we set $t^{4} + 2 e^\epsilon t^{3}  - 2e^\epsilon t  - e^{2\epsilon}  = 0$. By solving 
\begin{align}
    t^4 + 2 e^{\epsilon} t^3 - 2 e^{\epsilon} t - e^{2\epsilon} =0,\label{eq:solve-quartic-equation}
\end{align} we obtain Eq.~(\ref{eq:t-value}).
Define Eq.~(\ref{eq:solve-quartic-equation})'s coefficients as $c_4:=1,~c_3:=2 e^{\epsilon},~c_2:=0,~c_1:=- 2 e^{\epsilon},~c_0 := - e^{2\epsilon}$, and we obtain
\begin{align}
       c_4 \cdot t^4 + c_3 \cdot t^3 + c_1 \cdot t + c_0 =0.\label{eq:quatic_function}
\end{align}
To change Eq.~(\ref{eq:quatic_function}) into a depressed quartic form, we substitute $f: = e^\epsilon, t := y-\frac{c_3}{4c_4} = y - \frac{f}{2}$ into Eq.~(\ref{eq:quatic_function}) and obtain
\begin{align}
    y^4 + p \cdot y^2 + q \cdot y + r = 0, \label{eq:change_to_y}
\end{align}
where 
\begin{align}
    p & = \frac{8c_2c_4 - 3c_3^2}{8c_4^2} =  -\frac{3f^2}{2},\label{eq:p-definition}\\
    q &= \frac{c_3^3 - 4c_2c_3c_4 + 8c_1c_4^2}{8c_4^3} =f^3-2f,\label{eq:q-definition}\\
    r &= \frac{-3c_3^4 + 256c_0c_4^3 - 64c_1c_3c_4^2 + 16c_2c_3^2c_4}{256c_4^4} \nonumber \\& = -\frac{3}{16}f^4.\label{eq:r-definition}
\end{align}
Rewrite Eq.~(\ref{eq:change_to_y}) to the following:
\begin{align}
    \bigg( y^2 + \frac{p}{2} \bigg) = -qy - r + \frac{p^2}{4}.\label{eq:depressed-quartic-form}
\end{align}
Then, we introduce a variable $m$ into the factor on the left-hand side of Eq.~(\ref{eq:depressed-quartic-form}) by adding $2y^2m + pm + m^2$ to both sides. Thus, we can change the equation to the following:
\begin{align}
    \bigg (  y^2 + \frac{p}{2} + m \bigg ) = 2my^2 -qy + m^2 + mp + \frac{p^2}{4} -r.
\end{align}
Since $m$ is arbitrarily chosen, we choose the value of $m$ to get a perfect square in the right-hand side. Hence,
\begin{align}
    8m^3 + 8pm^2 + (2p^2 - 8r)m - q^2 = 0. \label{cubic_equation}
\end{align}
To solve Eq.~(\ref{cubic_equation}), we substitute Eq.~(\ref{eq:p-definition}), Eq.~(\ref{eq:q-definition}) and Eq.~(\ref{eq:r-definition}) into the following equations:
\begin{align}
    c_3' &:=8, \nonumber \\ c_2' &:= 8p, \nonumber \\ c_1' &:=2p^2 - 8r, \nonumber\\ c_0' &:= - q^2, \nonumber\\
 \Delta_0 &= (c_2')^2 - 3c_3' c_1' =  (8p)^2-3 \cdot 8 \cdot (2p^2 - 8r) = 0,\nonumber \\
  \nonumber  \Delta_1 &=  2 (c_2')^2 - 9 c_3' \cdot c_2' \cdot c_1' + 27 (c_3')^2 \cdot c_0' \nonumber\\ &= 2(8p)^3-9 \cdot 8 \cdot 8p \cdot (2p^2 - 8r) +27\cdot 8^2 \cdot (- q^2) \nonumber \\& = 6912(f^4-f^2), \nonumber \\
   \nonumber C &= \sqrt[3]{\frac{\Delta_1 \pm \sqrt{\Delta_1^2 - 4 \Delta_0^3}}{2}}  = \sqrt[3]{\Delta_1} \\&= \sqrt[3]{6912(f^4-f^2)}.\label{eq:C-expression}
\end{align}
By solving the cubic function Eq.~(\ref{eq:C-expression}), we have roots as follows :
\begin{align}
    m_k &= - \frac{1}{3c_3'}(c_2' + \xi^kC + \frac{\Delta_0}{\xi^kC}) \nonumber \\&= \frac{f^2}{2}+\sqrt[3]{\frac{f^2-f^4}{2}}\xi^k, k \in {0, 1, 2}.
\end{align}
We only use the real-value root; thus, we get
\begin{align}
    m =  \frac{f^2}{2}+\sqrt[3]{\frac{f^2-f^4}{2}}.
\end{align}
Thus,
\begin{align}
       y = \frac{ \pm_1 \sqrt{2m} \pm_2 \sqrt{-(2p + 2m \pm_1 \frac{\sqrt{2}q}{\sqrt{m}})}}{2}.
\end{align}
Then, the solutions of the original quartic equation are
\begin{align}
   t = - \frac{c_3}{4c_4} + \frac{ \pm_1 \sqrt{2m} \pm_2 \sqrt{-(2p + 2m \pm_1 \frac{\sqrt{2}q}{\sqrt{m}})}}{2}.\label{eq:quartic_equation_results}
\end{align}
Since $t$ is a real number and $t > 0$, we obtain Eq.~(\ref{eq:t-value}) after substituting $c_3, c_4, m, p, q, f$ into Eq.~(\ref{eq:quartic_equation_results}).
\qeda

\subsection{\textbf{Proof of Lemma ~\ref{lem:optimal_a}}}\label{appendix:proof_optimal_a}
By summarizing Lemma~\ref{lem-symmetrization}~\ref{lem:lemma-2}~\ref{lem:lem-3}, our designed mechanism achieves minimum variance when it satisfies $P_{0 \leftarrow 1}  = \frac{P_{0 \leftarrow 0}}{e^{\epsilon}}$. Hence, the variance is
\begin{align}
    &\textup{Var}[Y | X =x] = C^2 \left ( 1 - P_{0 \leftarrow 0} \right) + C^2 P_{0 \leftarrow 0} ( 1- \frac{1}{e^\epsilon})|x|-x^2,\label{eq:minimal-variance}
\end{align}
where \begin{align}
    C = \frac{e^\epsilon + 1}{(e^\epsilon -1)(1-\frac{P_{0 \leftarrow 0}}{e^\epsilon})}.\label{eq:optimal-c}
\end{align}
For simplicity, we set 
\begin{align}
    a &= P_{0 \leftarrow 0},\label{eq:optimal-a} \\
    b &= P_{0 \leftarrow 0} ( 1- \frac{1}{e^\epsilon}) = a ( 1- \frac{1}{e^\epsilon}).\label{eq:optimal-b}
\end{align}
Since $x\in [-1,1]$, the worst-case noise variance is
\begin{align}\label{worst-val-2}
  \max_{x \in [-1,1]} \textup{Var}[Y|x]=\begin{cases}
    (1-a)C^2+\frac{C^4b^2}{4},&\text{~~ if }\frac{C^2b}{2}<1, \\
    (1-a+b)C^2-1,&\text{~~ if }\frac{C^2b}{2}\geq 1.
    \end{cases}
\end{align}
Substituting Eq.~(\ref{eq:optimal-a}), Eq.~(\ref{eq:optimal-b}) and Eq.~(\ref{eq:optimal-c}) into Eq.~(\ref{worst-val-2}) yields
\hspace{-10pt}\begin{align}\label{worst-val-3}
  &\max_{x \in [-1,1]}\textup{Var}[Y|x]\nonumber\\
 \hspace{-10pt}   &=\begin{cases}
    \frac{(e^\epsilon +1)^2\cdot e^{2\epsilon}}{(e^\epsilon -1)^2}\bigg(\frac{1-a}{(e^\epsilon -a)^2}+ \frac{(e^\epsilon +1)^2\cdot a^2}{4(e^\epsilon -a)^4} \bigg),\text{ if }\frac{C^2b}{2}<1, \\[6pt]
    \frac{(e^\epsilon +1)^2\cdot e^{2\epsilon}}{(e^\epsilon -1)^2}\bigg(\frac{1-a}{(e^\epsilon -a)^2}+ \frac{(e^\epsilon -1)\cdot a}{e^\epsilon(e^\epsilon-a)^2} \bigg)\hspace{-2pt}-\hspace{-2pt}1,\hspace{-5pt}\text{ if }\frac{C^2b}{2}\geq 1.
    \end{cases}\nonumber\\
    &=\begin{cases}
    \frac{(e^\epsilon +1)^2\cdot e^{2\epsilon}}{(e^\epsilon -1)^2}\bigg(\frac{1-a}{(e^\epsilon -a)^2}+ \frac{(e^\epsilon +1)^2\cdot a^2}{4(e^\epsilon -a)^4} \bigg),\text{ if }\frac{C^2b}{2}<1, \\[6pt]
    \frac{(e^\epsilon +1)^2\cdot e^{\epsilon}}{(e^\epsilon -1)^2 \cdot (e^\epsilon-a)^2 } - 1,\text{ if }\frac{C^2b}{2}\geq 1.
    \end{cases}
\end{align}
Substituting Eq.~(\ref{eq:optimal-b}) and Eq.~(\ref{eq:optimal-c}) yields
\begin{align}
    \frac{C^2b}{2}
    &=\frac{(e^\epsilon +1)^2\cdot e^{2\epsilon}}{2(e^\epsilon -1)^2(e^\epsilon-a)^2}\cdot\frac{a(e^\epsilon -1)}{e^\epsilon} \nonumber\\
    &=\frac{(e^\epsilon +1)^2\cdot e^{\epsilon}\cdot a}{2(e^\epsilon -1)(e^\epsilon-a)^2} \nonumber\\
    &<1,
\end{align}
and
\begin{align}\label{eqn>0}
    &2(e^\epsilon-1)a^2-\left[4(e^\epsilon-1)e^\epsilon +(e^\epsilon+1)^2\cdot e^\epsilon \right]a \nonumber\\ &\quad\quad\quad + 2(e^\epsilon-1)e^{2\epsilon}>0.
\end{align}
To solve Eq.~(\ref{eqn>0}), we denote the smaller solution of the quadratic function as
\begin{align}
    a^*=\frac{e^\epsilon(e^{2\epsilon}+6e^\epsilon-3)-(e^\epsilon+1)e^\epsilon\sqrt{(e^\epsilon+1)^2+8(e^\epsilon -1)}}{4(e^\epsilon -1)}.\label{eq:a-star}
\end{align}
From Eq.~(\ref{eq:lem-1-eq-2}), we get
\begin{align}
    P_{-C \leftarrow 0} & \geq P_{-C \leftarrow 1}.\label{eq:compare-coefficients}
\end{align}
Then, substituting $P_{-C \leftarrow 0}$ and $P_{-C \leftarrow 1}$ with \textup{Eq}.~(\ref{eq:probability-c-0}) and  \textup{Eq}.~(\ref{eq:probability-c-negative-1}) in Eq.~(\ref{eq:compare-coefficients}) yields
\begin{align}
    \frac{1-P_{0 \leftarrow 0}}{2}  &\geq  \frac{(1-P_{0 \leftarrow 1})}{e^{\epsilon}+1}.
\end{align}
Hence,
\begin{align}
    a = P_{0 \leftarrow 0} \leq \frac{e^\epsilon}{e^\epsilon + 2}.\label{eq:bound-for-a}
\end{align}
From Eq.~(\ref{eq:bound-for-a}), we know that the Eq.~(\ref{eqn>0}) will be ensured (i) when $0\leq a<a^*$ if $a^* < \frac{e^\epsilon}{e^\epsilon +2}$, or (ii) when $0\leq a \leq \frac{e^\epsilon}{e^\epsilon +2}$ if $a^*\geq \frac{e^\epsilon}{e^\epsilon +2}$.
Hence, by combining with Eq.~(\ref{worst-val-3}), we obtain
\begin{align}\label{worst-val-4}
 \hspace{-10pt} &\max_{x \in [-1,1]}\textup{Var}[Y|x]= \nonumber \\
    &\begin{cases}
    \begin{cases}
    \frac{(e^\epsilon +1)^2\cdot e^{2\epsilon}}{(e^\epsilon -1)^2}\bigg(\frac{1-a}{(e^\epsilon -a)^2}+ \frac{(e^\epsilon +1)^2\cdot a^2}{4(e^\epsilon -a)^4} \bigg), \\\text{ for }0\leq a < a^*, \\[6pt]
    \frac{(e^\epsilon +1)^2\cdot e^{\epsilon}}{(e^\epsilon -1)^2 \cdot (e^\epsilon-a)^2 } - 1, \\ \text{ for }a^*\leq a \leq \frac{e^\epsilon}{e^\epsilon +2},
    \end{cases}\hspace{-30pt} , \text{ if }a^*<\frac{e^\epsilon}{e^\epsilon +2}\\[6pt]
    \frac{(e^\epsilon +1)^2\cdot e^{2\epsilon}}{(e^\epsilon -1)^2}\bigg(\frac{1-a}{(e^\epsilon -a)^2}+ \frac{(e^\epsilon +1)^2\cdot a^2}{4(e^\epsilon -a)^4} \bigg), \\ \text{ for }0\leq a\leq \frac{e^\epsilon}{e^\epsilon +2},~~~~~~~~~~~~~~~~~~\text{ if }a^*\geq \frac{e^\epsilon}{e^\epsilon +2}.
    \end{cases}
\end{align}
Substituting Eq.~(\ref{eq:a-star}) into $a^* = \frac{e^\epsilon}{e^\epsilon +2}$ yields
\begin{align}
    &\frac{e^\epsilon(e^{2\epsilon}+6e^\epsilon-3)-(e^\epsilon+1)e^\epsilon\sqrt{(e^\epsilon+1)^2+8(e^\epsilon -1)}}{4(e^\epsilon -1)} \nonumber \\& \quad = \frac{e^\epsilon}{e^\epsilon +2}.\label{eq:solv-eq-a}
\end{align}
After solving Eq.~(\ref{eq:solv-eq-a}), we get $\epsilon = \ln 4$.

\begin{figure}[h]
    \centering
    \includegraphics[scale=0.4]{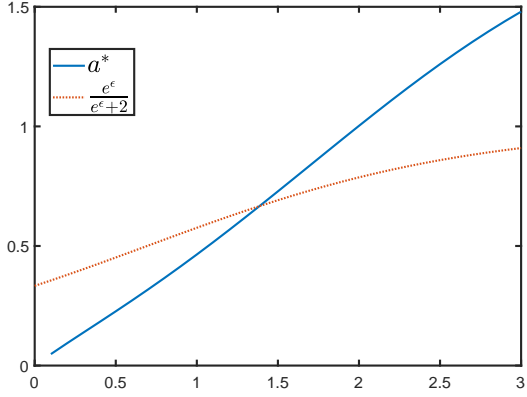}
    \caption{Compare $a^*$ with $\frac{e^\epsilon}{e^\epsilon+2}$.}
    \label{fig:solv_a_opt}
\end{figure}
According to Fig.~\ref{fig:solv_a_opt}, we obtain that $a^* \geq \frac{e^\epsilon}{e^\epsilon+2}$ if $0 < \epsilon \leq \ln4$. Since $\epsilon = \ln 4$ is the only solution if $\epsilon > 0$, we conclude that $a^* > \frac{e^\epsilon}{e^\epsilon+2}$ if $\epsilon > \ln4$. Therefore, we can replace the condition $a^*< \frac{e^\epsilon}{e^\epsilon +2}$ and write the variance as follows:

$\max_{x \in [-1,1]}\textup{Var}[Y|x] =$
\begin{subequations}
\begin{empheq}[left= {\empheqlbrace}]{align}
    &\begin{cases}
    \frac{(e^\epsilon +1)^2\cdot e^{2\epsilon}}{(e^\epsilon -1)^2}\bigg(\frac{1-a}{(e^\epsilon -a)^2}+ \frac{(e^\epsilon +1)^2\cdot a^2}{4(e^\epsilon -a)^4} \bigg), \\\text{ for }0\leq a < a^*, \\[6pt]
    \frac{(e^\epsilon +1)^2\cdot e^{\epsilon}}{(e^\epsilon -1)^2 \cdot (e^\epsilon-a)^2 } - 1, \\ \text{ for }a^*\leq a \leq \frac{e^\epsilon}{e^\epsilon +2},\label{worst-val-4-a}
    \end{cases}\hspace{-30pt} , \text{ if }\epsilon < \ln4,\\[6pt]
    &\frac{(e^\epsilon +1)^2\cdot e^{2\epsilon}}{(e^\epsilon -1)^2}\bigg(\frac{1-a}{(e^\epsilon -a)^2}+ \frac{(e^\epsilon +1)^2\cdot a^2}{4(e^\epsilon -a)^4} \bigg), \label{worst-val-4-b}\\ &\text{ for }0\leq a\leq \frac{e^\epsilon}{e^\epsilon +2},~~~~~~~~~~~~~~~~~~\text{ if }\epsilon \geq \ln4.\nonumber
\end{empheq}\label{worst-val-4}
\end{subequations}
To simplify the calculation of the minimum $\max_{x \in [-1,1]}\textup{Var}[Y|x]$ in Eq.~(\ref{worst-val-4}), we define
\begin{align}
   \hspace{-12pt} f_1(a) := \frac{(e^\epsilon +1)^2\cdot e^{2\epsilon}}{(e^\epsilon -1)^2}\bigg(\frac{1-a}{(e^\epsilon -a)^2}+ \frac{(e^\epsilon +1)^2\cdot a^2}{4(e^\epsilon -a)^4} \bigg),\label{eq:f_1}
\end{align}
and
\begin{align}
    f_2(a) := \frac{(e^\epsilon +1)^2\cdot e^{\epsilon}}{(e^\epsilon -1)^2 \cdot (e^\epsilon-a)^2 } - 1.\label{eq:f_2}
\end{align}
First order derivative of $f_2(a)$~in Eq.~(\ref{eq:f_2}) is
\begin{align}
    f_2'(a) = \frac{2e^\epsilon(e^\epsilon+1)^2}{(e^\epsilon-1)^2(e^\epsilon-a)^3} >0.
\end{align}
Since $f_2'(a)>0$, the worst-case noise variance monotonously increases if $a \in [a^*, \frac{e^\epsilon}{e^\epsilon + 2}]$, we can get optimal $a$ by analyzing $f_1(a)$ in Eq.~(\ref{eq:f_1}) when $a \in [0, a^*)$ if  $\epsilon < \ln 4$.
First order derivative of Eq.~(\ref{eq:f_1}) is
\begin{align}
  \hspace{-10pt}  f_1'(a) = \frac{2(1-a)}{(e^\epsilon-a)^3} - \frac{1}{(e^\epsilon-a)^2} + \frac{a(e^\epsilon+1)^2}{2(e^\epsilon-a)^4} + \frac{a^2(e^\epsilon+1)^2}{(e^\epsilon-a)^5}\label{eq:derivative_1} .
\end{align}
After simplifying $f_1'(a)$, we have
\begin{align}
   &f_1'(a) = \frac{-2a^3 - a^2(-e^{2\epsilon} - 5 -4e^{\epsilon})}{2(e^\epsilon -a)^5}\nonumber\\& \quad+\frac{-a(7e^{\epsilon} - 4e^{2\epsilon}-e^{3\epsilon}) - (2e^{3\epsilon}-4e^{2\epsilon}))}{2(e^\epsilon -a)^5}.\label{eq:derivative_1_simplify} 
\end{align}
Since $2(e^\epsilon -a)^5 > 0$, solving $f_1'(a) = 0$ is equivalent to solve the following equation
\begin{align}\label{eq:f_first_order}
   &2a^3 + a^2(-e^{2\epsilon} - 5 -4e^{\epsilon}) +a(7e^{\epsilon} - 4e^{2\epsilon}-e^{3\epsilon}) \nonumber\\& \quad   + (2e^{3\epsilon}-4e^{2\epsilon})=0.
\end{align}
We define  coefficients of Eq.~(\ref{eq:f_first_order}) as follows:
\begin{align}
    &c_3 := 2, \label{eq:c_3} \\& c_2 := -e^{2\epsilon} - 5 -4e^{\epsilon}\label{eq:c_2}, \\& c_1:= 7e^{\epsilon} - 4e^{2\epsilon}-e^{3\epsilon}, \label{eq:c_1}\\& c_0 := 2e^{3\epsilon}-4e^{2\epsilon}.\label{eq:c_0}
\end{align}
The general solution of the cubic equation involves calculation of
\begin{align}
    \Delta_0 &= \nonumber {c_2}^2 - 3c_3c_1 \\& \nonumber =(-e^{2\epsilon} - 5 -4e^{\epsilon})^2-3 \times 2(7e^{\epsilon} - 4e^{2\epsilon}-e^{3\epsilon})  \\& \nonumber  = e^{4\epsilon}+14e^{3\epsilon}+50e^{2\epsilon}-2e^{\epsilon}+25 > 0,\\
    \Delta_1 &= \nonumber 2{c_2}^3-9c_3c_2c_1+27c_3^2c_0 \\& \nonumber = 2(-e^{2\epsilon} - 5 -4e^{\epsilon})^3\\& \nonumber \quad -9 \times 2(-e^{2\epsilon}- 5 -4e^{\epsilon})(7e^{\epsilon} - 4e^{2\epsilon}-e^{3\epsilon}) \\& \nonumber \quad + 27 \times 2^2(2e^{3\epsilon}-4e^{2\epsilon})  \\&= \nonumber -2e^{6\epsilon} -42e^{5\epsilon}-270e^{4\epsilon}-404e^{3\epsilon}-918e^{2\epsilon}\\& \quad +30e^{\epsilon} -250 < 0, \nonumber\\
    C &= \sqrt[3]{\frac{\Delta_1 \pm \sqrt{\Delta_1^2-4\Delta_0^3}}{2}}. \nonumber 
\end{align}
Substituting $\Delta_0$ and $\Delta_1$ into $C$ yields
\begin{align}
   & \Delta_1^2-4\Delta_0^3 \nonumber \\& \quad = (-2c^6 -42c^5-270c^4-404c^3-918c^2+30c -250)^2 \nonumber \\& \quad -4(c^4+14c^3+50c^2-2c+25)^3 < 0,\nonumber
\end{align}
then
\begin{align}
 \sqrt{\Delta_1^2-4\Delta_0^3} =i \sqrt{4\Delta_0^3-\Delta_1^2},
\end{align} finally,
\begin{align}
    C &= \sqrt[3]{\frac{\Delta_1 - i\sqrt{4\Delta_0^3-\Delta_1^2}}{2} }. \nonumber\label{eq:C}
\end{align}
To eliminate the imaginary number, we change Eq.~(\ref{eq:C}) using Euler's formula. Define mold of $C$ as follows:
\begin{align}
  \hspace{-10pt}  |C| = (|C|^3)^{1/3}= \bigg (\sqrt{\frac{\Delta_1^2}{4} + \Delta_0^3  - \frac{\Delta_1^2}{4}} \bigg)^{1/3} = \sqrt{\Delta_0},
\end{align}
\begin{align}
    C &= |C|e^{i \theta}, \\
    C^3 &= |C|^3 e^{3i\theta}  = \sqrt{\Delta_0^3}e^{3i\theta}.
\end{align}
Therefore, we obtain $C = \sqrt{\Delta_0}e^{i \theta}.$
According to Euler's Formula, we have
\begin{align}
    e^{i3\theta} &= \cos 3\theta + i\sin 3\theta, \label{eq:eular}\\
    \cos 3\theta &= \frac{\Delta_1}{2 \Delta_0^\frac{3}{2}} < 0,\\
    \sin 3\theta &= -\frac{\sqrt{4 \Delta_0^3 - \Delta_1^2}}{2 \Delta_0^\frac{3}{2}} < 0.
\end{align}
Hereby,
\begin{align}
    3 \theta &= -\pi + \arccos(-\frac{\Delta_1}{2 \Delta_0^\frac{3}{2}}), \\\theta &= -\frac{\pi}{3} + \frac{1}{3}\arccos(-\frac{\Delta_1}{2 \Delta_0^\frac{3}{2}}).\label{eq:theta}
\end{align}
The solution of the cubic function is
\begin{align}
     a_k &= - \frac{1}{3c_3}(c_2 + \xi^kC +\frac{\Delta_0}{\xi^kC}),~ k \in \{0, 1, 2 \}.\label{eq:a_k}
\end{align}
To solve Eq.~(\ref{eq:a_k}), we have the following cases:
\begin{itemize}
    \item If $k=0$, we have
        \begin{align}
            a_0 &= - \frac{1}{3c_3}(c_2 + C +\frac{\Delta_0}{C}).\label{eq:a_k_0}
        \end{align}
         Substituting $\theta$ (\ref{eq:theta}) and $c_2$ (\ref{eq:c_2}) into Eq.~(\ref{eq:a_k_0}) yields
         \begin{align}
          & a_0\hspace{-2pt} =\hspace{-2pt} -\frac{1}{6}(-e^{2\epsilon}\hspace{-2pt}\hspace{-2pt} -\hspace{-2pt} 4e^{\epsilon}\hspace{-2pt} -5\hspace{-2pt}\nonumber \\ &\quad+\hspace{-2pt} 2 \sqrt{\Delta_0} \cos(-\frac{\pi}{3} + \frac{1}{3}\arccos(-\frac{\Delta_1}{2 \Delta_0^\frac{3}{2}}))).
         \end{align}
    \item If $k=1$, we have
            \begin{align}
                 a_1 &= - \frac{1}{3c_3}(c_2 + \xi C +\frac{\Delta_0}{\xi C}) \nonumber \\
                 & = - \frac{1}{3c_3}(c_2 + (-\frac{1}{2} + \frac{\sqrt{3}i}{2})C + \frac{\Delta_0}{(-\frac{1}{2} + \frac{\sqrt{3}i}{2})C}) \nonumber \\& = - \frac{1}{3c_3}(c_2 +  C e^{i \frac{2 \pi}{3}} + \frac{\Delta_0}{C} e^{i \frac{4 \pi}{3}}) \nonumber \\
                 & =- \frac{1}{3c_3}(c_2 + \sqrt{\Delta_0} e^{i(\theta + \frac{2}{3} \pi)} + \sqrt{\Delta_0} e^{i(\frac{4 \pi}{3}- \theta)}).\label{eq:a_k_1}
            \end{align}
            Simplify Eq.~(\ref{eq:a_k_1}) using Eq.~(\ref{eq:eular}), we have
            \begin{align}
                a_1 = - \frac{1}{3c_3}(c_2 +2 \sqrt{\Delta_0} \cos(\theta + \frac{2 \pi}{3})).\label{eq:a_k_1_simplify}
            \end{align}
           Substituting $\theta$ (\ref{eq:theta}) and $c_2$ (\ref{eq:c_2}) into Eq.~(\ref{eq:a_k_1_simplify}) yields
            \begin{align}
                a_1 &= -\frac{1}{6}(-e^{2\epsilon} - 4e^{\epsilon} -5 \nonumber\\& \quad+ 2 \sqrt{\Delta_0} \cos(\frac{\pi}{3} + \frac{1}{3}\arccos(-\frac{\Delta_1}{2 \Delta_0^\frac{3}{2}}))).\label{eq:a_k_1_final}
            \end{align}
        \item If $k=2$, we get
            \begin{align}
               & a_2 = - \frac{1}{3c_3}(c_2 + \xi^2 C +\frac{\Delta_0}{\xi^2 C}) \nonumber \\
                 & =  - \frac{1}{3c_3}(c_2 +(-\frac{1}{2} + \frac{\sqrt{3}i}{2})^2C + \frac{\Delta_0}{(-\frac{1}{2} + \frac{\sqrt{3}i}{2})^2C}) \nonumber \\
                 & = - \frac{1}{3c_3}(c_2 +(-\frac{1}{2} - \frac{\sqrt{3}i}{2})C + \frac{\Delta_0}{(-\frac{1}{2} - \frac{\sqrt{3}i}{2})C}) \nonumber \\
                 & =- \frac{1}{3c_3}(c_2 + \sqrt{\Delta_0} e^{i(\theta + \frac{4\pi}{3})} + \sqrt{\Delta_0} e^{i(\frac{2 \pi}{3}- \theta)}).\label{eq:a_k_2}
            \end{align}
            Simplify Eq.~(\ref{eq:a_k_2}) using Eq.~(\ref{eq:eular}), we obtain
            \begin{align}
                a_2 = - \frac{1}{3c_3}(c_2 +2 \sqrt{\Delta_0} \cos(\theta - \frac{ 2\pi}{3})).\label{eq:a_k_2_simplify}
            \end{align}
            Substituting $\theta$ (\ref{eq:theta}) and $c_2$ (\ref{eq:c_2}) into Eq.~(\ref{eq:a_k_2_simplify}) yields
            \begin{align}
               a_2 & = -\frac{1}{6}(-e^{2\epsilon} - 4e^{\epsilon} -5 \nonumber \\& \quad+ 2 \sqrt{\Delta_0} \cos(-\pi+\frac{1}{3}\arccos(-\frac{\Delta_1}{2 \Delta_0^\frac{3}{2}}))).
            \end{align}
\end{itemize}
The number of real and complex roots are determined by the discriminant of the cubic equation as follows:
\begin{align}
    \Delta = 18c_3c_2c_1c_0 - 4c_2^3c_0 + c_2^2c_1^2 - 4c_3c_1^3 - 27c_3^2c_0^2.\label{eq:Delta}
\end{align}
Substituting $c_3$~(\ref{eq:c_3}), $c_2$~(\ref{eq:c_2}), $c_1$~(\ref{eq:c_1}) and $c_0$~(\ref{eq:c_0}) into  $\Delta$~(\ref{eq:Delta}) yields
\begin{align}
   &\Delta = e^{2\epsilon}(e^\epsilon + 1)^2(e^{6\epsilon} + 30e^{5\epsilon} \nonumber \\& \quad+ 279e^{4\epsilon} + 580e^{3\epsilon} - 2385e^{2e^\epsilon} + 606e^\epsilon - 775).
\end{align}

If $\Delta = 0$, we get $\epsilon = \ln(\text{root of}~ \epsilon^6 + 30 \epsilon^5 + 279 \epsilon^4 + 580 \epsilon^3 - 2385 \epsilon^2 + 606 \epsilon - 775 ~\text{near}~ \epsilon = 1.87686) \approx 0.629598.$
\begin{itemize}
    \item If $0 < \epsilon < 0.629598,~ \Delta < 0$, the equation has one real root and two non-real complex conjugate roots.
    \item If $ \epsilon = 0.629598,~ \Delta = 0$, the equation has a multiple root all of its roots are real.
    \item If $\epsilon > 0.629598,~ \Delta > 0$, the equation has three distinct real roots.
\end{itemize}
From the simplified $f_1'(a)$ Eq.~(\ref{eq:derivative_1_simplify}), we know that the sign and roots of $f_1'(a)$ are same as its numerator, defined as follows:
\begin{align}
\nonumber  &g(a) := -2a^3 - a^2(-e^{2\epsilon} - 5 -4e^{\epsilon}) \\& \quad-a(7e^{\epsilon} - 4e^{2\epsilon}-e^{3\epsilon}) - (2e^{3\epsilon}-4e^{2\epsilon}).
\end{align}
Let $c = e^\epsilon$, we can change $g(a)$ to the following:
\begin{align}
  &g(a) = -2a^3 - a^2(-c^2 - 5 -4c) -a(7c - 4c^2-c^3) \nonumber \\& \quad - (2c^3-4c^2).
\end{align}

Case 1: If $0 < \epsilon < 0.629598$,
by observing $a_0, a_1, a_2$, it is obvious that $a_2$ is a real root. Fig.~\ref{fig:root_a_2} shows that $a_2 > 1$. Since $a_2$ is the only real value root, Eq.~(\ref{eq:derivative_1})~$f_1'(a) > 0$ if $0 < \epsilon \leq 0.629598$ and $f_1'(a) \leq 0$ if $\epsilon > 0.629598$, $f_1(a)$ monotonously increases  if $0 < \epsilon \leq 0.629598$. Therefore, we conclude that $a = 0$.

\begin{figure}[h]
\centering
  \includegraphics[width=0.45\textwidth]{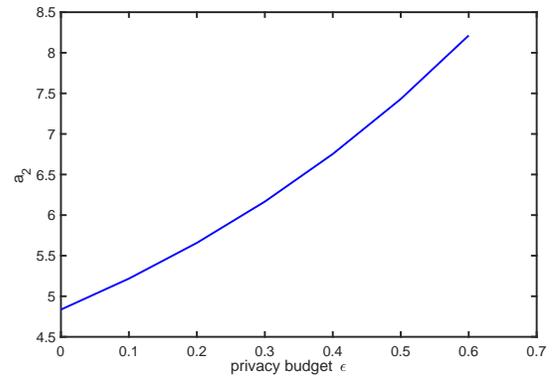} \vspace{-2pt}\caption{ $a_2 ~\text{if}~ \epsilon \in [ 0, 0.629598 ]$.}
 \label{fig:root_a_2}
\end{figure}

Case 2: If $0.629598 \leq \epsilon < \ln2$,
$\Delta \geq 0$, we get real roots. 
\begin{itemize}
    \item If $a = 0, ~g(0) =  - (2c^3-4c^2)$.
    \item If $a = 2, ~g(2) =  2(8c^2 + c +2) >0$.
    \item If $a = +\infty, ~\lim_{a \to \infty}g(a) = -\infty < 0$.
\end{itemize}
If there is a root is in $[0, \frac{e^\epsilon}{e^\epsilon + 2}]$, it means that $g(0) \leq 0$. By solving $g(0) = -(2c^3-4c^2) \leq 0$, we have $c \geq 2$ meaning $\epsilon \geq \ln2$. When $0.629598 < \epsilon < \ln2$, we have a $root \in (2, +\infty)$.

Based on the properties of cubic function, we have
\begin{align}
   & a_0a_1 + a_0a_2 + a_1a_2 = \frac{c_1}{c_3} = \frac{7e^\epsilon - 4e^{2\epsilon} - e^{3\epsilon}}{2},\label{eq:add_roots} \\& a_0 a_1 a_2 = -\frac{c_0}{c_3} = -\frac{2e^{3\epsilon}-4e^{2\epsilon}}{2}.\label{eq:multiply_roots}
\end{align}

\begin{figure}[h]
\centering
  \includegraphics[width=0.45\textwidth]{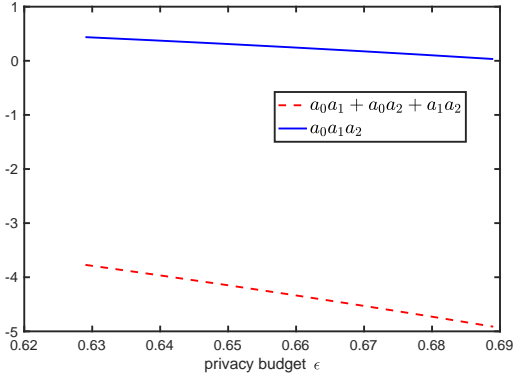} \vspace{-2pt}\caption{ $a_0a_1 + a_0a_2 + a_1a_2 ~\text{and}~  a_0 a_1 a_2~\text{if}~ \epsilon \in [0.629598, \ln2 ]$.}
 \label{fig:root_discriminant}
\end{figure}
 
Fig.~\ref{fig:root_discriminant} shows that $a_0a_1 + a_0a_2 + a_1a_2 < 0$ (Eq.~(\ref{eq:add_roots})) and $ a_0 a_1 a_2 >0$ (Eq.~(\ref{eq:multiply_roots})). Therefore, we can conclude that there are one positive real root and two negative real roots or a multiple root. Since two negative roots are out of the $a$'s domain, we only discuss  the positive root.

\begin{itemize}
    \item If $a \in [0, root), ~g(a) > 0$ meaning $f_1'(a) > 0$.
    \item If $a \in [root, +\infty), ~g(a) \leq 0$ meaning $f_1'(a) \leq 0$.
\end{itemize}
From above we know that $g(a) > 0$, so that $f_1(a)$ monotonously increases if $a \in [0, \frac{e^\epsilon}{e^\epsilon + 2}]$. Therefore, $a = 0$.

Case 3:  If $\ln2 \leq \epsilon \leq \ln5.53$, $\Delta >0$, there are three distinct real roots. Since $a_0a_1 + a_0a_2 + a_1a_2 < 0$ and $a_0 a_1 a_2 <0$, there are one negative root or three negative roots. If there are three negative roots, $a_0a_1 + a_0a_2 + a_1a_2 > 0$, there is only one negative root, $f_1'(a)$ have two positive  roots and one negative root.

\begin{itemize}
    \item If $a = 0, ~g(0) =  - (2c^3-4c^2) < 0$.
    \item If $a = 2, ~g(2) =  2(8c^2 + c +2) >0$.
    \item If $a = +\infty, ~\lim_{a \to \infty}g(a) = -\infty$.
\end{itemize}
From above results, we can deduce that there is one positive root in $(0, 2)$ defined as $root_1$, the other positive root is in $(2, +\infty)$ defined as $root_2$. Since $root_2 >1$, we only discuss $root_1$.
\begin{itemize}
    \item $a \in [0, root_1]$,~$g(a) \leq 0$.
    \item $a \in (root_1, root_2)$, ~$g(a) > 0$.
\end{itemize}
Therefore, if $g(\frac{c}{c+2}) \geq 0$, we can conclude that $root_1 \leq \frac{c}{c+2}$. The exact form of $g(\frac{c}{c+2})$ is
\begin{align}
    g \bigg(\frac{c}{c+2} \bigg) = \frac{c^2(c+1)^2(-c^2+3c+14)}{(c+2)^3}.\label{expression:g}
\end{align}
By solving $g(\frac{c}{c+2}) \geq 0$, we have $c \leq \ln \left(\frac{3 + \sqrt{65}}{2}\right) \approx 5.53$, i.e. $\epsilon \leq \ln5.53$.
From Fig.~\ref{fig:three_roots}, we can conclude that $a_1$ is the correct root, $a_0 < 0$ and $a_2 > 1$.
$a = a_1 = -\frac{1}{6}(-e^{2\epsilon} - 4e^{\epsilon} -5 + 2 \sqrt{\Delta_0} \cos(\frac{\pi}{3} + \frac{1}{3}\arccos(-\frac{\Delta_1}{2 \Delta_0^\frac{3}{2}})))$.

\begin{figure}[h]
\centering
  \includegraphics[width=0.45\textwidth]{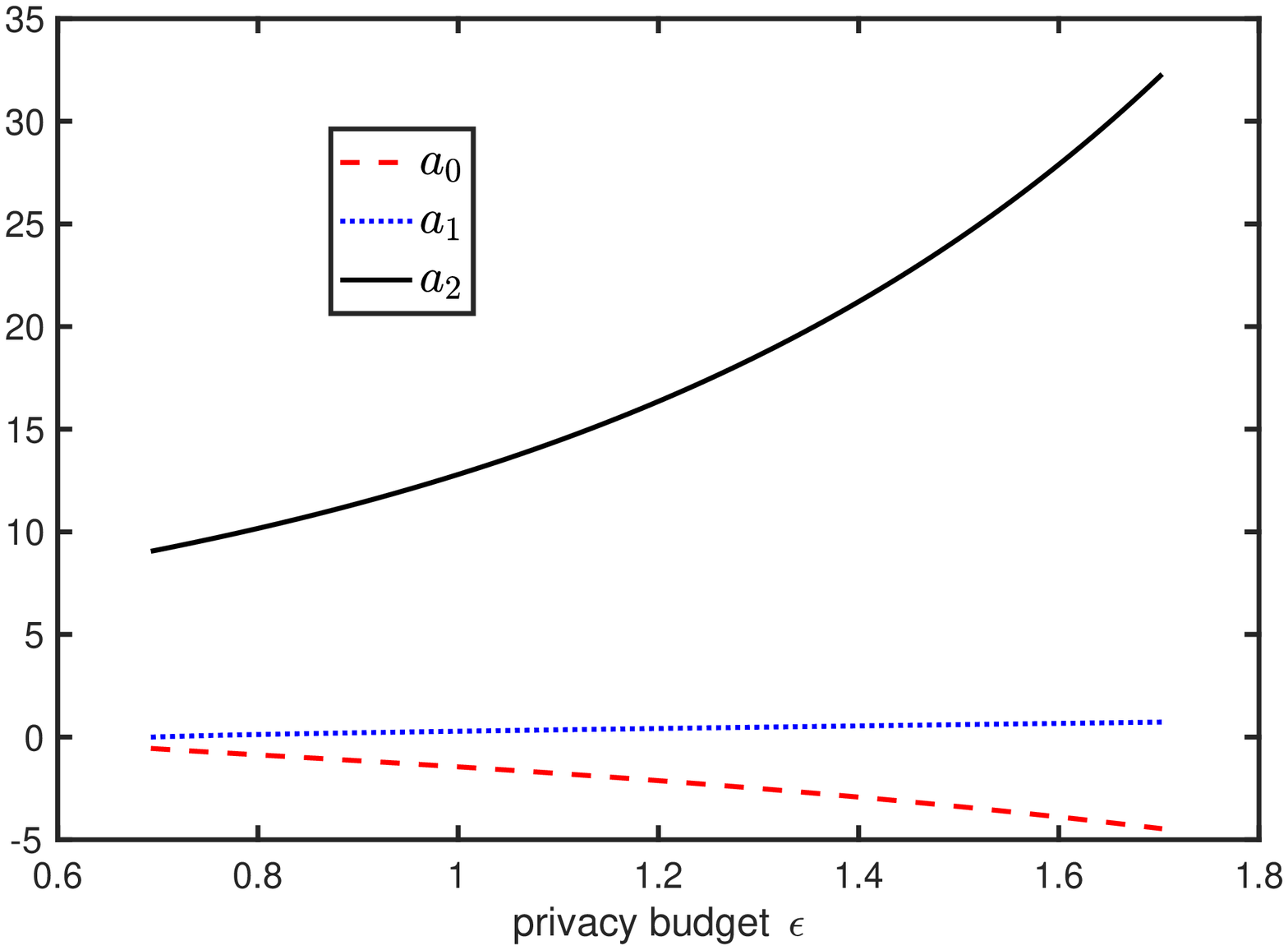} \vspace{-2pt}\caption{ $a_0, a_1 ~\text{and}~ a_2~\text{if}~ \epsilon \in [\ln2, \ln5.53]$.}
 \label{fig:three_roots}
\end{figure}

Case 4: If $\epsilon > \ln 5.53$, $\Delta >0$, there are three distinct real roots. From analysis in Case 3, we know that if $\epsilon > \ln 5.53$, $root_1 > \frac{c}{c+2}$ and $g(\frac{c}{c+2}) < 0$. We know that $g(a) \leq 0$ if $a \in [0, \frac{c}{c+2}]$, meaning $f_1'(a) < 0$, so that $f_1(a)$ monotonously decreases if $\epsilon > \ln 5.53$.  Since $a \in [0, \frac{e^\epsilon}{e^\epsilon + 2}]$, we have $a =  \frac{e^\epsilon}{e^\epsilon + 2} $.

Summarize above, we obtain the optimal $a$ which is named as $P_{0 \leftarrow 0}$ in the Eq.~(\ref{a_opt_1}).

\qeda

\subsection{\textbf{Proof of Lemma~\ref{lem:worst-case-variance-three-outputs}}}\label{appendix:lem-three-outputs-variance}

By substituting the optimal $P_{0 \leftarrow 0}$ of Eq.~(\ref{a_opt_1}) with $a$ in the $\max_{x \in [-1,1]}\textup{Var}[Y|x]$ of Eq.~(\ref{worst-val-4}), we obtain the worst-case noise variance of \texttt{Three-Outputs} as follows:
\begin{align}\label{worst-val-case1}
    &\hspace{-10pt}\min_{P_{0 \leftarrow 0}}\max_{x \in [-1,1]}\textup{Var}[Y|x]=\nonumber\\
    &\hspace{-10pt}\begin{cases}
        \hspace{-10pt}&\hspace{-10pt}\frac{(e^\epsilon + 1)^2}{(e^\epsilon - 1)^2},~\text{for}~\epsilon < \ln2, \\
        \hspace{-10pt}&\hspace{-10pt} \frac{(e^\epsilon +1)^2\cdot e^{2\epsilon}}{(e^\epsilon -1)^2}\bigg(\frac{1-P_{0 \leftarrow 0}}{(e^\epsilon - P_{0 \leftarrow 0})^2} + \frac{(e^\epsilon +1)^2\cdot P_{0 \leftarrow 0}^2}{4(e^\epsilon - P_{0 \leftarrow 0})^4} \bigg),~\text{for}~\ln2 \leq \epsilon \leq \ln5.53,\\& \text{where}~ P_{0 \leftarrow 0} =  -\frac{1}{6}(-e^{2\epsilon}-4e^\epsilon-5 \\& \quad +2\sqrt{\Delta_0} \cos(\frac{\pi}{3} + \frac{1}{3} \arccos(-\frac{\Delta_1}{2\Delta_0^\frac{3}{2}})))\\
        \hspace{-10pt}& \hspace{-10pt}\frac{(e^\epsilon + 2)(e^\epsilon+10)}{4(e^\epsilon-1)^2},~\text{ for}~\epsilon > \ln5.53.
    \end{cases}
\end{align}

\qeda

\subsection{\textbf{Proof of $\frac{2(e^\epsilon-a)^2(e^\epsilon-1)-ae^\epsilon(e^\epsilon+1)^2}{2(e^\epsilon-a)^2(e^\epsilon+t)-ae^\epsilon(e^\epsilon+1)^2} \leq \frac{e^\epsilon-1}{e^\epsilon+t}$~if~$t = \frac{\epsilon}{3}$}} \label{compare}
$\textbf{Proposition 1.} ~ \frac{2(e^\epsilon-a)^2(e^\epsilon-1)-ae^\epsilon(e^\epsilon+1)^2}{2(e^\epsilon-a)^2(e^\epsilon+t)-ae^\epsilon(e^\epsilon+1)^2} \leq \frac{e^\epsilon-1}{e^\epsilon+t}$~for~ $\epsilon > 0$.

Define
\begin{align}
 & f := \frac{2(e^\epsilon-a)^2(e^\epsilon-1)-ae^\epsilon(e^\epsilon+1)^2}{2(e^\epsilon-a)^2(e^\epsilon+\frac{\epsilon}{3})-ae^\epsilon(e^\epsilon+1)^2}  - \frac{e^\epsilon-1}{e^\epsilon+ \frac{\epsilon}{3}} \nonumber \\
 &= \frac{ae^\epsilon(e^\epsilon+1)^2(t+1)}{(ae^\epsilon(e^\epsilon+1)^2-2(e^\epsilon-a)^2(e^\epsilon+ \frac{\epsilon}{3}))(e^\epsilon + \frac{\epsilon}{3})}, \label{eq:f}  \nonumber \\
\end{align}
and 
\begin{align}
    h := 2(e^\epsilon-a)^2(e^\epsilon+ \frac{\epsilon}{3}) - ae^\epsilon(e^\epsilon+1)^2.\label{eq:h} 
\end{align}
When $\epsilon >0$, we  have $e^\epsilon + \frac{\epsilon}{3} >0$ and  $ae^\epsilon(e^\epsilon+1)^2(\frac{\epsilon}{3}+1) > 0$ (the numerator of Eq.~(\ref{eq:f})). 
\begin{itemize}
    \item If~ $0< \epsilon < \ln2$, we have ~$a = 0$ and $~h
    = -2e^{2\epsilon}(e^\epsilon + e^{\epsilon/3}) < 0 $. Therefore, we conclude that Eq.~(\ref{eq:f})~$ < 0.$ \nonumber
    \item   If $\ln2 \leq \epsilon \leq \ln5.53$, we have $a = -\frac{1}{6}(-e^{2\epsilon}-4e^\epsilon-5 +2\sqrt{\Delta_0} \cos(\frac{\pi}{3} + \frac{1}{3} \arccos(-\frac{\Delta_1}{2\Delta_0^\frac{3}{2}}))).$~Fig.~$\ref{condition-1}$ shows that $h := 2(e^\epsilon-a)^2(e^\epsilon+t) - ae^\epsilon(e^\epsilon+1)^2 < 0 $, so that we obtain Eq.~(\ref{eq:f})~$ < 0$.\\
    \item  If $\epsilon > \ln 5.53$, we have $a = \frac{e^\epsilon}{e^\epsilon + 2}
    $ and $h = \frac{e^{2\epsilon}(e^\epsilon+1)^2(-2+e^\epsilon + 2e^{\epsilon/3})}{(e^\epsilon + 2)^2} \nonumber$.
   Since~ $e^\epsilon$ and $e^{\epsilon/3} > 1$, we obtain $-2+e^\epsilon + 2e^{\epsilon/3}> 0$ and $h > 0 $. Hence, we conclude that Eq.~(\ref{eq:f})$~< 0.$ \nonumber
\end{itemize}
Based on above analysis, we have Eq.~(\ref{eq:f}))$~< 0$ when~ $\epsilon > 0$, meaning that $\frac{2(e^\epsilon-a)^2(e^\epsilon-1)-ae^\epsilon(e^\epsilon+1)^2}{2(e^\epsilon-a)^2(e^\epsilon+t)-ae^\epsilon(e^\epsilon+1)^2} \leq \frac{e^\epsilon-1}{e^\epsilon+t}$ when~ $\epsilon > 0$.

\qeda

\subsection{\textbf{Proof of ~$2(e^\epsilon-a)^2(e^\epsilon+t) - ae^\epsilon(e^\epsilon+1)^2 > 0$}.}\label{condition_1_proof}

From values of $\epsilon$, we have the following cases:
\begin{itemize}
    \item If $0<\epsilon \leq \ln 5.53$, we have $2(e^\epsilon-a)^2(e^\epsilon+t) - ae^\epsilon(e^\epsilon+1)^2 > 0$ referring to Fig.~\ref{condition-1}.
    \item If $\epsilon > \ln 5.53$, we have $a = \frac{e^\epsilon}{e^\epsilon+2}$ and $t = e^\frac{\epsilon}{3}$, so that
    \begin{align}
       & 2(e^\epsilon-a)^2(e^\epsilon+t)-ae^\epsilon(e^\epsilon+1)^2  \nonumber \\
       & \quad =\frac{e^{2\epsilon}(e^\epsilon+1)^2(e^\epsilon-2+2e^\frac{\epsilon}{3})}{(e^\epsilon+2)^2}.
    \end{align}\label{eq-1}
    Since $\frac{e^{2\epsilon}(e^\epsilon+1)^2(-e^\epsilon+2-2e^\frac{\epsilon}{3})}{(e^\epsilon+2)^2} > 0$ if $\epsilon > 0$, we have $2(e^\epsilon-a)^2(e^\epsilon+t) - ae^\epsilon(e^\epsilon+1)^2 > 0$~ if~$\epsilon > \ln 5.53.$
\end{itemize}

\begin{figure}[h]
\centering 
  \includegraphics[width=0.45\textwidth]{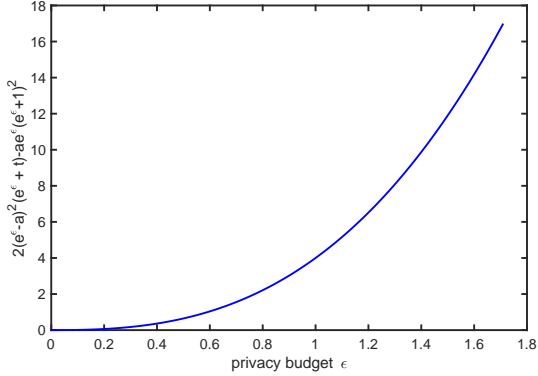} \vspace{-2pt}\caption{Value $2(e^\epsilon-a)^2(e^\epsilon+t)-ae^\epsilon(e^\epsilon+1)^2 > 0$, when $\epsilon \in ( 0, \ln 5.53]$.}
 \label{condition-1}
\end{figure}

Thus, we conclude that $2(e^\epsilon-a)^2(e^\epsilon+t) - ae^\epsilon(e^\epsilon+1)^2 > 0$ if $\epsilon > 0$. 
\qeda

\subsection{\textbf{Proving Lemma~\ref{PM-lemma-1}}}\label{appendix:lem-1}

From Eq.~(\ref{eq:pdf-ldp-a}) and Eq.~(\ref{eq:pdf-ldp-b}), for any $Y \in [-A, A]$ and any two input values $x_1, x_2 \in [-1, 1]$, we have $\frac{pdf(Y|x_1)}{pdf(Y|x_2)} \leq \frac{c}{d} = \exp(\epsilon)$.
Thus, Algorithm \ref{algo:PM-OPT} satisfies local differential privacy.
For notational simplicity, with a fixed $\epsilon$ below, we will write $L(\epsilon,x, t)$ and $R(\epsilon,x, t)$ as $L_x$ and $R_x$. Based on proper probability distribution, we have
\begin{align}
    \int_{-A}^A \bfu{Y=y|x} \, \text{d} y = c(R_x-L_x) + d[2A-(R_x-L_x)] =1.\label{eq:prob-distribution}
\end{align}
In addition,
\begin{align}
   & \mathbb{E}[Y=y|x] = \int_{-A}^A y \cdot pdf(Y=y|x) \nonumber \, \text{d} y \\& = c(R_x-L_x) + d[2A-(R_x-L_x)] \nonumber \\& = \frac{d}{2}\cdot ({L_x}^2-A^2) + \frac{c}{2} ({R_x}^2-{L_x}^2) + \frac{d}{2}\cdot (A^2-{R_x}^2) \nonumber \\&  = x. \label{eq:constraint-unbiased}
\end{align}
By solving above Eq.~(\ref{eq:prob-distribution}) and Eq.~(\ref{eq:constraint-unbiased}), we have
\begin{align} \label{eqnitem4plusx}
\begin{cases}
{L_x} & =   \frac{x }{1-2Ad }-\frac{1-2Ad}{2(c-d)}   ,   \\ R_x &  = \frac{x}{1-2Ad }+\frac{1-2Ad}{2(c-d)}  .
\end{cases}
\end{align}
With $-A \leq y \leq A$, the constraint $-A \leq {L_x}  < R_x \leq A$ for any $-1 \leq x \leq 1$ in Eq.~(\ref{eq:pdf-ldp}), Eq.~(\ref{eq:opt-pdf-d}) and Eq.~(\ref{eq:opt-pdf-A}) implies
\begin{align}
 Ad & < \frac{1}{2},
 \label{eqnitem7} \\
 A   &  \geq \frac{1}{1-2Ad }+\frac{1-2Ad}{2(c-d)}. \label{eqnitem6}
\end{align}
For notational simplicity, we define $\alpha$ and $\xi$ as
\begin{align}
 \alpha &:= Ad , \label{eqnitem8} \\ \xi  &:= \frac{c-d}{d},\label{eqnitem9}
\end{align}
where it is clear under privacy parameter $\epsilon$ that
\begin{align}
\xi = e^{\epsilon}-1. \label{eqnitem10}
\end{align}
Applying Eq.~(\ref{eqnitem8}) and Eq.~(\ref{eqnitem9}) to Inequality~(\ref{eqnitem7}) and  Eq.~(\ref{eqnitem6}), we obtain
\begin{align}   \frac{\alpha}{d}   &\geq \frac{1}{1-2\alpha}+\frac{1-2\alpha}{2 \xi d}  , \label{eqnitem12}  \\ \alpha  & < \frac{1}{2}. \label{eqnitem14}
\end{align}
The condition Eq.~(\ref{eqnitem12}) induces $d \leq \frac{(2\xi+4)\alpha-(4+4\xi)\alpha^2-1}{2\xi}  = \frac{[(2\xi+2)\alpha-1](1-2\alpha)}{2\xi}.$ In view of $\frac{1}{2(\xi+1)} = \frac{1}{2e^\epsilon}<\alpha<\frac{1}{2}$, we define $t$ satisfying $0<t<\infty$ such that
$\alpha =  \frac{t+1}{2(t+e^\epsilon)} $. Note that $\lim_{t \to 0}  \frac{t+1}{2(t+e^\epsilon)} = \frac{1}{2e^\epsilon}$, $\lim_{t \to \infty}  \frac{t+1}{2(t+e^\epsilon)} = \frac{1}{2}$ and $d =   \frac{[(2\xi+2)\alpha-1](1-2\alpha)}{2\xi}  = \frac{1}{2\xi}   \cdot \frac{\xi t}{t+1+\xi} \cdot \frac{\xi}{t+1+\xi} = \frac{t (e^{\epsilon}-1)}{2(t+e^\epsilon)^2}$.

Applying $\alpha$, $d$ and $\xi$ to Eq.~(\ref{eqnitem4plusx}), we have
\begin{align}
\hspace{-10pt}
\begin{cases}
L_x \hspace{-10pt} & = \frac{x }{1-2\alpha }-  \frac{1-2\alpha }{2\xi d} = x \cdot \frac{e^{\epsilon}+t}{e^{\epsilon}-1} -\frac{e^{\epsilon}+t}{t (e^{\epsilon}-1)} = \frac{(e^{\epsilon}+t)(xt-1)}{t (e^{\epsilon}-1)},   \\ R_x \hspace{-10pt} & = \frac{x }{1-2\alpha } + \frac{1-2\alpha }{2\xi d} = x \cdot \frac{e^{\epsilon}+t}{e^{\epsilon}-1} + \frac{e^{\epsilon}+t}{t (e^{\epsilon}-1)} = \frac{(e^{\epsilon}+t)(xt+1)}{t (e^{\epsilon}-1)}.
\end{cases}
\end{align}
Furthermore, the variance of $Y$ is 
\begin{align} \label{eqnitem3}
\hspace{-20pt}&\textup{Var}[Y|x]  = \mathbb{E}[Y^2|~x] - (\mathbb{E}[Y|x])^2 \nonumber \\& \nonumber  =\int_{-A}^A y^2 \bfu{Y=y|x} \, \text{d} y  - x^2~ \\&  \nonumber  =  \int_{-A}^{L_x} d y^2 \, \text{d} y + \int_{L_x}^{R_x} c y^2 \, \text{d} y + \int_{R_x}^{A} d y^2 \, \text{d} y   - x^2   \nonumber \\ & = \nonumber \max_{x \in [-1, 1]}   \frac{d}{3} [{L_x}^3-(-A)^3] + \frac{c}{3} ({R_x}^3-{L_x}^3) + \\& \quad  \frac{d}{3} (A^3-{R_x}^3)  - x^2   \nonumber \\ & =  \frac{2d}{3} A^3 + \frac{(c-d)}{3} ({R_x}^3-{L_x}^3)    - x^2  .  
\end{align}
Substituting Eq.~(\ref{eqnitem4plusx}) into Eq.~(\ref{eqnitem3}) yields
\begin{align}\label{PM-OPT_var}
  &\textup{Var}[Y|x] = \frac{2d}{3} A^3  + \frac{(c-d)}{3} \cdot \nonumber \\ & \nonumber \hspace{-5pt} \left[\left(\frac{x }{1-2Ad }+\frac{1-2Ad}{2(c-d)}\right)^3 -\left(\frac{x  }{1-2Ad } \hspace{-2pt}-\hspace{-2pt}\frac{1-2Ad}{2(c-d)}\right)^3\right] \\ & \quad - x^2  \nonumber \\  & = \frac{2d}{3} A^3 + \frac{(c-d)}{3} \cdot \nonumber \\ & \quad \left\{6\left(\frac{x  }{1-2Ad }\right)^2 \times \frac{1-2Ad}{2(c-d)}+2\left[\frac{1-2Ad}{2(c-d)}\right]^3\right\}   - x^2      \nonumber \\ &   = \left( \frac{1}{1-2Ad}  - 1 \right)x^2     +  \frac{2d}{3} A^3  + \frac{(1-2Ad)^3}{12(c-d)^2}.
\end{align}
Substituting $1-2\alpha = \frac{e^{\epsilon}-1}{e^{\epsilon}+t}$,
$d = \frac{t (e^{\epsilon}-1)}{2(t+e^\epsilon)^2}$,
$\xi  = \frac{c-d}{d}, \xi = e^{\epsilon}-1$, and $\alpha =  \frac{t+1}{2(t+e^\epsilon)}$ into Eq.~(\ref{PM-OPT_var}) yields
\begin{align}
   \hspace{-5pt} \textup{Var}[Y|x] =  \frac{t+1}{e^\epsilon-1}x^2 + \frac{(t+e^\epsilon)\big ((t+1)^3 + e^\epsilon -1 \big)}{3t^2(e^\epsilon-1)^2}.\label{eq:var-ldp-1}
\end{align}
\qeda

\subsection{\textbf{Calculate the probability of a variable $Y$ falling in the interval $[L(\epsilon,x,e^{\epsilon/3}), R(\epsilon,x,e^{\epsilon/3})]$.}}\label{appendix:y-pdf}

By replacing $t$ in Eq.(\ref{eq:pdf-ldp}) of \texttt{PM-OPT} with $e^{\epsilon/3}$, we obtain the probability as follows:

\begin{align}
    &\bp{L(\epsilon,x,e^{\epsilon/3}) \leq Y \leq R(\epsilon,x,e^{\epsilon/3})} \nonumber
    \\&= \int_{L(\epsilon,x,e^{\epsilon/3})}^{R(\epsilon,x,e^{\epsilon/3})} c~\textup{d}Y \nonumber
    \\&= \int_{\frac{(e^{\epsilon}+e^{\epsilon/3})(xe^{\epsilon/3}-1)}{e^{\epsilon/3} (e^{\epsilon}-1)}}^{\frac{(e^{\epsilon}+e^{\epsilon/3})(xe^{\epsilon/3}+1)}{e^{\epsilon/3} (e^{\epsilon}-1)}} \frac{e^{\epsilon} t (e^{\epsilon}-1)}{2(t+e^\epsilon)^2}~\textup{d}Y \nonumber
    \\& = \frac{e^\epsilon}{e^{\epsilon/3}+e^\epsilon}. \nonumber
\end{align}

\subsection{\textbf{Proof of Lemma~\ref{lem:proof-discretize-probability}}}\label{appendix:proof-of-discreization-probability}

The expression of these two probabilities in Eq.~(\ref{eq:perturbation-probabilities}) can be solved from the following:
\begin{subequations}
    \begin{empheq}[left=\hspace{-15pt} {\empheqlbrace}]{align}
        &\text{proper distribution so that } \nonumber \\ &\bp{Z=\frac{kC}{m}~|~Y=y} \nonumber \\ & \quad+ \bp{Z=\frac{(k+1)C}{m}~|~Y=y}=1,\label{eq:prob-dis}\\[8pt]
    &\bE{Z~|~Y=y} = y \text{ so that } \nonumber\\  &\left(\begin{array}{l}
        \frac{kC}{m} \times \bp{Z=\frac{kC}{m}~|~Y=y}  \\ + \frac{(k+1)C}{m} \times \bp{Z=\frac{(k+1)C}{m}~|~Y=y} \end{array}\right)=y.\label{eq:expectation-dis}
    \end{empheq}
\end{subequations}
Summarizing \ding{172} and \ding{173},    with $k:= \lfloor \frac{ym}{C} \rfloor$, we have
\begin{align}
 \bp{Z=z~|~Y=y}  = \begin{cases}
k+1- \frac{ym}{C},  &\text{if} ~z=\frac{kC}{m}, \\[8pt]   \frac{ym}{C}-k, &\text{if} ~z=\frac{(k+1)C}{m}.
    \end{cases} \label{eq-Z-given-Y}
\end{align}

In the perturbation step, the distribution of $Y$ given the input $x$ is given by
\begin{align}
  \hspace{-8pt}  \bfu{Y=y~|~x} = \begin{cases}
    p_1, ~\text{if}~ y \in [L(x), R(x)], \\
    p_2, ~\text{if}~y \in [-C, L(x)) \cup (R(x), C].
    \end{cases}\label{eq:perturbation-probabilities}
\end{align} 

Hence,
\begin{align}
 &   \bp{Z=z~|~x} \nonumber \\&  = \int_{y} \bp{Z=z~|~x \text{ and } Y=y } \bfu{Y=y~|~x} \,\text{d}y \nonumber \\& =  \int_{y} \bp{Z=z~|~Y=y} \bfu{Y=y~|~x} \,\text{d}y.
\end{align}
In addition, our mechanisms are unbiased, such that
\begin{align}
    \mathbb{E}[Y~|~x] =  \int_{y} y \times \bfu{Y=y~|~x} \text{d}y = x.\label{eq:expection-y}
\end{align}
Therefore, we obtain
\begin{align}
    & \mathbb{E}[Z~|~x] = \sum_{z} z \times \bp{Z=z~|~x} \nonumber \\& = \sum_{z} z \times \int_{y} \bp{Z=z~|~Y=y} \bfu{Y=y~|~x} \text{d}y  \nonumber \\& =  \int_{y} \left( \sum_{z} z \times  \bp{Z=z~|~Y=y} \right ) \bfu{Y=y~|~x} \text{d}y  \nonumber \\& = \int_{y} y \times \bfu{Y=y~|~x} \text{d}y = x.\label{eq:expection-y-to-x}
\end{align}
\qeda

\subsection{\textbf{Proof of Lemma~\ref{lem:discretization-worst-case-var}}}\label{app:proof-of-lem-8}

To prove $\textup{Var}[Z | X = x] \geq \textup{Var}[Y | X = x]$, it is equivalent to prove 
\begin{align}
    &\bE{Z^2 | X = x} - (\bE{Z | X = x})^2 \nonumber\\& \quad \geq \bE {Y^2 | X = x} - (\bE {Y | X = x})^2. \nonumber
\end{align}
Since $\mathcal{M}_1$ and $\mathcal{M}_2$ are unbiased, we have $\bE{Z|X = x} = \bE{Y | X = x } = x$. Hence, it is sufficient to prove 
\begin{align}
    \bE{Z^2 | X = x} \geq \bE {Y^2|X = x}.\label{ineq:expectation-z-square-to-y-square}
\end{align} We can derive that
\begin{align}
    &\bE{Z^2 | X = x} \nonumber\\ &= \sum_z z^2 \cdot  \bp{Z = z | X=x} \nonumber \\
    & = \sum_z z^2 \int_y \bp{Z= z|Y = y} \cdot  \bfu{
    Y = y | X = x} \textup{d}y \nonumber \\
    & = \int_y \sum_z z^2 \bp{Z= z|Y = y} \cdot \bfu{Y = y | X = x} \textup{d}y \nonumber \\
    & = \int_y \bE{Z^2|Y = y} \cdot \bfu{Y = y | X = x} \textup{d}y,\label{eq:expectation-z-square}
\end{align}
and 
\begin{align}
        \bE{Y^2 | X = x} = \int_y  y^2 \cdot \bfu{Y = y | X = x} \textup{d}y.\label{eq:expectation-y-square}
\end{align}

To prove Inequality~(\ref{ineq:expectation-z-square-to-y-square}), because of Eq.~(\ref{eq:expectation-z-square}) and Eq.~(\ref{eq:expectation-y-square}), it is sufficient to prove 
\begin{align}
    \bE{Z^2|Y = y} \geq y^2, ~\forall y \in \text{Range}(Y). \label{Ineq:z-square-greater-y-square}
\end{align}
After getting the intermediate output $y$ from $\mathcal{M}_1$, we may discretize the intermediate output $y$ into $z_1$ with probability $p_1$ and $z_2$ with probability $p_2$. Hereby,
\begin{align}
    p_1 + p_2 = 1. \nonumber
\end{align}
Mechanism $\mathcal{M}_2$ is unbiased, so that $\bE{Z | Y = y} = y$, and we have
\begin{align}
        p_1 \cdot z_1 + p_2 \cdot z_2 = y. \nonumber
\end{align}

According to Cauchy–Schwarz inequality, we have 
\begin{align}
     \bE{Z^2|Y = y} &= p_1 \cdot z_1^2 + p_2 \cdot z_2^2 \nonumber\\&= [(\sqrt{p_1})^2 + (\sqrt{p_2})^2][(\sqrt{p_1} z_1)^2 + (\sqrt{p_2} z_2)^2]~\nonumber \\ & \geq (p_1 \cdot z_1 + p_2 \cdot z_2)^2 = y^2.~\nonumber
\end{align}
Thus, we get Inequality~(\ref{ineq:expectation-z-square-to-y-square}) and Inequality~(\ref{Ineq:z-square-greater-y-square}).
\qeda

\subsection{\textbf{Proof of Lemma~\ref{lem:beta}}}\label{proof_lemma_varH}
The~$\max_{x \in [-1,1]} \textup{Var}_{\mathcal{H}}[Y|x]$~is minimized when
    \begin{align}
        &\beta =  \label{eq:beta-value}\\
        &\begin{cases}
            0 ,~~\text{If}~~ 0 < \epsilon < \epsilon^*, \nonumber \\
            \beta_1 = \frac{2(e^\epsilon-a)^2(e^\epsilon-1)-ae^\epsilon(e^\epsilon+1)^2}{2(e^\epsilon-a)^2(e^\epsilon+t)-ae^\epsilon(e^\epsilon+1)^2}, ~~\text{If}~~ \epsilon^* \leq \epsilon < \ln2, \\
            \beta_3 = \frac{-\sqrt{\frac{B}{A}} + e^\epsilon -1}{e^\epsilon +e^{\epsilon/3}},~\text{If}~  \epsilon \geq \ln2,
        \end{cases} 
    \end{align}
Where \begin{align}
      \epsilon^*  &:= 3 \ln(\text{root of}~ 3 x^5 - 2 x^3 + 3 x^2 - 5 x - 3~\text{near}~x = 1.22588) \nonumber \\&\approx 0.610986.\label{eq:eps_star}
  \end{align}

\textbf{Proof.} If $x = x^* = \frac{(\beta-1)ae^\epsilon(e^\epsilon+1)^2}{2(e^\epsilon-a)^2(\beta (e^\epsilon+ t)-e^\epsilon+1)}$, we have variance of $Y$ as follows:
\begin{align}
 &\textup{Var}_{\mathcal{H}}[Y|x^*]\nonumber\\& = \nonumber (\beta \frac{t+1}{e^\epsilon-1} + \beta - 1) \bigg(\frac{(\beta-1)ae^\epsilon(e^\epsilon+1)^2}{2(e^\epsilon-a)^2(\beta (e^\epsilon+ t)-e^\epsilon+1)}\bigg)^2\\&\nonumber + (1-\beta)  \frac{ae^\epsilon(e^\epsilon+1)^2}{(e^\epsilon-1)(e^\epsilon-a)^2} \bigg (\frac{(\beta-1)ae^\epsilon(e^\epsilon+1)^2}{2(e^\epsilon-a)^2(\beta (e^\epsilon+ t)-e^\epsilon+1)}\bigg ) \\& + \bigg ( \frac{(t+e^\epsilon)((t+1)^3+e^\epsilon-1)}{3t^2(e^\epsilon-1)^2} \beta \nonumber \\& +(1- \beta)(1-a) \frac{e^{2\epsilon}(e^\epsilon+1)^2}{(e^\epsilon-1)^2(e^\epsilon-a)^2} \bigg ). \label{varH_*}
\end{align}

Let $\gamma := \beta(e^\epsilon+t)-e^\epsilon + 1$ and $c: = e^\epsilon$,~we can transform $\textup{Var}_{\mathcal{H}}[Y|x^*]$ ~Eq.~(\ref{varH_*})~to the following: \\
\begin{align}
   \nonumber & \gamma \bigg (\frac{a^2c^2(c+1)^4}{4(c+t)^2(c-a)^4(c-1)}- \frac{a^2c^2(c+1)^4}{2(c+t)^2(c-1)(c-a)^4} \\& \nonumber + \frac{(t+1)^3+c-1}{3t^2(c-1)^2}-\frac{(1-a)c^2(c+1)^2}{(c+t)(c-1)^2(c-a)^2} \bigg ) \\& \nonumber + \frac{1}{\gamma} \bigg (\frac{(1+t)^2a^2c^2(c+1)^4}{4(c+t)^2(c-a)^4(c-1)} - \frac{(1+t)^2a^2c^2(c+1)^4}{2(c+t)^2(c-1)(c-a)^4} \bigg ) \\& \nonumber - \frac{(1+t)a^2c^2(c+1)^4}{2(c+t)^2(c-a)^4(c-1)} + \frac{(1+t)a^2c^2(c+1)^4}{(c+t)^2(c-1)(c-a)^4} \\&  + \frac{(t+1)^3+c-1}{3t^2(c-1)} + \frac{(1+t)(1-a)c^2(c+1)^2}{(c+t)(c-1)^2(c-a)^2}.\label{eq:var_gamma}
\end{align}
$\text{Set coefficient of} ~\gamma~\text{as}~ A: $
\begin{align}
  \nonumber  A &:= \frac{a^2c^2(c+1)^4}{4(c+t)^2(c-a)^4(c-1)}- \frac{a^2c^2(c+1)^4}{2(c+t)^2(c-1)(c-a)^4} \\&  + \frac{(t+1)^3+c-1}{3t^2(c-1)^2}-\frac{(1-a)c^2(c+1)^2}{(c+t)(c-1)^2(c-a)^2}.\label{eq:A}
\end{align}
$\text{Set coefficient of} ~\frac{1}{\gamma}~\text{ as}~ B: $
\begin{align}
    B &:= -\frac{(1+t)^2a^2c^2(c+1)^4}{4(c+t)^2(c-a)^4(c-1)}.\label{eq:B}
\end{align}
Set C as :
\begin{align}
    \nonumber C &:= - \frac{(1+t)a^2c^2(c+1)^4}{2(c+t)^2(c-a)^4(c-1)} + \frac{(1+t)a^2c^2(c+1)^4}{(c+t)^2(c-1)(c-a)^4} \\& + \frac{(t+1)^3+c-1}{3t^2(c-1)} + \frac{(1+t)(1-a)c^2(c+1)^2}{(c+t)(c-1)^2(c-a)^2}.\label{eq:C}
\end{align}
Since $\gamma$ monotonically increases with $\beta$ in the domain $\beta \in (0, \beta_1)$, the minimum $\gamma$ is $\gamma_1 := -e^\epsilon + 1$ at $\beta = 0$, maximum $\gamma$ is $\gamma_2 := 0$ at $\beta = \beta_1$.
\begin{itemize}
    \item If $0 < \epsilon < \ln2$, $a = 0$, we have
    \begin{align}
      \nonumber &A  = \frac{(t+1)^3+c-1}{3t^2(c-1)^2}-\frac{(c+1)^2}{(c+t)(c-1)^2}, \\
      \nonumber &B  = 0, \\
      \nonumber &\textup{Var}_{\mathcal{H}} [Y|x^*]  = A \gamma + C~\text{is a linear function.}
    \end{align}

    -If $\beta \in (0, \beta_1)$, Appendix~\ref{value:A} proves:
    \begin{itemize}
        \item[a)] $A>0$, if $0< \epsilon < 0.610986$.
        \item[b)] $A=0$, if $\epsilon = 0.610986$.
        \item[c)] $A<0$, if $0.610986 < \epsilon < \beta_1$.
    \end{itemize}
    Therefore, $\min_\beta \max_{x \in [-1,1]} \textup{Var}_{\mathcal{H}}[Y|x]$ is at:
    \begin{itemize}
        \item[a)] $\beta = 0$, if $0< \epsilon < 0.610986$.
        \item[b)] $\beta = \beta_1$, if $0.610986 \leq \epsilon <\ln2$.
    \end{itemize}

    -If $\beta \in [\beta_1, \beta_2]$,  $\textup{slope}_1 = \frac{t+1}{e^\epsilon-1} + 1 + \frac{(t+e^\epsilon)((t+1)^3+e^\epsilon-1)}{3t^2(e^\epsilon-1)^2}-\frac{(e^\epsilon+1)^2}{(e^\epsilon-1)^2}$,~$\textup{slope}_2 =\frac{(t+e^\epsilon)((t+1)^3+e^\epsilon-1)}{3t^2(e^\epsilon-1)^2} - \frac{(e^\epsilon+1)^2}{(e^\epsilon-1)^2}$. \\
     Fig.~\ref{slope_1} proves that $\textup{slope}_1 > 0$, when $\epsilon \in [0, \ln2]$.
     
    \begin{figure}[h]
    \centering
      \includegraphics[width=0.45\textwidth]{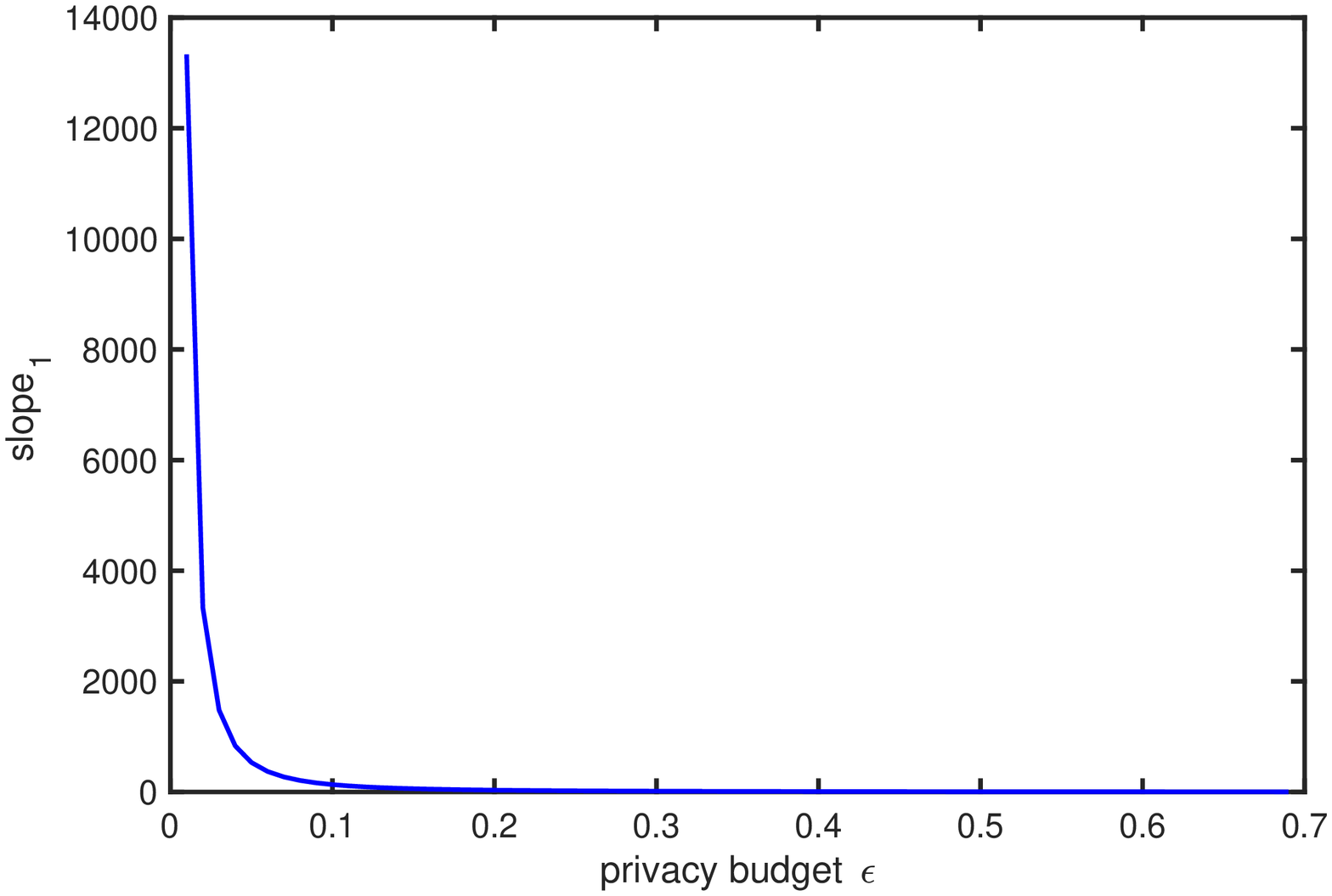} \vspace{-2pt}\caption{$\textup{slope}_1 ~\text{when}~ \epsilon \in [0, \ln2]$.}
     \label{slope_1}
    \end{figure}

     When $a = 0$, 
     $\beta_1$~($\ref{de:beta_1}$) $= \beta_{intersection}$~($\ref{de:intersection}$)  $= \frac{e^\epsilon -1}{t + e^\epsilon}$, the intersection of $\textup{Var}_{\mathcal{H}} [Y|0]$ and $\textup{Var}_{\mathcal{H}} [Y|1]$~is at $\beta_1$.

     When $\textup{slope}_2 = 0$, we have root at $\epsilon = 3 \ln(\text{root of}~ 3 x^5 - 2 x^3 + 3 x^2 - 5 x - 3~\text{near}~x = 1.22588) \approx 0.610986$.
     
    From Fig.~\ref{slope_2}, we have:
     \begin{itemize}
         \item[(1)] If $0 < \epsilon < 0.610986,~\textup{slope}_2 > 0$.
         \item[(2)] If $\epsilon = 0.610986,~\textup{slope}_2 = 0$.
         \item[(3)] If $\epsilon > 0.610986,~\textup{slope}_2 < 0$.
     \end{itemize}
     Based on previous analysis, we have:
     \begin{itemize}
         \item[(1)] If $0 < \epsilon < 0.610986$, $\min_\beta \max_{x \in [-1,1]} \textup{Var}_{\mathcal{H}}[Y|x] = \max_{x \in [-1,1]}\textup{Var}_{\mathcal{H}} [Y|x, \beta = \beta_1]$.
         \item[(2)] If $\epsilon = 0.610986$, $\min_\beta \max_{x \in [-1,1]} \textup{Var}_{\mathcal{H}}[Y|x] = \textup{Var}_{\mathcal{H}} [Y|1, \beta = \beta_1]$.
         \item[(3)] If $\epsilon > 0.610986$, $\min_\beta \max_{x \in [-1,1]} \textup{Var}_{\mathcal{H}}[Y|x] = \textup{Var}_{\mathcal{H}} [Y|1, \beta = \beta_1]$.
     \end{itemize}

    \begin{figure}[h]
    \centering
      \includegraphics[width=0.45\textwidth]{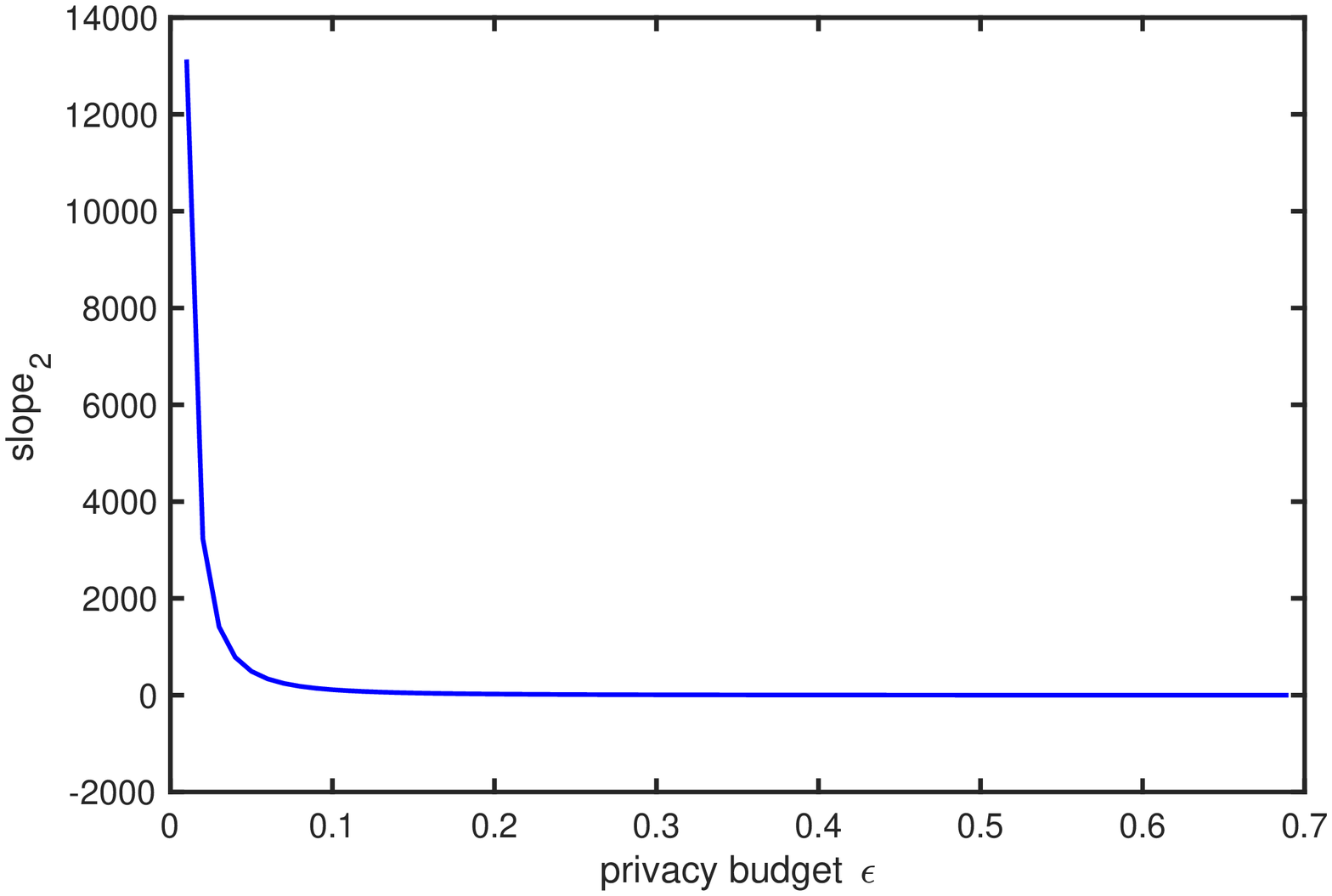} \vspace{-2pt}\caption{$\textup{slope}_2 ~\text{when}~ \epsilon \in [0, \ln2]$.}
     \label{slope_2}
    \end{figure}
     
    Therefore, $\min_\beta \max_{x \in [-1,1]} \textup{Var}_{\mathcal{H}}[Y|x]$ is at $\beta = \beta_1$ if $\beta \in [\beta_1, \beta_2]$. Summarize above analysis, we can conclude that $\min_\beta \max_{x \in [-1,1]} \textup{Var}_{\mathcal{H}}[Y|x] =$
     \begin{itemize}
         \item $\textup{Var}_{\mathcal{H}} [Y|x^*, \beta=0] $, if $0 < \epsilon < 0.610986$.
         \item $\max_{x \in [-1,1]} \textup{Var}_{\mathcal{H}} [Y|x, \beta= \beta_1] $, if  $0.610986 \leq \epsilon < \ln2$.
     \end{itemize} 

    \item If $\ln2 \leq \epsilon \leq \ln5.53$,\\
    When $\beta \in (0, \beta_1) $,~Appendix~\ref{value:A} proves~$A < 0,~B < 0$,~Appendix~\ref{monotonicity_of_gamma} proves when $\gamma = -\sqrt{\frac{B}{A}}$, we have  $\min_\beta \max_{x \in [-1,1]} \textup{Var}_{\mathcal{H}}[Y|x] = \max_{x \in [-1,1]} \textup{Var}_{\mathcal{H}} [Y|x, \beta = \beta_3]$. Since $\gamma := \beta(c+t)-c + 1$, we have~$\beta_3 := \frac{-\sqrt{\frac{B}{A}} + c -1}{c+t}$. $\min_\beta \max_{x \in [-1,1]} \textup{Var}_{\mathcal{H}}[Y|x] = \max_{x \in [-1,1]} \textup{Var}_{\mathcal{H}} [Y|x^*, \beta = \beta_3]$. \\ 
    When $\beta \in [\beta_1, \beta_2] $,
    Fig.~\ref{slope_1_2} shows that $\textup{slope}_1 > 0$ if $\epsilon \in [\ln2, \ln 5.53]$.
    Fig.~\ref{slope_2_2} shows that $\textup{slope}_2$:
    \begin{itemize}
         \item If $0 < \epsilon < 1.4338,~\textup{slope}_2 < 0,~\beta_{intersection} < \beta_1,$~see Fig.~\ref{diff_intersection_1}, $~\min_\beta \max_{x \in [-1,1]} \textup{Var}_{\mathcal{H}}[Y|x] =  \max_{x \in [-1,1]} \textup{Var}_{\mathcal{H}} [Y|x ,\beta = \beta_1]$.
         \item If $\epsilon \approx 1.4338,~\textup{slope}_2 = 0,~\beta_{intersection} < \beta_1,$~see Fig.~\ref{diff_intersection_1},~$\min_\beta \max_{x \in [-1,1]} \textup{Var}_{\mathcal{H}}[Y|x] = \max_{x \in [-1,1]} \textup{Var}_{\mathcal{H}} [Y|x ,\beta = \beta_1]$.
         \item If $1.4338 < \epsilon \leq \ln5.53$, $\textup{slope}_2 > 0$, $\min_\beta \max_{x \in [-1,1]} \textup{Var}_{\mathcal{H}}[Y|x] = \max_{x \in [-1,1]} \textup{Var}_{\mathcal{H}} [Y|x ,\beta = \beta_1]$.
    \end{itemize}
    Since $\textup{Var}_{\mathcal{H}} [Y|x, \beta= \beta_3] < \textup{Var}_{\mathcal{H}} [Y|x, \beta=\beta_1]$, $\min_\beta \max_{x \in [-1,1]} \textup{Var}_{\mathcal{H}}[Y|x]$ is at $\beta = \beta_3$.
    
    \begin{figure}[h]
    \centering
      \includegraphics[width=0.45\textwidth]{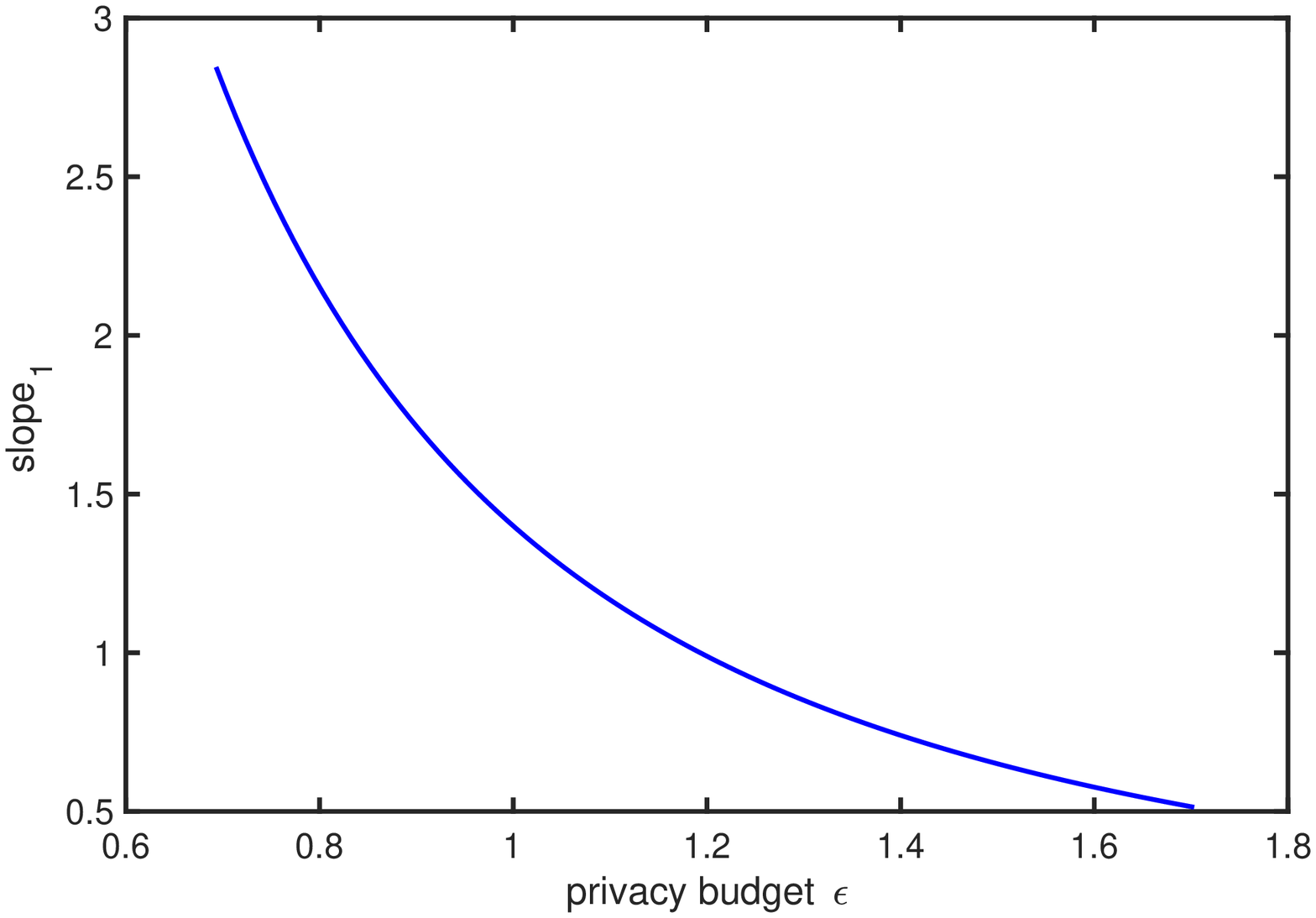} \vspace{-2pt}\caption{slope$_1 ~\text{when}~ \epsilon \in [\ln2, \ln5.53]$.}
     \label{slope_1_2}
    \end{figure}
    
    \begin{figure}[h]
    \centering
      \includegraphics[width=0.45\textwidth]{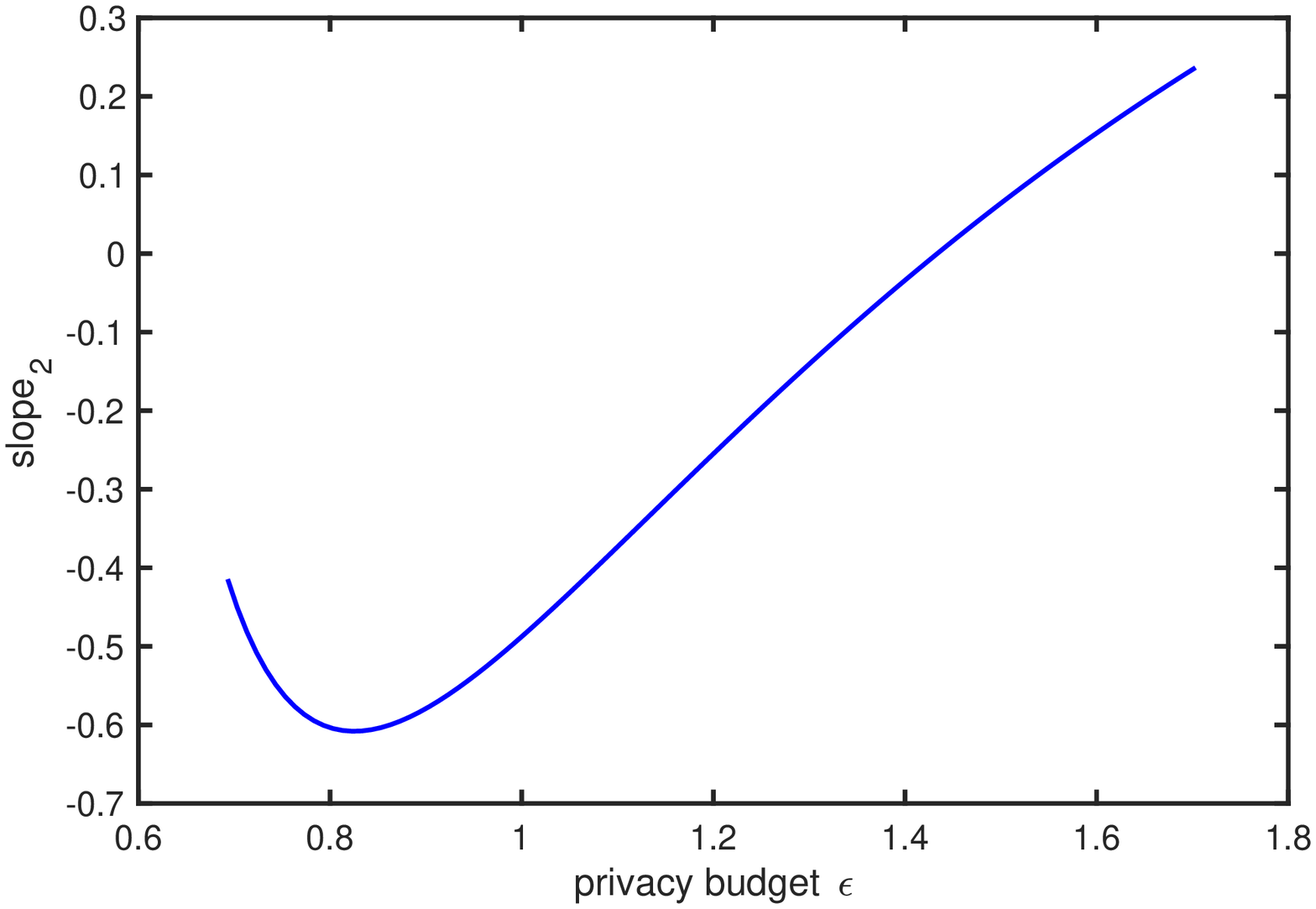} \vspace{-2pt}\caption{slope$_2 ~\text{when}~ \epsilon \in [\ln2, \ln5.53]$.}
     \label{slope_2_2}
    \end{figure}
    
    \begin{figure}[h]
    \centering
      \includegraphics[width=0.45\textwidth]{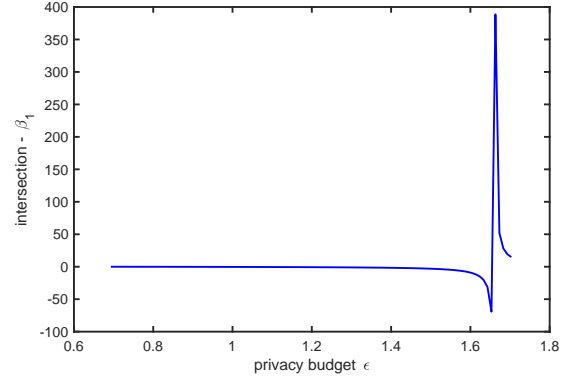} \vspace{-2pt}\caption{$\beta_{intersection} - \beta_1 ~\text{when}~ \epsilon \in [\ln2, \ln5.53]$.}
     \label{diff_intersection_1}
    \end{figure}
    
    \item If $\epsilon > \ln5.53$,
    \begin{itemize}
        \item  If $\beta \in (0, \beta_1) $,~Appendix~\ref{value:A} proves~$A<0$ and $B<0$,~Appendix~\ref{monotonicity_of_gamma} proves when $\gamma = -\sqrt{\frac{B}{A}}$, $\beta_4 := \frac{-\sqrt{\frac{B}{A}} + c -1}{c+t}$, we get $\min_\beta \max_{x \in [-1,1]} \textup{Var}_{\mathcal{H}}[Y|x] = \max_{x \in [-1,1]} \textup{Var}_{\mathcal{H}} [Y|x, \beta = \beta_4]$.
        \item  If $\beta \in [\beta_1, \beta_2] $, Appendix~\ref{proof:value_slope_1} proves $\textup{slope}_1 > 0$ and Appendix~\ref{proof:value_slope_2} proves $\textup{slope}_2 >0$. Therefore, we obtain $\min_\beta \max_{x \in [-1,1]} \textup{Var}_{\mathcal{H}}[Y|x] = \max_{x \in [-1,1]} \textup{Var}_{\mathcal{H}} [Y|x, \beta = \beta_1]$.
    \end{itemize}
\end{itemize}
\qeda

\subsection{\textbf{Proof of the monotonicity of $\textup{Var}_{\mathcal{H}} [Y|x^*]$}}\label{monotonicity_of_gamma}

  Substituting $A$ (Eq.~\ref{eq:A}), $B$ (Eq.~\ref{eq:B}) and $C$ (Eq.~\ref{eq:C}) into $\textup{Var}_{\mathcal{H}} [Y|x^*]$ (Eq.~\ref{eq:var_gamma}) yields
    \begin{align}
       \textup{Var}_{\mathcal{H}} [Y|x^*]=  A \gamma + \frac{B}{\gamma} + C.\label{eq:simplify_var_gamma}
    \end{align}
    The first order derivative of (\ref{eq:simplify_var_gamma}) is
    \begin{align}
      \textup{Var}_{\mathcal{H}} [Y|x^*]' =  A - \frac{ B}{\gamma^2}.
    \end{align}
    If $A - \frac{ B}{\gamma^2} = 0$, we get two roots: $$\gamma_1= -\sqrt{\frac{B}{A}} , \gamma_2 =  \sqrt{\frac{B}{A}}.$$ Since $\gamma < 0$, $\gamma_1$ is eligible. Hereby,
    \begin{itemize}
        \item     If $\gamma \in (-\infty, -\sqrt{\frac{B}{A}}]$, $\textup{Var}_{\mathcal{H}} [Y|x^*]' < 0$, $\textup{Var}_{\mathcal{H}} [Y|x^*]$ monotonically decreases.
        \item     If $\gamma \in (-\sqrt{\frac{B}{A}}, 0)$, $\textup{Var}_{\mathcal{H}} [Y|x^*]' > 0$, $ \textup{Var}_{\mathcal{H}} [Y|x^*]$ monotonically increases.
    \end{itemize}

\qeda

\subsection{\textbf{The sign of $A$ to $\epsilon$}}\label{value:A}

If $A = 0$, we have $\epsilon = 3 \ln(\text{root of}~ 3 \epsilon^5 - 2 \epsilon^3 + 3 \epsilon^2 - 5 \epsilon - 3~\text{near}~x = 1.22588) \approx 0.610986$.

First order derivative of $A$ is
\begin{align}
    \nonumber  &A' =  -(25 e^\epsilon - 27e^{2\epsilon} - 9e^{3\epsilon} - 12e^{\epsilon/3} + 19e^{2\epsilon/3} - e^{4\epsilon/3} \\ & \quad + 41e^{5\epsilon/3} + 7e^{7\epsilon/3} + 5)/(9e^{2\epsilon/3}(e^{2\epsilon/3} + 1)^2(e^\epsilon - 1)^3).
\end{align}
\begin{itemize}

\item If $0 <\epsilon < \ln2$, Fig.~\ref{A_less_than_zero} shows that $A'<0$ and $A$ monotonically decreases if $\epsilon \in (0, \ln2)$. Therefore, we have
\begin{itemize}
    \item $A>0$, if $ 0 < \epsilon < 0.610986$.
    \item $A = 0$, if $ \epsilon = 0.610986$.
    \item $A<0$, if $0.610986<\epsilon < \ln2$.
\end{itemize}

\begin{figure}[h]
\centering
  \includegraphics[width=0.45\textwidth]{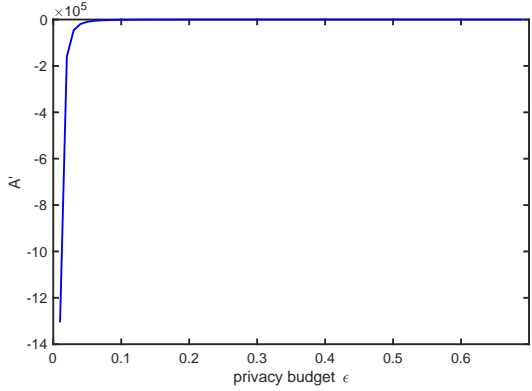} \vspace{-2pt}\caption{First order derivative of $A$ is less than 0, if ~$0 < \epsilon \leq \ln2$.}
 \label{A_less_than_zero}
\end{figure}

\item If $\ln2 \leq \epsilon \leq \ln5.53$, Fig.~\ref{A_less_than_zero_2} shows $A <0$.
\begin{figure}[h]
\centering
  \includegraphics[width=0.45\textwidth]{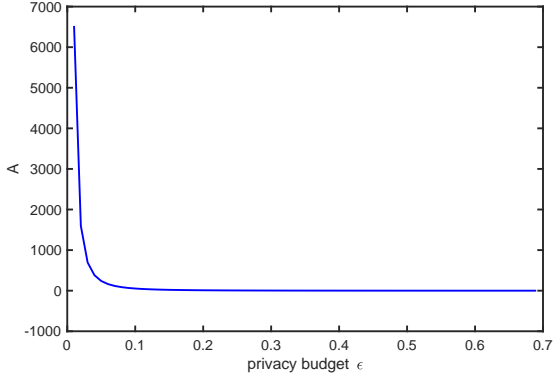} \vspace{-2pt}\caption{Value $A$ is less than 0, if ~$\ln2 <\epsilon \leq \ln5.53$.}
 \label{A_less_than_zero_2}
\end{figure}

\item If $\epsilon > \ln5.53$, we obtain
\begin{align}
    &A =  -(16e^\epsilon + 21e^{2\epsilon} + 3e^{3\epsilon} + 36e^{\frac{\epsilon}{3}} - 12e^\frac{2\epsilon}{3} \nonumber \\& - 28e^\frac{5\epsilon}{3} - 8e^\frac{7\epsilon}{3} - 12))/(12e^\frac{2\epsilon}{3}(e^\epsilon - e^\frac{2\epsilon}{3} + e^\frac{5\epsilon}{3} - 1)^2.
\end{align}
When $A = 0$, we have three roots:
$$r_1 \approx -16.9563, r_2 \approx -1.2284, r_3 \approx 0.0463914.$$ If the denominator $$= 12e^\frac{2\epsilon}{3}(e^\epsilon - e^\frac{2\epsilon}{3} + e^\frac{5\epsilon}{3} - 1)^2 >0$$ ~and~
\begin{align}
    &\lim_{\epsilon \to \infty} -(16e^\epsilon + 21e^{2\epsilon} + 3e^{3\epsilon} \nonumber \\&+ 36e^{\frac{\epsilon}{3}} - 12e^\frac{2\epsilon}{3} - 28e^\frac{5\epsilon}{3} - 8e^\frac{7\epsilon}{3} - 12) = -\infty, \nonumber
\end{align}$r_3$ is the largest real value root, the sign of  $$-(16e^\epsilon + 21e^{2\epsilon} + 3e^{3\epsilon} + 36e^{\frac{\epsilon}{3}} - 12e^\frac{2\epsilon}{3} - 28e^\frac{5\epsilon}{3} - 8e^\frac{7\epsilon}{3} - 12) $$ doesn't change, so that $A <0$ when $\epsilon > \ln 5.53$.
\end{itemize}

\qeda

\subsection{\textbf{The sign of $\textup{slope}_1$ when $\epsilon > \ln5.53$}}\label{proof:value_slope_1}

 \begin{align}
    &\textup{If~slope}_1 = \frac{t+1}{e^\epsilon-1} + 1 - \frac{ae^\epsilon(e^\epsilon+1)^2}{(e^\epsilon-1)(e^\epsilon-a)^2} \nonumber \\&+  \frac{(t+e^\epsilon)((t+1)^3+e^\epsilon-1)}{3t^2(e^\epsilon-1)^2} -  \frac{(1-a)e^{2\epsilon}(e^\epsilon+1)^2}{(e^\epsilon-1)^2(e^\epsilon-a)^2}  = 0, \nonumber
\end{align} we have two roots:
$$e^\epsilon_1 \approx -0.141506, e^\epsilon_2 \approx 2.21598.$$
The first order derivative of $\textup{slope}_1$ is
$$\textup{slope}_1' = -{{e^{{{2\epsilon}\over{3}}}\,\left(10\,e^\epsilon+20\right)+20\,e^\epsilon+\left(-27\,e^\epsilon-45 \right)\,e^{{{\epsilon}\over{3}}}+10}\over{9e^{{{\epsilon}\over{3}}}\,(e^\epsilon - 1)^3}}.$$
The denominator of the $\textup{slope}_1'$ is $>0$. Thus,
\begin{align}
    &\lim_{\epsilon \to \infty} -{e^{{{2\epsilon}\over{3}}}\,\left(10\,e^\epsilon+20\right)+20\,e^\epsilon+\left(-27\,e^\epsilon-45 \right)\,e^{{{\epsilon}\over{3}}}+10} \nonumber \\& \quad = -\infty.\nonumber
\end{align}
If $\textup{slope}_1' = 0$, we have three roots:
\begin{align}
    e^\epsilon_1 \approx -1.35696, e^\epsilon_2 \approx 0.0169067, e^\epsilon_3 \approx 4.22192. \nonumber
\end{align}
Since $e^\epsilon_3 \approx 4.22192$ is the largest real value root, the sign of $\textup{slope}_1'$ doesn't change when $e^\epsilon > 4.22192$. 
Therefore, when $\epsilon > \ln5.53$ and $\textup{slope}_1' <0$, $\textup{slope}_1$ monotonically decreases.
By simplifying $\textup{slope}_1$, we get 
\begin{align}
    \textup{slope}_1 = \frac{-(9e^\epsilon - 5 e^{2\epsilon/3} -5e^{4\epsilon/3} +3)}{3(e^\epsilon - 1)^2}. \nonumber
\end{align}  Then, we obtain $\lim_{\epsilon \to \infty} \textup{slope}_1$ = 0. Thus, we have $\textup{slope}_1 > 0$ if $\epsilon > \ln5.53$.

\qeda

\subsection{\textbf{The sign of $\textup{slope}_2$ when $\epsilon > \ln5.53$}}\label{proof:value_slope_2}

When
\begin{align}
   &\textup{slope}_2 = \frac{(t+e^\epsilon)((t+1)^3+e^\epsilon-1)}{3t^2(e^\epsilon-1)^2} \nonumber \\& \quad-  \frac{(1-a)e^{2\epsilon}(e^\epsilon+1)^2}{(e^\epsilon-1)^2(e^\epsilon-a)^2} = 0, \nonumber 
\end{align} we have
\begin{align}
    \epsilon_1 \approx \ln -1.24835, \epsilon_2 \approx \ln 1.52144. \nonumber
\end{align} 
The first order derivative of $\textup{slope}_2$ is :
\begin{align}
    &\textup{slope}_2' =  \nonumber  \\& {{-4\,e^{2\epsilon}+e^{{{2\epsilon}\over{3}}}\,\left(9\,e^\epsilon+63\right)-23\,e^\epsilon+\left(-20\, e^\epsilon-10\right)\,e^{{{\epsilon}\over{3}}}-3}\over{9e^{{{2\epsilon}\over{3}}}\,\left(e^\epsilon - 1\right)^3}}.\label{eq:slope-2}
\end{align}
The denominator of the Eq.~$(\ref{eq:slope-2})$ is $>0$. Besides, from nominator of Eq.~$(\ref{eq:slope-2})$, we obtain
\begin{align}
    &\lim_{\epsilon \to \infty} (-4\,e^{2\epsilon}+e^{{{2\epsilon}\over{3}}}\,\left(9\,e^\epsilon+63\right)-23\,e^\epsilon+\left(-20\, e^\epsilon-10\right)\,e^{{{\epsilon}\over{3}}}-3) \nonumber\\& \quad = -\infty. \nonumber
\end{align}
If Eq.~$(\ref{eq:slope-2}) = 0$, we have two roots:
\begin{align}
    &\epsilon_1 = 3 \ln(\text{root of}~ 4 x^6 - 9 x^5 + 20 x^4 + 23 x^3 - 63 x^2 + 10 x + 3) \nonumber \\& \quad \approx -3.13865, \nonumber  \\
    &\epsilon_2 = 3 \ln(\text{root of}~ 4 x^6 - 9 x^5 + 20 x^4 + 23 x^3 - 63 x^2 + 10 x + 3) \nonumber \\& \quad \approx 0.709472.\nonumber 
\end{align} Since $\epsilon \approx 0.709472$ is the largest real value root, the sign of $\textup{slope}_2'$ doesn't change if $\epsilon > 0.709472$. Therefore, $\textup{slope}_2' < 0 $ if $\epsilon > \ln 5.53$. Simplify $$\textup{slope}_2 = \frac{3e^\frac{\epsilon}{3} - 3e^\epsilon + 5e^\frac{2\epsilon}{3} + 2e^\frac{4\epsilon}{3} - 9}{3(e^\epsilon - 1)^2},$$ and then we get $$\lim_{\epsilon \to \infty} \textup{slope}_2  = 0.$$ Thus, we have $\textup{slope}_2 > 0$ if $\epsilon > \ln5.53$.
\qeda

\subsection{\textbf{Proof of Lemma~\ref{lem-d-2}}}\label{proof:lem-d-2}

For any $i \in [1, n]$,  the random variable $Y[t_j]- x[t_j]$ has zero mean based on Lemma~\ref{lem-d-1}. In both \texttt{PM-SUB} and $\texttt{HM}_{\text{PM-SUB}, \text{Three-Outputs}}$, $|Y[t_j]- x[t_j]| \leq \frac{d}{k} \cdot \frac{(e^{\frac{\epsilon}{k}} + e^\frac{\epsilon}{3k})(e^\frac{\epsilon}{3k} + 1)}{e^\frac{\epsilon}{3k}(e^\frac{\epsilon}{k} - 1)}$. By Bernstein's inequality, we have
\begin{align}
   & \bp{|Z[t_j] - X[t_j]| \geq \lambda}\nonumber \\ & = Pr \bigg [ | \sum_{i=1}^n \{ Y[t_j] - x[t_j]\} | \geq n \lambda \bigg ] \nonumber \\& \leq 2 \cdot \exp\bigg (\frac{-(n \lambda)^2}{2 \sum_{i=1}^n \textup{Var}[Y[t_j]] + \frac{2}{3} \cdot n\lambda \cdot \frac{d}{k} \cdot \frac{(e^{\frac{\epsilon}{k}} + e^\frac{\epsilon}{3k})(e^\frac{\epsilon}{3k} + 1)}{e^\frac{\epsilon}{3k}(e^\frac{\epsilon}{k} - 1)} } \bigg).~\nonumber 
\end{align}
In Algorithm~\ref{PM-d-dimension-Algo}, $Y[t_j]$ equals $\frac{d}{k} y_{j}$ with probability $\frac{k}{d}$ and 0 with probability $1 - \frac{k}{d}$. Moreover, we obtain $\mathbb{E}[Y[t_j]] = x[t_j]$ from Lemma \ref{lem-d-1}, and then we get 
\begin{align}
    &\textup{Var}[Y[t_j]] = \mathbb{E}[(Y[t_j])^2] - \mathbb{E}[Y[t_j]]^2 \nonumber \\&= \frac{k}{d}E[(\frac{d}{k}y_{j})^2] - (x[t_j])^2 = \frac{d}{k}\mathbb{E}[(y_{j})^2] - (x[t_j])^2. \label{var-PM-d}
\end{align}
In Algorithm~\ref{PM-d-dimension-Algo}, if Line 5 uses \texttt{PM-SUB}, we the variance in Eq.~(\ref{eq:var-ldp-1}) to compute $\mathbb{E}[(y_{j})^2] $, the asymptotic expression involving $\epsilon$ are in the sense of $\epsilon \rightarrow 0$.
\begin{align}
   & E[(y_{j})^2] = \textup{Var}[y_{j}] + (\mathbb{E}[y_{j}])^2 \nonumber \\& = \frac{t(\frac{\epsilon}{k})+1}{e^\frac{\epsilon}{k}-1}(x[t_j])^2 + \frac{(t(\frac{\epsilon}{k})+e^\frac{\epsilon}{k})\big ((t(\frac{\epsilon}{k})+1)^3 + e^\frac{\epsilon}{k} -1 \big)}{3(t(\frac{\epsilon}{k}))^2(e^\frac{\epsilon}{k}-1)^2} \nonumber \\& \quad + (x[t_j])^2  = O\bigg (\frac{k^2}{\epsilon^2} \bigg ). \label{eq:expectation-x-square}
\end{align}

In Algorithm~\ref{PM-d-dimension-Algo}, if Line 5 uses \texttt{Three-Outputs}, and then we the variance in Eq.~(\ref{Three-OUTPUTS-VAR}) to compute $\mathbb{E}[(y_{j})^2] $, the asymptotic expression involving $\epsilon$ are in the sense of $\epsilon \rightarrow 0$.
    \begin{align}
       & \mathbb{E}[(y_{j})^2] = \textup{Var}[y_{j}] + (\mathbb{E}[y_{j}])^2 \nonumber \\& = \frac{(1-a)e^{\frac{2\epsilon}{k}}(e^\frac{\epsilon}{k}+1)^2}{(e^\frac{\epsilon}{k} -1)^2(e^\frac{\epsilon}{k} -a)^2}
        +\frac{b|x[t_j]|e^{\frac{2\epsilon}{k}}(e^\frac{\epsilon}{k}+1)^2}{(e^\frac{\epsilon}{k} -1)^2(e^\frac{\epsilon}{k} -a)^2}-(x[t_j])^2 \nonumber \\& \quad + (x[t_j])^2 \nonumber \\& = \frac{(1-a)e^{\frac{2\epsilon}{k}}(e^\frac{\epsilon}{k}+1)^2}{(e^\frac{\epsilon}{k} -1)^2(e^\frac{\epsilon}{k} -a)^2}
        +\frac{b|x[t_j]|e^{\frac{2\epsilon}{k}}(e^\frac{\epsilon}{k}+1)^2}{(e^\frac{\epsilon}{k} -1)^2(e^\frac{\epsilon}{k} -a)^2} = O(\frac{k^2}{\epsilon^2}),\nonumber
    \end{align}

In Algorithm~\ref{PM-d-dimension-Algo}, if Line 5 uses \texttt{HM-TP}, we have:

$E[(y_{j})^2] = \textup{Var}[y_{j}] + (E[y_{j}])^2$
\begin{align}
    &=
    \begin{cases}
     \frac{(e^\frac{\epsilon}{k} + 1)^2}{(e^\frac{\epsilon}{k} -1)^2} + (x[t_j])^2,~\text{If}~0 < \epsilon < \epsilon^*,\\
     \textup{Var}_{\mathcal{H}}[Y|1,\beta_1, \frac{\epsilon}{k}] + (x[t_j])^2,  ~\text{If}~\epsilon^* \leq \epsilon < \ln2,\\
     \textup{Var}_{\mathcal{H}}[Y|1,\beta_3, \frac{\epsilon}{k}] + (x[t_j])^2,~\text{If}~  \epsilon \geq \ln2
    \end{cases} \nonumber \\
   & = O\bigg (\frac{k^2}{\epsilon^2} \bigg ),
\end{align}
where $\epsilon^*$ is defined in the Eq.~(\ref{eq:eps_star}).
Then,
\begin{align}
    &\textup{Var}[Y[t_j]] = \frac{d}{k} \bigg (\frac{t(\frac{\epsilon}{k})+1}{e^\frac{\epsilon}{k}-1}(x[t_j])^2 \nonumber \\&+ \frac{(t(\frac{\epsilon}{k})+e^\frac{\epsilon}{k})\big ((t(\frac{\epsilon}{k})+1)^3 + e^\frac{\epsilon}{k} -1 \big)}{3(t(\frac{\epsilon}{k}))^2(e^\frac{\epsilon}{k}-1)^2} + (x[t_j])^2 \bigg ) \nonumber \\& - (x[t_j])^2. \nonumber
\end{align}
Substituting Eq.~(\ref{eq:expectation-x-square}) into Eq.~(\ref{var-PM-d}) yields
\begin{align}
    \textup{Var}[Y[t_j]] = \frac{d}{k} \cdot O\bigg (\frac{k^2}{\epsilon^2} \bigg ) - (x[t_j])^2 = O\bigg (\frac{dk}{\epsilon^2} \bigg ). 
\end{align}
Therefore, we obtain $$\bp{|Z[t_j] - X[t_j]| \geq \lambda} \leq 2 \cdot \exp \bigg ( - \frac{n \lambda^2}{O(dk/\epsilon^2) + \lambda \cdot O(d/ \epsilon)} \bigg ).$$
By the union bound, there exists $\lambda = O \bigg ( \frac{\sqrt{d \ln(d/\beta))}}{\epsilon \sqrt{n}} \bigg )$. Therefore, $\max_{j \in [1,d]} |Z[t_j] - X[t_j]| = \lambda = O \bigg ( \frac{\sqrt{d \ln(d/\beta))}}{\epsilon \sqrt{n}} \bigg ).$

\qeda

\subsection{\textbf{Calculate $k$ for $\texttt{PM-SUB}$ and $\texttt{Three-Outputs}$}}\label{proof:k}

We calculate the optimal $k$ for \texttt{PM-SUB} and \texttt{Three-Outputs}.

\begin{itemize}
    \item[(I)] We calculate the $k$ for $d$ dimension \texttt{PM-SUB}. When $x[t_j] = 1$, we get
        \begin{align}
            &\max \textup{Var}[Y[t_j]]  \nonumber \\& = \frac{d}{k} \bigg (\frac{t(\frac{\epsilon}{k})+1}{e^\frac{\epsilon}{k}-1}  \nonumber + \frac{(t(\frac{\epsilon}{k})+e^\frac{\epsilon}{k})\big ((t(\frac{\epsilon}{k})+1)^3 + e^\frac{\epsilon}{k} -1 \big)}{3(t(\frac{\epsilon}{k}))^2(e^\frac{\epsilon}{k}-1)^2} + 1 \bigg ) - 1. \nonumber
        \end{align}
For \texttt{PM-SUB}, we have
        \begin{align}
             &\max \textup{Var}[Y[t_j]]  \nonumber \\& = \frac{d}{k} \bigg (\frac{e^\frac{\epsilon}{3k}+1}{e^\frac{\epsilon}{k}-1} \nonumber + \frac{(e^\frac{\epsilon}{3k}+e^\frac{\epsilon}{k})\big ((e^\frac{\epsilon}{3k}+1)^3 + e^\frac{\epsilon}{k} -1 \big)}{3(e^\frac{\epsilon}{3k})^2(e^\frac{\epsilon}{k}-1)^2} + 1 \bigg )- 1.\nonumber
        \end{align}
Let $s = \frac{\epsilon}{k}$, and then
        \begin{align}
             &\max \textup{Var}[Y[t_j]]  \nonumber \\& = \frac{d}{\epsilon} \cdot s \bigg (\frac{e^\frac{s}{3}+1}{e^s-1} \nonumber + \frac{(e^\frac{s}{3}+e^s)\big ((e^\frac{s}{3}+1)^3 + e^s -1 \big)}{3(e^\frac{s}{3})^2(e^s-1)^2} + 1 \bigg )  - 1.\nonumber
        \end{align}
        
Let
        \begin{align}
             f(s) =  s \cdot \bigg (\frac{e^\frac{s}{3}+1}{e^s-1} \nonumber + \frac{(e^\frac{s}{3}+e^s)\big ((e^\frac{s}{3}+1)^3 + e^s -1 \big)}{3(e^\frac{s}{3})^2(e^s-1)^2} + 1 \bigg ), \nonumber
        \end{align}
and then we obtain
        \begin{align}
             \max \textup{Var}[Y[t_j]]  = \frac{d}{\epsilon} \cdot f(s) - 1.
        \end{align}
        
        \begin{figure}[h]
        \centering
          \includegraphics[width=0.45\textwidth]{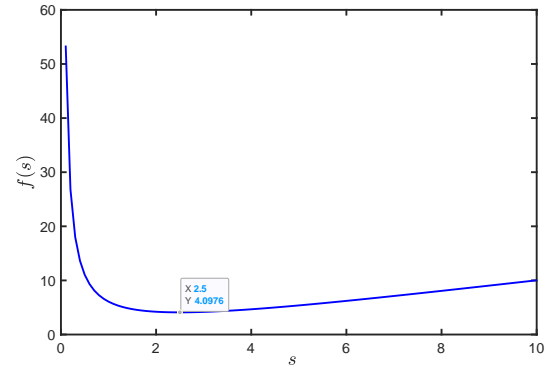} \vspace{-2pt}\caption{ Find $s$ for $\min f(s)$.}
         \label{fig:optimal_s}
        \end{figure}
        From numerical experiments shown in Fig.~\ref{fig:optimal_s}, we conclude that we can get $\min f(s)$ and $\min \max \textup{Var}[Y[t_j]]$ if $s = 2.5$, i.e. $k = \frac{\epsilon}{2.5}$.
    \item[(II)] Calculate the $k$ for $d$ dimension \texttt{Three-Outputs}.
        The variance of $Y[t_j]$ is
        \begin{align}
            &\textup{Var}[Y[t_j]] = \nonumber\\& \frac{d}{k} \bigg ( \frac{(1-a)e^{\frac{2\epsilon}{k}}(e^\frac{\epsilon}{k}+1)^2}{(e^\frac{\epsilon}{k} -1)^2(e^\frac{\epsilon}{k} -a)^2}
            +\frac{b|x[t_j]|e^{\frac{2\epsilon}{k}}(e^\frac{\epsilon}{k}+1)^2}{(e^\frac{\epsilon}{k} -1)^2(e^\frac{\epsilon}{k} -a)^2} \bigg ) - (x[t_j])^2 \nonumber \\& = \frac{d}{\epsilon} \cdot s \cdot \bigg ( \frac{(1-a)e^{\frac{2\epsilon}{k}}(e^\frac{\epsilon}{k}+1)^2}{(e^\frac{\epsilon}{k} -1)^2(e^\frac{\epsilon}{k} -a)^2}
            +\frac{b|x[t_j]|e^{\frac{2\epsilon}{k}}(e^\frac{\epsilon}{k}+1)^2}{(e^\frac{\epsilon}{k} -1)^2(e^\frac{\epsilon}{k} -a)^2} \bigg ) \nonumber \\& \quad - (x[t_j])^2, \nonumber
        \end{align}
        where $b$ is from Eq.~(\ref{eq:optimal-b}) and $a$ is from Eq.~(\ref{eq:optimal-a}).

        Let $x[t_j]' = \frac{dbe^\frac{2\epsilon}{k}(e^\frac{\epsilon}{k} + 1)^2}{2k(e^\frac{\epsilon}{k} - 1)^2(e^\frac{\epsilon}{k} - a)^2}$, if $0< x[t_j]' < 1$, the worst-case noise variance of $Y$ is
        \begin{align}
            &\max_{x \in [-1,1]}  \textup{Var}[Y[t_j]] = \nonumber\\
            & \begin{cases}
                \textup{Var}[x[t_j]'],\text{if}~ 0 < x[t_j]' < 1,   \\
               \max \{\textup{Var}[0], \textup{Var}[1]\}, ~~\text{otherwise.}
            \end{cases}
        \end{align}
        Let $s = \frac{\epsilon}{k}$, and then  
        \begin{align}
           & x[t_j]' = \frac{d}{\epsilon} \cdot s\frac{b(s)e^{2s}(e^s + 1)^2}{2(e^s - 1)^2(e^s - a(s))^2} \nonumber \\&= \frac{d}{\epsilon} \cdot s \frac{a(s) e^s (e^s+1)^2}{2(e^s - 1)(e^s-a(s))^2}. \nonumber
        \end{align}
        
        If $0 < \frac{d}{\epsilon} \cdot s \frac{a(s) e^s (e^s+1)^2}{2(e^s - 1)(e^s-a(s))^2} < 1$, and then
        \begin{align}
            &\max \textup{Var}[Y[t_j]] = \max \textup{Var}[x[t_j]'] = \nonumber \\& = \frac{d}{k} \bigg ( \frac{(1-a(\frac{\epsilon}{k}))e^{\frac{2\epsilon}{k}}(e^\frac{\epsilon}{k}+1)^2}{(e^\frac{\epsilon}{k} -1)^2(e^\frac{\epsilon}{k} -a)^2}
            +\frac{b(\frac{\epsilon}{k}) x[t_j]'e^{\frac{2\epsilon}{k}}(e^\frac{\epsilon}{k}+1)^2}{(e^\frac{\epsilon}{k} -1)^2(e^\frac{\epsilon}{k} -a)^2} \bigg ) \nonumber \\& \quad - (x[t_j]')^2 \nonumber
            \\&  = \frac{d^2}{\epsilon^2} \cdot s^2 \frac{(b(s))^2 e^{4s}(e^s+1)^4}{2(e^s-1)^4(e^s-a(s))^4} \nonumber \\& \quad -\frac{d^2}{\epsilon^2} \cdot s^2 \frac{(b(s))^2 e^{4s}(e^s+1)^4}{4(e^s-1)^4(e^s-a(s))^4} \nonumber \\& \quad + \frac{d}{\epsilon} \cdot s \frac{(1-a(s))e^{2s}(e^s + 1)^2}{(e^s-1)^2(e^s-a(s))^2} \nonumber
            \\&  = \frac{d^2}{\epsilon^2} \cdot s^2 \frac{(b(s))^2 e^{4s}(e^s+1)^4}{4(e^s-1)^4(e^s-a(s))^4} \nonumber \\& \quad+ \frac{d}{\epsilon} \cdot s \frac{(1-a(s))e^{2s}(e^s + 1)^2}{(e^s-1)^2(e^s-a(s))^2}.\label{eq:max_var_three_outputs}
        \end{align}
        
        Substituting ~$b = a \cdot \frac{e^s - 1}{e^s}$ ~into Eq.~(\ref{eq:max_var_three_outputs}) yields
        \begin{align}
            &= \frac{d^2}{\epsilon^2} \cdot s^2 \frac{(a(s))^2 e^{2s}(e^s+1)^4}{4(e^s-1)^2(e^s-a(s))^4} \nonumber \\ &\quad + \frac{d}{\epsilon} \cdot s \frac{(1-a(s))e^{2s}(e^s + 1)^2}{(e^s-1)^2(e^s-a(s))^2}. \nonumber      
        \end{align}

        \begin{itemize}
            \item If $\epsilon < \ln2$, $a = 0, b = 0$, first-order derivative of $\max \textup{Var}[Y[t_j]] $ is
                \begin{align}
                    \max \textup{Var}[Y[t_j]]' =  \frac{d}{\epsilon} \cdot \frac{(e^s + 1)(-4se^s + e^{2s} - 1)}{(e^s - 1)^3}.
                \end{align}
                When $ \max \textup{Var}[Y[t_j]]' = 0$, we have root $s \approx 2.18$.
            \item If $\ln2 < \epsilon < \ln5.5$, by numerical experiments, we have optimal $s \approx 2.5$.
            \item If $\epsilon \geq \ln 5.5$, by numerical experiments, we have optimal $s \approx 2.5$.
        \end{itemize}
        Therefore, we pick $s = 2.5$ and $k =  \frac{\epsilon}{2.5}$ to simplify the experimental evaluation.
    \end{itemize}
\qeda

\subsection{\textbf{Extending \texttt{Three-Outputs} for Multiple Numeric Attributes}}

\begin{lem}\label{eq:lem-extend-three}
    For a d-dimensional numeric tuple $x$ which is perturbed as $Y$ under $\epsilon$-LDP, and for each $t_j$ of the $d$ attribute, the variance of $Y[t_j]$ induced by \textit{Three-Outputs} is
    \begin{align}
        &\textup{Var}[Y[t_j]] = \frac{d}{k} \bigg ( \frac{(1-a)e^{\frac{2\epsilon}{k}}(e^\frac{\epsilon}{k}+1)^2}{(e^\frac{\epsilon}{k} -1)^2(e^\frac{\epsilon}{k} -a)^2}
        +\frac{b|x[t_j]|e^{\frac{2\epsilon}{k}}(e^\frac{\epsilon}{k}+1)^2}{(e^\frac{\epsilon}{k} -1)^2(e^\frac{\epsilon}{k} -a)^2} \bigg ) \nonumber\\&  - (x[t_j])^2 \nonumber. 
    \end{align}
\end{lem}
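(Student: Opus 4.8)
The plan is to follow exactly the two-stage structure of Algorithm~\ref{PM-d-dimension-Algo}, combining the law of total expectation with the single-dimensional \texttt{Three-Outputs} variance already derived in Eq.~(\ref{eq:minimal-variance}). First I would fix an attribute index $t_j$ and observe that the algorithm produces $Y[t_j] = \frac{d}{k}\,y_j$ exactly when $t_j$ is among the $k$ coordinates sampled (an event of probability $\frac{k}{d}$, since the $k$ indices are drawn uniformly without replacement from the $d$ attributes), and $Y[t_j]=0$ otherwise. Here $y_j$ denotes the output of the one-dimensional \texttt{Three-Outputs} mechanism run on input $x[t_j]$ with privacy budget $\frac{\epsilon}{k}$.

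Next I would invoke Lemma~\ref{lem-d-1}, which already establishes $\mathbb{E}[Y[t_j]] = x[t_j]$, so that $\textup{Var}[Y[t_j]] = \mathbb{E}[(Y[t_j])^2] - (x[t_j])^2$. Conditioning on the selection event and using $\mathbb{E}[(\frac{d}{k}y_j)^2] = \frac{d^2}{k^2}\mathbb{E}[y_j^2]$ gives $\mathbb{E}[(Y[t_j])^2] = \frac{k}{d}\cdot\frac{d^2}{k^2}\mathbb{E}[y_j^2] = \frac{d}{k}\mathbb{E}[y_j^2]$, which is precisely Eq.~(\ref{var-PM-d}). This reduces the entire claim to evaluating the second moment $\mathbb{E}[y_j^2]$ of the single-dimensional mechanism at budget $\frac{\epsilon}{k}$.

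The remaining work is to substitute the one-dimensional result. From Eq.~(\ref{eq:minimal-variance}) together with $C$ in Eq.~(\ref{eq:optimal-c}) and the abbreviations $a = P_{0\leftarrow 0}$ (Eq.~(\ref{eq:optimal-a})) and $b = a(1 - \tfrac{1}{e^\epsilon})$ (Eq.~(\ref{eq:optimal-b})), the single-dimensional variance is $C^2(1-a) + C^2 b\,|x| - x^2$ with $C^2 = \frac{(e^\epsilon+1)^2 e^{2\epsilon}}{(e^\epsilon-1)^2(e^\epsilon-a)^2}$. Since \texttt{Three-Outputs} is unbiased, $\mathbb{E}[y_j^2] = \textup{Var}[y_j] + (x[t_j])^2 = C^2(1-a) + C^2 b\,|x[t_j]|$, where every $\epsilon$ is replaced by $\frac{\epsilon}{k}$ and $a,b$ are evaluated at $\frac{\epsilon}{k}$ through Lemma~\ref{lem:optimal_a}. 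Plugging this into $\frac{d}{k}\mathbb{E}[y_j^2] - (x[t_j])^2$ reproduces the displayed formula term by term.

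I do not anticipate a genuine obstacle: the statement is essentially a bookkeeping corollary of the one-dimensional analysis and mirrors the intermediate computation already carried out inside the proof of Lemma~\ref{lem-d-2}. The only points requiring care are to keep the privacy budget as $\frac{\epsilon}{k}$ consistently throughout, and to remember that the $\frac{d}{k}$ prefactor multiplies the full second moment rather than the variance, so the $-(x[t_j])^2$ correction must sit outside the bracket instead of being folded into it.
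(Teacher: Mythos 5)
Your proposal is correct and follows essentially the same route as the paper's proof: reduce to the second moment via $\textup{Var}[Y[t_j]] = \frac{d}{k}\mathbb{E}[(y_j)^2] - (x[t_j])^2$ using the selection probability $\frac{k}{d}$, then evaluate $\mathbb{E}[(y_j)^2]$ from the one-dimensional \texttt{Three-Outputs} variance at budget $\frac{\epsilon}{k}$, letting the $\pm(x[t_j])^2$ terms cancel. No gaps.
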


\noindent
\textbf{Proof of Lemma~\ref{eq:lem-extend-three}.}
The variance of $Y[t_j]$ is computed as
    \begin{align}
        &\textup{Var}[Y[t_j]] = \mathbb{E}[(Y[t_j])^2] - \mathbb{E}[Y[t_j]]^2 \nonumber \\&= \frac{k}{d}\mathbb{E}[(\frac{d}{k}y_{j})^2] - (x[t_j])^2 \nonumber \\& = \frac{d}{k}\mathbb{E}[(y_{j})^2] - (x[t_j])^2.
    \end{align}
    We use variance Eq.~(\ref{Three-OUTPUTS-VAR}) to compute
    \begin{align}
       & \mathbb{E}[(y_{j})^2] = \textup{Var}[y_{j}] + (\mathbb{E}[y_{j}])^2 \nonumber \\& = \frac{(1-a)e^{\frac{2\epsilon}{k}}(e^\frac{\epsilon}{k}+1)^2}{(e^\frac{\epsilon}{k} -1)^2(e^\frac{\epsilon}{k} -a)^2}
        +\frac{b|x[t_j]|e^{\frac{2\epsilon}{k}}(e^\frac{\epsilon}{k}+1)^2}{(e^\frac{\epsilon}{k} -1)^2(e^\frac{\epsilon}{k} -a)^2}-(x[t_j])^2 \nonumber \\& \quad + (x[t_j])^2 \nonumber \\& = \frac{(1-a)e^{\frac{2\epsilon}{k}}(e^\frac{\epsilon}{k}+1)^2}{(e^\frac{\epsilon}{k} -1)^2(e^\frac{\epsilon}{k} -a)^2}
        +\frac{b|x[t_j]|e^{\frac{2\epsilon}{k}}(e^\frac{\epsilon}{k}+1)^2}{(e^\frac{\epsilon}{k} -1)^2(e^\frac{\epsilon}{k} -a)^2}. \nonumber
    \end{align}
Then,
    \begin{align}
        &\textup{Var}[Y[t_j]] =  \frac{d}{k} \bigg ( \frac{(1-a)e^{\frac{2\epsilon}{k}}(e^\frac{\epsilon}{k}+1)^2}{(e^\frac{\epsilon}{k} -1)^2(e^\frac{\epsilon}{k} -a)^2}
        +\frac{b|x[t_j]|e^{\frac{2\epsilon}{k}}(e^\frac{\epsilon}{k}+1)^2}{(e^\frac{\epsilon}{k} -1)^2(e^\frac{\epsilon}{k} -a)^2} \bigg ) \nonumber \\& \quad - (x[t_j])^2. \nonumber
    \end{align}
\qeda

\subsection{\textbf{Proof of Lemma}~\ref{lem:hm-var}} \label{sec:proof-hm-var}

Given \texttt{PM-SUB}'s variance in Eq.~(\ref{PM-OPT_var}), we have
\begin{align}
    &\textup{Var}_{\mathcal{P}}[Y|x]  = \left( \frac{1}{1-2Ad}  - 1 \right)x^2     +  \frac{2d}{3} A^3  + \frac{(1-2Ad)^3}{12(c-d)^2}.  \nonumber
\end{align}
Substituting $\alpha = Ad$ yields
\begin{align}
    &\textup{Var}_{\mathcal{P}}[Y|x]  = \left( \frac{1}{1-2\alpha}  - 1 \right)x^2     +  \frac{2\alpha}{3d^2}  + \frac{(1-2\alpha)^3}{12(c-d)^2}.  \nonumber
\end{align}
In addition, substitute $1-2\alpha = \frac{e^{\epsilon}-1}{e^{\epsilon}+t}$,
$d = \frac{t (e^{\epsilon}-1)}{2(t+e^\epsilon)^2},
\xi  := \frac{c-d}{d}, \xi = e^{\epsilon}-1, \alpha =  \frac{t+1}{2(t+e^\epsilon)} $.
Then, we have
\begin{align}\label{PM-SUB-VAR}
 \textup{Var}_{\mathcal{P}}[Y|x] = \frac{t+1}{e^\epsilon-1}x^2 + \frac{(t+e^\epsilon)\big ((t+1)^3 + e^\epsilon -1 \big)}{3t^2(e^\epsilon-1)^2}.
\end{align}

Given \texttt{Three-Outputs}'s variance in Eq.~(\ref{eq:minimal-variance}), we can simplify it as
\begin{align}
    &\textup{Var}_{\mathcal{T}}[Y|x]=(1-a)C^2+C^2b|x|-x^2. \nonumber
\end{align}
According to Eq.~(\ref{eq:optimal-c}), we have $C=\frac{e^\epsilon(e^\epsilon+1)}{(e^\epsilon -1)(e^\epsilon -a)}$, so that
\begin{align}\label{Three-OUTPUTS-VAR}
    \textup{Var}_{\mathcal{T}}[Y|x]=\frac{(1-a)e^{2\epsilon}(e^\epsilon+1)^2}{(e^\epsilon -1)^2(e^\epsilon -a)^2}
    +\frac{b|x|e^{2\epsilon}(e^\epsilon+1)^2}{(e^\epsilon -1)^2(e^\epsilon -a)^2}-x^2.
\end{align}
Based on Eq.~(\ref{PM-SUB-VAR}) and Eq.~(\ref{Three-OUTPUTS-VAR}), we can construct variance of hybrid mechanism as follows
\begin{align}
    \hspace{-15pt} \nonumber &\textup{Var}_{\mathcal{H}}[Y|x]= \beta (\frac{t+1}{e^\epsilon-1}x^2 + \frac{(t+e^\epsilon)\big ((t+1)^3 + e^\epsilon -1 \big)}{3t^2(e^\epsilon-1)^2})+  \\& \nonumber (1-\beta) (\frac{(1-a)e^{2\epsilon}(e^\epsilon+1)^2}{(e^\epsilon -1)^2(e^\epsilon -a)^2}
    +\frac{b|x|e^{2\epsilon}(e^\epsilon+1)^2}{(e^\epsilon -1)^2(e^\epsilon -a)^2}-x^2),
\end{align}
where $ t = e^{\epsilon/3}$. 

From Eq.~(\ref{eq:optimal-c}), we set $b =  a\cdot \frac{e^\epsilon -1}{e^\epsilon}$ to get the worst-case noise variance. Then, we have variance of the hybrid mechanism as
\begin{align}
    \hspace{-10pt} \textup{Var}_{\mathcal{H}}[Y|x]&= \nonumber (\beta \frac{t+1}{e^\epsilon-1} + \beta - 1) x^2\\& \nonumber \quad + (1-\beta)  \frac{ae^\epsilon(e^\epsilon+1)^2}{(e^\epsilon-1)(e^\epsilon-a)^2} |x| \\& \nonumber \quad + \bigg ( \frac{(t+e^\epsilon)((t+1)^3+e^\epsilon-1)}{3t^2(e^\epsilon-1)^2} \beta \\& \quad +(1- \beta)(1-a) \frac{e^{2\epsilon}(e^\epsilon+1)^2}{(e^\epsilon-1)^2(e^\epsilon-a)^2} \bigg ),\label{Eq:var_1}
\end{align}
where $ t = e^{\epsilon/3}$. Based on Eq.~(\ref{Eq:var_1}), we get 
\begin{align}
    &\max_{x \in [-1,1]} \textup{Var}_{\mathcal{H}}[Y|x]= \nonumber\\
    & \begin{cases}
        \textup{Var}_{\mathcal{H}}[Y|x^*],\text{if}~\beta \frac{t+1}{e^\epsilon-1} + \beta - 1 < 0,  0 < x^* < 1,   \\
       \max \{\textup{Var}_{\mathcal{H}}[Y|0], \textup{Var}_{\mathcal{H}}[Y|1] \}, ~~\text{otherwise},
    \end{cases}
\end{align} where $x^* := \frac{(\beta-1)ae^\epsilon(e^\epsilon+1)^2}{2(e^\epsilon-a)^2(\beta (e^\epsilon+ t)-e^\epsilon+1)}$. 

Therefore, we have the following cases to compute $\max_{x \in [-1,1]} \textup{Var}_{\mathcal{H}}[Y|x]$:
\begin{itemize}
\item[(I)] 
    If $\beta \frac{t+1}{e^\epsilon-1} + \beta - 1 < 0,  0 < Y < 1$, we obtain:
    \begin{itemize}
        \item $\beta < \frac{e^\epsilon-1}{e^\epsilon+t}$,
        \item For $Y := \frac{(\beta-1)ae^\epsilon(e^\epsilon+1)^2}{2(e^\epsilon-a)^2(\beta (e^\epsilon+ t)-e^\epsilon+1)}$ , if $ 0 < Y < 1$, we have $0 < \frac{(\beta-1)ae^\epsilon(e^\epsilon+1)^2}{2(e^\epsilon-a)^2(\beta (e^\epsilon+ t)-e^\epsilon+1)} < 1$.
\end{itemize}

If $\frac{(\beta-1)ae^\epsilon(e^\epsilon+1)^2}{2(e^\epsilon-a)^2(\beta (e^\epsilon+ t)-e^\epsilon+1)} < 1$, we have :
    \begin{align}
       \nonumber &\beta (2(e^\epsilon-a)^2(e^\epsilon+t)-ae^\epsilon(e^\epsilon+1)^2) > \\& \quad 2(e^\epsilon-a)^2(e^\epsilon-1)-ae^\epsilon(e^\epsilon+1)^2.
    \end{align}

    \begin{itemize}
       \item If $2(e^\epsilon-a)^2(e^\epsilon+t)-ae^\epsilon(e^\epsilon+1)^2 > 0$, we have
        $$\beta > \frac{2(e^\epsilon-a)^2(e^\epsilon-1)-ae^\epsilon(e^\epsilon+1)^2}{2(e^\epsilon-a)^2(e^\epsilon+t)-ae^\epsilon(e^\epsilon+1)^2}.$$
       \item If $2(e^\epsilon-a)^2(e^\epsilon+t)-ae^\epsilon(e^\epsilon+1)^2 = 0$ and $2(e^\epsilon-a)^2(e^\epsilon-1)-ae^\epsilon(e^\epsilon+1)^2 > 0$, no $\beta$ satisfies the condition.
       \item If $2(e^\epsilon-a)^2(e^\epsilon+t)-ae^\epsilon(e^\epsilon+1)^2 = 0$ and $2(e^\epsilon-a)^2(e^\epsilon-1)-ae^\epsilon(e^\epsilon+1)^2 \leq 0$, any $\beta$ satisfies the condition.
       \item If $2(e^\epsilon-a)^2(e^\epsilon+t)-ae^\epsilon(e^\epsilon+1)^2 < 0$, we have \\ \quad
        $\beta < \frac{2(e^\epsilon-a)^2(e^\epsilon-1)-ae^\epsilon(e^\epsilon+1)^2}{2(e^\epsilon-a)^2(e^\epsilon+t)-ae^\epsilon(e^\epsilon+1)^2}$. Since $\beta < \frac{e^\epsilon-1}{e^\epsilon+t}$, to get the correct domain, we compare $\frac{2(e^\epsilon-a)^2(e^\epsilon-1)-ae^\epsilon(e^\epsilon+1)^2}{2(e^\epsilon-a)^2(e^\epsilon+t)-ae^\epsilon(e^\epsilon+1)^2}$ and $\frac{e^\epsilon-1}{e^\epsilon+t}$, see Appendix~\ref{compare}, we have $\frac{2(e^\epsilon-a)^2(e^\epsilon-1)-ae^\epsilon(e^\epsilon+1)^2}{2(e^\epsilon-a)^2(e^\epsilon+t)-ae^\epsilon(e^\epsilon+1)^2} \leq \frac{e^\epsilon-1}{e^\epsilon+t}$. Therefore, $\beta < \frac{2(e^\epsilon-a)^2(e^\epsilon-1)-ae^\epsilon(e^\epsilon+1)^2}{2(e^\epsilon-a)^2(e^\epsilon+t)-ae^\epsilon(e^\epsilon+1)^2}$.
    \end{itemize}
Summarize above analysis, we have the following cases to compute $\max_{x \in [-1,1]} \textup{Var}_{\mathcal{H}}[Y|x]$ :
    \begin{itemize}
        \item[1)] If $2(e^\epsilon-a)^2(e^\epsilon+t)-ae^\epsilon(e^\epsilon+1)^2 > 0$, we have:
            \begin{align}
                &\max_{x \in [-1,1]} \textup{Var}_{\mathcal{H}}[Y|x]= \nonumber\\
                & \begin{cases}
                   \nonumber \textup{Var}_{\mathcal{H}}[Y|x^*],~\text{if}~\frac{2(e^\epsilon-a)^2(e^\epsilon-1)-ae^\epsilon(e^\epsilon+1)^2}{2(e^\epsilon-a)^2(e^\epsilon+t)-ae^\epsilon(e^\epsilon+1)^2} <  \beta < \frac{e^\epsilon-1}{e^\epsilon+t},\\
                    \max \{\textup{Var}_{\mathcal{H}}[Y|0], \textup{Var}_{\mathcal{H}}[Y|1] \}, ~~\text{otherwise.} \nonumber
                \end{cases}
            \end{align}
        \item[2)] If $2(e^\epsilon-a)^2(e^\epsilon+t)-ae^\epsilon(e^\epsilon+1)^2 = 0$~and~$ae^\epsilon(e^\epsilon+1)^2 + 2(e^\epsilon-a)^2(1-e^\epsilon) > 0$, we have:
            \begin{align}
                &\max_{x \in [-1,1]} \textup{Var}_{\mathcal{H}}[Y|x]= \max \{\textup{Var}_{\mathcal{H}}[Y|0], \textup{Var}_{\mathcal{H}}[Y|1] \} \nonumber
            \end{align}
        \item[3)] If $2(e^\epsilon-a)^2(e^\epsilon+t)-ae^\epsilon(e^\epsilon+1)^2 =0$~and~$2(e^\epsilon-a)^2(e^\epsilon-1)-ae^\epsilon(e^\epsilon+1)^2 \leq 0$, we have:
            \begin{align}
                &\max_{x \in [-1,1]} \textup{Var}_{\mathcal{H}}[Y|x]= \nonumber \\
                &\begin{cases}
                    \textup{Var}_{\mathcal{H}}[Y|x^*],~\text{if}~ \beta < \frac{e^\epsilon-1}{e^\epsilon+t}, \\
                    \max \{\textup{Var}_{\mathcal{H}}[Y|0], \textup{Var}_{\mathcal{H}}[Y|1] \} , ~~\text{otherwise.} \nonumber
                \end{cases}
            \end{align}
       \item[4)] If $2(e^\epsilon-a)^2(e^\epsilon+t)- ae^\epsilon(e^\epsilon+1)^2<0$, we have:
            \begin{align}
                &\max_{x \in [-1,1]} \textup{Var}_{\mathcal{H}}[Y|x]= \nonumber\\
                & \begin{cases}
                    \textup{Var}_{\mathcal{H}}[Y|x^*],\text{if}~ \beta < \frac{2(e^\epsilon-a)^2(e^\epsilon-1)-ae^\epsilon(e^\epsilon+1)^2}{2(e^\epsilon-a)^2(e^\epsilon+t)-ae^\epsilon(e^\epsilon+1)^2},\\
                    \max \{\textup{Var}_{\mathcal{H}}[Y|0], \textup{Var}_{\mathcal{H}}[Y|1] \} , ~~\text{otherwise.} \nonumber
                \end{cases}
            \end{align}
    \end{itemize}
Appendix~\ref{condition_1_proof} proves that $$2(e^\epsilon-a)^2(e^\epsilon+t) - ae^\epsilon(e^\epsilon+1)^2 >0$$ and Appendix~\ref{compare}~ $$\frac{2(e^\epsilon-a)^2(e^\epsilon-1)-ae^\epsilon(e^\epsilon+1)^2}{2(e^\epsilon-a)^2(e^\epsilon+t)-ae^\epsilon(e^\epsilon+1)^2} \leq \frac{e^\epsilon-1}{e^\epsilon+t}.$$ Therefore, we have
\begin{align}
    &\max_{x \in [-1,1]} \textup{Var}_{\mathcal{H}}[Y|x]= \nonumber\\
    & \begin{cases}
        \textup{Var}_{\mathcal{H}}[Y|x^*],\text{if}~ 0 < \beta < \frac{2(e^\epsilon-a)^2(e^\epsilon-1)-ae^\epsilon(e^\epsilon+1)^2}{2(e^\epsilon-a)^2(e^\epsilon+t)-ae^\epsilon(e^\epsilon+1)^2},\\
        \max \{\textup{Var}_{\mathcal{H}}[Y|0], \textup{Var}_{\mathcal{H}}[Y|1] \}, ~~\text{otherwise.} \nonumber
    \end{cases}
\end{align}

\item[(II)] Based on above analysis, to make $\max_{x \in [-1,1]} \textup{Var}_{\mathcal{H}}[Y=y|x]=\max \{\textup{Var}_{\mathcal{H}}[Y|0], \textup{Var}_{\mathcal{H}}[Y|1] \}$, $\beta$ should satisfy constraint $  \frac{2(e^\epsilon-a)^2(e^\epsilon-1)-ae^\epsilon(e^\epsilon+1)^2}{2(e^\epsilon-a)^2(e^\epsilon+t)-ae^\epsilon(e^\epsilon+1)^2} \leq \beta \leq 1$.

To get the exact value of $\max_{x \in [-1,1]} \textup{Var}_{\mathcal{H}}[Y|x]$, we compare $\textup{Var}_{\mathcal{H}}[Y|1]$ and $\textup{Var}_{\mathcal{H}}[Y|0]$, values of $\textup{Var}_{\mathcal{H}}[Y|1]$ and $\textup{Var}_{\mathcal{H}}[Y|0]$ are: \\
\begin{itemize}
     \item $\textup{Var}_{\mathcal{H}}[Y|1] = (\beta \frac{t+1}{e^\epsilon-1} + \beta - 1) + (1-\beta)  \frac{ae^\epsilon(e^\epsilon+1)^2}{(e^\epsilon-1)(e^\epsilon-a)^2} + \big ( \frac{(t+e^\epsilon)((t+1)^3+e^\epsilon-1)}{3t^2(e^\epsilon-1)^2} \beta +(1- \beta)(1-a) \frac{e^{2\epsilon}(e^\epsilon+1)^2}{(e^\epsilon-1)^2(e^\epsilon-a)^2} \big ) = \beta \big( \frac{t+1}{e^\epsilon-1} + 1 - \frac{ae^\epsilon(e^\epsilon+1)^2}{(e^\epsilon-1)(e^\epsilon-a)^2} +  \frac{(t+e^\epsilon)((t+1)^3+e^\epsilon-1)}{3t^2(e^\epsilon-1)^2} -  \frac{(1-a)e^{2\epsilon}(e^\epsilon+1)^2}{(e^\epsilon-1)^2(e^\epsilon-a)^2} \big ) -1 + \frac{ae^\epsilon(e^\epsilon+1)^2}{(e^\epsilon-1)(e^\epsilon-a)^2} +  \frac{(1-a)e^{2\epsilon}(e^\epsilon+1)^2}{(e^\epsilon-1)^2(e^\epsilon-a)^2} $,
     \item $\textup{Var}_{\mathcal{H}}[Y|0] =  \frac{(t+e^\epsilon)((t+1)^3+e^\epsilon-1)}{3t^2(e^\epsilon-1)^2} \beta +(1- \beta)(1-a) \frac{e^{2\epsilon}(e^\epsilon+1)^2}{(e^\epsilon-1)^2(e^\epsilon-a)^2} = \beta \big (  \frac{(t+e^\epsilon)((t+1)^3+e^\epsilon-1)}{3t^2(e^\epsilon-1)^2} -  \frac{(1-a)e^{2\epsilon}(e^\epsilon+1)^2}{(e^\epsilon-1)^2(e^\epsilon-a)^2} \big )  +  \frac{(1-a)e^{2\epsilon}(e^\epsilon+1)^2}{(e^\epsilon-1)^2(e^\epsilon-a)^2}$.
\end{itemize}
Since $\textup{Var}_{\mathcal{H}}[Y|1]$ and $\textup{Var}_{\mathcal{H}}[Y|0]$ are linear equations respect to $\beta$, we compare slopes of $\beta$ in $\textup{Var}_{\mathcal{H}}[Y|1]$ and $\textup{Var}_{\mathcal{H}}[Y|0]$. We define the slope of $\beta$ in $\textup{Var}_{\mathcal{H}}[Y|1]$ as
\begin{align}
 \nonumber  &\textup{slope}_1 := \frac{t+1}{e^\epsilon-1} + 1 - \frac{ae^\epsilon(e^\epsilon+1)^2}{(e^\epsilon-1)(e^\epsilon-a)^2} \\&+  \frac{(t+e^\epsilon)((t+1)^3+e^\epsilon-1)}{3t^2(e^\epsilon-1)^2} -  \frac{(1-a)e^{2\epsilon}(e^\epsilon+1)^2}{(e^\epsilon-1)^2(e^\epsilon-a)^2},\label{de:slope_1}
\end{align}
and the slope of $\beta$ in $\textup{Var}_{\mathcal{H}}[Y|0]$ as
\begin{align}
 \textup{slope}_2 := \frac{(t+e^\epsilon)((t+1)^3+e^\epsilon-1)}{3t^2(e^\epsilon-1)^2} -  \frac{(1-a)e^{2\epsilon}(e^\epsilon+1)^2}{(e^\epsilon-1)^2(e^\epsilon-a)^2}.\label{de:slope_2}   
\end{align}
Then, we represent left boundary of $\beta$ as
\begin{align}
    \beta_1 := \frac{2(e^\epsilon-a)^2(e^\epsilon-1)-ae^\epsilon(e^\epsilon+1)^2}{2(e^\epsilon-a)^2(e^\epsilon+t)-ae^\epsilon(e^\epsilon+1)^2},\label{de:beta_1}
\end{align}
and the right boundary of $\beta$ as
\begin{align}
    \beta_2 := 1, \label{de:beta_2}
\end{align}
and the value of $\beta$ at the intersection of $\textup{\textup{slope}}_1$ and $\textup{\textup{slope}}_2$ is 
\begin{align}
    \beta_{intersection} := \frac{(c-1)(c-a)^2 - ac(c+1)^2}{(t+c)(c-a)^2-ac(c+1)^2}.\label{de:intersection}
\end{align}
\end{itemize}

Then, $\textup{slope}_1$ and $\textup{slope}_2$ have the following possible combinations:
\begin{itemize}
    \item[1)] If $\textup{slope}_1 >0$, $\textup{slope}_2 > 0$, $\beta = \beta_1$.
    \item[2)] If $\textup{slope}_1 <0$, $\textup{slope}_2 < 0$, $\beta = \beta_2$.
    \item[3)] If $\textup{slope}_1 \cdot \textup{slope}_2 < 0$, If $\beta_1<\beta_{intersection} < \beta_2$, $\beta = \beta_{intersection}$.
    \item[4)] If $\textup{slope}_1 \cdot \textup{slope}_2 < 0$, If $\beta_{\beta_{intersection}} < \beta_1$ or $\beta_{intersection} > \beta_2$,~find $\beta$ for $\min \big \{ \max \{ \textup{Var}[Y|1, \beta = \beta_1], \textup{Var}[Y|0, \beta = \beta_1] \}\\, \max \{ \textup{Var}[Y|1,\beta =  \beta_2], \textup{Var}[Y|0,\beta =  \beta_2] \}\big \}$.
    \item[5)]If $\textup{slope}_1 \cdot \textup{slope}_2 = 0$, \\
    \textbf{Case 1:} $\textup{slope}_1 = 0$, $\textup{slope}_2 \neq 0$,
    \begin{itemize}
        \item[a)] If $\textup{slope}_1 = 0$, $\textup{slope}_2 > 0$, $\beta_{intersection} \in [\beta_1, \beta_2]$, $\beta = [\beta_1, \beta_{intersection}]$.
        \item[b)] If $\textup{slope}_1 = 0$, $\textup{slope}_2 <0$, $\beta_{intersection} \in [\beta_1, \beta_2]$, $\beta = [\beta_{intersection}, \beta_2]$.
        \item[c)] If $\textup{slope}_1 = 0$, $\textup{slope}_2 > 0$, $\beta_{intersection} < \beta_1$ or $\beta_{intersection} > \beta_2$, $\textup{Var}_{\mathcal{H}}[Y|1, \beta = [\beta_1, \beta_2]] > \textup{Var}_{\mathcal{H}}[Y|0, \beta = [\beta_1, \beta_2]]$, $\beta = [\beta_1,\beta_2]$.
        \item[d)] If $\textup{slope}_1 = 0$, $\textup{slope}_2 > 0$, $\beta_{intersection} < \beta_1$ or $\beta_{intersection} > \beta_2$, $\textup{Var}_{\mathcal{H}}[Y|0, \beta = [\beta_1, \beta_2]] > \textup{Var}_{\mathcal{H}}[Y|1, \beta = [\beta_1, \beta_2]]$, $\beta = \beta_1$.
        \item[e)] If $\textup{slope}_1 = 0$, $\textup{slope}_2 < 0$, $\beta_{intersection} < \beta_1$ or $\beta_{intersection} > \beta_2$, $\textup{Var}_{\mathcal{H}}[Y|1, \beta = [\beta_1, \beta_2]] > \textup{Var}_{\mathcal{H}}[Y|0, \beta = [\beta_1, \beta_2]]$, $\beta = [\beta_1,\beta_2]$.
        \item[f)] If $\textup{slope}_1 = 0$, $\textup{slope}_2 < 0$, $\beta_{intersection} < \beta_1$ or $\beta_{intersection} > \beta_2$, $\textup{Var}_{\mathcal{H}}[Y|1, \beta = [\beta_1, \beta_2]] < \textup{Var}_{\mathcal{H}}[Y|0, \beta = [\beta_1, \beta_2]]$, $\beta = \beta_2$.
    \end{itemize}
    
    \textbf{Case 2:} $\textup{slope}_2 = 0$, $\textup{slope}_1 \neq 0$,
    
    \begin{itemize}
        \item[a)] If $\textup{slope}_1 >0$, $\textup{slope}_2 = 0$ , $\beta_{intersection} \in [\beta_1, \beta_2]$, $\beta = [\beta_1, \beta_{intersection}]$.
        \item[b)] If $\textup{slope}_1 <0$, $\textup{slope}_2 = 0$ , $\beta_{intersection} \in [\beta_1, \beta_2]$, $\beta = [\beta_{intersection}, \beta_2]$.
        \item[c)] If $\textup{slope}_2 = 0$, $\textup{slope}_1 > 0$, $\beta_{intersection} < \beta_1$ or $\beta_{intersection} > \beta_2$, $\textup{Var}_{\mathcal{H}}[Y|1, \beta = [\beta_1, \beta_2]] > \textup{Var}_{\mathcal{H}}[Y|0, \beta = [\beta_1, \beta_2]]$, $\beta = \beta_1$.
        \item[d)] If $\textup{slope}_2 = 0$, $\textup{slope}_1 > 0$, $\beta_{intersection} < \beta_1$ or $\beta_{intersection} > \beta_2$, $\textup{Var}_{\mathcal{H}}[Y|0, \beta = [\beta_1, \beta_2]] > \textup{Var}_{\mathcal{H}}[Y|1, \beta = [\beta_1, \beta_2]]$, $\beta = [\beta_1, \beta_2]$.
        \item[e)] If $\textup{slope}_2 = 0$, $\textup{slope}_1 < 0$, $\beta_{intersection} < \beta_1$ or $\beta_{intersection} > \beta_2$, $\textup{Var}_{\mathcal{H}}[Y|1, \beta = [\beta_1, \beta_2]] > \textup{Var}_{\mathcal{H}}[Y|0, \beta = [\beta_1, \beta_2]]$, $\beta = \beta_2$.
        \item[f)] If $\textup{slope}_2 = 0$, $\textup{slope}_1 < 0$, $\beta_{intersection} < \beta_1$ or $\beta_{intersection} > \beta_2$, $\textup{Var}_{\mathcal{H}}[Y|1, \beta = [\beta_1, \beta_2]] < \textup{Var}_{\mathcal{H}}[Y|0, \beta = [\beta_1, \beta_2]]$, $\beta = [\beta_1, \beta_2]$.
    \end{itemize}
    
    \textbf{Case 3:} $\textup{slope}_1 = 0$ and $\textup{slope}_2 = 0$,
    \begin{itemize}
        \item[a)] If $\textup{Var}_{\mathcal{H}}[Y|1, \beta = [\beta_1, \beta_2]] < \textup{Var}_{\mathcal{H}}[Y|0, \beta = [\beta_1, \beta_2]]$, $\beta = [\beta_1,\beta_2]$.
        \item[b)] If $\textup{Var}_{\mathcal{H}}[Y|1, \beta = [\beta_1, \beta_2]] > \textup{Var}_{\mathcal{H}}[Y|0, \beta = [\beta_1, \beta_2]]$, $\beta = [\beta_1,\beta_2]$.
        \item[c)] If $\textup{Var}_{\mathcal{H}}[Y|1, \beta = [\beta_1, \beta_2]] = \textup{Var}_{\mathcal{H}}[Y|0, \beta = [\beta_1, \beta_2]]$, $\beta = [\beta_1,\beta_2]$.
    \end{itemize}

\end{itemize}

\begin{proof}
1) If $\textup{slope}_1 >0$, $\textup{slope}_2 > 0$, $\textup{Var}_{\mathcal{H}}[Y|1]$ and $\textup{Var}_{\mathcal{H}}[Y|0]$~monotonically increase $\beta \in [\beta_1, \beta_2]$, $\min_\beta \max_{x \in [-1,1]} \textup{Var}_{\mathcal{H}}[Y|x]$ is at $\beta = \beta_1$.

2) Similar to 1), we have $\min_\beta \max_{x \in [-1,1]} \textup{Var}_{\mathcal{H}}[Y|x]$ is at $\beta =1$.

3) If $\beta_1<\beta_{intersection} < \beta_2$, we have:
    \begin{itemize}
        \item If~$\textup{slope}_1 > 0,~\textup{slope}_2 < 0$, $\textup{Var}[Y|1]$ monotonically increases and $\textup{Var}[Y|0]$ monotonically decreases, so~ when $\beta \in [\beta_1, \beta_{intersection}]$, $\max_{x \in [-1,1]} \textup{Var}_{\mathcal{H}}[Y|x]= \textup{Var}_{\mathcal{H}}[Y|0]$. When $\beta \in [\beta_{intersection}, \beta_2]$, $\max_{x \in [-1,1]} \textup{Var}_{\mathcal{H}}[Y|x]= \textup{Var}_{\mathcal{H}}[Y|1]$. Therefore, $\min_\beta \max_{x \in [-1,1]} \textup{Var}_{\mathcal{H}}[Y|x]= \textup{Var}_{\mathcal{H}}[Y|1] = \textup{Var}_{\mathcal{H}}[Y|0]$ at $\beta = \beta_{intersection}$.
        \item If~$\textup{slope}_1 < 0,~\textup{slope}_2 > 0$, $\textup{Var}[Y|1]$ monotonically decreases and $\textup{Var}[Y|0]$ monotonically increases, so~ when $\beta \in [\beta_1, \beta_{intersection}]$, $\max_{x \in [-1,1]} \textup{Var}_{\mathcal{H}}[Y|x]= \textup{Var}_{\mathcal{H}}[Y|1]$. When $\beta \in [\beta_{intersection}, \beta_2]$, $\max_{x \in [-1,1]} \textup{Var}_{\mathcal{H}}[Y|x]= \textup{Var}_{\mathcal{H}}[Y|0]$. Therefore, $\min_\beta \max_{x \in [-1,1]} \textup{Var}_{\mathcal{H}}[Y|x]= \textup{Var}_{\mathcal{H}}[Y|1] = \textup{Var}_{\mathcal{H}}[Y|0]$ at $\beta = \beta_{intersection}$.
    \end{itemize}
4) If $ \beta_{intersection} < \beta_1$ and $\beta_2 > \beta_{intersection}$, we have:
    \begin{itemize}
        \item If $\textup{slope}_1 > 0$, $\textup{slope}_2 < 0$ and $ \textup{Var}_{\mathcal{H}}[Y|1] > \textup{Var}_{\mathcal{H}}[Y|0]$, since $\textup{Var}_{\mathcal{H}}[Y|1]$ monotonically increases in the domain, $\min_\beta \textup{Var}_{\mathcal{H}}[Y|1]$ is at $\beta = \beta_1$. $\min_\beta \max_{x \in [-1,1]} \textup{Var}_{\mathcal{H}}[Y|x]$ is at $\beta = \beta_1$.
        \item If $\textup{slope}_1 > 0$, $\textup{slope}_2 < 0$ and $ \textup{Var}_{\mathcal{H}}[Y|1] < \textup{Var}_{\mathcal{H}}[Y|0]$. Since $\textup{Var}_{\mathcal{H}}[Y|0]$ monotonically decreases in the domain, $\min_\beta \textup{Var}_{\mathcal{H}}[Y|0]$ is at $\beta = \beta_2$. $\min_\beta \max_{x \in [-1,1]} \textup{Var}_{\mathcal{H}}[Y|x]$ is at $\beta = \beta_2$.
        \item If $\textup{slope}_1 < 0$, $\textup{slope}_2 > 0$ and $\textup{Var}_{\mathcal{H}}[Y|1] > \textup{Var}_{\mathcal{H}}[Y|0], \max_{x \in [-1,1]}\textup{Var}_{\mathcal{H}}[Y|x] = \textup{Var}_{\mathcal{H}}[Y|1]$, since $\textup{Var}_{\mathcal{H}}[Y|1]$ monotonically decreases in the domain, $\min_\beta \textup{Var}_{\mathcal{H}}[Y|1]$ is at $\beta = \beta_2$. $\min_\beta \max_{x \in [-1,1]} \textup{Var}_{\mathcal{H}}[Y|x]$ is at $\beta = \beta_2$.
        \item If $\textup{slope}_1 < 0$ , $\textup{slope}_2 > 0$ and $ \textup{Var}_{\mathcal{H}}[Y|1] < \textup{Var}_{\mathcal{H}}[Y|0]$, $\max_{x \in [-1,1]} \textup{Var}_{\mathcal{H}}[Y|x] = \textup{Var}_{\mathcal{H}}[Y|0]$, since $\textup{Var}_{\mathcal{H}}[Y|0]$ monotonically increases in the domain, $\min_\beta \textup{Var}_{\mathcal{H}}[Y|0]$ is at $\beta = \beta_1$. $\min_\beta \max_{x \in [-1,1]} \textup{Var}_{\mathcal{H}}[Y|x]$ is at $\beta = \beta_1$.
    \end{itemize}
5)\begin{itemize}
   \item \textbf{Case 1:}
       \begin{itemize}
            \item[a)] If $\textup{slope}_1 = 0$, $\textup{slope}_2 >0$, $\beta_{intersection} \in [\beta_1, \beta_2]$, we can conclude that $\beta \in [\beta_1, \beta_{intersection}], \textup{Var}_{\mathcal{H}}[Y|1] > \textup{Var}_{\mathcal{H}}[Y|0] $. When $\beta \in (\beta_{intersection}, \beta_2], \textup{Var}_{\mathcal{H}}[Y|0] > \textup{Var}_{\mathcal{H}}[Y|1]$. Since $\textup{Var}_{\mathcal{H}}[Y|0]$ monotonically increases if $\beta \in (\beta_{intersection}, \beta_2]$, $\min_\beta \max_{x \in [-1,1]} \textup{Var}_{\mathcal{H}}[Y|x] = \textup{Var}_{\mathcal{H}}[Y|1, \beta = [\beta_1, \beta_{intersection}]]$. 
            \item[b)] If $\textup{slope}_1 = 0, \textup{slope}_2 <0, \beta_{intersection} \in [\beta_1, \beta_2]$, we can conclude that $\beta \in [\beta_1, \beta_{intersection}], \textup{Var}_{\mathcal{H}}[Y|0] > \textup{Var}_{\mathcal{H}}[Y|1] $. When $\beta \in (\beta_{intersection}, \beta_2], \textup{Var}_{\mathcal{H}}[Y|1] > \textup{Var}_{\mathcal{H}}[Y|0] $. Since $\textup{Var}_{\mathcal{H}}[Y|1]$ does not change and $\textup{Var}_{\mathcal{H}}[Y|0]$ monotonically decreases, $\min_\beta \max_{x \in [-1,1]} \textup{Var}_{\mathcal{H}}[Y|x] = \textup{Var}_{\mathcal{H}}[Y|0, \beta = [ \beta_{intersection}, \beta_2]]$. 
            \item[c)] If $\textup{slope}_1 = 0, \textup{slope}_2 >0, \textup{Var}_{\mathcal{H}}[Y|1, \beta = [\beta_1, \beta_2]] > \textup{Var}_{\mathcal{H}}[Y|0, \beta = [\beta_1, \beta_2]]$, the $\beta_{intersection} > \beta_2$. Since $\textup{Var}_{\mathcal{H}}[Y|1]$ does not change if $\beta \in [\beta_1, \beta_2]$, $\min_\beta \max_{x \in [-1,1]} \textup{Var}_{\mathcal{H}}[Y|x] = \textup{Var}_{\mathcal{H}}[Y|1, \beta =[\beta_1, \beta_2]]$.
            \item[d)] If $\textup{slope}_1 = 0, \textup{slope}_2 >0, \textup{Var}_{\mathcal{H}}[Y|0, \beta = [\beta_1, \beta_2]] > \textup{Var}_{\mathcal{H}}[Y|1, \beta = [\beta_1, \beta_2]]$, the $\beta_{intersection} > \beta_2$. Since $\textup{Var}_{\mathcal{H}}[Y|0]$ monotonically decreases if $\beta \in [\beta_1, \beta_2]$, $\min_\beta \max_{x \in [-1,1]} \textup{Var}_{\mathcal{H}}[Y|x] = \textup{Var}_{\mathcal{H}}[Y|0, \beta = \beta_2]$.
            \item[e)] If $\textup{slope}_1 = 0, \textup{slope}_2 <0, \textup{Var}_{\mathcal{H}}[Y|1, \beta = [\beta_1, \beta_2]] > \textup{Var}_{\mathcal{H}}[Y|0, \beta = [\beta_1, \beta_2]]$, the $\beta_{intersection} < \beta_1$. Since $\textup{Var}_{\mathcal{H}}[Y|1]$ does not change if $\beta \in [\beta_1, \beta——2]$, $\min_\beta \max_{x \in [-1,1]} \textup{Var}_{\mathcal{H}}[Y|x] = \textup{Var}_{\mathcal{H}}[Y|1, \beta = [\beta_1, \beta_2]]$.
            \item[f)] If $\textup{slope}_1 = 0, \textup{slope}_2 <0, \textup{Var}_{\mathcal{H}}[Y|0, \beta = [\beta_1, \beta_2]] > \textup{Var}_{\mathcal{H}}[Y|1, \beta = [\beta_1, \beta_2]]$, the $\beta_{intersection} > \beta_2$. Since $\textup{Var}_{\mathcal{H}}[Y|0]$ monotonically decreases if $\beta \in [\beta_1, \beta_2]$, $\min_\beta \max_{x \in [-1,1]} \textup{Var}_{\mathcal{H}}[Y|x] = \textup{Var}_{\mathcal{H}}[Y|0, \beta = \beta_2]$.
       \end{itemize}
    \item \textbf{Case 2:} The proof is similar to Case 1.
    \item \textbf{Case 3:} $\textup{Var}_{\mathcal{H}}[Y|0]$ and $\textup{Var}_{\mathcal{H}}[Y|1]$ are unchanged when $\beta \in [\beta_1, \beta_2]$. Hence, $\min_\beta \max_{x \in [-1,1]} \textup{Var}_{\mathcal{H}}[Y|x] = \max_{x \in [-1,1]} \textup{Var}_{\mathcal{H}}[Y|x, \beta = [\beta_1, \beta_2]]$.
\end{itemize}

\end{proof}
\end{document}